\documentclass[a4paper, 11pt, onecolumn]{article}

\usepackage[utf8]{inputenc}
\usepackage[T1]{fontenc}
\usepackage{graphicx}
\usepackage[english]{babel}
\usepackage{textcomp}
\usepackage{layout}
\usepackage[top=2.2cm, bottom=2.2cm, left=2.8cm, right=2.8cm]{geometry}
\usepackage{setspace}
\usepackage{listings}
\usepackage{color}
\usepackage[dvipsnames]{xcolor}
\usepackage{lscape}
\usepackage{csquotes}
\usepackage{braket}
\usepackage{amsmath}
\usepackage{amssymb}
\usepackage{amsthm}
\usepackage{thmtools}
\usepackage{amsfonts}
\usepackage{mathrsfs}
\usepackage{xfrac}
\usepackage{faktor}
\usepackage{esint}
\usepackage{mathtools}
\usepackage{stmaryrd}
\usepackage{array}
\usepackage{bbm}
\usepackage{hyperref}
\hypersetup{
    colorlinks=true,
    linkcolor=red,
    filecolor=magenta,      
    urlcolor=blue,
    citecolor=BlueViolet,
    }
\usepackage{bbold}
\usepackage{algorithm}
\usepackage{algpseudocode}

\usepackage[backend = bibtex, style=alphabetic, giveninits=true]{biblatex}
\usepackage{csquotes}

\usepackage{tikz}
\usetikzlibrary{shadows}

\renewbibmacro{in:}{} %supprime le in dans les citations d'articles

\AtEveryBibitem{\clearfield{month}}
\AtEveryBibitem{\clearfield{day}}
\AtEveryBibitem{\clearfield{eprinttype}}
\AtEveryBibitem{\clearfield{eprint}}
\AtEveryBibitem{\clearfield{url}}
\AtEveryBibitem{\clearfield{urlyear}}
\AtEveryBibitem{\clearfield{review}}
\AtEveryBibitem{\clearfield{isbn}}
\AtEveryBibitem{\clearfield{issn}}
\AtEveryBibitem{\clearfield{pagetotal}}
\AtEveryBibitem{\clearfield{doi}}
\AtEveryBibitem{\clearfield{note}}
\AtEveryBibitem{\clearname{editor}}
\AtEveryBibitem{\clearlist{location}}
\AtEveryBibitem{\clearfield{number}}
\AtEveryBibitem{\clearfield{volume}}
\AtEveryBibitem{\clearfield{pages}}

\newtheorem{theorem}{Theorem}[section]
\newtheorem{corollary}{Corollary}[section]
\newtheorem{lemma}{Lemma}[section]
\newtheorem{proposition}{Proposition}[section]
\newtheorem{definition}{Definition}[section]

\theoremstyle{remark}
\newtheorem{remark}{Remark}

\newenvironment{nalign}{
    \begin{equation}
    \begin{aligned}
}{
    \end{aligned}
    \end{equation}
    \ignorespacesafterend
}

\newcommand{\subsecnonewline}[1]{\vspace{0.5cm}\noindent\textbf{\large{#1.}}}

\renewcommand{\le}{\leqslant}
\renewcommand{\ge}{\geqslant}
\renewcommand{\epsilon}{\varepsilon}

\newcommand{\qqtqq}[1]{\qquad \text{#1} \qquad}

\newcommand{\qqtq}[1]{\qquad \text{#1} \quad}
\newcommand{\qtq}[1]{\quad \text{#1} \quad}

\newcommand{\R}{\ensuremath{\mathbb{R}}}

\newcommand{\N}{\ensuremath{\mathbb{N}}}
\newcommand{\Q}{\ensuremath{\mathbb{Q}}}
\newcommand{\Z}{\ensuremath{\mathbb{Z}}}
\newcommand{\C}{\ensuremath{\mathbb{C}}}

\newcommand{\T}{\ensuremath{\mathbb{T}}}

\newcommand{\CA}{\mathcal{A}}
\newcommand{\CB}{\mathcal{B}}
\newcommand{\CC}{\mathcal{C}}

\newcommand{\CE}{\mathcal{E}}

\newcommand{\CH}{\mathcal{H}}
\newcommand{\CL}{\mathcal{L}}
\newcommand{\CO}{\mathcal{O}}
\newcommand{\CP}{\mathcal{P}}
\newcommand{\CQ}{\mathcal{Q}}
\newcommand{\CR}{\mathcal{R}}
\newcommand{\CS}{\mathcal{S}}
\newcommand{\CU}{\mathcal{U}}

\newcommand{\CM}{\mathcal{M}}

\newcommand{\eint}[2]{\left\llbracket #1 ,#2 \right\rrbracket}

\newcommand{\floor}[1]{\left\lfloor #1 \right\rfloor}
\newcommand{\ceil}[1]{\left\lceil #1 \right\rceil}
\newcommand{\sprod}[2]{\left\langle #1, #2 \right\rangle }
\renewcommand{\set}[1]{ \left \lbrace #1 \right \rbrace  }
\newcommand{\norm}[1]{\left \Vert #1 \right \Vert}

\newcommand{\pardif}[2]{\frac{\partial#1}{\partial#2}}
\newcommand{\grad}{\vec{\nabla}}

\newcommand{\jap}[1]{\left \langle #1 \right \rangle }

\newcommand{\mean}[1]{\langle #1 \rangle}
\newcommand*\diff{\mathop{}\!\mathrm{d}} %élément différentiel droit
\newcommand{\poissonbra}[2]{\left\{#1, #2 \right\}}

\DeclareMathOperator{\tr}{tr}
\DeclareMathOperator{\diag}{diag}

\DeclareMathOperator{\supp}{supp}

\DeclareMathOperator{\dist}{dist}

\DeclareMathOperator{\Op}{Op}
\DeclareMathOperator{\Sp}{Sp}
\DeclareMathOperator{\GL}{GL}

\DeclareMathOperator{\Leb}{Leb}

\DeclareMathOperator{\Vol}{Vol}

\title{Half-delocalization for semiclassical measures of split higher-dimensional quantum cat maps}
\author{Vincent Ferrari-Dominguez}

\date{\vspace{-5ex}}

\begin{document}
\maketitle

\begin{abstract}
    We study quantum cat maps on multidimensional tori and prove an optimal lower bound for the Kolmogorov-Sinai entropy of their semiclassical measures under the hypothesis that the dynamics splits over the rationals. This is the first instance of such a bound for a dynamical system with multiple positive Lyapunov exponents. The proof is based on a new notion of conditional quantum dynamical entropy and a careful study of the multiple Ehrenfest time adapted to the dynamics. 
\end{abstract}

\section{Introduction}

How does the dynamics of the geodesic flow on a compact Riemannian manifold $(M,g)$ influence the behavior of the eigenvectors of the Laplacian $-\Delta_g$ on $M$ in the high-eigenvalue limit? The quantum ergodicity theorem \cite{snirelcprimemanErgodicPropertiesEigenfunctions1974,colindeverdiereErgodiciteFonctionsPropres1985,zelditchUniformDistributionEigenfunctions1987b} answers this question by proving that generic sequences of eigenvectors equidistribute if the geodesic flow on the manifold is ergodic. This equidistribution statement can be refined by considering a microlocal lift to the cotangent bundle $T^*M$ of the measure $|\psi|^2 \diff \Vol_M$ defined on $M$, where $\psi \in L^2(M)$. Such a microlocal lift $Wig_h(\psi)$ is a distribution on $T^*M$ given by 
\begin{equation*}
    Wig_h(\psi)(a) = \sprod{\Op_h(a) \psi}{\psi} \qqtq{for} a \in \CC^\infty_c(T^*M),
\end{equation*}
where $0<h \le 1$ is a semiclassical parameter and $\Op_h : \CC_c^\infty(T^* M) \to \CB(L^2(M))$ is a quantization procedure that associates to each classical observable a bounded operator on $L^2(M)$. Intuitively, the distribution $Wig_h(\psi)$ synthesizes the behavior of $\psi$ in both position and frequency, with the frequency being rescaled by a factor $1/h$. The main object of interest of this article, semiclassical measures, are the accumulation points of sequences of distributions $(Wig_h(\psi_h))_h$. They inform us about the high-frequency behavior of the sequence $(\psi_h)_h$ at each point of $M$. Using this notion, the quantum ergodicity theorem can be rephrased as: if the geodesic flow on $M$ is ergodic, then the only semiclassical measure of a generic sequence of eigenvectors of $-\Delta$ whose eigenvalue diverges is the restriction of the Liouville measure, a natural measure on $T^*M$ that projects to the volume measure on $M$, to the cosphere bundle $S^*M$. We refer the reader to \cite[chap. 5]{zworskiSemiclassicalAnalysis2012} for an introduction to semiclassical measures and warn them that we will study in this article a similar notion where $T^*M$ is replaced by the torus $\T^{2d}$.

Trying to understand what could happen for non-generic sequences led Rudnick and Sarnak \cite{rudnickBehaviourEigenstatesArithmetic1994} to conjecture that on compact manifolds of negative curvature the only semiclassical measure is the Liouville measure, a phenomenon called unique quantum ergodicity. The conjecture was proved by Lindenstrauss \cite{lindenstraussInvariantMeasuresArithmetic2006} in an arithmetic setting, namely by only considering the joint eigenvectors of the Laplacian and Hecke operators on arithmetic surfaces. The general case remains however open and a strategy to attack it consists of proving stronger and stronger lower bounds for the Kolmogorov-Sinai entropy of the semiclassical measures. We highlight that the proof of the Arithmetic Unique Quantum Ergodicity by Lindenstrauss already uses crucially a lower bound on the entropy that he proved with Bourgain \cite{bourgainEntropyQuantumLimits2003}.

This line of attack was pursued by Anantharaman \cite{anantharamanEntropyLocalizationEigenfunctions2008} who proved that the Kolmogorov-Sinai entropy of semiclassical measures is always positive on compact manifolds of negative curvature. Motivated by their proof of an analog bound for the Walsh-quantized baker's map, a toy model for quantum chaos, Anantharaman and Nonnenmacher \cite{anantharamanEntropySemiclassicalMeasures2007} conjectured that for any semiclassical measure $\mu$ of the Laplacian,
\begin{equation}\label{conj_lower_bound_entropy_half_ruelle_bound}
    h_{KS}(\mu) \ge \frac{1}{2} \int_{S^*M} |\log(J^u(x))| \diff \mu(x),
\end{equation}
where $h_{KS}$ is the Kolmogorov-Sinai entropy of $\mu$, $S^*M$ is the cosphere bundle and $J^u$ is the unstable Jacobian of the geodesic flow on $S^*M$. Note that this lower bound is exactly half of the upper bound on the entropy given by the Ruelle inequality \cite[S.2.13]{katokIntroductionModernTheory1995}. Moreover, the unique quantum ergodicity conjecture is equivalent to proving the bound \eqref{conj_lower_bound_entropy_half_ruelle_bound} with $1/2$ replaced by $1$, by a result of Ledrappier and Young \cite{ledrappierMetricEntropyDiffeomorphisms1985} on the equality case in the Ruelle inequality. Adapting the techniques developed for the Walsh-quantized baker's map in \cite{anantharamanEntropySemiclassicalMeasures2007}, they later settled their conjecture in the case of compact hyperbolic manifolds in \cite{anantharamanHalfdelocalizationEigenfunctionsLaplacian2007a} while also proving a lower bound on compact negatively curved manifolds, which they improved with Koch \cite{anantharamanEntropyEigenfunctions2009}. All these results are valid as long as the geodesic flow is Anosov, which holds on manifolds with negative sectional curvature \cite[thm. 17.6.2]{katokIntroductionModernTheory1995}. Rivière extended their method to prove the bound \eqref{conj_lower_bound_entropy_half_ruelle_bound} for compact surfaces with negative curvature \cite{riviereEntropySemiclassicalMeasures2010c} and later only assuming nonpositive sectional curvature \cite{riviereEntropySemiclassicalMeasuresNonPositivelyCurvedSurface2010d}. Anantharaman and Silberman \cite{anantharamanHaarComponentQuantum2013a} also obtained a positive lower bound on the Kolmogorov-Sinai entropy on locally symmetric spaces of non-positive curvature. 

Entropy lower bound were also studied for another standard example of chaotic maps on the torus: hyperbolic symplectic integer matrices acting on the torus, the so-called cat maps. To build a toy model for quantum chaos, Hannay and Berry \cite{hannayQuantizationLinearMaps1980a} constructed a quantum equivalent for them, now called quantum cat maps. As they have been extensively studied, we refer the reader to the later Section \ref{subsec_review_literature_quantum_cat_maps} for a short review of the literature on them and concentrate here on the entropy bounds for their semiclassical measures. Let $A \in \Sp_{2d}(\Z)$ be a symplectic integer matrix. The analog of the bound \eqref{conj_lower_bound_entropy_half_ruelle_bound} for a semiclassical measure $\mu$ of $A$ is 
\begin{equation}\label{conj_lower_bound_entropy_half_ruelle_bound_torus}
    h_{KS}(\mu) \ge \frac{1}{2} \sum_{\lambda_i \ge 0} \lambda_i = \frac{1}{2}h_{KS}(\Leb),
\end{equation}
where $\Leb$ is the Lebesgue measure on the torus and $\lambda_i$ are the logarithm of the modulus of the eigenvalues of $A$ counted with multiplicity. The $(\lambda_i)_i$ are the Lyapunov exponents of the matrix $A$ and measure the hyperbolicity of the associated dynamical system. This lower bound is known to be optimal in dimension $2$ thanks to an example of De Bièvre, Faure and Nonnenmacher \cite{faureScarredEigenstatesQuantum2003} who proved that $(\Leb + \delta_0)/2$ is a semiclassical measure for any hyperbolic matrix in $\Sp_{2}(\Z)$. It is also optimal in higher dimensions since $\Leb_{x} \otimes \delta_\xi$ is a semiclassical measure of any block-diagonal matrix $\tilde{A} \oplus \tilde{A}^{-T}$ where $\tilde{A} \in \GL_{d}(\Z)$, see \cite[discussion after thm. 1]{kelmerArithmeticQuantumUnique2010a}.

Adapting the method for the Laplacian on manifolds, Rivière \cite{riviereEntropySemiclassicalMeasuresSymplecticMapTorus2011} proved the bound 
\begin{equation}\label{lower_bound_entropy_gabriel}
    h_{KS}(\mu) \ge \Lambda_0(A) := \sum_{\lambda_i \ge 0} \max(\lambda_i - \frac{\lambda_{max}}{2},0),
\end{equation}
where $\lambda_{max}$ is the largest Lyapunov exponent of $A$. The presence of $\lambda_{max}$ comes from its link with the Ehrenfest time and renders the bound weaker than the conjectured one \eqref{conj_lower_bound_entropy_half_ruelle_bound_torus}. The Ehrenfest time can be thought of as the largest time for which the correspondence between the classical and quantum dynamics holds, or more precisely for which a uniform Egorov theorem holds. The Ehrenfest time thus imposes a limit on the timescale at which semiclassical techniques are effective. As the Ehrenfest time for a generic classical observable is $|\log h|/(2\lambda_{max})$, if there are multiple positive Lyapunov exponents, one cannot use large enough times to capture all the entropy coming from all the Lyapunov exponents in the proofs of \cite{anantharamanHalfdelocalizationEigenfunctionsLaplacian2007a} or \cite{riviereEntropySemiclassicalMeasuresSymplecticMapTorus2011}. The main conceptual contribution of this paper is to modify the previous techniques to circumvent this problem and capture the contribution of all the Lyapunov exponents. This is done by constructing a conditional quantum dynamical entropy, see Section \ref{sec_entropy}, and by building observables that have an effective Ehrenfest time tailored to each Lyapunov exponent, see Section \ref{sec_effective_ehrenfest_time}.

\subsection{Main results}

The first result of this paper is a proof of the conjectured bound \eqref{conj_lower_bound_entropy_half_ruelle_bound_torus} under the hypothesis that the dynamics of the symplectic matrix $A$ splits over $\Q$. 

\begin{theorem}\label{thm_entropy_bound_split_torus_one_lyap_exp}    
    Let $A \in \Sp_{2d}(\Z)$ be a symplectic integer matrix. Suppose that $\Q^{2d}$ splits into rational symplectic $A$-invariant subspaces $(V_i)_{i \in \eint{1}{r}}$ of respective dimension $2d_i$
    \begin{equation*}
        \Q^{2d} = V_1 \oplus^{\sigma} \dots \oplus^{\sigma} V_r \qqtqq{and} A V_i = V_i.
    \end{equation*}
    Suppose that the restriction of $A$ to $V_i$ has only one positive Lyapunov exponent. Denote by $(\lambda_i)_{i \in I}$ the Lyapunov exponents of $A$ counted with multiplicity. Then any semiclassical measure $\mu$ of $A$ satisfies 
    \begin{equation*}
        h_{KS}(\mu, A) \ge \frac{h_{KS}(\Leb,A)}{2} = \frac{1}{2} \sum_{\lambda_i \ge 0} \lambda_i.
    \end{equation*}
\end{theorem}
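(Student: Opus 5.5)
We follow the Anantharaman–Nonnenmacher strategy, adapted as in Rivière's torus argument. The new ingredient is that each invariant factor $V_i$ gets its own Ehrenfest time. We first reduce to a product situation. Since the $V_i$ are rational, symplectic and $A$-invariant, the lattice $\Z^{2d}\cap V_i$ is a full-rank lattice in $V_i\otimes\R$. So there is a finite-index sublattice $\Gamma=\bigoplus_i (\Z^{2d}\cap V_i)$, and a finite covering $\bigoplus_i (V_i\otimes\R)/(\Z^{2d}\cap V_i)\to\T^{2d}$ that intertwines $A$ with $A_1\oplus\dots\oplus A_r$. Kolmogorov–Sinai entropy is preserved under lifting to finite covers, and quantum states lift compatibly up to the usual quantization-condition bookkeeping. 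We may therefore think of the dynamics as a block product $A=\oplus_i A_i$, where $A_i$ has a single positive exponent $\lambda^{(i)}$ of multiplicity $m_i$ (so $\sum_{\lambda_j\ge0}\lambda_j=\sum_i m_i\lambda^{(i)}$).

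Next we build the partition of unity and the quantum partitions. Fix a smooth partition of unity $\sum_k P_k=1$ on $\T^{2d}$ of small diameter, quantized as $\Op_h(P_k)$. For each $i$ we use a time $n_i\approx |\log h|/\lambda^{(i)}$, which we want to be a genuine Ehrenfest time along the $V_i$ directions. This is the step handled by the effective Ehrenfest time of Section \ref{sec_effective_ehrenfest_time}. Because $A$ preserves the splitting, we can choose observables varying only along $V_i$ on time-scale $n_i$. Their evolved symbols $P\circ A^{t}$, for $|t|\le n_i$, then stay in an anisotropic symbol class that is $h^{1/2-\epsilon}$-smooth in the directions of $V_i$ and $h^0$-smooth in the others. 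This is enough for a uniform Egorov theorem and for composition up to time $n_i$.

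We then run the entropic uncertainty principle factor by factor. Here we use the conditional quantum dynamical entropy of Section \ref{sec_entropy}: conditioned on the refined partition at scale $n_1,\dots,n_{i-1}$ in the previous factors, we refine in the $V_i$ directions over times $[-n_i,n_i]$. The key hyperbolic dispersion estimate has the following form. For an $n_i$-cylinder in factor $i$, the operator norm of the product of the forward cylinder and the backward cylinder is bounded by $h^{-d_i}e^{-n_i m_i\lambda^{(i)}}$ up to $h^{-\epsilon}$. This is a local-in-$V_i$ computation, since the unstable Jacobian on $V_i$ is $e^{m_i\lambda^{(i)}}$, so it reduces to the single-exponent case. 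Feeding this into the Maassen–Uffink entropic uncertainty principle in its conditional form gives
\[
H_{n_i}(\mu_h\mid\cdot)+H_{-n_i}(\mu_h\mid\cdot)\ge n_i m_i\lambda^{(i)}-d_i|\log h|-o(|\log h|) \approx n_i m_i\lambda^{(i)}/2\cdot 2-\dots
\]
After normalizing by $|\log h|$, each factor contributes at least $m_i/2$ times $\lambda^{(i)}$ per unit time. We then sum over $i$ using the chain rule for conditional entropy.

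Finally we pass from quantum to classical. Subadditivity, in the Anantharaman–Nonnenmacher–Rivière form adapted to conditional entropies, lets us replace the long times $n_i$ by a fixed time $n_0$ with errors $o(1)$ as $h\to0$ and then $n_0\to\infty$. We let $h\to0$ at fixed $n_0$ to reach $\mu$, then let $n_0\to\infty$ and refine the partition. This yields $h_{KS}(\mu,A)\ge\frac12\sum_i m_i\lambda^{(i)}$.

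The main obstacle is the compatibility of the different time-scales $n_i$. The subadditivity step requires a single-time Egorov control for mixed cylinders containing both long-$V_i$ and long-$V_j$ refinements. It also requires the conditional entropies to behave well under the quantum-to-classical limit, which is not automatic because the partition operators do not commute exactly. We would handle this by making the commutator errors $O(h^{\epsilon})$ uniformly, using that symbols varying in different $V_i$ are nearly in involution because the splitting is symplectically orthogonal.
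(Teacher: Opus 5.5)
There is a genuine gap. It sits exactly at the step you flag as the main obstacle, and the remedy you propose targets the wrong commutators.

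\emph{Why the chain rule does not close.} Let $T_i\approx(1-\epsilon)|\log h|/(2\lambda^{(i)})$ be the effective Ehrenfest time of observables depending only on $V_i$ (Lemma \ref{lem_good_Ehren_time_obs_Fourier_support}). Order the factors by increasing exponent, and write $h(\cdot\mid\cdot)$ for $h^c(\mu,\cdot\mid\cdot)$ of $m_0$-step refinements. Your chain-rule summation needs $h(\CP_i\mid\CP_1\vee\dots\vee\CP_{i-1})\ge m_0m_i\lambda^{(i)}/2$ for every $i$.

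After the conditional uncertainty principle and the shifts by $M_N(A)^{T_i+1}$ and $M_N(A)^{-T_i}$, the two entropies are those of $\bigvee_{k=1}^{2T_i+1}\CP_i(k)$ and $\bigvee_{k=-2T_i}^{0}\CP_i(k)$, both conditioned on the \emph{same} $\CR$. After subadditivity, their length-$m_0$ blocks therefore fill a time interval of length $\approx 4T_i$ relative to $\CR$. Because $\rho_N$ is invariant only under $M_N(A)$, moving a block to time $0$ moves $\CR$ with it. So for every block to stay conditioned on $\CP_k$, $k<i$, the partition $\CR$ must refine $\CP_k$ over a window of length $\approx 4T_i$.

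Discarding the part of that window far from a given block (Lemma \ref{lem_selective_deconditioning_quantum_entropy}) requires $\CP_k(t)$ and $\CP_k(t')$ to almost commute throughout the window. That forces the window length to be at most $2T_k$. Cross-factor commutators are not the problem: they vanish identically, since $\Op^w_h(p\circ A^s)$ is a superposition of translations $U_h(2\pi hA^{-s}r)$ with $r\in V_i$, and $\sigma(V_i,V_k)=0$.

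Hence whenever $\lambda^{(i)}<2\lambda^{(k)}$, only a fraction $\theta=\lambda^{(i)}/(2\lambda^{(k)})$ of the blocks keeps the conditioning; with the opposite ordering, $\theta<1/2$ always. The remaining blocks contribute the unconditioned $h(\CP_i)$. This can be large while the conditional entropy is small, which is exactly the non-product behaviour of Kim's example. For two factors your scheme yields only $\theta\,h(\CP_2\mid\CP_1)+(1-\theta)\,h(\CP_2)\ge m_0m_2\lambda^{(2)}/2$, together with $h(\CP_1)\ge m_0m_1\lambda^{(1)}/2$. These imply no more than $h(\CP_1\vee\CP_2)\ge\frac{m_0}{2}\bigl(m_2\lambda^{(2)}+\theta\,m_1\lambda^{(1)}\bigr)$.

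\emph{How the paper closes the gap.} It applies the uncertainty principle to $\CQ_j=\CP_1\vee\dots\vee\CP_j$ instead of to $\CP_j$ alone.
\begin{itemize}
\item Refined over $[-T_j,T_j]$, Corollary \ref{cor_applied_dispersive_estimate} on $\hat V_j$ gives $\Lambda_0(A_{|\hat V_j})=\sum_{i\le j}\frac{m_i\lambda^{(i)}}{2}(1-\nu_{i,j})$, with $\nu_{i,j}=\min(\lambda^{(j)}/\lambda^{(i)}-1,1)$. So the slower directions already contribute at the short time $T_j$.
\item It conditions on $\CR=\bigvee_{i<j}\bigvee_{|k|\le2\nu_{i,j}T_j}\CP_i(k)$. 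This is admissible because $2\nu_{i,j}T_j\le T_i$ and, now that $\CQ_j$ contains $\CP_i$, $(1+\nu_{i,j})T_j\le T_i$. The second constraint is the price of including $\CP_i$, and it is why $\nu_{i,j}\le\theta$.
\item This keeps the conditioning on $\CP_i$ in a fraction $\nu_{i,j}$ of the blocks and gives $\sum_{i=0}^{j-1}(\nu_{i,j}-\nu_{i+1,j})\,h(\CQ_j\mid\CQ_i)\ge m_0\Lambda_0(A_{|\hat V_j})$.
\item The recursion for $B_j$ in Theorem \ref{thm_lower_bound_full_entropy} then telescopes to $\sum_{i\le j}m_i\lambda^{(i)}/2$. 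The fraction $1-\nu_{i,j}$ of blocks that lose the conditioning on $\CP_i$ is paid for exactly by the term $\frac{m_i\lambda^{(i)}}{2}(1-\nu_{i,j})$ in $\Lambda_0(A_{|\hat V_j})$.
\end{itemize}

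\emph{Secondary issues.} Your dispersive estimate is mis-normalized. The correct prefactor is $h^{-m_i/2}$ (one factor $h^{-1/2}$ per expanding direction) for a product of $\approx 2|\log h|/\lambda^{(i)}$ steps. Your bound $h^{-d_i}e^{-n_im_i\lambda^{(i)}}$ equals $h^{m_i-d_i}\ge 1$ for $n_i\approx|\log h|/\lambda^{(i)}$, so your displayed inequality is vacuous. Also, symbols evolved for $|t|\le|\log h|/\lambda^{(i)}$ are not in $S_{1/2-\epsilon}$. The refinement must be over $[-T_i,T_i]$, and the long product only appears after the two shifts. Finally, the finite-cover reduction is harmless but unnecessary, since $V_i^\sigma$-invariant observables on $\T^{2d}$ already give the needed tensor splitting.
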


This theorem is, to the best of the author's knowledge, the first proof of an analog of the conjecture \eqref{conj_lower_bound_entropy_half_ruelle_bound} in a case where there are multiple positive Lyapunov exponents. 

\begin{remark}\label{rem_thm_entropy_bound_split_torus_one_lyap_exp_powers_of_A}
    Note that any semiclassical measure $\mu$ of $A$ is also a semiclassical measure of $A^m$ for any $m \ge 1$ and that $h_{KS}(\mu,A^m) = m h_{KS}(\mu,A)$. Since the Lyapunov exponents of $A^m$ are the ones of $A$ multiplied by $m$, it is straightforward to see that the conclusion of Theorem \ref{thm_entropy_bound_split_torus_one_lyap_exp} holds if $\Q^{2d}$ splits into rational symplectic $A^m$-invariant subspaces such that the restriction of $A^m$ to each of them has only one positive Lyapunov exponent. 
\end{remark}

\begin{remark}
    Given two symplectic matrices $A_1$ and $A_2$ acting on $\T^{2d_1}$, resp. $\T^{2d_2}$, we build a matrix $A_1 \diamond A_2$ that acts diagonally on $\T^{2(d_1+d_2)} \cong \T^{2d_1} \times \T^{2d_2}$, see \eqref{eq_definition_diamond_symplectic_matrices} for a precise formula. One could hope that the semiclassical measures associated to the product $A_1 \diamond A_2$ would be of the form $\mu_1 \otimes \mu_2$, with $\mu_i$ a semiclassical measure associated to $A_i$. As the entropy of a product measure satisfies 
    \begin{equation*}
        h_{KS}(\mu_1 \otimes \mu_2, A_1 \diamond A_2) = h_{KS}(\mu_1, A_1) + h_{KS}(\mu_2, A_2).
    \end{equation*}
    Theorem \ref{thm_entropy_bound_split_torus_one_lyap_exp} in the simpler case of products of symplectic matrices would just be a direct application of the lower bound \eqref{lower_bound_entropy_gabriel} obtained by Rivière, which is optimal if the matrix has only one positive Lyapunov exponent. 
    
    It is however not the case that semiclassical measures of product of matrices always have a product structure as shown by an example of Kim \cite[prop. 6.1]{kimCharacterizingSupportSemiclassical2025a}. She proves that 
    \begin{equation*}
        \frac{1}{2} \left(\delta(x_1, \xi_1) \otimes \Leb_{x_2, \xi_2} + \Leb_{x_1, \xi_1}\otimes \delta(x_2, \xi_2) \right),
    \end{equation*}
    is a semiclassical measure for $A \diamond A$, where $A = \begin{pmatrix}
    2 & 1 \\ 1 & 1 \end{pmatrix}$ is the Arnold's cat map.
\end{remark}

Theorem \ref{thm_entropy_bound_split_torus_one_lyap_exp} is deduced from the more general theorem below. 
\begin{theorem}\label{thm_lower_bound_full_entropy}
    Let $A \in \Sp_{2d}(\Z)$. Suppose that $\Q^{2d}$ splits into rational symplectic $A$-invariant subspaces $(V_i)_{i \in \eint{1}{r}}$, meaning that
\begin{equation}\label{eq_thm_lower_bound_full_entropy_splitting}
    \Q^{2d} = V_1 \oplus^{\sigma} \dots \oplus^{\sigma} V_r \qqtqq{and} A V_i = V_i.
\end{equation}
Denote by $\lambda_i$ the maximal Lyapunov exponent of $A$ restricted to $V_i$ and suppose that 
\begin{equation*}
    0 < \lambda_1 < \lambda_2 < \cdots < \lambda_{r}.
\end{equation*}
Denote $\hat{V}_i = V_1 \oplus \dots \oplus V_i$ and $\hat{\pi}_i : \T^{2d} \to \T^{2d}/\hat{V}_i^\sigma$ the natural projection.
Introduce the numbers $\nu_{i,j}$ and $B_j$ for $1 \le i \le j \le r$ defined by 
\begin{equation*}
    \nu_{i,j} = \min(\frac{\lambda_j}{\lambda_i}-1, 1) \qtq{and} B_j = \Lambda_0(A_{|\hat{V}_j}) + \sum_{i=1}^{j-1} (\nu_{i,j}-\nu_{i+1,j}) B_{i} > 0.
\end{equation*}
Then for any semiclassical measure $\mu$ of $A$ and any $j \in \eint{1}{r}$, 
\begin{equation*}
    h_{KS}(\hat{\pi}_{j\#} \mu, A) \ge B_j .
\end{equation*}
\end{theorem}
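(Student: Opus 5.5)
The plan is to argue by induction on $j$, using an entropic uncertainty principle for a conditional quantum dynamical entropy that compares the entropy of $\hat{\pi}_{j\#}\mu$ with the entropies of the coarser projections $\hat{\pi}_{i\#}\mu$, $i<j$, already controlled. For $j=1$ the statement is $h_{KS}(\hat{\pi}_{1\#}\mu,A)\ge \Lambda_0(A_{|\hat V_1})$. Since $\hat V_1^\sigma$ is rational and $A$-invariant, the quotient $\T^{2d}/\hat V_1^\sigma$ is a torus of dimension $2d_1$ on which $A$ acts symplectically. I would first check that $\hat\pi_{1\#}\mu$ is a semiclassical measure for this reduced cat map, or at least that observables pulled back from the quotient obey the same Egorov theorem with Ehrenfest time $|\log h|/(2\lambda_1)$. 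Granting this, Rivière's argument giving \eqref{lower_bound_entropy_gabriel} applies verbatim and yields $B_1$.

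For the inductive step I fix $j$ and a fine partition of unity $(P_\alpha)$ on $\T^{2d}/\hat V_j^\sigma$, quantized as operators depending only on the $\hat V_j$-variables. I then use the observables with \emph{effective Ehrenfest times} of Section \ref{sec_effective_ehrenfest_time}: a cylinder observable $P_{\alpha_0}\circ A\cdots P_{\alpha_{n-1}}\circ A^{n}$ is refined only in the $V_i$ directions up to time $n_i\approx |\log h|/\lambda_i$, and in the remaining directions up to a shorter time. Egorov holds for such products as long as each stretched direction stays at scale $\gtrsim h^{1/2}$. The key estimate is the quantum entropic uncertainty principle of Maassen--Uffink type for these refined partitions. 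It gives a lower bound on the quantum entropy of the forward partition plus that of the backward partition, with total $\sum_i d_i\lambda_i n_i$ minus the norm of the overlaps, which decay like $h^{d/2}$ times the $\prod e^{\lambda_i n_i/2}$ factors. To obtain $\Lambda_0(A_{|\hat V_j})$ together with the correction terms, I replace the plain entropy by the conditional entropy of Section \ref{sec_entropy}, namely the entropy of the $j$-th level partition conditioned on the coarser $\hat V_{i}$ partitions refined up to their own Ehrenfest times. Times between $|\log h|/(2\lambda_j)$ and $|\log h|/\lambda_i$ contribute through the known entropy of the factor $\hat\pi_{i\#}\mu$. The weights $\nu_{i,j}-\nu_{i+1,j}$ arise from a telescoping over the nested time windows $[\,n/\lambda_{i+1}$-scale$,\ n/\lambda_i$-scale$\,]$, capped at $1$ once $\lambda_j\ge 2\lambda_i$. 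Each window contributes the entropy $B_i$ of $\hat\pi_{i\#}\mu$, by the induction hypothesis and subadditivity in time.

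Finally, the quantum conditional entropies need to be passed to classical Kolmogorov--Sinai entropies as $h\to 0$. This uses the semiclassical measure limit of the Egorov products, then subadditivity, as in Anantharaman--Nonnenmacher, to divide by the times $\sim|\log h|$ and send the partition diameter to $0$. Along the way I would use $h_{KS}(\hat\pi_{j\#}\mu)\ge h_{KS}(\hat\pi_{j\#}\mu\mid \hat\pi_{i\#}\mu)+h_{KS}(\hat\pi_{i\#}\mu)$ for the factor maps $\hat\pi_j\to\hat\pi_i$; this relation is the Abramov--Rokhlin formula and actually holds with equality.

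The main obstacle will be the uncertainty principle step. The norms $\|P_\alpha^{-} P_\beta^{+}\|$ must be bounded by $h^{-d_{\hat V_j}/2}\prod_i e^{-\lambda_i t_i}$ with *different* times in different directions, uniformly up to the respective Ehrenfest times. This requires anisotropic symbol classes adapted to the splitting, which is exactly where rationality of the $V_i$ is used, since it makes the splitting compatible with the torus quantization. I would first try to prove it by writing the quantum cat map as a tensor product over the $V_i$ after a metaplectic change of lattice.
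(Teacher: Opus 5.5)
Your skeleton (induction on $j$, conditioning on the slower directions, weights $\nu_{i,j}-\nu_{i+1,j}$, then a chain-rule/Abramov--Rokhlin step to add $B_i$) matches the paper's. The gap is that the inequality carrying the inductive step is never produced, and the difficulty you single out is not where it lies.

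Let $\CP_i$ be the lift of a fine partition of $\T^{2d}/V_i^\sigma$ and $\CQ_j=\bigvee_{i\le j}\CP_i$. What must be shown is
\[
\frac{1}{m_0+1}\sum_{i=0}^{j-1}(\nu_{i,j}-\nu_{i+1,j})\, h^c\bigl(\mu,\CQ_j^{(m_0-1)}\mid\CQ_i^{(m_0-1)}\bigr)\ \ge\ \Lambda_0(A_{|\hat{V}_j})-o(\epsilon+\delta),
\]
with no conditioning for $i=0$. The induction hypothesis can only be used through such conditional terms, where $h(\CQ_i)$ enters with a minus sign. Your sentence ``each window contributes $B_i$ by the induction hypothesis and subadditivity in time'' has the inequalities pointing the wrong way. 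Subadditivity gives \emph{upper} bounds on the refined entropy, in which the slow-direction entropies appear with a plus sign, so lower bounds on them are useless there.

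The same sign problem is why refining the partition fed into the uncertainty principle anisotropically (direction $V_i$ up to time $\sim|\log h|/\lambda_i$) does not help by itself. The extra refinement time reappears after subadditivity as positive multiples of $h(\CQ_i)$, $i<j$. With only $h(\CQ_i)\le h(\CQ_j)$ available, the one-exponent case gives about $\frac{\lambda_1}{2}\sum_{i\le j}d_i$, short of $B_j$. The anisotropic overlap bound you call the main obstacle is not hard anyway. Operator norms on $\CH_N$ are controlled by norms on $L^2(\R^d)$. There, after a real symplectic change of variables adapted to $V_1\oplus\cdots\oplus V_r$, operators built from different $V_i$ act on different tensor factors, so no change of lattice or anisotropic symbol class is needed.

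In the paper the conditional uncertainty principle is applied only to the isotropic refinement $\bigvee_{k=-T_j}^{T_j}\CQ_j^q(k)$, with the single time $T_j\approx(1-\epsilon)|\log h|/(2\lambda_j)$. The overlap is then controlled by the standard dispersive estimate for $\hat{V}_j^\sigma$-invariant symbols and yields $4T_j(\Lambda_0(A_{|\hat{V}_j})-o(\epsilon+\delta))$. The slow directions enter only through the conditioning partition $\CR=\bigvee_{i<j}\bigvee_{|k|\le 2\nu_{i,j}T_j}\CP_i^q(k)$. This costs nothing in the uncertainty principle, which is applied to each restricted state $\rho_{|r}$.

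The missing engine is what comes next. Subadditivity of the \emph{conditional} entropy splits the left side into $m_0$-blocks, each conditioned on a time-translate of $\CR$. A selective deconditioning lemma, valid only while the relevant time differences stay below $2T_i$ for the $\CP_i$ pieces, then lets you keep $\CP_1^{(m_0-1)},\dots,\CP_i^{(m_0-1)}$ on exactly the blocks still covered by their windows (a fraction $\nu_{i,j}$ of them) and discard the rest. The commutator constraints $(1+\nu_{i,j})T_j\le T_i$ and $2\nu_{i,j}T_j\le T_i$, together with the fraction being at most one, dictate $\nu_{i,j}=\min(\lambda_j/\lambda_i-1,1)$. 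Your telescoping heuristic neither derives these weights nor explains why conditioning over windows longer than $2T_j$ survives non-commutativity.

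A smaller point concerns the base case. $\hat{\pi}_{1\#}\mu$ is not in general a semiclassical measure of a reduced cat map, since $\Z^{2d}$ need not split along $V_1\oplus V_1^\sigma$ and $\psi_N$ is not a tensor product. The paper instead proves the projected bound $\Lambda_0(A_{|V})$ directly. It adapts the dispersive estimate to $V^\sigma$-invariant symbols, which are periodic rather than compactly supported.
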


To interpret the theorem, suppose that there are only two subspaces $V_1$ and $V_2$ in the decomposition \eqref{eq_thm_lower_bound_full_entropy_splitting}. Theorem \ref{thm_lower_bound_full_entropy} then gives the lower bound
\begin{equation*}
    h_{KS}(\mu) \ge \Lambda_0(A) + \nu_{1,2}\Lambda_0(A_{|V_1}),
\end{equation*}
as $\nu_{i,i} = 0$.
In a sense made precise in Section \ref{sec_effective_ehrenfest_time}, a function that depends only on the direction $V_i \otimes \R$ has an effective Ehrenfest time of order $T_i = |\log h|/(2\lambda_i)$. Therefore, the quotient $\lambda_2/\lambda_1$ is the quotient of the Ehrenfest times $T_1/T_2$. It measures, in proportion to $T_2$, for how much longer the classical-quantum correspondence holds for a function on $\hat{V}_2 \otimes \R$ which only depends on the directions $\hat{V}_1 \otimes \R$. With this in mind, Theorem \ref{thm_lower_bound_full_entropy} tells us that we can improve on the bound \eqref{lower_bound_entropy_gabriel}, whose arguments go only to the Ehrenfest time $T_2$, by recovering the entropy we lost by not pushing the arguments to the longer time Ehrenfest time $T_1$ in the direction $V_1$. The central tool to merge the contribution of each direction is the conditional quantum dynamical entropy defined in Section \ref{sec_entropy}.

Note that by following this reasoning, one would expect that we choose $\nu_{1,2} = \lambda_2/\lambda_1 -1 = T_1/T_2 -1 $ instead of the expression in Theorem \ref{thm_lower_bound_full_entropy}. Indeed, the observable in the direction $V_1$ was already evolved for a time $T_2$ so we only need to evolve them for a fraction $T_1/T_2 -1$ of the time $T_2$. The minimum with $1$ in the definition of $\nu_{i,j}$ is there to ensure that we do not evolve the observable for too long so that the commutators remain small, a property needed for the almost subadditivity of our conditional quantum entropy. Since this choice leads to the optimal bound \eqref{conj_lower_bound_entropy_half_ruelle_bound_torus} in Theorem \ref{thm_entropy_bound_split_torus_one_lyap_exp}, we suspect that this choice for $\nu_{i,j}$ is optimal.

\begin{remark}
    The bound of Theorem \ref{thm_lower_bound_full_entropy} recovers the one obtained by Rivière \eqref{lower_bound_entropy_gabriel} as $B_1 = \Lambda_0(A)$ if we take $V_1 = \Q^{2d}$ in the decomposition \eqref{eq_thm_lower_bound_full_entropy_splitting}. It is an improvement as long as there are at least two $A$-invariant rational symplectic subspaces such that the restriction of $A$ to any of them has a positive Lyapunov exponent.
\end{remark}

\begin{remark}
    Unsurprisingly, the existence of a non-trivial decomposition \eqref{eq_thm_lower_bound_full_entropy_splitting} is an extremely restrictive condition to impose on a symplectic matrix $A$. Indeed, the existence of a non-trivial $A$ rational invariant symplectic subspace implies that the characteristic polynomial of $A$ is reducible over $\Q$, which almost never happens by a result of Anderson and Lemmke Oliver \cite[thm. B.1]{kimCharacterizingSupportSemiclassical2025a}. They proved that there exists a subset $S \subset \Sp_{2d}(\Z)$ of density one in the integer symplectic matrices such that if $A \in S$, then $A^m$ has an irreducible characteristic polynomial over $\Q$ for any $m \ge 1$.
\end{remark}

\begin{remark}
    As in remark \ref{rem_thm_entropy_bound_split_torus_one_lyap_exp_powers_of_A}, if there exists a splitting of $\Q^{2d}$ by symplectic $A^m$-invariant subspaces with $m \ge 1$, the lower bound for the entropy of semiclassical measures of $A^m$ obtained from Theorem \ref{thm_lower_bound_full_entropy} applied to this splitting leads to a lower bound for the entropy of semiclassical measures of $A$.
\end{remark}

The last theorem of this paper is a version of the bound \eqref{lower_bound_entropy_gabriel} obtained by Rivière for the projection of semiclassical measures on an $A$-invariant symplectic subtorus.  
\begin{theorem}\label{thm_lower_bound_entropy_projection}
    Let $A \in \Sp_{2d}(\Z)$. Suppose that there exists a rational symplectic $A$-invariant subspace $V$. Then any semiclassical measure $\mu$ of $A$ satisfies
    \begin{equation*}
        h_{KS}(\pi_\# \mu, A) \ge \Lambda_0(A_{|V}),
    \end{equation*}
    where $\pi : \T^{2d} \to \T^{2d}/V^\sigma$ is the natural projection. 
\end{theorem}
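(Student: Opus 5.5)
We plan to adapt Rivière's proof of the bound \eqref{lower_bound_entropy_gabriel}, which itself follows the Anantharaman--Nonnenmacher scheme, but to run it only with observables that are functions on the quotient torus $\T^{2d}/V^\sigma$. The point is that $\T^{2d}/V^\sigma$ is naturally identified with a torus of dimension $2\dim V$, the map $A$ descends to $A_{|V}$ (more precisely to the induced action on $\R^{2d}/(V^\sigma\otimes\R)\cong V\otimes\R$ dual to $A$), and pullbacks $a\circ\pi$ of smooth functions $a$ on this quotient are functions on $\T^{2d}$ that depend only on the directions of $V\otimes\R$. Since $A$ preserves $V$ and $V^\sigma$, we have $(a\circ\pi)\circ A^n = (a\circ A_{|V}^n)\circ\pi$. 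The derivatives of $a\circ A_{|V}^n$ therefore grow like $e^{n\lambda_{max}(A_{|V})}$ rather than $e^{n\lambda_{max}(A)}$. As a result, the Egorov theorem for the quantum cat map, applied to $\Op_h(a\circ\pi)$, holds uniformly up to the Ehrenfest time $T_V=|\log h|/(2\lambda_{max}(A_{|V}))$ (with an $\epsilon$ loss).

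The steps are as follows. First, fix a smooth partition of unity $(P_k)$ on $\T^{2d}/V^\sigma$ with small support diameter and lift it to $\pi$-invariant operators $\Op_h(P_k\circ\pi)$ on $\CH_h$. Because the quantization of $\pi$-invariant symbols factors through the quantization on the subtorus (the Weyl quantization of $a\circ\pi$ only involves Fourier modes in the lattice dual to $V$), products $\Op_h(P_{k_0}\circ\pi)\,U^{-1}\Op_h(P_{k_1}\circ\pi)U\cdots$ are controlled exactly as in dimension $2\dim V$. Second, prove the quantum hyperbolic dispersion estimate: for words of length $n\le (1-\epsilon)T_V$ in the forward direction and $n$ in the backward direction, the operator norm of the refined cylinder operators is bounded by $h^{-\dim V/2}\exp\bigl(-\sum_{\lambda_i(A_{|V})\ge0} \lambda_i n\bigr)$ up to $h^{-\epsilon}$ factors. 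Here we follow Rivière's argument for the subtorus, using the fact that the quantum cat map restricted to $\pi$-invariant observables acts like the quantum cat map of $A_{|V}$. Third, apply the entropic uncertainty principle (Maassen--Uffink) to the eigenfunctions $\psi_h$ with these two families of quantum partitions. This yields a lower bound of order $2n\Lambda$ minus $\dim V\,|\log h|/2$ on the sum of the forward and backward quantum entropies, where $\Lambda=\sum_{\lambda_i(A_{|V})\ge 0}\lambda_i$. Choosing $n\approx T_V$ and rewriting gives, per unit time, a bound of $\sum_i\bigl(\lambda_i-\lambda_{max}(A_{|V})/2\bigr)_+$, which is $\Lambda_0(A_{|V})$ once Rivière's refinement (splitting according to exponents) is used. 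Fourth, pass from the quantum entropy at time $T_V$ to a fixed time $n_0$ using the almost subadditivity of quantum entropy, valid up to the Ehrenfest time. Letting $h\to0$ turns the quantum entropies for $n_0$ into the classical entropy of $\pi_\#\mu$ with respect to the partition $(P_k)$, because $\sprod{\Op_h(a\circ\pi)\psi_h}{\psi_h}\to\int a\,\diff\pi_\#\mu$. Finally, let $n_0\to\infty$ and refine the partition.

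The main obstacle is the dispersion estimate together with the subadditivity up to time $T_V$. One must check that the Egorov remainders and the commutators of the evolved $\pi$-invariant symbols remain small up to $T_V$ even though $A$ itself has faster expansion, namely in the directions of $V^\sigma$. I expect this to follow from the exact identity $(a\circ\pi)\circ A^n=(a\circ A_{|V}^n)\circ\pi$ and from the exactness of Egorov for linear symplectic maps at the level of Fourier modes. The main care goes into making sure the quantization on $\T^{2d}$ restricted to $\pi$-invariant symbols is unitarily equivalent to a direct sum of copies of the quantization on the $2\dim V$-torus, possibly with twisted Floquet parameters. This identification reduces everything to Rivière's estimates on the lower-dimensional torus uniformly in those parameters.
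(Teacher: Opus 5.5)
Your overall architecture (lift a partition from $\T^{2d}/V^\sigma$, dispersive estimate, entropic uncertainty principle, subadditivity, $h\to0$) is the paper's. The gap is the time scale, and it decides whether you get $\Lambda_0(A_{|V})$ or nothing. You refine only up to $T_V=|\log h|/(2\lambda_{max}(A_{|V}))$ and claim subadditivity only up to $T_V$. So the operator you feed into the uncertainty principle is a product of $m\approx 2T_V=|\log h|/\lambda_{max}$ factors, and $|\log h|\approx m\lambda_{max}$. Take the dispersive bound $h^{-d'/2}\prod_{i\in S}e^{-m\tilde d_i\lambda_i/2}$ of Theorem \ref{thm_hyp_disp_est}, with $S$ the blocks kept in the expanding part and $d'=\sum_{i\in S}\tilde d_i$. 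It then equals $\exp\bigl(-\tfrac m2\sum_{i\in S}\tilde d_i(\lambda_i-\lambda_{max})\bigr)\ge1$. Even with Rivière's refinement you therefore get the per-unit-time bound $\sum_i(\lambda_i-\lambda_{max})_+=0$, and your crude version gives $\Lambda-\tfrac{\dim V}{2}\lambda_{max}\le0$. The threshold $\lambda_{max}/2$ appears only when each refined partition has length $\approx2T_V$. Then the product has $\approx4T_V$ factors and $|\log h|\approx m\lambda_{max}/2$, as in Corollary \ref{cor_applied_dispersive_estimate}. This is what the paper does. It uses the symmetric refinement $\bigvee_{k=-T}^{T}\CP^q(k)$ with $T=\floor{(1-\epsilon)|\log h|/(2\lambda)}$. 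It applies the conditional entropic uncertainty principle to $M_N(A)^{T+1}\cdot$ and $M_N(A)^{-T}\cdot$ of this partition, which gives a product of $4T+2$ factors. The result is $h^q(\rho_N,\bigvee_{k=-T}^{T}\CP^q(k))\ge 2T(\Lambda_0(A_{|V})-o(\epsilon+\delta))-C$.

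Uniform Egorov up to $T_V$ does not justify length $2T_V$. What does is a commutator bound across time differences up to $2T_V$: $\norm{[\Op^w_h(f\circ A^s),\Op^w_h(g\circ A^t)]}=\CO(h^{\epsilon/2})$ for $|t-s|\le2T_V$ whenever $f$ is $V^\sigma$-invariant. This is condition 2 of Definition \ref{def_effective_ehrenfest_time}. Lemma \ref{lem_good_Ehren_time_obs_Fourier_support} proves it by expanding the commutator exactly in quantum translations. The coefficients are $\sin(2\pi^2h\,\sigma(r,A^{t-s}l))$ with $l\in V\cap\Z^{2d}$, so only $\norm{A_{|V}^{t-s}}$ enters. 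This is what lets Lemma \ref{lem_subadditivity_quantum_part} run on $-T\le t_1\le t_2\le T$; your proposal neither states nor proves it, and without it the argument yields zero.

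Separately, your planned reduction to a direct sum of lower-dimensional torus quantizations is more delicate than you suggest. $\Z^{2d}$ need not split as $(V\cap\Z^{2d})\oplus(V^\sigma\cap\Z^{2d})$, $\sigma$ need not be unimodular on $V\cap\Z^{2d}$, and $M_N(A)$ need not respect such a decomposition. The paper avoids this entirely:
\begin{itemize}
\item it bounds norms on $\CH_N(\theta)$ by norms on $L^2(\R^d)$ via \eqref{ineq_bound_norms_pseudo_on H_N_via_L2_norms};
\item it conjugates by a real symplectic matrix to split $V\oplus^\sigma V^\sigma$;
\item it uses $M_h(\tilde A\diamond B)=M_h(\tilde A)\otimes M_h(B)$.
\end{itemize}
The price is that the lifted symbols are not compactly supported, which the paper handles with splittable functions and Cotlar--Stein (Lemma \ref{lem_Cotlar-Stein_trajectory}). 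Finally, the prefactor in your dispersive estimate should be $h^{-\dim V/4}$, not $h^{-\dim V/2}$; your later term $\dim V\,|\log h|/2$ in fact already uses $h^{-\dim V/4}$.
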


The proof of this theorem is almost identical to the standard one \cite{anantharamanEntropySemiclassicalMeasures2007,anantharamanHalfdelocalizationEigenfunctionsLaplacian2007a,riviereEntropySemiclassicalMeasuresSymplecticMapTorus2011}, except that we consider observables that depend only on the direction $V \otimes \R$ and thus need to adapt the hyperbolic dispersive estimate \ref{cor_applied_dispersive_estimate} to allow such observables. It serves as a reminder, or quick presentation, of the standard reasoning to prove lower bounds on the entropy of semiclassical measures, before delving into the proof of Theorem \ref{thm_lower_bound_full_entropy}. 

We also deduce as a corollary an extension of a result of Kim \cite[thm. 1.3]{kimCharacterizingSupportSemiclassical2025a} that was obtained by fractal uncertainty principle techniques. 

\begin{corollary}\label{cor_no_scarring_translate_symplectic_tori}
    Let $A \in \Sp_{2d}(\Z)$ and $V$ a rational symplectic $A$-invariant subspace. Let $\Sigma$ be the subtorus of $\T^{2d}$ with tangent space $V$ and $K$ a subset of the subtorus of $\T^{2d}$ with tangent space $V^\sigma$. Suppose that there exists a semiclassical measure $\mu$ for $A$ supported in $\Sigma \times K$, then 
    \begin{equation*}
        h_{top}(K, A) \ge h_{KS}(\pi_{\#}\mu,A) \ge \Lambda_0(A_{|V^\sigma}),
    \end{equation*}
    where $\pi : \T^{2d} \to \T^{2d} /V$ is the natural projection. In particular, if $A_{|V^\sigma}$ has a positive Lyapunov exponent, then $K$ cannot be a finite set.
\end{corollary}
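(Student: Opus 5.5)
The plan is to combine Theorem \ref{thm_lower_bound_entropy_projection}, applied to the subspace $V^\sigma$, with the variational principle on the quotient torus. First, $V^\sigma$ is itself a rational symplectic $A$-invariant subspace. It is rational because the symplectic form has integer coefficients, so the symplectic orthogonal of a rational subspace is rational. It is symplectic because $V$ is, so $\Q^{2d} = V \oplus V^\sigma$. It is $A$-invariant because $A$ is symplectic and preserves $V$. We may therefore apply Theorem \ref{thm_lower_bound_entropy_projection} with $V$ replaced by $V^\sigma$. Since $(V^\sigma)^\sigma = V$, the projection appearing there is exactly $\pi : \T^{2d} \to \T^{2d}/V$. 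This gives the right-hand inequality
\begin{equation*}
    h_{KS}(\pi_\#\mu, A) \ge \Lambda_0(A_{|V^\sigma}).
\end{equation*}
This inequality does not use the support assumption.

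For the left-hand inequality, the plan is to identify the image of the support. The quotient $\T^{2d}/V$ is a torus of dimension $\dim V^\sigma$, and $A$ acts on it by the induced automorphism. Let $\Sigma'$ denote the subtorus with tangent space $V^\sigma$. The restriction $\pi_{|\Sigma'} : \Sigma' \to \T^{2d}/V$ is a surjective homomorphism with finite kernel $\Sigma \cap \Sigma'$. It is equivariant, intertwining $A_{|\Sigma'}$ with the induced map, and it is a finite-to-one factor map. Since $\mu$ is supported in $\Sigma \times K$, which I read as the set $\Sigma + K$, we have $\pi(\Sigma + K) = \pi(K)$. Hence $\pi_\#\mu$ is an $A$-invariant probability measure supported on the compact set $\pi(K)$. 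We may replace $K$ by a closed $A$-invariant subset: the support of $\mu$ is closed and $A$-invariant, so we can intersect $K$ with the corresponding set. Then the variational principle on the compact invariant set $\pi(K)$ gives
\begin{equation*}
    h_{KS}(\pi_\#\mu, A) \le h_{top}(\pi(K), A).
\end{equation*}

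Next we need $h_{top}(\pi(K), A) \le h_{top}(K, A)$. This holds because topological entropy does not increase under factor maps, here the restriction of $\pi$ to $K$. In fact equality holds, since the factor map is finite-to-one with uniformly bounded fibres, but the inequality suffices. This chain gives $h_{top}(K,A) \ge h_{KS}(\pi_\#\mu, A) \ge \Lambda_0(A_{|V^\sigma})$.

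The final claim then follows at once. If $A_{|V^\sigma}$ has a positive Lyapunov exponent, then $\lambda_{max}(A_{|V^\sigma}) > 0$. Any exponent equal to $\lambda_{max}$ contributes $\lambda_{max}/2 > 0$ to $\Lambda_0$, so $\Lambda_0(A_{|V^\sigma}) > 0$. A finite set has zero topological entropy, so $K$ cannot be finite.

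The main obstacle is not analytic, since all the hard work sits in Theorem \ref{thm_lower_bound_entropy_projection}. It is the bookkeeping needed to make the support hypothesis meaningful and to make $h_{top}(K,A)$ well defined. This requires checking that $K$ can be taken closed and $A$-invariant, and checking that the finite intersection $\Sigma \cap \Sigma'$ does not spoil the identification of $\pi_{|\Sigma'}$ as a factor map. I would handle this by passing to $A$-invariant closed subsets as above and invoking the monotonicity of topological entropy under factors.
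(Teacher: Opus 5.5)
Your proposal is correct and is the paper's own argument. You apply Theorem~\ref{thm_lower_bound_entropy_projection} to $V^\sigma$, whose symplectic orthogonal is $V$, so the projection is onto $\T^{2d}/V$; you then use the variational principle on $\pi(K)$ and the vanishing of topological entropy on finite sets. The bookkeeping you add is what the paper leaves implicit: that $V^\sigma$ is rational, symplectic and $A$-invariant, and that $\pi_{|\Sigma'}$ is a finite-to-one factor map, so $h_{top}(\pi(K),A)\le h_{top}(K,A)$.
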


This corollary is deduced from Theorem \ref{thm_lower_bound_entropy_projection}, the variational principle \cite[thm. 4.5.3]{katokIntroductionModernTheory1995} and the fact that the topological entropy of a finite set is always zero. 

\begin{remark}
    Compared to the result of Kim, this corollary removes the diagonalizability condition for the matrix $A$ at the cost of only considering symplectic subspaces. We also suspect, as Kim does, that the corollary should hold as long as there exists a rational $A_{|V^\sigma}$-invariant complement to $V \cap V^\sigma$ inside $V^\sigma$, where $V^\sigma$ is the symplectic orthogonal of $V$. To prove it with entropy techniques would require a version of the hyperbolic dispersive estimate \ref{cor_applied_dispersive_estimate} that does not impose that $V$ be symplectic. Note that Corollary \ref{cor_no_scarring_translate_symplectic_tori} works if there exists a symplectic rational $A_{|V^\sigma}$-invariant complement to $V \cap V^\sigma$ inside $V^\sigma$.
\end{remark}

\subsection{A panorama on quantum cat maps}\label{subsec_review_literature_quantum_cat_maps}

We do not claim exhaustiveness in this review as the goal is to give a panoramic view of the results and directions of research on quantum cat maps that were first introduced by Hannay and Berry \cite{hannayQuantizationLinearMaps1980a} in dimension $2$. Keating studied the distribution of their periods \cite{keatingAsymptoticPropertiesPeriodic1991}, which he found to be quite erratic and related to number theoretic considerations, and the distribution of their eigenvalues \cite{keatingCatMapsQuantum1991} and proved in particular an exact Gutzwiller trace formula for them. The analog of the quantum ergodicity theorem for quantum cat maps was obtained by Bouzouina and De Bièvre \cite{bouzouinaEquipartitionEigenfunctionsQuantized1996a} and a rate for quantum ergodicity was obtained by Schubert \cite{schubertRateQuantumErgodicity2008} adapting similar techniques \cite{schubertUpperBoundsRate2006a} he used to study the same question for the Laplacian. 

\subsubsection{Scarring phenomenon}

To investigate the Ehrenfest time of quantum cat maps, Bonechi and De Bièvre studied the evolution of coherent states. Denoting by $\lambda$ the Lyapunov exponent of the classical map, they found that the Husimi distribution of a coherent state evolved for a time smaller than $|\log h|/2\lambda$ converges to a Dirac mass, while it equidistributes when evolved for times between $|\log h|/2\lambda$ and $|\log h|/\lambda$. Building on this insight and using short periods of the quantum cat maps, De Bièvre, Faure and Nonnenmacher \cite{faureScarredEigenstatesQuantum2003} produced a counterexample to the unique quantum ergodicity and proved that any measure of the form $(\Leb + \delta_0)/2$ is a semiclassical measure for $A$ if $A$ is hyperbolic. The singular behavior of this semiclassical measure at $0$ is called a scar. Bonechi and De Bièvre \cite{bonechiControllingStrongScarring2003a} proved a bound on the mass of the atomic part of semiclassical measures thus limiting the scarring phenomenon. This bound was later made optimal by Faure and Nonnenmacher \cite{faureMaximalScarringQuantum2004}. 

Another example of scarring called super-scars is known in higher dimensions \cite{gurevichWeilRepresentationDeligne2006,kelmerArithmeticQuantumUnique2010a}: if there exists an isotropic rational $A$ invariant subspace of dimension $d_1$, then there exists a subtorus $X$ of co-dimension $d_1$ such that the Lebesgue measure restricted to $X$ is a semiclassical measure for $A$. Kelmer proved that such a phenomenon endures when considering perturbations of quantum cat maps \cite{kelmerScarringInvariantManifolds2007a} or Hecke eigenvectors \cite{kelmerArithmeticQuantumUnique2010a}.

\subsubsection{Constraints on semiclassical measures: full support and entropy lower bound}

While scarring is possible in dimension $2$, Schwartz \cite{schwartzFullDelocalizationEigenstates2024a} showed that any semiclassical measure of a hyperbolic matrix $A \in \Sp_2(\Z)$ has full support. His work was based on that of Dyatlov and Jin \cite{dyatlovSemiclassicalMeasuresHyperbolic2018}, who proved that all semiclassical measures for the Laplacian on hyperbolic surfaces have full support using a fractal uncertainty principle of Bourgain and Dyatlov \cite{bourgainSpectralGapsPressure2018}, a result they extended with Nonnenmacher \cite{dyatlovControlEigenfunctionsSurfaces2022} to variable negative curvature. 

The full support of semiclassical measures for quantum cat maps was extended by Dyatlov and Jezequel \cite{dyatlovSemiclassicalMeasuresHigherdimensional2024} to higher dimensions under the hypothesis that $A$ has a simple leading eigenvalue and an irreducible characteristic polynomial over $\Q$. Kim \cite{kimCharacterizingSupportSemiclassical2025a}, using a higher-dimensional version of the fractal uncertainty principle developed by Cohen \cite{cohenFractalUncertaintyHigher2024}, later removed the simple leading eigenvalue hypothesis, while requiring at least one non-unit eigenvalue for $A$ and the irreducibility of the characteristic polynomial of any power of $A$. These last two results have a counterpart for the Laplacian on negatively curved manifolds: Athreya, Dyatlov and Miller \cite{athreyaSemiclassicalMeasuresComplex2024} obtained some constraints on the support of semiclassical measures on complex hyperbolic quotients, while Kim and Miller \cite{kimSemiclassicalMeasuresHyperbolic2025} proved similar results on hyperbolic manifolds of any dimension. 

Another constraint on semiclassical measures is the topic of this article, namely lower bounds on their Kolmogorov-Sinai entropy. As already mentioned, the first bound for quantum maps is due to Anantharaman and Nonnenmacher who proved the analog of \eqref{conj_lower_bound_entropy_half_ruelle_bound} for the Walsh-quantized baker's map. For quantum cat maps, Rivière \cite{riviereEntropySemiclassicalMeasuresSymplecticMapTorus2011} proved the general bound \eqref{lower_bound_entropy_gabriel}, which is positive as long as $A$ admits a non-unit eigenvalue. A stronger result due to Brooks \cite{brooksEntropyQuantumLimits2010a} holds in dimension $2$ for hyperbolic matrices, namely that the ergodic components of a semiclassical measure of low entropy carry less mass than those of high entropy. This gives both the bound \eqref{conj_lower_bound_entropy_half_ruelle_bound_torus} and the one in \cite{faureMaximalScarringQuantum2004}. We mention that Gutkin \cite{gutkinEntropicBoundsSemiclassical2010} studied a class of piecewise affine maps of the interval, proved the analog of the conjecture \eqref{conj_lower_bound_entropy_half_ruelle_bound} for them and the optimality of the bound. 

\subsubsection{Arithmetic quantum unique ergodicity and Hecke eigenfunctions}

Similarly to the Hecke operators on arithmetic surfaces, Kurlberg and Rudnick \cite{kurlbergHeckeTheoryEquidistribution2000} introduced, given a symplectic matrix $A \in \Sp_{2}(\Z)$ satisfying some arithmetic conditions, Hecke operators that commute with the quantum cat map $M_N(A)$ associated to $A$ and proved arithmetic quantum unique ergodicity. More precisely, they showed that given any sequence of Hecke eigenfunctions, i.e. common eigenvectors of the Hecke operators and $M_N(A)$, the associated Wigner distributions converge in the semiclassical limit to the Lebesgue measure, with an explicit rate of convergence of order $N^{-1/4}$. They conjectured that the optimal rate was $N^{-1/2}$, which was proven by Gurevich and Hadani \cite{gurevichProofKurlbergRudnickRate2011} when $N$ is restricted to the prime numbers. Arithmetic quantum unique ergodicity for quantum cat maps was extended to higher dimensions by Kelmer \cite{kelmerArithmeticQuantumUnique2010a} who proved that the only obstruction is the super-scar phenomenon described before. 

The relation between the supremum norm of the Hecke eigenfunctions and the arithmetic properties of $A$ and $N$ was studied by Kurlberg and Rudnick \cite{kurlbergDistributionMatrixElements2005a}, Kurlberg \cite{kurlbergBoundsSupremumNorms2007} and Olofsson \cite{olofssonHeckeEigenfunctionsQuantized2009,olofssonLargeSupremumNorms2009,olofssonLargeNewformsQuantized2010a}. A lower and upper bound for the supremum norm of all eigenvectors, i.e. not only Hecke eigenfunctions, was obtained by Kim and Koirala \cite{kimBoundsEigenfunctionsQuantum2023} in dimension $2$.

Kurlberg and Rudnick \cite{kurlbergDistributionMatrixElements2005a} conjectured a limit distribution for the fluctuations of the Wigner distribution of Hecke eigenfunctions, the so-called matrix elements, and they justified their conjecture by a computation of the second and fourth moments. This question was also investigated by Kelmer \cite{kelmerMatrixElementsQuantized2008a} and Rosenzweig \cite{rosenzweigFluctuationsMatrixElements2011}. 

Rivière and Wolf \cite{riviereLebesgueComponentSemiclassical2025} considered another subclass of eigenvectors of quantum cat maps, namely joint eigenvectors of two quantum cat maps $M_N(A)$ and $M_N(B)$. They found that under some conditions on the Galois group of the characteristic polynomial of $A$, a good choice of $B$ ensured that the semiclassical measures of the joint eigenvectors always have a large Lebesgue component.

\subsubsection{Quantum unique ergodicity along a subsequence of semiclassical parameters}

The scarring example of De Bièvre, Faure and Nonnenmacher \cite{faureScarredEigenstatesQuantum2003} shows that quantum unique ergodicity does not hold for quantum cat maps. Since their example was built by using extremely particular values of $N$ for which the quantum cat map has short periods of order $\log(N)$, can we expect QUE to hold by restricting the values of the semiclassical parameter $N$? 

This question was first investigated by Degli Esposti, Graffi and Isola \cite{espostiClassicalLimitQuantized1995} who showed that when $N$ is restricted to prime numbers such that the period of the quantum cat map is small, then QUE holds. This result was extended by Kurlberg and Rudnick \cite{kurlbergQuantumErgodicityLinear2001} who proved that QUE holds along a subset of density one of $N$ for a hyperbolic matrix $A \in \Sp_2(\Z)$. The values of $N$ are chosen such that the period of $A$ modulo $N$ is larger than $\sqrt{N}$ and they also studied the fluctuations of matrix elements along the same sequence of $N$ \cite{kurlbergMatrixElementsQuantum2007a}. The QUE results were improved by Bourgain \cite{bourgainRemarkQuantumErgodicity2007} who provided a rate of convergence of order $N^{-\delta}$ and only required that the period of $A$ modulo $N$ be larger than $N^\epsilon$. Ostafe, Shparlinski and Voloch \cite{ostafeEquationsCharacterSums2023} showed that one can take $\delta = 1/60$ in Bourgain's rate of convergence. 

Quantum unique ergodicity along a sequence of $N$ of density one in all dimensions was recently proven by Kurlberg, Ostafe, Rudnick and Shparlinski \cite{kurlbergQuantumErgodicityHigher2025a} under some natural hypotheses on the characteristic polynomial of $A$, preventing in particular super-scars. A precise rate of convergence was proved by Bhakta and Shparlinski \cite{bhaktaQuantumErgodicityHigherdimensional2026} when $N$ is restricted to the powers of a well-chosen prime number, the set of which has positive density in the set of all primes. 

\subsecnonewline{Structure of the paper} The new notion of conditional quantum dynamical entropy is defined in Section \ref{sec_entropy}, its subadditive properties are investigated and an entropic uncertainty principle is proved for it. The quantization of symplectic linear maps on the torus is recalled in Section \ref{sec_quantization_torus}. We study the Ehrenfest time in Section \ref{sec_effective_ehrenfest_time} and introduce a notion of effective Ehrenfest time for a given observable. We show that this notion interacts well with the conditional quantum entropy. Section \ref{sec_hyperbolic_disp_estimate} contains the proof of the needed hyperbolic dispersive estimate. All the tools are combined in Section \ref{sec_full_proofs} to prove Theorems \ref{thm_lower_bound_entropy_projection},\ref{thm_lower_bound_full_entropy} and \ref{thm_entropy_bound_split_torus_one_lyap_exp}.

\subsecnonewline{Conventions and notations} We fix our conventions and give some notations for the symplectic geometry on $\R^{2d}$ and $\T^{2d}$. The spaces $\R^{2d}$ and $\T^{2d}$ are endowed with a symplectic structure given by the natural symplectic form $\sigma$ defined as
\begin{equation*}
    \sigma(z,z') = \sprod{z}{Jz'} \qtq{for} z, z' \in \R^{2d} \text{ or } \T^{2d} \qqtqq{where} J = \begin{pmatrix*}
        O & - I_d \\ I_d & 0         
    \end{pmatrix*}.
\end{equation*}
The symplectic orthogonal of a vector subspace $V$ is denoted $V^\sigma$. 

Let $f : ]0, \epsilon[ \to E$ be a function defined near $0^+$ with values in a space $E$ endowed with a countable family of seminorms $(\norm{\cdot}_n)_n$. We say that $f(h) = \CO(h^\alpha)$ for an $\alpha \ge 0$ if 
\begin{equation*}
    \forall n \ge 0, \; \exists C_n \ge 0 , \quad \norm{f(h)}_n \le C_n h^\alpha
\end{equation*}
We denote $f(h) = \CO(h^\infty)$ when $f(h) = \CO(h^k)$ for all $k \ge 0$.

\subsecnonewline{Acknowledgements} I warmly thank my advisor Nalini Anantharaman for introducing me to this subject and for the numerous discussions about it, as well as her support during the writing of this article. I used a Large Language Model, Coder Gwen 3.8 27B NVFP4, only to correct typos and English mistakes, find references in the literature and review the results once they were written.

\section{Classical, quantum and conditional entropies}\label{sec_entropy}

In their proof of the bound \eqref{conj_lower_bound_entropy_half_ruelle_bound} on hyperbolic manifolds, Anantharaman and Nonnenmacher \cite{anantharamanHalfdelocalizationEigenfunctionsLaplacian2007a} introduce the entropy-like quantity 
\begin{equation}\label{eq_quantum_entropy_anantharaman_nonnenmacher}
    h(\psi, (P_i)_{i \in I}) = \sum_{i \in I} \eta(\norm{P_i \psi}^2),
\end{equation}
where $\psi$ is a normed vector in a Hilbert space $\CH$ and $(P_i)_{i \in I}$ is a family of operators on $\CH$ that splits the identity, meaning
\begin{equation}\label{eq_quantum_partition_hilbert_space}
    \sum_{i \in I} P_i^* P_i = Id_{\CH} .
\end{equation}
Such a family of operators is called a quantum partition and the relation \eqref{eq_quantum_partition_hilbert_space} implies that $(\norm{P_i \psi}^2)_{i \in I}$ is a probability measure on the set $I$. 

The goal of this section is to present a conditional version of this quantum entropy and extend it to $C^*$ algebras since it does not require much effort. We refer the reader to \cite[chap. 5]{alickiQuantumDynamicalSystems2001a} for a quick introduction to $C^*$ algebras centered on quantum dynamics and to \cite{davidsonC^algebrasExample1996} for a more standard reference. 

Before delving into the quantum world, we first skim through the classical one and recall the definition of the entropy of a partition, its properties and the definition of the Kolmogorov-Sinai entropy. Let $(X, \CA, \mu)$ be a measured space with $\mu$ a probability measure and $\CP = (p_i)_{i \in I}$ and $\CQ = (q_j)_{j \in J}$ two finite measurable partitions of unity of $X$, meaning that 
\begin{equation*}
    \sum_{i \in I} p_i = \sum_{j \in J} q_j = 1 \qqtqq{and} p_i \ge 0, \quad q_j \ge 0.
\end{equation*}
The join of the two partitions of unity $\CP$ and $\CQ$ is $\CP \vee \CQ = (p_i q_j)_{i,j}$. 
The entropy of $\CP$ for $\mu$ is defined by
\begin{equation*}
    h^{c}(\mu,\CP) = \sum_{i \in I} \eta(\mu(p_i)) \qqtq{where} \eta(x) = -x \log(x).
\end{equation*}
If $f:X \to \R^+$ is a non-negative measurable function on $X$ such that $\mu(f) >0$, the restriction of $\mu$ to $f$ is a probability measure given by $\mu_{|f}(g) = \mu(fg)/\mu(f)$ on any measurable function $g$ on $X$. The conditional entropy of $\CQ$ with respect to $\CP$ is defined by 
\begin{equation*}
    h^{c}(\mu, \CQ | \CP) = \sum_{i\in I} \mu(p_i) h^c(\mu_{|p_i}, \CQ) \qqtq{where} h^c(\mu_{|p_i}, \CQ) = 0 \qtq{if} \mu(p_i) = 0,
\end{equation*}
The entropy satisfies the following properties for any partitions of unity $\CP$ and $\CQ$,
\begin{equation}\label{relation_cond_classical_entropy}
    h^c(\mu, \CQ \vee \CP) = h^c(\mu,\CP) + h^c(\mu, \CQ|\CP ) \qtq{and} h^c(\mu, \CP | \CQ ) \le h^c(\mu, \CP).
\end{equation}
Those two properties imply that the entropy is subadditive, meaning that 
\begin{equation*}
    h^c(\mu, \CP \vee \CQ) \le h^c(\mu, \CP) + h^c(\mu, \CQ) .
\end{equation*}

To recover the usual definition of the entropy of a finite partition of $X$ in measurable sets $\CP = (A_i)_{i \in I}$, we can associate to it the partition of unity $\tilde{\CP} = (\mathbb{1}_{A_i})_{i \in I}$ given by the indicator functions of the sets $A_i$. It is straightforward to check that 
\begin{equation*}
    h^c(\mu, \tilde{\CP}) = \sum_{i \in I} \eta(\mu(A_i)),
\end{equation*}
and the right-hand side is the standard definition of the entropy of a partition. 

The Kolmogorov-Sinai entropy is an invariant of a measured dynamical system that, intuitively, informs us on the growth rate of the number of different trajectories of length $n$. Let $T : X \to X$ be a measurable map that preserves the measure $\mu$. The Kolmogorov-Sinai entropy of $(T, \mu)$ with respect to a finite partition of unity $\CP = (p_i)_{i\in I}$ of $X$ is given by 
\begin{equation*}
    h_{KS}(\mu,T, \CP) = \lim_{m \to \infty} \frac{1}{m} h^c(\mu, \CP \vee T^{-1} \CP \vee \dots \vee T^{-m+1} \CP),
\end{equation*}
where $T^{-k} \CP = (p_i \circ T^k)_{i \in I}$ for $k \in \N$.
The Kolmogorov-Sinai entropy of $(T, \mu)$ is then given by the supremum over all finite partitions:
\begin{equation*}
    h_{KS}(\mu,T) = \sup_{\CP \text{ finite partition}} h_{KS}(\mu, T, \CP).
\end{equation*}
We refer the reader to \cite[chap. 4]{katokIntroductionModernTheory1995} for an introduction to ergodic theory and the Kolmogorov-Sinai entropy.

\subsection{A conditional quantum dynamical entropy} 

Let $\CA$ be a unital $C^*$-algebra. The equivalent of a probability measure in this context is a positive continuous linear functional of norm $1$ on $\CA$. Denote by $\mathcal{E}$ the set of such functionals and call them states. Any state $\rho \in \CE$ satisfies a Cauchy-Schwarz inequality 
\begin{equation}
    |\rho(a^*b)|^2 \le \rho(a^*a) \rho(b^*b) \qqtq{for} a,b \in \CA.
\end{equation}
Moreover, a continuous linear functional $\rho$ on $\CA$ is a state if and only if $\rho(\mathbb{1}) = 1$ \cite[][lem. I.9.9]{davidsonC^algebrasExample1996}.

To build a conditional quantum entropy, we define a quantum analog of the restriction of a probability measure to a measurable set. Let $\rho \in \CE$ and $p \in \CA$ such that $\rho(p^*p) \neq 0$. The \textit{restriction of a state $\rho$ to $p$} is defined by 
\begin{equation*}
    \rho_{|p} : x \in \CA \mapsto \frac{\rho(p^* x p)}{\rho(p^* p)}.
\end{equation*}
It is clear that $\rho_{|p} \in \mathcal{E}$, which implies that $|\rho_{|p}(a)| \le \norm{a}$ for any $a \in \CA$. This last inequality is equivalent to 
\begin{equation}\label{ineq_restriction_state_C*algebra}
    |\rho(p^*ap)| \le \norm{a} \rho(p^* p) \qqtq{for} a \in \CA.
\end{equation}

A \textit{quantum partition} $\CP$ is a finite family of elements $(p_i)_{i \in I}$ splitting the unit of $\CA$
\begin{equation*}
    \sum_{i \in I} p_i^* p_i = \mathbb{1}.
\end{equation*}
The \textit{joint} of two quantum partitions $\CP =  (p_i)_{i \in I}$ and $\CQ = (q_j)_{j \in J}$ is defined as the quantum partition 
\begin{equation*}
    \CP \vee \CQ = (p_i q_j)_{(i,j) \in I \times J}.
\end{equation*}
It is straightforward to check that $\CP \vee \CQ$ is indeed a quantum partition. The main difference with the classical case is that the join operation is not commutative, meaning that in general $\CP \vee \CQ \neq \CQ \vee \CP$. The join operation remains nonetheless associative.

We now turn to quantum dynamical entropies and define them analogously to the classical entropy of a partition for a probability measure. 
\begin{definition}
    Let $\rho \in \CE$ be a state and $\CP, \CQ$ be quantum partitions. The \textit{quantum dynamical entropy} of $\rho$ with respect to $\CP$ is defined as 
    \begin{equation*}
        h^q(\rho, \CP) = \sum_{p \in \CP} \eta(\rho(p^* p)).
    \end{equation*}
    The quantum dynamical entropy of $\rho$ with respect to $\CQ$ conditioned by $\CP$ is defined as 
    \begin{equation*}
        h^q(\rho, \CQ|\CP) = \sum_{p \in \CP} \rho(p^* p) h^q(\rho_{|p}, \CQ),
    \end{equation*}
    where $h^q(\rho_{|p}, \CQ) = 0$ if $\rho(p^* p) = 0$.
\end{definition}

If $\CA = \CB(\CH)$ is the $C^*$ algebra of bounded operators on a Hilbert space $\CH$ and $\psi \in \CH$ is a normalized vector, we denote 
\begin{equation*}
    h^q(\psi, \cdot) := h^q(\rho_{\psi}, \cdot) \qqtq{where} \rho_\psi(x) = \sprod{x \psi}{\psi}.
\end{equation*}
It is straightforward to check that $\rho_\psi$ is a state on $\CA$ as $\psi$ is normalized.

\begin{remark}
    Other definitions of quantum entropy for a state on a Hilbert space exist, the most prominent one being the Von Neumann entropy. Let $\rho$ be a density matrix on a Hilbert space $\CH$, meaning a trace-one positive operator on $\CH$. The Von Neumann entropy of $\rho$, generally denoted $S(\rho)$, is defined as
    \begin{equation*}
        S(\rho) := \tr(\eta(\rho)) = -\tr(\rho \log( \rho)) = \sum_{n \ge 0} \eta(\lambda_n),
    \end{equation*}
    if $(\lambda_n)_n$ is the spectrum of $\rho$. The entropy defined by \eqref{eq_quantum_entropy_anantharaman_nonnenmacher} can be seen as a Von Neumann entropy. It is the Von Neumann entropy of the diagonal matrix whose diagonal coefficients are the $(\norm{P_i \psi}^2)_{i \in I}$. 
    
    Alicki and Fannes \cite[chap. 10]{alickiQuantumDynamicalSystems2001a} defined a quantum dynamical entropy associated to a quantum partition by also considering the non-diagonal coefficients. More precisely, given a quantum partition $\CP = (p_i)_{1 \le i\le n}$ and a state $\rho \in \CE$ on $\CA$, they defined the entropy of $\rho$ with respect to $\CP$ as 
    \begin{equation*}
        S(\Gamma_\CP(\rho)) \qqtq{where} \Gamma_\CP(\rho) = (\rho(p_i^*p_j))_{1 \le i,j \le n} \in M_n(\C).
    \end{equation*}
    The matrix $\Gamma_\CP(\rho)$ is a positive trace-one matrix in $M_n(\C)$ because $\rho$ is positive of norm $1$ and $\CP$ is a quantum partition. Our quantum dynamical entropy is similar but we only keep the diagonal coefficients of $\Gamma_\CP$. This does not change much as we will only consider quantum partitions that satisfy $\norm{p_i^* p_j} \ll 1$ if $i \neq j$. Under such a condition, the matrix $\Gamma_\CP$ is a small perturbation of the matrix obtained by only keeping the diagonal coefficients. 
\end{remark}

The unitary elements of $\CA$ naturally act on quantum partitions and states. More precisely, a unitary element $u \in \CA$ acts on a state $\rho \in \CE$ and on a quantum partition $\CP = (p_i)_{i \in I}$ by
\begin{equation*}
    (u \cdot \rho)(\cdot) = \rho(u^{-1} \cdot u) \qqtqq{and} u \cdot \CP = (u^{-1} p_i u)_{i \in I}.
\end{equation*}
With this convention, we obtain the following relation for the conditional entropy of $u \cdot \CQ$ with respect to $\CR$
\begin{equation}\label{eq_action_unitary_conditional_quantum_entropy}
    h^q(\rho, u \cdot \CQ | \CR) = h^q(u \cdot \rho, \CQ | u^{-1} \cdot \CR) .
\end{equation}

For classical entropies, getting rid of the conditioning always increases the entropy. In the quantum setting, this may no longer be the case. Nevertheless, the error can be controlled by the lack of commutativity of the quantum partitions involved. 

\begin{theorem}\label{thm_prop_cond_quantum_entrop_quantum_part}
    Let $\CP$, $\CQ$ and $\CR$ be three quantum partitions of $\CA$ and $\rho \in \CE$ be a state. Then, the following holds
    \begin{itemize}
        \item $h^q(\rho, \CQ \vee \CP |\CR) = h^q(\rho, \CP| \CR) + h^q(\rho, \CQ | \CP \vee \CR)$,
        \item $h^q(\rho, \CQ | \CP \vee \CR) \le h^q(\rho, \CQ|\CR) + \CO(\norm{[\CP,\CQ]})$,
        \item $h^q(\rho, \CP| \CR) + h^q(\rho, \CQ | \CR) \ge -2 \log(\max_{p \in \CP, q \in \CQ} \norm{p^*q})$.
    \end{itemize}
    The implicit constant in the error term depends on the cardinal $|\CQ|$ of the partition $\CQ$ and 
    \begin{equation*}
        \norm{[\CP,\CQ]} = \sum_{q \in \CQ, p \in \CP} \norm{[p^*, q^*q]p}.
    \end{equation*}
\end{theorem}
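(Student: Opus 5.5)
The plan is to prove the three items in order. All three follow from direct manipulation of the definitions, the restriction inequality \eqref{ineq_restriction_state_C*algebra}, and elementary facts about $\eta$.

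\textbf{Chain rule.} Expand the definitions. For $r\in\CR$, $p\in\CP$, $q\in\CQ$ the element of $\CQ\vee\CP$ is $qp$, and
\begin{equation*}
\rho_{|r}((qp)^*qp)=\frac{\rho(r^*p^*q^*qpr)}{\rho(r^*r)} .
\end{equation*}
Write $\alpha_{r,p}=\rho(r^*p^*pr)$ and $\beta_{r,p,q}=\rho(r^*p^*q^*qpr)$. The partition identity $\sum_q q^*q=\mathbb 1$ gives $\sum_q\beta_{r,p,q}=\alpha_{r,p}$. Apply $\eta(xy)=x\eta(y)+y\eta(x)$ with
\begin{equation*}
\frac{\beta_{r,p,q}}{\rho(r^*r)}=\frac{\alpha_{r,p}}{\rho(r^*r)}\cdot\frac{\beta_{r,p,q}}{\alpha_{r,p}}
\end{equation*}
and sum over $q$. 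This splits $h^q(\rho_{|r},\CQ\vee\CP)$ into $h^q(\rho_{|r},\CP)$ plus $\sum_p \rho_{|r}(p^*p)\,h^q(\rho_{|pr},\CQ)$. Weighting by $\rho(r^*r)$ and summing over $r$ gives the identity, because $\rho(r^*r)\rho_{|r}(p^*p)=\rho((pr)^*(pr))$, and this is exactly the weight of $pr\in\CP\vee\CR$ in $h^q(\rho,\CQ|\CP\vee\CR)$. Terms with vanishing weights are handled by the convention $h^q=0$, together with $\eta(0)=0$.

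\textbf{Conditioning almost decreases entropy.} Fix $r$ and set $\sigma=\rho_{|r}$. The weighted average $\sum_p\sigma(p^*p)\,\sigma_{|p}(q^*q)$ equals $\sigma(\sum_p p^*q^*qp)$. If $p$ and $q^*q$ commuted, this would be $\sigma(q^*q)$, and concavity of $\eta$ (Jensen) would give
\begin{equation*}
\sum_p\sigma(p^*p)\,\eta(\sigma_{|p}(q^*q))\le\eta(\sigma(q^*q)) .
\end{equation*}
In general we have $p^*q^*qp=p^*pq^*q+p^*[q^*q,p]$, so
\begin{equation*}
\Big|\sigma\big(\textstyle\sum_p p^*q^*qp\big)-\sigma(q^*q)\Big|\le\sum_p\norm{[p^*,q^*q]p}
\end{equation*}
after taking adjoints appropriately, since $\sigma$ is real on self-adjoint elements. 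Summing the error over $q$ and then averaging over $r$ yields the dependence on $\norm{[\CP,\CQ]}$.

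This is the \textbf{main obstacle}: $\eta$ is not Lipschitz near $0$, so a perturbation $\delta$ of the argument only costs $\eta(\delta)$-type terms rather than $\CO(\delta)$. I would first try to bound $|\eta(a)-\eta(b)|\le C\,\omega(|a-b|)$ with modulus of continuity $\omega(t)=t|\log t|$. If that does not give a clean $\CO(\norm{[\CP,\CQ]})$, I would interpret the $\CO$ loosely, as a bound that vanishes uniformly when the commutator norm is small, with constants depending on $|\CQ|$ through the number of terms. This is consistent with the statement that the implicit constant depends on $|\CQ|$.

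\textbf{Uncertainty principle.} The plan is a Maassen–Uffink type argument, simplified since we use diagonal entropies. Fix $r$ and work with $\sigma=\rho_{|r}$. The quantity $\sum_p\sigma(p^*p)\log\sigma(p^*p)+\sum_q\sigma(q^*q)\log\sigma(q^*q)$ needs to be bounded above by $2\log c$, where $c=\max\norm{p^*q}$. The simplest route is through min-entropy: since $\eta$-entropy dominates $-\log\max$, it suffices to show
\begin{equation*}
\max_p\sigma(p^*p)\cdot\max_q\sigma(q^*q)\le c^2 .
\end{equation*}
However, a bound of the form $\sigma(p^*p)\sigma(q^*q)\le c^2$ is false in general for a single pair. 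So I would instead follow Anantharaman–Nonnenmacher and apply Riesz–Thorin interpolation to the map $(x_p)\mapsto(q\,\sum_p p^*x_p)$ in the GNS representation of $\sigma$. This map has $\ell^2\to\ell^2$ norm at most $1$ and $\ell^1\to\ell^\infty$ norm at most $c$. Differentiating the resulting $\ell^{p'}$–$\ell^{p}$ bound at $p=2$ gives $h(\sigma,\CP)+h(\sigma,\CQ)\ge-2\log c$. Averaging over $r$ with weights $\rho(r^*r)$ gives the conditional version.
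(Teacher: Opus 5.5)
Your chain-rule computation is the paper's (Lemma~\ref{lem_key_relation_conditional_quantum_entropy} with $\epsilon=0$). Your argument for the second item also follows the paper's route, namely the first part of Lemma~\ref{lem_almost_monotony_by_deconditioning_quantum_entropy}: Jensen for $\eta$ at fixed $r$, the identity $\sum_p p^*q^*qp-q^*q=\sum_p[p^*,q^*q]p$, then a modulus of continuity.

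Your doubt about a linear error is well founded. The paper itself only obtains $\CO(\norm{[\CP,\CQ]}^{1/2})$, through the square-root modulus of Lemma~\ref{lem_modulus_continuity_classical_entropy}. In fact a bound linear in $\norm{[\CP,\CQ]}$, with a constant depending only on $|\CQ|$, is false. In $M_2(\C)$ take:
\begin{itemize}
  \item $\rho$ the vector state of $e_1$;
  \item $Q=\diag(0,\gamma)$ and $\CQ=\{Q^{1/2},(\mathbb{1}-Q)^{1/2}\}$;
  \item $\CP=\{u\}$ with $u$ the unitary exchanging $e_1$ and $e_2$;
  \item $\CR=\{\mathbb{1}\}$.
\end{itemize}
Then $h^q(\rho,\CQ|\CR)=0$ and $h^q(\rho,\CQ|\CP\vee\CR)=\eta(\gamma)+\eta(1-\gamma)\ge\gamma\log(1/\gamma)$, whereas $\norm{[\CP,\CQ]}=2\norm{u^*Qu-Q}=2\gamma$. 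So rather than reading the $\CO$ loosely, state what your method actually gives. Using $|\eta(a)-\eta(b)|\le\eta(|a-b|)$ for $|a-b|\le 1/2$, and concavity of $\eta$ in the sum over $q$, the error is $\CO(\theta\log(|\CQ|/\theta))$ with $\theta=\norm{[\CP,\CQ]}$ small. This improves on the paper's square root and is sharp by the example.

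The genuine slip is in the third item. The map $T\colon(x_p)_p\mapsto(q\sum_p p^*x_p)_q$ has $\ell^1\to\ell^\infty$ norm at most $\max_{p,q}\norm{qp^*}$, because $\norm{\sum_p qp^*x_p}\le\sum_p\norm{qp^*}\norm{x_p}$. Nothing bounds it by $\max\norm{p^*q}$, and the inequality with $\norm{p^*q}$, as displayed in the statement, is false. On $\C^2$ with orthonormal basis $(e_1,e_2)$ and unit vectors $f\perp g$, set $p_ix=\sprod{x}{e_i}f$ and $q_jx=\sprod{x}{e_j}g$:
\begin{itemize}
  \item $p_i^*p_i=q_i^*q_i$ is the orthogonal projection onto $\C e_i$, so both families are quantum partitions;
  \item both entropies vanish for the vector state of $e_1$;
  \item yet $p_i^*q_j=0$ for all $i,j$, so the right-hand side would be $+\infty$.
\end{itemize}
The correct constant is $\max\norm{pq^*}=\max\norm{qp^*}$. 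This is what the paper's conditional entropic uncertainty principle in Section~\ref{sec_prop_cond_quantum_entrop} gives with $u_1=u_2=\mathbb{1}$, so the $\norm{p^*q}$ in the statement is a typo you have inherited.

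With $c=\max\norm{qp^*}$ your interpolation argument is correct. $T$ is a contraction on $\ell^2$, and it sends $x=(p\psi)_p$ to $(q\psi)_q$ with equal $\ell^2$ norms, so differentiating at exponent $2$ gives the bound. This is essentially the paper's proof. The paper uses the GNS construction and then invokes Nonnenmacher's Hilbert-space version of the Maassen--Uffink bound instead of reproving it, before averaging over $r\in\CR$ with weights $\rho(r^*r)$.
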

The first two points of the theorem imply directly that the conditional quantum entropy is subadditive up to an error involving the commutators. The third point of the theorem is a conditional entropic uncertainty principle that generalizes that of Maassen and Uffink \cite{maassenGeneralizedEntropicUncertainty1988}. It is a central tool in the proof of lower bound on the entropy of semiclassical measures by Anantharaman and Nonnenmacher \cite{anantharamanHalfdelocalizationEigenfunctionsLaplacian2007a}. The proof of Theorem \ref{thm_prop_cond_quantum_entrop_quantum_part} is given in the more general case of approximate quantum partitions in Section \ref{sec_prop_cond_quantum_entrop}. This extension of the notion of quantum partition is needed to handle the objects that will appear in the proof of Theorems \ref{thm_lower_bound_full_entropy} and \ref{thm_lower_bound_entropy_projection} and is presented in the next section.

\subsection{From classical partitions to approximate quantum partitions}\label{sec_classical_to_quantum_partitions}

Since we want to bound the Kolmogorov-Sinai entropy of semiclassical measures on $\T^{2d}$, we will build quantum partitions out of classical partitions of unity of $\T^{2d}$. Given a smooth classical partition of unity $\CQ = (q_i^2)_{i \in I}$ of $\T^{2d}$, we denote $Q_i = \Op^w_h(q_i)$ the quantization of $q_i$ and $\CQ^q := (Q_i)_{i \in I}$ the approximate quantum partition associated to $\CQ$. The product rule for the Weyl quantization \eqref{Moyal_product_R2d} and the Calderon-Vaillancourt theorem \eqref{Calderon-Vaillancourt_R2d} implies that 
\begin{equation*}
    \sum_{i \in I} Q_i^* Q_i = \sum_{i \in I} \Op^w_h(\bar{q_i}) \Op^w_h(q_i) = \sum_{i \in I} \Op^w_h(|q_i|^2) + \CO(h) = Id_{L^2} + \CO(h).
\end{equation*}
Therefore, $\CQ^q$ is not exactly a quantum partition due to the error term $\CO(h)$. It nonetheless behaves like one in the semiclassical limit, up to some errors of order $h$. 

The notion of approximate quantum partition is introduced to deal with this issue. An \textit{approximate quantum partition} is a family $\CP = \set{p_i}_{i\in I}$ which satisfies
\begin{equation}\label{eq_def_approximate_quantum_partition}
    \sum_{p_i \in \CP} p_i^* p_i = \mathbb{1} + r_{\CP} \qqtq{with} r_{\CP} \in \CA.
\end{equation}
The element $r_{\CP}$ will be generally thought of as small and $\CP$ is called a \textit{$(1+ \epsilon)$ quantum partition} if $\norm{r_{\CP}} \le \epsilon$. Approximate quantum partitions behave similarly to exact quantum partitions, up to some errors in $\epsilon$. Note that the relation \eqref{eq_def_approximate_quantum_partition} implies that $r_{\CP}$ is self-adjoint, bigger than $-\mathbb{1}$ and that any element $p_i$ of $\CP$ satisfies $\norm{p_i} \le \sqrt{1 + \norm{r_{\CP}}}$. The join of two approximate quantum partitions is again an approximate quantum partition.

\begin{lemma}\label{lem_joint_approx_quantum_part_is_approx_quantum_part}
    Let $\CP$, respectively $\CQ$, be a $(1+\epsilon_1)$ quantum partition, respectively a $(1+\epsilon_2)$ quantum partition. Then there exists $r \in \CA$ such that 
    \begin{equation*}
        \sum_{p \in \CP, q \in \CQ} q^* p^* p q = \mathbb{1} + r \qqtqq{where} \norm{r} \le (1+\epsilon_1) (1+ \epsilon_2) - 1.
    \end{equation*}
    In particular, $\CP \vee \CQ$ is a $(1+\epsilon_1) (1+ \epsilon_2)$ quantum partition. 
\end{lemma}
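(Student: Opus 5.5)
The plan is to compute the sum directly, factoring it through the partition relation of $\CP$. Write $\sum_{p \in \CP} p^* p = \mathbb{1} + r_\CP$ and $\sum_{q \in \CQ} q^* q = \mathbb{1} + r_\CQ$, with $\norm{r_\CP} \le \epsilon_1$ and $\norm{r_\CQ} \le \epsilon_2$. Since each term in the double sum is $q^*(p^*p)q$, summing first over $p$ gives
\begin{equation*}
    \sum_{p \in \CP, q \in \CQ} q^* p^* p q = \sum_{q \in \CQ} q^* (\mathbb{1} + r_\CP) q = \mathbb{1} + r_\CQ + \sum_{q \in \CQ} q^* r_\CP q .
\end{equation*}
So we set $r = r_\CQ + \sum_{q} q^* r_\CP q$, and the whole task reduces to bounding $\norm{\sum_q q^* r_\CP q}$.

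A crude triangle inequality would introduce a factor $|\CQ|$, so we avoid it and use positivity instead. The element $r_\CP$ is self-adjoint, hence $-\norm{r_\CP}\mathbb{1} \le r_\CP \le \norm{r_\CP}\mathbb{1}$ in the order of $\CA$. Conjugation $x \mapsto q^* x q$ preserves this order, since $q^* y q = (y^{1/2} q)^*(y^{1/2} q) \ge 0$ for $y \ge 0$. Summing over $q$ therefore yields
\begin{equation*}
    -\epsilon_1 \sum_q q^* q \le \sum_q q^* r_\CP q \le \epsilon_1 \sum_q q^* q .
\end{equation*}
Writing $s = \sum_q q^* r_\CP q$, which is self-adjoint, and $t = \sum_q q^*q = \mathbb{1} + r_\CQ \ge 0$ with $\norm{t} \le 1 + \epsilon_2$, we get $-\epsilon_1 t \le s \le \epsilon_1 t$. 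For self-adjoint elements, $-c\mathbb{1} \le s \le c\mathbb{1}$ implies $\norm{s} \le c$. Combining this with $t \le \norm{t}\mathbb{1}$ gives $\norm{s} \le \epsilon_1(1+\epsilon_2)$.

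Finally, by the triangle inequality,
\begin{equation*}
    \norm{r} \le \epsilon_2 + \epsilon_1(1+\epsilon_2) = (1+\epsilon_1)(1+\epsilon_2) - 1 .
\end{equation*}
The "in particular" statement is then just the definition of a $(1+\epsilon_1)(1+\epsilon_2)$ quantum partition applied to $\CP \vee \CQ = (pq)$, since $(pq)^*(pq) = q^*p^*pq$. The only step needing care is the positivity argument that keeps the bound independent of $|\CQ|$. It uses two standard $C^*$-algebra facts: the order is preserved under conjugation, and the norm of a self-adjoint element is controlled by two-sided order bounds.
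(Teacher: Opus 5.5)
Your proof is correct and matches the paper's argument: you use the same decomposition $\mathbb{1} + r_\CQ + \sum_{q} q^* r_\CP q$ and bound the last term by $\epsilon_1(1+\epsilon_2)$ without picking up a factor $|\CQ|$. The only difference is cosmetic: the paper runs the positivity step through states, via $|\rho(q^* r_\CP q)| \le \norm{r_\CP}\rho(q^*q)$ and $\norm{A} = \sup_{\rho}|\rho(A)|$ for self-adjoint $A$, which is the state-side counterpart of your operator-order inequalities.
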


\begin{proof}
    We calculate 
    \begin{align*}
        \sum_{p \in \CP, q \in \CQ} q^* p^* p q = \sum_{q \in \CQ} q^* (I + r_{\CP}) q = I + r_{\CQ} + \sum_{q \in \CQ} q^* r_{\CP} q.
    \end{align*}
    Denote $A = \sum_{q \in \CQ} q^* r_{\CP} q$, it is a self-adjoint element of $\CA$. Let $\rho \in \CE$. The inequality \eqref{ineq_restriction_state_C*algebra} implies that
    \begin{equation*}
        |\rho(A)| \le \sum_{q \in \CQ} |\rho(q^* r_{\CP} q)| \le \sum_{q \in \CQ} \norm{r_{\CP}} \rho(q^*q) = \epsilon_1 \rho(I + r_{\CQ}) \le \epsilon_1(1+ \epsilon_2).
    \end{equation*}
    Since $A$ is a self-adjoint element of $\CA$, then $\norm{A} = \sup_{\rho \in \CE} |\rho(A)|$ by \cite[lem. I.9.10]{davidsonC^algebrasExample1996} and thus
    \begin{equation*}
        \norm{r_{\CQ} + A} \le \epsilon_2 + \epsilon_1(1+ \epsilon_2) = (1+\epsilon_1) (1+ \epsilon_2) - 1.
    \end{equation*}
\end{proof}

We record a useful bound on the entropy of a $(1+\epsilon)$ quantum partition $\CQ$ valid for any state $\rho \in \CE$.
\begin{equation}\label{ineq_bound_quantum_entropy_approximate_quant_part}
    |h^q(\rho, \CQ)| \le (1+\epsilon)\log(|\CQ|) + \epsilon(1+\epsilon)
\end{equation}
Indeed, denoting $m := \sum_{q \in \CQ} \rho(q^* q) \le 1+\epsilon$, we get
\begin{align*}
    |h^q(\rho, \CQ)| &= m |\sum_{q \in \CQ} \frac{\rho(q^* q)}{m} \log(\frac{m}{\rho(q^* q)}) - \log(m)| \le  |\eta(m)| + m | \log \left(\frac{m}{m} \sum_{q \in \CQ} \frac{\rho(q^* q)}{\rho(q^* q)}\right)| \\
    &\le m\log(|\CQ|) + |\eta(m)| \le (1+\epsilon)\log(|\CQ|) + \epsilon(1+\epsilon).
\end{align*}

We will need better errors than $\CO(h)$ for the approximate quantum partitions that we will use. The next lemma tells us that up to a small modification of the classical observables, we can always ensure that the quantization of a classical smooth partition of unity is a $(1+\CO(h^\infty))$ quantum partition.  

\begin{lemma}\label{lem_modifyin_partition_to_have_1_plus_h_infinity}
    Let $\CP = (p_j^2)_{j \in J}$ be a partition of unity of $\R^{2d}$ by smooth functions that are invariant under a symplectic subspace $V$. Then there exists a family of functions $\tilde{\CP} = (\tilde{p}_j)_{j \in J}$ composed of smooth real functions invariant under $V$ such that each $\tilde{p_j}$ has the same support as $p_j$ and 
    \begin{equation*}
        \norm{\sum_{j \in J} \Op^w_h(\tilde{p}_j)^* \Op^w_h(\tilde{p}_j) - Id}_{L^2(\R^d)} = \CO(h^\infty).
    \end{equation*}
    In other words, $\tilde{\CP}^q$ is a $(1+ \CO(h^\infty))$ quantum partition of $L^2(\R^d)$. Moreover, $\tilde{p}_j = p_j + \CO(h)$.
\end{lemma}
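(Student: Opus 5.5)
The plan is to correct the partition order by order in $h$ within the Weyl calculus and then sum the corrections with Borel's lemma. Throughout we take $p_j$ real; replacing $p_j$ by $|p_j|$ is harmless and keeps the support.

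\emph{Reduction to the transverse variables.} Since $V$ is symplectic, choose a linear symplectic change of coordinates $\R^{2d} = V \oplus V^\sigma$ in which the $p_j$ depend only on the coordinates $w$ along $V^\sigma$. Such a change is implemented by a metaplectic operator, and the Weyl quantization is exactly covariant under it. Hence $\Op^w_h(p_j)$ is unitarily equivalent to $Id \otimes \Op^w_h(p_j|_{V^\sigma})$ acting on $L^2(\R^{d_1}) \otimes L^2(\R^{d-d_1})$. Any correction built from the $p_j$ by Moyal products and by functional operations on symbols will then also depend only on $w$, so it stays $V$-invariant. Because the $p_j$ need not decay, we work in the symbol class $S^0$ of bounded smooth functions with bounded derivatives, where the Moyal product \eqref{Moyal_product_R2d} and the Calderón–Vaillancourt theorem \eqref{Calderon-Vaillancourt_R2d} apply.

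\emph{Iterative correction.} Look for $\tilde p_j \sim p_j(1 + \sum_{k \ge 1} h^k a_k)$ with real symbols $a_k$ depending only on $w$. Multiplying by $p_j$ automatically keeps $\supp \tilde p_j \subset \supp p_j$, and equality of supports follows if $1 + \sum_k h^k a_k$ does not vanish there, which holds for $h$ small. The choice of a common factor for all $j$ is deliberate: it makes the leading error solvable.

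Since the $\tilde p_j$ are real, $\Op^w_h(\tilde p_j)^* = \Op^w_h(\tilde p_j)$. Suppose that at step $N$ we have $\sum_j \tilde p_j^{(N)} \# \tilde p_j^{(N)} = 1 + h^{N+1} r_{N+1} + \CO(h^{N+2})$. The symbol $r_{N+1}$ is real, because the operator is self-adjoint, and it is $V$-invariant. The odd terms of the Moyal expansion of $f \# f$ vanish for real $f$, but we do not even need this. Setting $\tilde p_j^{(N+1)} = \tilde p_j^{(N)}(1 - \tfrac12 h^{N+1} r_{N+1})$, the leading change in the sum is $-h^{N+1} r_{N+1} \sum_j p_j^2 = -h^{N+1} r_{N+1}$. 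This cancels the error at order $h^{N+1}$, and all cross terms are $\CO(h^{N+2})$ in $S^0$.

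Borel's lemma then produces a single symbol $a(h) \sim \sum_k h^k a_k$ in $S^0$, depending only on $w$. With $\tilde p_j = p_j(1 + h a)$ we get $\sum_j \tilde p_j \# \tilde p_j - 1 = \CO(h^\infty)$ in $S^0$, and Calderón–Vaillancourt turns this into the operator-norm bound. Finally, $\tilde p_j = p_j + \CO(h)$ holds by construction.

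\emph{Main obstacle.} The delicate point is bookkeeping rather than any single estimate. We must check that each remainder $r_{N+1}$ is genuinely a uniform $S^0$ symbol, which requires uniform control of Moyal remainders for non-compactly supported symbols. We must also check that $V$-invariance is preserved at every step, and for this the metaplectic reduction in the first paragraph is the cleanest route.

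If the partition is finite and of compact support modulo $V$ (for instance, coming from the torus), these uniformity issues are immediate. Otherwise we would appeal to the Moyal remainder estimates in $S^0$ from \cite{zworskiSemiclassicalAnalysis2012}.
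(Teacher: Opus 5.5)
Your proposal is correct and follows essentially the paper's route: the paper's corrections $r_{j,m} = -\frac{1}{2} p_j \sum_{i} F_m(r_{i,0},\dots,r_{i,m-1})$ are exactly $p_j$ times a $j$-independent factor, so both arguments cancel the error order by order using $\sum_j p_j^2 = 1$ and conclude with Borel's theorem and Calder\'on--Vaillancourt (the paper uses only even powers $h^{2k}$, relying on the vanishing of the odd Moyal terms of $a \# a$, which, as you note, is not actually needed). Two minor remarks: the metaplectic reduction is superfluous, since every term of the Moyal expansion is a constant-coefficient bidifferential operator and so preserves $V$-invariance; and you should drop the replacement of $p_j$ by $|p_j|$, which is unnecessary because $p_j$ is already real ($p_j^2 \ge 0$), and is not harmless because $|p_j|$ need not be smooth and would break $\tilde{p}_j = p_j + \mathcal{O}(h)$.
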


\begin{proof}
    This proof relies on the tools of semiclassical analysis recalled at the beginning of Section \ref{sec_quantization_torus} and the reader unfamiliar with this theory is invited to skim through the part on the Weyl quantization before reading this proof.  
    The asymptotic expansion for the Moyal product \cite[thm. 4.17-4.18]{zworskiSemiclassicalAnalysis2012} for symbols in $S(1)$ gives an asymptotic expansion in $S(1)$ of the form
    \begin{equation*}
        a \# b(z) \sim \sum_{k=0}^\infty \frac{i^k h^k}{k!} A(D)^k(a(x,\xi)b(y, \eta))_{|(x, \xi) = (y,\eta) = z} ,
    \end{equation*}
    where $A(D) := \frac{1}{2} \sigma(D_x,D_\xi, D_y, D_\eta)$. The notation $f \sim g$ means in this context that $f-g = \CO(h^\infty)$ as elements of $S(1)$. Note that if $a = b$ then all the terms of odd order are null by antisymmetry of the operator $A(D)$. 
    
    We search for functions $\tilde{p}_j$ of the following form
    \begin{equation}\label{eq_proof_lem_modifyin_partition_to_have_1_plus_h_infinity_ansatz_developpment}
        \tilde{p}_j \sim \sum_{k=0}^\infty h^{2k} r_{j,k} \qqtq{with} r_{j,0} = p_j,
    \end{equation}
    and the $r_{j,k}$ are smooth real valued functions. This definition ensure that $\tilde{p}_j = p_j + \CO(h)$. The formula for the Moyal product gives
    \begin{equation*}
        \sum_{j \in J} \Op^w_h(\tilde{p}_j)^* \Op^w_h(\tilde{p}_j) -Id = \sum_{j \in J} \Op^w_h(\tilde{p}_j) \Op^w_h(\tilde{p}_j) - Id = \Op^w_h(q-1) ,
    \end{equation*}
    where
    \begin{align*}
        q - 1&\sim -1 + \sum_{j \in J} \sum_{n=0}^\infty \frac{(-1)^n h^{2n}}{(2n)!} A(D)^{2n}(\tilde{p}_j(x,\xi) \tilde{p}_j(y, \eta))_{| z} \\ 
        &\sim \sum_{j \in J} \sum_{\substack{n,k,l=0 \\ (n,k,l) \neq (0,0,0)}}^\infty \frac{(-1)^n h^{2(n+k+l)}}{(2n)!} A(D)^{2n}(r_{j,k}(x,\xi) r_{j,l}(y, \eta))_{|z}.
    \end{align*}
    We used that $\sum_{j \in J} p_j^2 = 1$ as $\CP = (p_j^2)_{j \in J}$ is a partition of unity and $r_{j,0} = p_j$. Let $n,k,l,m \in \N$ such that $n+k+l = m$, then either $k=m$ and $n=l=0$, or $l=m$ and $n=k=0$, or $n$, $k$ and $l$ are strictly smaller than $m$. Therefore, regrouping the powers of $h$ together gives 
    \begin{align*}
        q-1 & \sim \sum_{m \ge 1} c_m h^{2m} \sum_{j \in J} 2p_j r_{j,m} - F_m(r_{j,0}, \dots, r_{j,m-1}),
    \end{align*}
    where $F_m$ is a differential operator of finite order. In particular, if we define $r_{j,m}$ inductively by 
    \begin{equation}\label{eq_proof_lem_modifyin_partition_to_have_1_plus_h_infinity_ansatz_r_j_k}
        r_{j,m} = - \frac{1}{2} p_j \sum_{i \in J} F_m(r_{i,0}, \dots, r_{i,m-1}),
    \end{equation}
    they are invariant under $V$ and have the same support as $p_j$. 
    The equality $\sum_{j \in J} p_j^2 = 1$ implies that 
    \begin{align*}
        q-1 & \sim \sum_{m \ge 1} c_m h^{2m} \sum_{j \in J} p_j^2 \sum_{i \in J} F_m(r_{i,0}, \dots, r_{i,m-1})  - F_m(r_{j,0}, \dots, r_{j,m-1}) \\
        &\sim 0.
    \end{align*}
    In other words, $q = 1 + \CO_{S(1)}(h^\infty)$. The Calderón-Vaillancourt \eqref{Calderon-Vaillancourt_R2d} thus leads to
    \begin{equation*}
        \norm{\sum_{j \in J} \Op^w_h(\tilde{p}_j)^* \Op^w_h(\tilde{p}_j) -Id} = \norm{\Op^w_h(q-1)} = \CO(h^\infty) .
    \end{equation*} 
    
    The reasoning presented here only explained how one should choose the functions $r_{j,k}$ to obtain the desired result. To prove the result, we define functions $(r_{j,k})_{j \in J, k \ge 0}$ inductively by \eqref{eq_proof_lem_modifyin_partition_to_have_1_plus_h_infinity_ansatz_r_j_k} and define the $\tilde{p}_j$ by \eqref{eq_proof_lem_modifyin_partition_to_have_1_plus_h_infinity_ansatz_developpment} via Borel's theorem \cite[thm. 4.15]{zworskiSemiclassicalAnalysis2012}. The computation above then guarantees us that the functions $(\tilde{p}_j)_{j \in J}$ fit the bill. 
\end{proof}

\subsection{Properties of conditional quantum dynamical entropy}\label{sec_prop_cond_quantum_entrop}

We give the proof of the statements in Theorem \ref{thm_prop_cond_quantum_entrop_quantum_part} for approximate quantum partitions. To recover the results of the theorem, it suffices to take $\epsilon$ in the statement of this section. We also advise the reader to suppose that $\epsilon =0$ when reading the proofs of this section for the first time.

Up to an error for approximate quantum partitions, we recover the classical relation between the entropy of the join and the conditional entropy.  
\begin{lemma}\label{lem_key_relation_conditional_quantum_entropy}
    Let $\rho \in \mathcal{E}$ be a state and $\CP$, $\CQ$ and $\CR$ be three $(1 + \epsilon)$ quantum partitions. Then,
    \begin{equation*}
        h^q(\rho, \CQ \vee \CP | \CR) = h^q(\rho, \CP | \CR) + h^q(\rho, \CQ | \CP \vee \CR) + \CO(\epsilon(1+ \log(|\CP|))).
    \end{equation*}
\end{lemma}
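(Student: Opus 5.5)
The plan is to expand both sides directly from the definitions and reduce the identity to the classical chain rule for $\eta$, with all error terms coming from $\sum_{p} \rho_{|r}(p^*p)$ not being exactly $1$. Fix $r \in \CR$ with $\rho(r^*r) \neq 0$. The terms with $\rho(r^*r)=0$ vanish on both sides, since then $\rho((pr)^*pr) \le \norm{p}^2\rho(r^*r) = 0$ by \eqref{ineq_restriction_state_C*algebra}. The elements of $\CQ \vee \CP$ are the products $qp$. Writing $\sigma = \rho_{|r}$, we have
\begin{equation*}
    \sigma((qp)^*qp) = \frac{\rho(r^*p^*q^*qpr)}{\rho(r^*r)} = \sigma(p^*p)\,\rho_{|pr}(q^*q),
\end{equation*}
whenever $\rho(r^*p^*pr) \neq 0$. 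Denote $a_p = \sigma(p^*p)$ and $b_{p,q} = \rho_{|pr}(q^*q)$.

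The key step is the scalar identity $\eta(ab) = a\eta(b) + b\eta(a)$. Summing over $p,q$ gives
\begin{equation*}
    h^q(\sigma, \CQ\vee\CP) = \sum_p a_p\, h^q(\rho_{|pr},\CQ) + \sum_p \eta(a_p)\Big(\sum_q b_{p,q}\Big).
\end{equation*}
Multiplying by $\rho(r^*r)$ and summing over $r$, the first term is exactly $h^q(\rho,\CQ|\CP\vee\CR)$. This uses $\rho(r^*r)a_p = \rho((pr)^*pr)$, which matches the definition of conditioning on $\CP\vee\CR$; the terms with $a_p = 0$ contribute nothing on either side. The second term would be $h^q(\rho,\CP|\CR)$ if $\sum_q b_{p,q}$ were $1$. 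By definition $\sum_q b_{p,q} = 1 + \rho_{|pr}(r_\CQ)$, so it lies in $[1-\epsilon,1+\epsilon]$. The error is therefore bounded by
\begin{equation*}
    \epsilon \sum_r \rho(r^*r)\sum_p |\eta(a_p)|.
\end{equation*}

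The remaining task, and the only mildly delicate point, is to bound $\sum_p|\eta(a_p)|$ uniformly by $\CO(1+\log|\CP|)$. Here $a_p \ge 0$ and $\sum_p a_p = 1+\sigma(r_\CP) \le 1+\epsilon$, so each $a_p \le 1+\epsilon$. The negative parts of $\eta(a_p)$ arise only for $a_p\in(1,1+\epsilon]$, and each is $\CO(\epsilon)$; at most one $a_p$ can exceed $1$ when $\epsilon < 1$. The positive parts are controlled exactly as in \eqref{ineq_bound_quantum_entropy_approximate_quant_part}, giving at most $(1+\epsilon)\log|\CP| + \CO(\epsilon)$. Finally $\sum_r\rho(r^*r) \le 1+\epsilon$. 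The total error is then $\CO(\epsilon(1+\log|\CP|))$, which is the claimed bound, assuming $\epsilon$ is bounded, say $\epsilon\le 1$. No commutativity is needed anywhere, because the ordering $qpr$ in the join matches the order in which the restrictions are taken.
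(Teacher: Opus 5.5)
Your proposal is correct and is essentially the paper's proof: the same factorization $\rho(r^*r)\,\rho_{|r}(p^*q^*qp)=\rho(r^*p^*pr)\,\rho_{|pr}(q^*q)$ (equivalently your identity $\eta(ab)=a\eta(b)+b\eta(a)$) splits the sum into $h^q(\rho,\CQ|\CP\vee\CR)$ and $h^q(\rho,\CP|\CR)$. The error is isolated in the same way, as $\sum_r \rho(r^*r)\sum_p \eta(\rho_{|r}(p^*p))\,\rho_{|pr}(r_\CQ)$ with $|\rho_{|pr}(r_\CQ)|\le\epsilon$. The only difference is cosmetic: the paper controls the $p$-sum by normalizing with $m_r=\sum_p \rho_{|r}(p^*p)$, so that every $\eta(\rho_{|r}(p^*p)/m_r)\ge 0$ and their sum is at most $\log|\CP|$, rather than splitting into positive and negative parts; both give $\CO(\epsilon(1+\log|\CP|))$, and your explicit treatment of the zero-weight terms is a welcome extra.
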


\begin{proof} A direct computation gives
    \begin{align*}
        h^q(\rho, \CQ \vee \CP | \CR) &= - \sum_{r \in \CR} \sum_{p \in \CP} \sum_{q \in \CQ} \rho(r^* r) \rho_{|r}(p^* q^*q p) \log(\rho_{|r}(p^* q^*q p)) \\
        &= - \sum_{r \in \CR} \sum_{p \in \CP} \sum_{q \in \CQ} \rho(r^* p^* p r) \rho_{|pr}(q^*q) \log(\rho_{|pr}(q^*q) \rho_{|r}(p^* p)) \\
        &= - \sum_{pr \in \CP \vee \CR} \sum_{q \in \CQ} \rho((pr)^* pr) \rho_{|pr}(q^*q) \log(\rho_{|pr}(q^*q)) \\
        &\qquad - \sum_{r \in \CR} \sum_{p \in \CP} \rho(r^* p^* p r) \log(\rho_{|r}(p^* p)) \sum_{q \in \CQ} \rho_{|pr}(q^*q) \\
        &= h^q(\rho, \CQ | \CP \vee \CR) - \sum_{r \in \CR} \rho(r^* r) \sum_{p \in \CP} \rho_{|r}(p^* p) \log(\rho_{|r}(p^* p)) \rho_{|pr}(\mathbb{1}) \\
        &\qquad - \sum_{r \in \CR} \rho(r^* r) \sum_{p \in \CP} \rho_{|r}(p^* p) \log(\rho_{|r}(p^* p)) \rho_{|pr}(r_\CQ) \\
        &= h^q(\rho, \CQ | \CP \vee \CR) + h^q(\rho,\CP | \CR) + \CO(\epsilon(1 + \log(|\CP|))).
    \end{align*}
    Indeed, for $r\in \CR$ and denoting $m_r = \sum_{p \in \CP} \rho_{|r}(p^*p)$, we have 
    \begin{align*}
        |\sum_{p \in \CP} \rho_{|r}(p^* p) \log(\rho_{|r}(p^* p)) \rho_{|pr}(r_\CQ)| &\le m_r | \sum_{p \in \CP} \rho_{|pr}(r_\CQ)(\eta (\frac{\rho_{|r}(p^* p)}{m_r}) +\frac{\rho_{|r}(p^* p)}{m_r} \log(m_r))| \\
        &\le \epsilon m_r |\log(m_r)| + \epsilon m_r \sum_{p \in \CP} \eta (\frac{\rho_{|r}(p^* p)}{m_r}) \\
        &\le \epsilon \max(e^{-1}, \epsilon(1+\epsilon)) + \epsilon(1+\epsilon) \log(|\CP|) \\
        &\le \CO(\epsilon(1+\log(|\CP|))).
    \end{align*}
    Therefore 
    \begin{align*}
        |\sum_{r \in \CR} \rho(r^* r) \sum_{p \in \CP} \rho_{|r}(p^* p) \log(\rho_{|r}(p^* p)) \rho_{|pr}(r_\CQ)| &\le \CO(\epsilon(1+\log(|\CP|))) \sum_{r \in \CR} \rho(r^* r) \\
        &\le \CO(\epsilon(1+\log(|\CP|))) (1+ \epsilon) .
    \end{align*}
\end{proof}

This proof is exactly the transposition of the proof of the first equation of \eqref{relation_cond_classical_entropy} to the quantum setting. We highlight that in the quantum version it is not possible to exchange either $\CQ$ and $\CP$ on the left-hand side or $\CP$ and $\CR$ on the right-hand side due to the non-commutativity of the join of two quantum partitions.

We will need a modulus of continuity for the entropy of a finite probability measure and we adapt the one given in \cite[eq. 4]{zhangEstimatingMutualInformation2007}. Let $(p_i)_{i \in I}$ be a probability measure over the finite set $I$. Its entropy is defined as 
\begin{equation}
    H(p_i) = \sum_{i \in I} \eta(p_i).
\end{equation}
This entropy functional has a modulus of continuity depending on the total variation distance and the size of the set $I$.

\begin{lemma}\label{lem_modulus_continuity_classical_entropy}
    Let $I$ be a finite set, $(p_i)_{i \in I}$ and $(q_i)_{i\in I}$ two probability measures on $I$. Then 
    \begin{equation*}
        |H(p_i) - H(q_i)| \le 2 (\log(|I|)+1) \sqrt{\theta} \qtq{where} \theta := \sum_{i \in I} |p_i - q_i|.
    \end{equation*}
\end{lemma}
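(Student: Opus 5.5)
The plan is to bound the difference termwise and sum, splitting the indices according to how large $|p_i - q_i|$ is relative to a threshold depending on $\theta$. The basic one-dimensional input is the elementary modulus of continuity of $\eta$ on $[0,1]$:
\begin{equation*}
    |\eta(x) - \eta(y)| \le \eta(|x-y|) \qquad \text{whenever } |x-y| \le \tfrac12 ,
\end{equation*}
and, in general, $|\eta(x)-\eta(y)| \le \eta(|x-y|) + |x-y|$ for $x,y\in[0,1]$. This follows from concavity of $\eta$ together with $\eta(0)=0$, by comparing $\eta(x+t)-\eta(x)$ with $\eta(t)-\eta(0)$; the extra term $|x-y|$ absorbs the region where $\eta$ is decreasing. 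I would state and check this first, since it is the only analytic ingredient.

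With this inequality, writing $t_i = |p_i - q_i|$ gives
\begin{equation*}
    |H(p_i) - H(q_i)| \le \sum_{i\in I} \eta(t_i) + \theta .
\end{equation*}
Since $\sum_i t_i = \theta$, set $\tilde t_i = t_i/\theta$, which is a probability vector. Then
\begin{equation*}
    \sum_i \eta(t_i) = \theta \sum_i \eta(\tilde t_i) + \theta\log(1/\theta) \le \theta \log|I| + \eta(\theta),
\end{equation*}
using the maximum entropy bound $\sum_i\eta(\tilde t_i)\le \log|I|$. The case $\theta = 0$ is trivial.

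It remains to compare with $2(\log|I|+1)\sqrt\theta$. Note that $\theta \le 2$. For $\theta\le 1$ we have $\theta\le\sqrt\theta$, and $\eta(\theta)\le \sqrt\theta$ because $\sqrt{x}\log(1/x)\le 2/e<1$ on $(0,1]$. Hence the total is at most $\sqrt\theta(\log|I| + 2) \le 2(\log|I|+1)\sqrt\theta$. For $1<\theta\le 2$, the trivial bound $|H(p)-H(q)|\le \log|I|$ already suffices, because $2\sqrt\theta\ge 2$.

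The only mildly delicate step is the one-variable inequality for $\eta$, in particular handling pairs near $1$ where $\eta$ decreases. The additive $|x-y|$ slack above is exactly what takes care of this, and the rest of the argument is bookkeeping.
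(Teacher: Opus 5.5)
Your proof is correct, but it takes a different route from the paper's.

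The paper does not argue termwise. It quotes Zhang's sharp continuity bound $|H(p)-H(q)|\le \frac{\theta}{2}\log(|I|-1)+H_b(\theta/2)$, a classical Fannes--Audenaert type inequality. It then weakens it using $\theta/2\le\sqrt{\theta/2}$ and the binary-entropy estimate $H_b(x)\le 2\sqrt{x(1-x)}\le 2\sqrt{x}$, which comes from $\log y\le\sqrt{y-1}$.

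You instead prove a Fannes-type bound from scratch, $|H(p)-H(q)|\le \theta\log|I|+\eta(\theta)+\theta$. The ingredients are:
\begin{itemize}
\item concavity of $\eta$ with $\eta(0)=0$, which gives $\eta(x+t)-\eta(x)\le\eta(t)$;
\item $\eta'(s)=-\log s-1\ge -1$ on $(0,1]$, which gives $\eta(x)-\eta(x+t)\le t$ and is the cleanest way to justify your slack term $|x-y|$;
\item the maximum-entropy inequality applied after rescaling the $t_i$ by $\theta$.
\end{itemize}
The sharper statement you give for $|x-y|\le\frac12$ is never used and can be dropped.

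What your approach buys is a self-contained, elementary argument. The cost is a worse intermediate constant ($\theta$ rather than $\theta/2$ in front of $\log|I|$, plus an extra $+\theta$) and a separate trivial case for $1<\theta\le 2$. The paper's version is shorter on the page, and its intermediate inequality is optimal, but it imports a nontrivial external result.

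Since the lemma only needs a modulus of order $\log|I|\sqrt{\theta}$, the loss in constants does not matter. Your final bound $\sqrt{\theta}(\log|I|+2)$ sits below $2(\log|I|+1)\sqrt{\theta}$, exactly as the paper's $\sqrt{\theta/2}(\log|I|+2)$ does.
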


\begin{proof}
    An optimal bound is given in \cite[eq. 4]{zhangEstimatingMutualInformation2007} and has the form
    \begin{align*}
        |H(p_i) - H(q_i)| \le \frac{\theta}{2} \log(|I|-1) + H_b(\frac{\theta}{2}),
    \end{align*}
    where $H_b(p) = \eta(p) + \eta(1-p)$ is the binary entropy for a number $p \in [0,1]$. As $\log(x) \le \sqrt{x-1}$ if $x\ge 1$, we get for $p \in ]0,1[$ that\footnote{This nice proof of the bound $H_b(p) \le 2 \sqrt{p(1-p)}$ is the one given by Chun SO on https://math.stackexchange.com/questions/1432043/an-upper-bound-of-binary-entropy.}
    \begin{align*}
        H_b(p) &= p \log \left(\frac{1}{p}\right) + (1-p) \log \left(\frac{1}{1-p}\right) \le p \sqrt{\frac{1}{p}-1} + (1-p) \sqrt{\frac{1}{1-p}-1} \\
        &\le 2 \sqrt{p(1-p)} \le 2 \sqrt{p}.
    \end{align*}
    Since $\theta/2 \le 1$, we thus have $\theta/2 \le \sqrt{\theta/2}$ and $H_b(\theta/2) \le 2 \sqrt{\theta/2}$. This implies the lemma.
\end{proof}

We derive a modulus of continuity for the quantum entropy thanks to the one for the classical entropy of Lemma \ref{lem_modulus_continuity_classical_entropy}.

\begin{lemma}\label{lem_modulus_continuity_quantum_entropies}
    Let $\rho_1 \in \CA^*$ be a linear positive functional on $\CA$ and $\rho_2 \in \CE$ be a state, $\CQ$ and $\CR$ be two $(1+\epsilon)$ quantum partitions with $\epsilon <1/2$. Denote $N = \norm{\rho_1} = \rho_1(\mathbb{1})$ and 
    \begin{equation*}
        \theta := \sum_{q \in \CQ} |\rho_1(q^*q) - \rho_2(q^*q)| \qqtqq{and} \delta := |N- 1| ,
    \end{equation*}
    then we have 
    \begin{equation*}
        |N h^q(\frac{\rho_1}{N}, \CQ) - h^q(\rho_2, \CQ)| \le \CO(\log(|\CQ|))(1+\delta)\sqrt{\epsilon(1+\delta) + \theta}.
    \end{equation*}
\end{lemma}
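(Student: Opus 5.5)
The plan is to reduce everything to Lemma \ref{lem_modulus_continuity_classical_entropy} by normalizing both families $(\rho_1(q^*q))_{q}$ and $(\rho_2(q^*q))_{q}$ into genuine probability measures on $\CQ$. Set $m_1 = \sum_{q\in\CQ}\rho_1(q^*q) = \rho_1(\mathbb{1}+r_\CQ)$ and $m_2 = \sum_q \rho_2(q^*q) = 1+\rho_2(r_\CQ)$. By positivity and $\norm{r_\CQ}\le\epsilon$ we get $|m_1 - N| \le \epsilon N$ and $|m_2-1|\le \epsilon$, so $m_1 = N(1+\CO(\epsilon))$ and $m_2 = 1+\CO(\epsilon)$, both bounded away from $0$ since $\epsilon<1/2$. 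The partition $\CR$ plays no role here and can be ignored. Define the probability vectors $a_q = \rho_1(q^*q)/m_1$ and $b_q=\rho_2(q^*q)/m_2$.

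Next I express both entropies through $H(a)$ and $H(b)$. For any positive weights $x_q$ with total mass $m$ one has $\sum_q \eta(x_q) = m H(x/m) + \eta(m)$. Applying this to $x_q = \rho_1(q^*q)/N$ (total mass $m_1/N$) gives $N h^q(\rho_1/N,\CQ) = m_1 H(a) + N\eta(m_1/N)$, and similarly $h^q(\rho_2,\CQ) = m_2 H(b) + \eta(m_2)$. The terms $N\eta(m_1/N)$ and $\eta(m_2)$ are $\CO((1+\delta)\epsilon)$ since $m_1/N, m_2 \in [1-\epsilon,1+\epsilon]$ and $\eta(1+t) = \CO(|t|)$ there. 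Then write
\begin{equation*}
m_1H(a) - m_2 H(b) = (m_1-m_2)H(a) + m_2\bigl(H(a)-H(b)\bigr),
\end{equation*}
and use $0\le H(a)\le \log|\CQ|$ together with $|m_1-m_2|\le |m_1 - N| + |N-1| + |1-m_2| \le \epsilon(1+\delta) + \delta + \epsilon$.

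For the main term, Lemma \ref{lem_modulus_continuity_classical_entropy} gives $|H(a)-H(b)|\le 2(\log|\CQ|+1)\sqrt{\sum_q|a_q-b_q|}$, and
\begin{equation*}
\sum_q |a_q - b_q| \le \frac{1}{m_1}\sum_q|\rho_1(q^*q)-\rho_2(q^*q)| + m_2\Bigl|\frac{1}{m_1}-\frac{1}{m_2}\Bigr| \le \frac{\theta + |m_1-m_2|}{m_1}.
\end{equation*}
The step I expect to be the main obstacle is bookkeeping the dependence on $\delta$ to land exactly in the form $(1+\delta)\sqrt{\epsilon(1+\delta)+\theta}$: in particular the linear term $\delta\log|\CQ|$ from $(m_1-m_2)H(a)$ and the $\delta$ inside $|m_1-m_2|$ must be absorbed. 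When $\delta$ is not small compared to $1$, $1/m_1$ is not controlled by a constant, so I would split cases: if $\delta \ge 1/2$ say, bound crudely using \eqref{ineq_bound_quantum_entropy_approximate_quant_part}-type estimates $|Nh^q(\rho_1/N,\CQ)|\le N(\log|\CQ| + \CO(1)) + |N\log N|$; if $\delta<1/2$, $m_1\ge 1/4$. I would also check whether $\theta$ itself already dominates $\delta$: indeed $|m_1 - m_2| = |\sum_q(\rho_1-\rho_2)(q^*q)|\le\theta$ directly, which removes $\delta$ from the difference and gives $\sum_q|a_q-b_q|\le 2\theta/m_1$, while the linear term becomes $\theta\log|\CQ|\le \log|\CQ|\sqrt{\theta}\cdot\sqrt{\theta}$ with $\theta\le (1+\epsilon)(N+1)=\CO(1+\delta)$. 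Collecting, all error terms are of the form $\CO(\log|\CQ|)(1+\delta)\sqrt{\epsilon(1+\delta)+\theta}$, using $\epsilon\le\sqrt{\epsilon}$, and this concludes.
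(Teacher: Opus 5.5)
Your proposal follows the paper's route. You normalize both mass vectors into probability measures, use $\sum_q\eta(x_q)=mH(x/m)+\eta(m)$, split into a mass-difference term and an entropy-difference term, and apply Lemma \ref{lem_modulus_continuity_classical_entropy}.

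Your bound $|m_1-m_2|=\bigl|\sum_q(\rho_1-\rho_2)(q^*q)\bigr|\le\theta$ is the right way to handle the mass difference, and it is actually needed. The paper instead asserts $|Nm_1-m_2|\le\epsilon(2+\delta)$ (its $Nm_1$ is your $m_1$). That bound omits the contribution $N-1$ and fails, e.g.\ for $\epsilon=0$, $\rho_1=N\rho_2$, $N\neq 1$. Your estimate, together with $\theta\le(1+\epsilon)(N+1)$ to absorb $\theta\log|\CQ|$ into $(1+\delta)\sqrt{\theta}\log|\CQ|$, is what makes that step work.

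Your case distinction, on the other hand, comes only from weighting $H(a)-H(b)$ by $m_2$. In your notation, the paper writes $m_1H(a)-m_2H(b)=(m_1-m_2)H(b)+m_1\bigl(H(a)-H(b)\bigr)$. Then
$m_1\sqrt{(\theta+|m_1-m_2|)/m_1}\le\sqrt{2m_1\theta}\le\sqrt{2(1+\epsilon)(1+\delta)\theta}$,
so small $N$ causes no trouble and no split is needed.

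If you keep the split, the branch $\delta\ge 1/2$ needs one line you do not give. The crude bound $\CO\bigl((1+\delta)(\log|\CQ|+1)\bigr)$ has the required form only because $\sqrt{\epsilon(1+\delta)+\theta}$ is bounded below in that regime. This is the fact you allude to (``$\theta$ dominates $\delta$''), but you use it only to clean up $|m_1-m_2|$, not here. Explicitly,
$\delta=|N-1|\le|m_1-m_2|+|\rho_1(r_\CQ)|+|\rho_2(r_\CQ)|\le\theta+\epsilon(2+\delta)$,
hence $\theta+\epsilon(1+\delta)\ge\delta/2\ge 1/4$.

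The $|N\log N|$ term in your crude estimate is superfluous. Since $\rho_1/N$ is a state, \eqref{ineq_bound_quantum_entropy_approximate_quant_part} gives $|Nh^q(\rho_1/N,\CQ)|\le N\bigl((1+\epsilon)\log|\CQ|+\epsilon(1+\epsilon)\bigr)$ directly. With the lower bound added, your argument is complete.
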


\begin{proof}
    Denote $\tilde{\rho}_1 = \frac{\rho_1}{N}$. Define $m_1 = \sum_{q \in \CQ} \tilde{\rho}_1(q^* q)$ and $m_2 = \sum_{q \in \CQ} \rho_2(q^* q)$. We have 
    \begin{equation*}
        |m_1 - 1| = |\sum_{q \in \CQ} \tilde{\rho}_1(q^*q) - 1| = |\tilde{\rho}_1(\mathbb{1}) + \tilde{\rho}_1(r_\CQ) - 1| \le \norm{r_\CQ} \le \epsilon,
    \end{equation*}
    as $\CQ$ is a $(1+ \epsilon)$ quantum partition and $\tilde{\rho}_1(\mathbb{1}) = 1$. Similarly, $|m_2 - 1| \le \epsilon$.
    Note that for any $(\rho, \alpha) \in \CE \times \R^{+*} $
    \begin{equation*}
        h^q(\rho, \CQ) = \alpha h^q(\frac{\rho}{\alpha}, \CQ) - \sum_{q \in \CQ}\rho(q^*q) \log(\alpha).
    \end{equation*}
    Applying this equality to $(\tilde{\rho}_1,m_1)$ and $(\rho_2, m_2)$ gives 
    \begin{align*}
        |Nh^q(\frac{\rho_1}{N}, \CQ) - h^q(\rho_2, \CQ)| &= |m_1 N h^q(\frac{\rho_1}{m_1 N}, \CQ) + N \eta(m_1) - m_2 h^q(\frac{\rho_2}{m_2}, \CQ) - \eta(m_2)| \\
        &\le N |\eta(m_1)| + |\eta(m_2)| + |m_1 N-m_2| h^q(\frac{\rho_2}{m_2}, \CQ) \\
        &\qquad + m_1 N |h^q(\frac{\rho_1}{m_1 N}, \CQ) - h^q(\frac{\rho_2}{m_2}, \CQ)| .
    \end{align*}
    As $|m_1 N -m_2| \le \epsilon(2 + \delta)$, $|h^q(\frac{\rho_2}{m_2}, \CQ) \le \log(|\CQ|)$ and $|\eta(x)| = \CO(|x-1|)$ if $x \le 2$, we have 
    \begin{equation*}
        |Nh^q(\frac{\rho_1}{N}, \CQ) - h^q(\rho_2, \CQ)| \le \epsilon(2+ \delta)(\log(|\CQ|) + 1) + N m_1 |h^q(\frac{\rho_1}{m_1 N}, \CQ) - h^q(\frac{\rho_2}{m_2}, \CQ)|.
    \end{equation*}
    We apply the modulus of continuity for classical entropy \ref{lem_modulus_continuity_classical_entropy} to the probability distributions 
    \begin{equation*}
        p^1(q) = \frac{1}{N m_1}\rho_1(q^*q) \qqtqq{and} p^2(q) = \frac{1}{m_2} \rho_2(q^* q),
    \end{equation*}
    to obtain
    \begin{align*}
        Nm_1|h^q(\frac{\rho_1}{Nm_1}, \CQ) - h^q(\frac{\rho_2}{m_2}, \CQ)| &\le 2(\log(|\CQ|)+1)N m_1 \sqrt{\sum_{q \in \CQ}|\frac{\rho_1}{N m_1}(q^*q) - \frac{\rho_2}{m_2}(q^*q)|} \\
        &\le 2(\log(|\CQ|)+1) \sqrt{Nm_1} \sqrt{|\frac{Nm_1}{m_2}-1| \sum_{q \in \CQ} \rho_2(q^*q) + \theta} \\
        &\le 2(\log(|\CQ)+1) \sqrt{(1+\delta)(1+\epsilon)}\sqrt{|Nm_1-m_2| + \theta} \\
        &\le 2(\log(|\CQ)+1)(1+\epsilon)(1+\delta) \sqrt{\epsilon (2+\delta) + \theta} \\
        &\le 4(\log(|\CQ)+1)(1+\delta) \sqrt{\epsilon (2+\delta) + \theta} ,
    \end{align*}
    which concludes.
\end{proof}

Let $\rho \in \CE$ be a state and $\CP$ be a $(1+ \epsilon)$ quantum partition. The \textit{mean of the state $\rho$ with respect to $\CP$}, denoted $\mean{\rho}_{\CP}$, is defined as 
\begin{equation*}
    \mean{\rho}_{\CP}(a) = \sum_{p \in \CP} \rho(p^* a p) \qqtq{for} a \in \CA.
\end{equation*}
For $a \in \CA$, we will sometimes denote $\mean{a}_{\CP}$ instead of $\mean{\rho}_{\CP}(a)$ if $\rho$ is clear from the context to lighten the notations. 
The linear functional $\mean{\rho}_{\CP}$ is again a state if $\epsilon =0$, otherwise it is a positive linear functional on $\CA$ and satisfies 
\begin{equation*}
    \norm{\mean{\rho}} = \mean{\rho}_\CP(\mathbb{1}) \le \sum_{p \in \CP} \rho(p^*p) \le 1 + \epsilon.
\end{equation*}

The difference between $\mean{\rho}_{\CP}$ and $\rho$ is controlled by commutators with the elements of the quantum partition $\CP$. The following two quantities appear naturally in the error terms, 
\begin{equation}\label{eq_def_error_terms_commutators}
    [\CP,a] = \sum_{p \in \CP} [p^*,a] p \qqtqq{and} \norm{[\CP,\CQ]} = \sum_{q \in \CQ} \norm{[\CP, q^*q]},
\end{equation}
for $a \in \CA$, $\CP$ and $\CQ$ be two quantum partitions.

\begin{lemma}\label{lem_almost_eq_mean_quantum_part}
    Let $\rho \in \mathcal{E}$ be a state, $a \in \CA$ be an observable and $\CP_1, \dots, \CP_n$ be $(1+ \epsilon) $quantum partitions. Then 
    \begin{equation*}
        |\mean{\rho}_{\CP_1 \vee \dots \vee \CP_n}(a) - \rho(a)| \le (1 + \epsilon)^n (n \epsilon \norm{a} + \sum_{i=1}^n\norm{[\CP_i, a]}).
    \end{equation*}
\end{lemma}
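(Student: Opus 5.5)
The plan is to prove the case $n=1$ and then induct on $n$, using the associativity of the join and the fact that the mean with respect to a join is an iterated mean. For a single $(1+\epsilon)$ quantum partition $\CP$, write
\begin{equation*}
    \mean{\rho}_{\CP}(a) = \sum_{p \in \CP} \rho(p^* a p) = \sum_{p \in \CP} \rho(a p^* p) + \sum_{p \in \CP} \rho([p^*,a]p) = \rho(a) + \rho(a r_{\CP}) + \rho([\CP,a]).
\end{equation*}
Since $\rho$ is a state, $|\rho(x)| \le \norm{x}$, so the error is bounded by $\epsilon \norm{a} + \norm{[\CP,a]}$. This settles $n=1$, in fact without the prefactor $(1+\epsilon)$.

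For the induction, set $\CR = \CP_2 \vee \dots \vee \CP_n$. By Lemma \ref{lem_joint_approx_quantum_part_is_approx_quantum_part} it is a $(1+\epsilon)^{n-1}$ quantum partition. The elements of $\CP_1 \vee \CR$ are the products $p r$, so
\begin{equation*}
    \mean{\rho}_{\CP_1 \vee \CR}(a) = \sum_{r \in \CR} \rho\Big(r^* \sum_{p \in \CP_1} p^* a p\, r\Big) = \mean{\rho}_{\CR}(b), \qquad b := \sum_{p \in \CP_1} p^* a p .
\end{equation*}
The $n=1$ computation, done at the level of elements, gives $b = a + a r_{\CP_1} + [\CP_1,a]$. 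We then split
\begin{equation*}
    \mean{\rho}_{\CP_1\vee\CR}(a) - \rho(a) = \mean{\rho}_{\CR}\big(a r_{\CP_1} + [\CP_1,a]\big) + \big(\mean{\rho}_{\CR}(a) - \rho(a)\big).
\end{equation*}
The second term is handled by the induction hypothesis. It is at most $(1+\epsilon)^{n-1}\big((n-1)\epsilon\norm{a} + \sum_{i\ge 2}\norm{[\CP_i,a]}\big)$.

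The first term needs a bound on the positive functional $\mean{\rho}_{\CR}$. For any $x$, $|\mean{\rho}_{\CR}(x)| \le \sum_r |\rho(r^*xr)| \le \norm{x}\sum_r \rho(r^*r)$ by \eqref{ineq_restriction_state_C*algebra}. Moreover $\sum_r \rho(r^* r) = \rho(\mathbb{1} + r_{\CR}) \le (1+\epsilon)^{n-1}$. So the first term is at most $(1+\epsilon)^{n-1}\big(\epsilon\norm{a} + \norm{[\CP_1,a]}\big)$.

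Adding the two bounds gives $(1+\epsilon)^{n-1}\big(n\epsilon\norm{a} + \sum_i \norm{[\CP_i,a]}\big)$, which is at most the claimed bound with $(1+\epsilon)^n$. The only delicate point is bookkeeping: one must peel off $\CP_1$ on the inside, because $\mean{\cdot}_{\CP_1\vee\CR}$ conjugates by $\CP_1$ innermost. Peeling off $\CP_n$ instead would conjugate the commutator term by the other partitions and break the clean estimate. Associativity of the join ensures that $\CR$ is well defined regardless of bracketing.
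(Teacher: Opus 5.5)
Your proof is correct and follows the paper's own argument: the paper uses the same identity $\mean{a}_{\CP_1\vee\dots\vee\CP_n} = \mean{a + ar_{\CP_1} + [\CP_1,a]}_{\CP_2\vee\dots\vee\CP_n}$. It iterates this identity into a telescoping sum instead of phrasing it as an induction, and bounds each term by $|\mean{x}_{\CP'}| \le (1+\epsilon')\norm{x}$, applied to the $(1+\epsilon)^{n-i}$ quantum partition $\CP_{i+1}\vee\dots\vee\CP_n$. Like your induction, this intermediate estimate in fact gives the slightly sharper prefactor $(1+\epsilon)^{n-1}$.
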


\begin{proof}
    We have 
    \begin{align*}
        \mean{a}_{\CP_1 \vee \dots \vee \CP_n} = \mean{a}_{\CP_2 \vee \dots \vee \CP_n} + \mean{ar_{\CP_1}}_{\CP_2 \vee \dots \vee \CP_n} + \sum_{p_1 \in \CP_1} \mean{[p_1^*, a] p_1}_{\CP_2 \vee \dots \vee \CP_n}.
    \end{align*}
    Iterating this equality leads to
    \begin{align*}
        \mean{a}_{\CP_1 \vee \dots \vee \CP_n} = \rho(a) + \sum_{i=1}^n \mean{ar_{\CP_i}}_{\CP_{i+1} \vee \dots \vee \CP_n} + \mean{\sum_{p_i \in \CP_i} [p_i^*, a] p_i}_{\CP_{i+1} \vee \dots \vee \CP_n}.
    \end{align*}
    Note that, for any $x \in \CA$ and any $(1+\epsilon')$ quantum partition $\CP'$,
    \begin{equation*}
        |\mean{x}_{\CP'}| \le \norm{x} \sum_{p \in \CP'} \rho(p^*p) = \norm{x} \rho(\mathbb{1} + r_{\CP'}) \le \norm{x} (1+\epsilon').
    \end{equation*}
    Thus, since $\CP_{i+1} \vee \dots \vee \CP_n$ is a $(1+ \epsilon)^{n-i}$ quantum partition, 
    \begin{align*}
        |\mean{a}_{\CP_1 \vee \dots \vee \CP_n} - \rho(a)| &\le \sum_{i=1}^n \norm{a} \epsilon (1+\epsilon)^{n-i} + \norm{\sum_{p_i \in \CP_i} [p_i^*, a] p_i} (1+\epsilon)^{n-i} \\
        &\le (1 + \epsilon)^n (n \epsilon \norm{a} + \sum_{i=1}^n\norm{[\CP_i,a]}).
    \end{align*}
\end{proof}

Using Lemmas \ref{lem_modulus_continuity_quantum_entropies} and \ref{lem_almost_eq_mean_quantum_part}, we show that the conditional quantum dynamical entropy increases when the conditioning is removed, up to an error controlled by the commutators of the partitions.   

\begin{lemma}\label{lem_almost_monotony_by_deconditioning_quantum_entropy}
    Let $\rho \in \CE$ be a state and $\CQ, \CR, \CP_1, \dots, \CP_n$ be $(1+\epsilon)$ quantum partitions. Suppose that $n\epsilon <1/2$, then
    \begin{equation*}
        h^q(\rho,\CQ | \CP_1 \vee \dots \vee \CP_n \vee \CR) \le h^q(\rho, \CQ|\CR) + \CO(\sqrt{n\epsilon + \sum_{i=1}^n \norm{[\CP_i, \CQ]}}),
    \end{equation*}
    where the implicit constant depends on $|\CQ|$. Moreover,
    \begin{equation*}
        h^q(\rho,\CQ | \CR \vee \CP_1 \vee \dots \vee \CP_n ) \le h^q(\rho, \CQ| \CR) + \CO(\sqrt{n \epsilon + \sum_{i=1}^n \norm{[\CP_i, \CQ\vee \CR]}}),
    \end{equation*}
    where the implicit constants depend on $|\CQ|$, $|\CR|$ and $\max_{1 \le i \le n}|\CP_i|$.
\end{lemma}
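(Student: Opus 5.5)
We write the left-hand side as a weighted average of entropies of restricted states, and compare it with an average over the coarser partition using concavity of $\eta$. Set $\CP := \CP_1 \vee \dots \vee \CP_n$. By definition,
\begin{equation*}
    h^q(\rho, \CQ | \CP \vee \CR) = \sum_{r \in \CR} \sum_{p \in \CP} \sum_{q \in \CQ} \eta_{r,p}(q),
\end{equation*}
where $\eta_{r,p}(q)$ stands for $\rho((pr)^*pr)\, \eta\big(\rho_{|pr}(q^*q)\big)$. Fix $r \in \CR$ with $\rho(r^*r) \neq 0$ and work with the state $\rho_{|r}$. The inner sum over $p$ becomes $\sum_{p} \rho_{|r}(p^*p)\, h^q((\rho_{|r})_{|p}, \CQ)$, up to the normalisation $\rho(r^*r)$.

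The first step uses concavity. The function $\mu \mapsto \sum_q \eta(\mu(q^*q))$ is concave on positive functionals. By Jensen's inequality, after normalising the weights $w_p = \rho_{|r}(p^*p)$, whose sum $m_r$ lies in $[1-n\epsilon', 1+n\epsilon']$ with $(1+\epsilon)^n - 1 = \CO(n\epsilon)$ by Lemma \ref{lem_joint_approx_quantum_part_is_approx_quantum_part}, we get
\begin{equation*}
    \sum_p w_p\, h^q((\rho_{|r})_{|p}, \CQ) \le m_r\, h^q\Big(\tfrac{1}{m_r}\mean{\rho_{|r}}_{\CP}, \CQ\Big).
\end{equation*}
This holds because $\sum_p w_p (\rho_{|r})_{|p} = \mean{\rho_{|r}}_{\CP}$.

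The second step replaces the mean by $\rho_{|r}$ itself. We apply Lemma \ref{lem_almost_eq_mean_quantum_part} to the state $\rho_{|r}$ with $a = q^*q$ and sum over $q$. This gives
\begin{equation*}
    \theta_r := \sum_q |\mean{\rho_{|r}}_{\CP}(q^*q) - \rho_{|r}(q^*q)| \le (1+\epsilon)^n \Big( n\epsilon (1+\epsilon)|\CQ| + \sum_{i} \norm{[\CP_i, \CQ]} \Big).
\end{equation*}
The bound is uniform in $r$. Next we apply Lemma \ref{lem_modulus_continuity_quantum_entropies} with $\rho_1 = \mean{\rho_{|r}}_\CP$, $N = m_r$, $\delta = \CO(n\epsilon)$ and $\rho_2 = \rho_{|r}$. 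This yields
\begin{equation*}
    m_r\, h^q(\rho_1/m_r, \CQ) \le h^q(\rho_{|r}, \CQ) + \CO_{|\CQ|}\Big(\sqrt{n\epsilon + \textstyle\sum_i \norm{[\CP_i,\CQ]}}\Big),
\end{equation*}
since $n\epsilon < 1/2$ keeps $\delta$ bounded. Multiplying by $\rho(r^*r)$ and summing over $r$, using $\sum_r \rho(r^*r) \le 1+\epsilon$, gives the first inequality.

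The second inequality has the order reversed: the new partitions sit to the right of $\CR$. Here we plan to use the chain rule, Lemma \ref{lem_key_relation_conditional_quantum_entropy}, to move $\CR$ to the conditioned side. In its notation, Lemma \ref{lem_key_relation_conditional_quantum_entropy} with second and third partitions $\CR$ and $\CP$ reads
\begin{equation*}
    h^q(\rho, \CQ | \CR \vee \CP) = h^q(\rho, \CQ \vee \CR | \CP) - h^q(\rho, \CR | \CP) + \CO(\epsilon).
\end{equation*}
By the first inequality applied to the partition $\CQ \vee \CR$ with $\CR$ trivial, the first term is at most $h^q(\rho, \CQ \vee \CR) + \CO(\sqrt{n\epsilon + \sum_i \norm{[\CP_i, \CQ \vee \CR]}})$. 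The main obstacle is the subtracted term, because we need a lower bound on $h^q(\rho,\CR|\CP)$, and conditioning can decrease entropy. We would try to get it from an exact decomposition of $\mean{\rho}_{\CP}$ and concavity the other way, which may fail. A cleaner route is to do the whole argument in the same concavity framework directly. Namely, we write
\begin{equation*}
    h^q(\rho,\CQ|\CR\vee\CP) = h^q(\rho, \CQ\vee\CR | \CP) - h^q(\rho,\CR|\CP) + \CO(\epsilon),
\end{equation*}
and view the right-hand side as the conditional entropy of the joint distribution of $(q,r)$ given $r$, averaged over $p$. This is a concave function of the state, as classical conditional entropy is concave in the joint law. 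Jensen's inequality then passes to $\mean{\rho}_\CP$. Finally, continuity with respect to $\theta = \sum_{q,r} |\mean{\rho}_\CP((qr)^*qr) - \rho((qr)^*qr)|$, which is controlled by $\norm{[\CP_i,\CQ\vee\CR]}$ through Lemma \ref{lem_almost_eq_mean_quantum_part}, together with Lemma \ref{lem_modulus_continuity_quantum_entropies} applied to both $\CQ\vee\CR$ and $\CR$, gives the result. The constants then depend on $|\CQ|$ and $|\CR|$, with $\max_i |\CP_i|$ entering through the chain-rule error.
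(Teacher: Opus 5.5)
Your proposal is correct and matches the paper's proof. The first inequality is argued exactly as there, and your ``cleaner route'' for the second is the paper's argument: Jensen over $p$ to reach $h^q(\mean{\rho}_{\CP},\CQ|\CR)$, followed by the chain rule and Lemma \ref{lem_modulus_continuity_quantum_entropies} for $\CQ\vee\CR$ and $\CR$ (no chain-rule detour is needed beforehand, since $h^q(\rho,\CQ|\CR\vee\CP)=\sum_{p}\rho(p^*p)\,h^q(\rho_{|p},\CQ|\CR)$ holds exactly and $\mu\mapsto h^q(\mu,\CQ|\CR)$ is concave and $1$-homogeneous). The one step you leave implicit is the bound on $\sum_{r}|\mean{\rho}_{\CP}(r^*r)-\rho(r^*r)|$, which follows from your $\theta$ up to $\CO(\epsilon)$ by writing $r^*r=\sum_{q}r^*q^*qr-r^*r_{\CQ}r$; the paper instead bounds $\norm{[\CP_i,\CR]}$ by $\norm{[\CP_i,\CQ\vee\CR]}+\CO(\epsilon|\CP_i||\CQ||\CR|)$, and that step, not the chain-rule error, is where $\max_i|\CP_i|$ enters.
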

\begin{proof}
    Denote $\tilde{\CP} = \CP_1 \vee \dots \vee \CP_n$ which is a $(1+\epsilon)^n$ quantum partition and for $r \in \CR$
    \begin{equation*}
        L_r := \sum_{p \in \tilde{\CP}}  \rho_{|r}(p^* p) = \rho_{|r}(\mathbb{1} + r_{\tilde{\CP}}) \le (1+\epsilon)^n < e^{1/2} < 2,
    \end{equation*}
    and $L_r>0$ as $\norm{r_{\tilde{\CP}}} \le (1+\epsilon)^n -1 < 1$. 
    Fix $r \in \CR$, the concavity of $\eta$ implies
    \begin{align*}
        h^q(\rho_{|r}, \CQ | \CP_1 \vee \cdots \vee \CP_n) &= \sum_{q \in \CQ} \sum_{p \in \tilde{\CP}} \rho_{|r}(p^* p) \eta(\rho_{|p r}(q^*q)) =  L_r \sum_{q \in \CQ} \frac{\sum_{p \in \tilde{\CP}}  \rho_{|r}(p^* p)}{L_r} \eta(\rho_{|p r}(q^*q)) \\
        &\le  L_r \sum_{q \in \CQ}  \eta( \frac{1}{L_r}\sum_{p \in \tilde{\CP}} \rho_{|r}(p^* p) \rho_{|p r}(q^*q)) =  L_r \sum_{q \in \CQ} \eta( \frac{1}{L_r} \sum_{p \in \tilde{\CP}} \rho_{|r}(p^* q^*q p)) \\
        &=  L_r \sum_{q \in \CQ}  \eta( \frac{1}{L_r} \mean{\rho_{|r}}_{\tilde{\CP}}(q^* q)) =  L_r h^q(\frac{1}{L_r} \mean{\rho_{|r}}_{\tilde{\CP}}, \CQ).
    \end{align*}
    As $|L_r -1| \le 1$, Lemma \ref{lem_modulus_continuity_quantum_entropies} implies
    \begin{align*}
        |L_r h^q(\frac{1}{L_r} \mean{\rho_{|r}}_{\tilde{\CP}}, \CQ) - h^q(\rho_{|r}, \CQ)| \le \CO(\log(|\CQ|))\sqrt{\epsilon + \sum_{q \in \CQ}|\mean{\rho_{|r}}_{\tilde{\CP}}(q^* q) - \rho_{|r}(q^* q )|}.
    \end{align*}
    We deduce from Lemma \ref{lem_almost_eq_mean_quantum_part} and the fact that $\norm{q^*q} \le 1+ \epsilon \le 2$ and $(1+\epsilon)^n \le 2$ that
    \begin{align*}
        \sum_{q \in \CQ}|\mean{\rho_{|r}}_{\tilde{\CP}}(q^* q) - \rho_{|r}(q^* q )| \le \CO(|\CQ|n\epsilon) + 2\sum_{i=1}^n \norm{[\CP_i, \CQ]}.
    \end{align*}
    This concludes the proof of the first part of the lemma.

    We turn to the second inequality. The concavity of $\eta$ leads to 
    \begin{align*}
        h^q(\rho, \CQ | \CR \vee \CP_1 \vee \cdots \vee \CP_n ) &= \sum_{q \in \CQ} \sum_{r \in \CR} \sum_{p \in \tilde{\CP}} \rho(p^* r^* r p ) \eta(\rho_{|r p}(q^*q)) \\
        &= \sum_{q \in \CQ} \sum_{r \in \CR}  \sum_{p \in \tilde{\CP}} \frac{\rho(p^* r^* r p )}{\mean{r^* r}_{\tilde{\CP}}} \eta(\rho_{|r p}(q^*q)) \\
        &\le \sum_{q \in \CQ} \sum_{r \in \CR} \mean{r^* r}_{\tilde{\CP}} \eta( \sum_{p \in \tilde{\CP}} \frac{\rho(p^* r^* r p )}{\mean{r^* r}_{\tilde{\CP}}} \rho_{|rp}(q^*q)) \\
        &= \sum_{q \in \CQ} \sum_{r \in \CR} \mean{r^* r}_{\tilde{\CP}} \eta( \frac{\mean{r^*q^*qr}_{\tilde{\CP}}}{\mean{r^* r}_{\tilde{\CP}}}) \\
        &= h^q(\mean{\rho}_{\tilde{\CP}}, \CQ | \CR) \\
        &= M h^q( \frac{1}{M}\mean{\rho}_{\tilde{\CP}}, \CQ | \CR),
    \end{align*}
    where $M = \mean{\rho}_{\tilde{\CP}}(\mathbb{1})$. Since $\tilde{\CP}$ is a $(1+\epsilon)^n$ quantum partition, $M$ satisfies $|M-1| = o(n \epsilon)$ and $M \le 2$. Using the relation of Lemma \ref{lem_key_relation_conditional_quantum_entropy}, we split the error in two parts and a small remainder 
    \begin{align*}
        |M h^q( \frac{1}{M}\mean{\rho}_{\tilde{\CP}}, \CQ | \CR) - h^q(\rho, \CQ| \CR)| &= |M h^q( \frac{1}{M}\mean{\rho}_{\tilde{\CP}}, \CQ \vee  \CR) - h^q(\rho, \CQ \vee \CR)| + \CO(\epsilon(1 + \log(|\CR|)))\\
        & \qquad + |M h^q( \frac{1}{M}\mean{\rho}_{\tilde{\CP}}, \CR) - h^q(\rho,\CR)|.
    \end{align*}
    Similarly to the proof of the first inequality of the lemma, we get thanks to the lemmas \ref{lem_almost_eq_mean_quantum_part} and \ref{lem_modulus_continuity_quantum_entropies} that 
    \begin{align*}
        |M h^q( \frac{1}{M}\mean{\rho}_{\tilde{\CP}}, \CQ \vee  \CR) - h^q(\rho, \CQ \vee \CR)| &\le \CO(|\CQ\vee \CR|\log(|\CQ\vee \CR|)) \sqrt{n\epsilon + \sum_{i=1}^n \norm{[\CP_i, \CQ\vee \CR]}} \\
        |M h^q( \frac{1}{M}\mean{\rho}_{\tilde{\CP}}, \CR) - h^q(\rho,\CR)| &\le \CO(|\CR|\log(|\CR|)) \sqrt{n\epsilon + \sum_{i=1}^n \norm{[\CP_i, \CR]}}.
    \end{align*}
    For $i \in \eint{1}{n}$, we have 
    \begin{align*}
        \norm{[\CP_i, \CR]} &= \sum_{r \in \CR} \norm{\sum_{p \in \CP_i} [p_i^*, r^*r]p_i} = \sum_{r \in \CR} \norm{\sum_{p \in \CP_i} [p_i^*, r^* (\sum_{q \in \CQ} q^* q - r_\CQ)r]p_i} \\
        &\le \sum_{r \in \CR} \sum_{q \in \CQ} \norm{\sum_{p \in \CP_i} [p_i^*, r^* q^* q r]p_i} + \sum_{r \in \CR} \sum_{q \in \CQ} \sum_{p \in \CP_i} \norm{[p_i^*, r^* r_\CQ r]p_i} \\
        &\le \norm{[\CP_i, \CQ \vee \CR]} + \CO(\epsilon |\CP_i||\CQ||\CR|).
    \end{align*}
    We used that all the partitions involved are $(1+\epsilon)$ quantum partitions and thus the norm of their elements is bounded by $\sqrt{1+\epsilon}$ and hence by $\sqrt{2}$ since $\epsilon <1/2$. Therefore,
    \begin{equation*}
        |M h^q( \frac{1}{M}\mean{\rho}_{\tilde{\CP}}, \CR) - h^q(\rho,\CR)| \le \CO(|\CR|\log(|\CR|)) \sqrt{ \CO(n\epsilon |\CQ||\CR| \max_{1 \le i \le n}|\CP_i|) + \sum_{i=1}^n \norm{[\CP_i, \CQ \vee \CR]}},
    \end{equation*}
    and combining everything gives
    \begin{align*}
        M h^q( \frac{1}{M}\mean{\rho}_{\tilde{\CP}}, \CQ | \CR) \le h^q(\rho, \CQ| \CR) + \CO(|\CR|^2 |\CQ|^2 \max_{1 \le i \le n}|\CP_i|) \sqrt{n \epsilon + \sum_{i=1}^n \norm{[\CP_i, \CQ\vee \CR]}}.
    \end{align*}
\end{proof}

We conclude this section with a conditional version of the entropic uncertainty principle adapted to our setting. The proof consists in a reduction via the GNS construction to a version of the entropic uncertainty principle on Hilbert space for approximate partitions by Nonnenmacher \cite[sec. 3.7 prop. 29]{nonnenmacherEntropyChaoticEigenstates2010}. 

\begin{theorem}[Conditional entropic uncertainty principle for approximate quantum partitions]\label{thm_conditional_entropic_uncertainty_principle_approximate quantum_partitions}
    Let $\rho \in \CE$ be a state, $\CP$, $\CQ$ be two $(1 + \epsilon)$ quantum partitions, $\CR$ a $(1+\epsilon_1)$ quantum partition and $u_1,u_2 \in \CA$ two unitary elements. Suppose that $\epsilon \le \min(|\CP|^{-2}, |\CQ|^{-2})$, then,
    \begin{equation*}
        h^q(\rho, u_1 \cdot \CP | \CR) + h^q(\rho, u_2 \cdot \CQ | \CR) \ge (- 2 \log \left( \max_{p \in \CP, q \in \CQ} \norm{p u_1 u_2^* q^*} + 3|\CP|\epsilon \right) + O(\epsilon^{1/5}))(1-\epsilon_1).
    \end{equation*}
\end{theorem}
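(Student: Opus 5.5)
The plan is to reduce the conditional statement to an unconditional entropic uncertainty principle applied separately to each restricted state $\rho_{|r}$, $r \in \CR$, and then average over $r$. By definition,
\begin{equation*}
    h^q(\rho, u_1 \cdot \CP | \CR) + h^q(\rho, u_2 \cdot \CQ | \CR) = \sum_{r \in \CR} \rho(r^*r) \left( h^q(\rho_{|r}, u_1\cdot \CP) + h^q(\rho_{|r}, u_2 \cdot \CQ) \right),
\end{equation*}
and each $\rho_{|r}$ is a state. It therefore suffices to prove, for an arbitrary state $\rho' \in \CE$,
\begin{equation*}
    h^q(\rho', u_1 \cdot \CP) + h^q(\rho', u_2 \cdot \CQ) \ge - 2 \log \Big( \max_{p, q} \norm{p u_1 u_2^* q^*} + 3|\CP|\epsilon \Big) + \CO(\epsilon^{1/5}),
\end{equation*}
and then multiply by $\sum_r \rho(r^*r) = \rho(\mathbb{1} + r_\CR) \in [1-\epsilon_1, 1+\epsilon_1]$. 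One must check the sign of the bracket: if it is nonnegative, the lower bound $1-\epsilon_1$ on the total weight gives the factor $(1-\epsilon_1)$ in the statement. If it is negative, the inequality should be trivial or can be handled separately, since the left side is then bounded below by a small negative number coming from \eqref{ineq_bound_quantum_entropy_approximate_quant_part}-type estimates.

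For the unconditional inequality I would use the GNS construction. Take $(\pi, \CH, \Omega)$ with $\rho'(x) = \sprod{\pi(x)\Omega}{\Omega}$ and $\norm{\Omega} = 1$. Then
\begin{equation*}
    \rho'\big((u_1^{-1} p u_1)^*(u_1^{-1} p u_1)\big) = \norm{\pi(p)\pi(u_1)\Omega}^2,
\end{equation*}
so $h^q(\rho', u_1 \cdot \CP)$ is the Anantharaman--Nonnenmacher entropy \eqref{eq_quantum_entropy_anantharaman_nonnenmacher} of the unit vector $\psi_1 = \pi(u_1)\Omega$ with respect to the family $(\pi(p))_{p \in \CP}$. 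Similarly, $h^q(\rho', u_2\cdot\CQ)$ is the entropy of $\psi_2 = \pi(u_2)\Omega = \pi(u_2 u_1^*)\psi_1$ with respect to $(\pi(q))_{q \in \CQ}$. Since $\pi$ is a $*$-homomorphism, these families are $(1+\epsilon)$ quantum partitions of $\CH$, and $\norm{\pi(p)\pi(u_1 u_2^*)\pi(q)^*} \le \norm{p u_1 u_2^* q^*}$. This puts us exactly in the setting of Nonnenmacher's entropic uncertainty principle for approximate partitions \cite[sec. 3.7 prop. 29]{nonnenmacherEntropyChaoticEigenstates2010}, with the unitary $U = \pi(u_1 u_2^*)$ relating the two vectors. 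Under the hypothesis $\epsilon \le \min(|\CP|^{-2}, |\CQ|^{-2})$, that result gives the bound $-2\log(c + 3|\CP|\epsilon) + \CO(\epsilon^{1/5})$, where $c$ is the maximal norm of $\pi(p)U\pi(q)^*$. Monotonicity of $-\log$ then lets us replace $c$ by the $C^*$-norm.

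The main obstacle is matching conventions precisely with Nonnenmacher's proposition. The relevant points are the order of $U$ versus $U^*$, whether his bound involves $\norm{\pi(p) U \pi(q)^*}$ or $\norm{\pi(q)^* U \pi(p)}$, and the exact form of his error terms $3|\CP|\epsilon$ and $\epsilon^{1/5}$. If his statement is phrased for a single vector evolved by $U$, I would set $\psi = \psi_2$ and use $U = \pi(u_1u_2^*)$ so that $\psi_1 = U\psi_2$. If instead the norms appear in the other order, I would pass to adjoints, using $\norm{x} = \norm{x^*}$. Finally, I need to check that the implicit constant in $\CO(\epsilon^{1/5})$ is uniform in the state, which holds because the Hilbert-space statement is uniform in the vector.
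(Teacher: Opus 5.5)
Your proposal is correct and follows essentially the paper's route: GNS reduction to Nonnenmacher's Hilbert-space entropic uncertainty principle, then averaging the per-$r$ inequalities for $\rho_{|r}$ with the weight bound $\sum_{r\in\CR}\rho(r^*r)\ge 1-\epsilon_1$. The differences are minor. You absorb $u_1,u_2$ into the GNS vectors, whereas the paper applies the unconditioned statement to the conjugated partitions $u_1\cdot\CP$, $u_2\cdot\CQ$ and uses $\norm{u_1^*pu_1u_2^*q^*u_2}=\norm{pu_1u_2^*q^*}$. Your sign caveat is a point the paper silently skips. It is harmless: $\norm{pu_1u_2^*q^*}\le 1+\epsilon$ and $|\CP|\epsilon\le\epsilon^{1/2}$, so a negative bracket has size $\CO(\epsilon^{1/5})$. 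The discrepancy between the factors $1\pm\epsilon_1$ is then absorbed into the error term, at least for $\epsilon_1$ bounded away from $1$.
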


\begin{proof}
    We first prove the inequality in the case where there is no conditioning $\CR$ and no unitary elements $u_1$ and $u_2$. 
    The first step consists of applying the GNS construction \cite[thm. I.9.6]{davidsonC^algebrasExample1996} to describe the state $\rho$ using a $C^*$ representation $\pi$ of $\CA$ on a Hilbert space $\CH$ generated by a cyclic normed vector $\psi$. This gives
    \begin{equation*}
        \forall x \in \CA \quad \rho(x) = \sprod{\psi}{\pi(x) \psi}.
    \end{equation*}
    Under this description, the quantum entropy becomes
    \begin{align*}
        h^q(\rho, u_1 \cdot \CP) &= \sum_{p \in \CP} \eta(\rho(p^* p)) = \sum_{p \in \CP} \eta(\sprod{\psi}{\pi(p^* p) \psi}) \\
        &= \sum_{p \in \CP} \eta(\sprod{\pi(p) \psi}{\pi(p) \psi}) = \sum_{p \in \CP} \eta(\norm{\pi(p) \psi}^2).
    \end{align*}
    Since $\pi$ is a $*$-morphism between $C^*$ algebras, it satisfies $\norm{\pi(x)}_{\CB(\CH)} \le \norm{x}$ for any element $x \in \CA$ \cite[thm. I.5.5]{davidsonC^algebrasExample1996}. Thus, the inequality to prove becomes 
    \begin{equation}\label{ineq_thm_conditional_entropic_uncertainty_principle_approximate quantum_partitions}
        \sum_{p \in \CP} \eta(\norm{\pi(p)  \psi}^2) + \sum_{q \in \CQ} \eta(\norm{\pi(q) \psi}^2) \ge - 2 \log \left( \max_{p \in \CP, q \in \CQ} \norm{p q^*} + 3|\CP|\epsilon \right) + O(\epsilon^{1/5}).
    \end{equation}
    Remark that 
    \begin{equation*}
        \sum_{p \in \CP} \pi(p)^* \pi(p) = \pi(\sum_{p \in \CP} p^* p) = \pi(\mathbb{1} + r_\CP) \le (1+\epsilon) Id_\CH,
    \end{equation*}
    and similarly for $\CQ$. Therefore, the inequality \eqref{ineq_thm_conditional_entropic_uncertainty_principle_approximate quantum_partitions} is a direct application of the entropic uncertainty principle for approximate quantum partitions in the context of Hilbert space given in \cite[sec. 3.7 prop. 29]{nonnenmacherEntropyChaoticEigenstates2010}, with all the weights taken equal to $1$.

    We now add the conditioning. Let $r \in \CR$. An application of the entropic uncertainty principle to $\rho_{|r}$, $\CP$ and $\CQ$ gives
    \begin{equation*}
        h^q(\rho_{|r}, \CP) + h^q(\rho_{|r}, \CQ) \ge - 2 \log \left( \max_{p \in \CP, q \in \CQ} \norm{p q^*} + 3|\CP|\epsilon \right) + O(\epsilon^{1/5}).
    \end{equation*}
    Multiplying these inequalities by $\rho(r^* r)$ and summing them leads to 
    \begin{align*}
        \sum_{r \in \CR} \rho(r^* r) \left(h^q(\rho_{|r}, \CP) + h^q(\rho_{|r}, \CQ) \right) &\ge (- 2 \log \left( \max_{p \in \CP, q \in \CQ} \norm{p q^*} + 3|\CP|\epsilon \right) + O(\epsilon^{1/5})) \sum_{r \in \CR} \rho(r^* r).
    \end{align*}
    Since $1- \epsilon_1 \le \sum_{r \in \CR} \rho(r^* r)$ as $\CR$ is a $(1+ \epsilon_1)$ quantum partition, we get
    \begin{align*}
        h^q(\rho,  \CP| \CR) + h^q(\rho, \CQ |CR) &\ge (- 2 \log \left( \max_{p \in \CP, q \in \CQ} \norm{p q^*} + 3|\CP|\epsilon \right) + O(\epsilon^{1/5})) (1-\epsilon_1).
    \end{align*}

    It is straightforward to check that $u_1 \cdot \CP$ and $u_2 \cdot \CQ$ are $(1+\epsilon)$ quantum partitions. Applying the inequality just proven to them gives the theorem.  
\end{proof}

\section{Quantization on the torus}\label{sec_quantization_torus}

We recall in this section the usual quantization procedure on the torus as introduced in \cite{hannayQuantizationLinearMaps1980a} and follow mostly the presentation of \cite{riviereEntropySemiclassicalMeasuresSymplecticMapTorus2011} and \cite{dyatlovSemiclassicalMeasuresHigherdimensional2024} that are based on that of \cite{bouzouinaEquipartitionEigenfunctionsQuantized1996a}.

\subsection{Weyl quantization and quantum translations}

Let $h >0$ be a semiclassical parameter. The Weyl quantization of a classical observable $a \in \mathscr{S}(\R^{2d})$ is defined as 

\begin{equation*}
    \forall u \in L^2(\R^d), \; \forall x \in \R^d  \quad \left(\Op^w_h(a)u \right) (x) = \frac{1}{(2\pi h)^d} \int_{\R^{2d}} e^{\frac{i}{h}\sprod{x-y}{\xi}} a\left(\frac{x+y}{2}, \xi \right) u(y) \diff y \diff \xi.
\end{equation*}

\noindent This quantization procedure can be extended to a large class of symbols $S_\nu (1)$, defined for $\nu \in [0, 1/2]$ by
\begin{equation*}
    S_\nu (1) = \set{(a_h)_{h >0} \in \CC^\infty(\R^{2d}) \; | \; \forall \alpha \in \N^{2d}, \; \norm{\partial^\alpha a}_{\infty} = \CO(h^{-\nu|\alpha|}) },
\end{equation*}
on which it satisfies 
\begin{itemize}
    \item a composition rule: if $a,b \in S_\nu(1)$, then there exists $c \in S_\nu(1)$ such that $c = a\#b$ is the Moyal product of $a$ and $b$ and it satisfies
    \begin{equation}\label{Moyal_product_R2d}
        \Op^w_h(a) \Op^w_h(b) = Op^w_h(c) \qqtqq{where} c = ab + \frac{i}{2h}\poissonbra{a}{b} + \CO_{S_\nu}(h^{1-2\delta}),
    \end{equation}
    \item the Calderon-Vaillancourt theorem: there exists $C,M >0$ such that for all symbols $a \in S_\nu(1)$
    \begin{equation}\label{Calderon-Vaillancourt_R2d}
        \norm{\Op^w_h(a)}_{L^2(\R^d \to \R^d)} \le C \sum_{|\alpha| \le M} h^{|\alpha|/2} \norm{\partial^\alpha a}_\infty,
    \end{equation}
    \item a G\r{a}rding inequality: for any $a \in S(1)$ with $a \ge 0$, there exists $C>0$ and $h_0 >0$ such that 
    \begin{equation}\label{ineq_Garding_R2d}
        \forall u \in L^2(\R^{d}), \; \forall 0 < h < h_0 \qquad \sprod{\Op^w_h(a)u}{u} \ge -Ch \norm{u}^2_{L^2(\R^d)},
    \end{equation}
\end{itemize}
The Poisson bracket between two classical observable $a,b \in \CC^\infty(\R^{2d})$ is defined by 
\begin{equation*}
    \poissonbra{a}{b} = \sum_{i=1}^{d} \pardif{a}{\xi_i} \pardif{b}{x_i} - \pardif{a}{x_i}\pardif{b}{\xi_i}.
\end{equation*}
We refer the reader to \cite[chap. 4]{zworskiSemiclassicalAnalysis2012} for the proofs of these statements and more information about the Weyl quantization. The product of the Weyl quantization of two symbols with disjoint support is controlled by the following lemma.

\begin{lemma}\label{lem_disjoint_support_weyl_quantization_R2d}
    Let $a,b \in S_{\nu}$ with $\nu \le 1/2$ and suppose that 
    \begin{equation*}
        d(\supp(a), \supp(b)) = \gamma > 0.
    \end{equation*}
    Then, for $h \le 1$ and denoting $L = \ceil{\frac{2d}{1-\nu}} $, we have
    \begin{equation*}
        \Op^w_h(a) \Op^w_h(b) =  \CO_{S_\nu}(\gamma^{-L+4d} \left(\frac{h^{1-\nu}}{\gamma} \right)^{\infty}),
    \end{equation*}
    where the implied constants depend on the $S_\nu$ seminorms of $a$ and $b$.
\end{lemma}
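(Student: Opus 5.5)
The plan is to write the composition $\Op^w_h(a)\Op^w_h(b)$ as $\Op^w_h(c)$, with $c$ given by the exact Moyal product formula, and to exploit the separation of supports through repeated integration by parts in the oscillatory integral. Concretely, the exact composition formula reads
\begin{equation*}
    c(z) = \frac{1}{(\pi h)^{2d}} \int_{\R^{4d}} e^{\frac{2i}{h}\sigma(z_1 - z, z_2 - z)} a(z_1) b(z_2) \diff z_1 \diff z_2 ,
\end{equation*}
up to the sign convention for $\sigma$. On the support of the integrand we have $|z_1 - z_2| \ge \gamma$. Hence, for every $z$, at least one of $|z_1 - z|$ and $|z_2 - z|$ is bounded below by $\gamma/2$. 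This lets us split the integral with a smooth partition subordinate to these two regions.

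On the region where $|z_2 - z| \ge \gamma/2$, we integrate by parts in $z_1$. The phase has $\nabla_{z_1}$ proportional to $J(z_2 - z)/h$, which has modulus at least of order $\gamma/h$. We use the operator
\begin{equation*}
    L = \frac{h}{2i}\frac{\sprod{J(z_2-z)}{\nabla_{z_1}}}{|z_2 - z|^2}.
\end{equation*}
Each application of ${}^tL$ falls on $a$ and on the cutoff. Since $a \in S_\nu$, it produces a factor $h\, h^{-\nu}/|z_2 - z|$; a derivative of the cutoff gives $h/(\gamma |z_2-z|)$, which is even better. After $N$ steps the integrand is bounded by $(h^{1-\nu}/|z_2 - z|)^N$. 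The region where $|z_1 - z| \ge \gamma/2$ is handled symmetrically by integrating by parts in $z_2$.

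The remaining difficulty is that the integral is not absolutely convergent in the other variable, which is the main obstacle. To deal with it, we also integrate by parts in that variable, using the same phase, where the gradient is $\propto J(z_1-z)/h$. Near $z_1 = z$ this gradient is not large, so we insert the standard weight $\jap{(z_1 - z)/h^{\nu'}}^{-2M}$ via the operator $(1 + h^{2-2\nu'}|\nabla|^2\ldots)$ trick, or equivalently change variables $z_j - z = h^{1/2} w_j$ after first extracting the decay. The cleanest bookkeeping takes $2d$ extra integrations in each variable to get integrability, with each step costing $h^{-\nu}$ against the $h$ gain. This is where the exponent $L = \ceil{2d/(1-\nu)}$ arises: we need $L(1-\nu) \ge 2d$ to absorb the $(\pi h)^{-2d}$ prefactor. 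Tracking the powers of $\gamma$ from the volume of the region $|z_2-z|\ge\gamma/2$ and from the cutoff derivatives gives the prefactor $\gamma^{-L+4d}$.

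Finally, derivatives $\partial_z^\alpha c$ are treated the same way. We differentiate under the integral, after first moving the $z$-dependence onto $a,b$ by the substitution $z_j \mapsto z_j + z$. Each derivative costs at most $h^{-\nu}$, so $c$ satisfies the $S_\nu$ seminorm bounds with the stated $\CO(\gamma^{-L+4d}(h^{1-\nu}/\gamma)^\infty)$ size. The implicit constants depend only on finitely many $S_\nu$ seminorms of $a$ and $b$ for each order, as claimed.
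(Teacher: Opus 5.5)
Your argument is correct in outline, but it is organised differently from the paper's.

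\textbf{The paper's route.} The paper never splits the integral. With $w_j = z - z_j$, the phase $\phi(w_1,w_2) = -2\sigma(w_1,w_2)$ has full gradient $|\nabla\phi| = 2|w|$, where $w=(w_1,w_2)\in\R^{4d}$. Moreover $|w|\ge\gamma/\sqrt{2}$ on the support of the integrand. A single vector field $\langle\nabla\phi, hD\rangle/|\nabla\phi|^2$, acting in all $4d$ variables at once, then gives a bound $\CO(h^{M(1-\nu)}|w|^{-M})$ after $M$ integrations by parts, on $\{|w|\ge\gamma/\sqrt{2}\}$. This is absolutely integrable over $\R^{4d}$ as soon as $M>4d$. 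The factor $\gamma^{4d-M}$ is just the scaling of $\int_{|w|\ge\gamma/\sqrt2}|w|^{-M}\diff w$, and writing $M=n+\lceil 2d/(1-\nu)\rceil$ absorbs the $h^{-2d}$. So the joint vector field removes both the partition and what you call the main obstacle in one stroke.

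\textbf{What your route costs.} Your one-variable vector fields have coefficients independent of the integration variable, which makes each integration by parts slightly cleaner. You pay for this with the second, weighted integration by parts. In the region $|z_2-z|\gtrsim\gamma$, set $\Phi=\frac{2}{h}\sigma(z_1-z,z_2-z)$. The operator $(1-i\mu\langle\nabla_{z_2}\Phi,\nabla_{z_2}\rangle)/(1+\mu|\nabla_{z_2}\Phi|^2)$ with $\mu\sim h^{2\nu}$ produces the weight $\langle (z_1-z)/h^{1-\nu}\rangle^{-K}$. The $z_1$-integral then contributes $h^{2d(1-\nu)}$, and the total is $h^{-2d\nu}\gamma^{2d}(h^{1-\nu}/\gamma)^N$.

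Since $2d-\frac{2d\nu}{1-\nu}=4d-\frac{2d}{1-\nu}$, this does rearrange into the stated $\gamma^{-L+4d}(h^{1-\nu}/\gamma)^{\infty}$. However, your last two paragraphs assert this bookkeeping rather than carry it out. The alternative ``change variables $z_j-z=h^{1/2}w_j$'' is not an equivalent substitute and should be dropped.

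\textbf{The complementary region.} If your cutoff depends only on $z_2-z$, the complementary region confines $z_2$ to a ball of radius $\gamma/2$. There the integral is already absolutely convergent after the single integration by parts in $z_2$, and you recover exactly the paper's $h^{N(1-\nu)-2d}\gamma^{4d-N}$.

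\textbf{A shared caveat.} Both your argument and the paper's tacitly use $\gamma\gtrsim h^{\nu}$ when derivatives fall on cutoffs or on the homogeneous coefficients, each of which costs $\gamma^{-1}$ rather than $h^{-\nu}$. This is harmless, since $\gamma$ is bounded below independently of $h$ wherever the lemma is applied.
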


\begin{proof}
    We follow the proof of \cite[thm. 4.25]{zworskiSemiclassicalAnalysis2012} and keep track of the constant $\gamma$. The Moyal product of $a$ and $b$ can be expressed \cite[thm. 4.11]{zworskiSemiclassicalAnalysis2012} as 
    \begin{equation*}
        a \# b(z) = \frac{1}{(\pi h)^{2d}} \int_{\R^{4d}} e^{\frac{i}{h}\phi(w_1,w_2)} a(z-w_1)(b(z-w_2)) \diff w_1 \diff w_2 ,
    \end{equation*}
    for $z \in \R^{2d}$ and $\phi(w_1,w_2) = -2 \sigma(w_1,w_2)$. Define the operator 
    \begin{equation*}
        L := \frac{\sprod{\nabla \phi}{hD}}{\norm{\nabla \phi}^2} = -\frac{h}{2i} \frac{\sprod{J^T w_2}{\partial_{w_1}}+ \sprod{J w_1}{\partial_{w_2}}}{\norm{w_1}^2 + \norm{w_2}^2} ,
    \end{equation*}
    as $\nabla \phi(w_1,w_2) = -2 (J^T w_2, J w_1)$. The main point of this operator is that 
    \begin{equation*}
        L e^{\frac{i}{h}\phi(w_1,w_2)} = e^{\frac{i}{h}\phi(w_1,w_2)}.
    \end{equation*}
    In particular, for any $M \ge 0$ and $z \in \R^{2d}$, we have 
    \begin{equation}\label{eq_lem_disjoint_support_weyl_quantization_R2d_moyal_prod_with_L}
        a \# b(z) = \frac{1}{(\pi h)^{2d}} \int_{\R^{4d}} L^M(e^{\frac{i}{h}\phi(w_1,w_2)}) a(z-w_1)(b(z-w_2)) \diff w_1 \diff w_2 .
    \end{equation}
    Define the adjoint $L^*$ of $L$ by  
    \begin{equation*}
        \forall f,g \in \CC_c(\R^{4d} \setminus \set{0}), \quad \int_{\R^{4d}} L(f) g \diff x = \int_{\R^{4d}} f L^*(g) \diff x.
    \end{equation*}
    One readily checks that 
    \begin{equation}\label{eq_lem_disjoint_support_weyl_quantization_R2d_expr_adjoint}
        L^*(f) = h \frac{P_1(w)}{\norm{w}^4} f + h \frac{\sprod{P_2(w)}{\nabla f}}{\norm{w}^2} \qqtq{for} f \in \CC_c(\R^{4d} \setminus \set{0}),
    \end{equation}
    where $P_1$ is a polynomial of degree $3$ and $P_2$ a polynomial of degree $1$ in $w$. This holds for any smooth function that vanishes in a neighborhood of $0$. The condition on the support of $a$ and $b$ implies that 
    \begin{equation*}
        \forall (w_1, w_2) \in \R^{4d}, \quad a(z-w_1)b(z-w_2) \neq 0 \implies \norm{w_1 - w_2} > \gamma  \implies \norm{w} \ge \frac{\gamma}{\sqrt{2}} .
    \end{equation*}
    Hence, the function $w \mapsto a(z-w_1)b(z-w_2)$ vanishes in a neighborhood of $0$. Multiple applications of the relation \eqref{eq_lem_disjoint_support_weyl_quantization_R2d_expr_adjoint} and the fact that $a,b \in S_{\nu}$ lead to
    \begin{equation*}
        (L^*)^M (a(z-w_1)b(z-w_2)) = \CO_{S_\nu}(h^{M(1 - \nu)} \norm{w}^{-M} \mathbb{1}_{\norm{w} \ge \frac{\gamma}{\sqrt{2}}}),
    \end{equation*}
    where the implied constant depends on $M$ and the $S_\nu$ seminorms of $a$ and $b$. 
    Combined with the equality \eqref{eq_lem_disjoint_support_weyl_quantization_R2d_moyal_prod_with_L} and the definition of $L^*$, this gives 
    \begin{align*}
        |a \# b(z)| &\le \CO_{S_\nu}(h^{M(1-\nu) - 2d}) \int_{\R^{4d}, \norm{w} \ge \frac{\gamma}{\sqrt{2}}} \frac{1}{\norm{w}^{M}} \diff w \\
        &\le \CO_{S_\nu}(\frac{h^{M(1-\nu) - 2d}}{\gamma^{M-4d}}) \int_{\R^{4d}, \norm{w} \ge \frac{1}{\sqrt{2}}} \frac{1}{\norm{w}^{M}} \diff w.
    \end{align*}
    Thus, denoting $L = \ceil{\frac{2d}{1-\nu}} $ and as $h \le 1$, we get for any $n \ge 1$ that 
    \begin{equation*}
        \norm{a \# b}_\infty \le \frac{1}{\gamma^{L-4d}}\CO_{S_\nu}\left(\left(\frac{h^{1-\nu}}{\gamma}\right)^{n} \right).
    \end{equation*}
    A similar estimate holds for the derivatives of $a\#b$. Therefore, $a \# b = \CO_S(\gamma^{-L+4d}(\frac{h^{1-\nu}}{\gamma})^{\infty})$ and we deduce from the Calderón-Vaillancourt theorem \eqref{Calderon-Vaillancourt_R2d} that 
    \begin{equation*}
        \Op^w_h(a) \Op^w_h(b) = \Op^w_h(a\#b) =  \CO_{S_\nu}\left(\gamma^{-L+4d}\left(\frac{h^{1-\nu}}{\gamma}\right)^{\infty}\right),
    \end{equation*}
    where the implied constants depend on the $S_\nu$ seminorms of $a$ and $b$. 
\end{proof}

We now turn to the Heisenberg group and the quantum translations. The Heisenberg group $Heis$ of dimension $2d+1$ is $\R^{2d+1}$ endowed with the group law 
\begin{equation*}
    (x,\xi,t) \cdot (y, \eta,s) = (x+y, \xi +\eta, t + s + \frac{1}{2} \sigma((x,\xi), (y,\eta))) \qtq{for} (x,\xi, t), (y, \eta,s) \in \R^{2d+1}
\end{equation*}
The quantum translations $(U_h(z))_{z \in \R^{2d}}$ are unitary operators on $L^2(\R^d)$ that act on functions by 
\begin{equation}\label{eq_action_quantum_translation_on_functions}
    \forall \psi \in \R^\R, \, \forall y \in \R^d \quad (U_h(x,\xi) \psi)(y) = e^{\frac{i}{h}\sprod{y-x/2}{\xi}} \psi(y-x).
\end{equation}
They do not commute but they satisfy for $z, z' \in \R^{2d}$ the following relations
\begin{equation}\label{eq_commutation_quantum_translation}
    U_h(z) U_h(z') = e^{\frac{i}{2h}\sigma(z, z')} U_h(z + z') = e^{\frac{i}{h}\sigma(z, z')} U_h(z') U_h(z).
\end{equation}
Hence, they generate a representation of the Heisenberg group $Heis$ given by 
\begin{equation*}
    \rho_h : (x, \xi, t) \in Heis \mapsto e^{\frac{i}{h}t} U_h(x,\xi) .
\end{equation*}

The family of representations $(\rho_h)_{h >0}$ is the set of irreducible representations of the Heisenberg group which are non-trivial on the center. This is the content of the Stone-Von Neumann theorem and we refer the reader to \cite[chap. 1.5]{follandHarmonicAnalysisPhase1989a} for more details\footnote{We warn the reader that Folland chooses a different convention for the Schrödinger representation, see \cite[chap. 1, sec. 2, eq (1.27)]{follandHarmonicAnalysisPhase1989a} for a remark on this choice.}.

Finally, we remark that
\begin{equation}\label{eq_U_h(z)=quantif_e(isigma(z,.))}
    U_h(z) = \Op^w_h(e^{\frac{i}{h}\sigma(z, \cdot)}) \qqtq{for} z \in \R^{2d},
\end{equation}
and that the quantum translations satisfy an exact Egorov theorem
    \begin{equation}\label{eq_exact_Egorov_quantum_transla}
        U_h(z) \Op^w_h(a) U_h(z)^{-1} = \Op^w_h(a(\cdot - z)) \qqtq{for} z \in \R^{2d}, \; a \in S(1).
    \end{equation}

\subsection{Weyl quantization on the torus}

The Weyl quantization on the torus $\T^{2d}$ is built from the Weyl quantization on $\R^{2d}$. The first step is to define the Hilbert space of states. Intuitively, our quantum states should be invariant under quantum phase-space translations of length $1$ since the classical phase space is invariant under such classical translations. Since two quantum states that differ by a global phase are physically identical, we search for tempered distributions $\psi \in \mathscr{S}'(\R^d)$ such that 
\begin{equation*}
    \forall w \in \Z^{2d} ,\; \exists \theta \in \R \quad U_h(w) \psi = e^{i \theta} \psi.
\end{equation*}

A necessary condition for such a tempered distribution $\psi$ to exist is that all the operators $(U_h(w))_{w \in \Z^{2d}}$ must commute. According to the commutation relations \eqref{eq_commutation_quantum_translation} of the $U_h$, this implies that there exists an integer $N \in \N^*$ such that
\begin{equation*}
    \frac{1}{h} = 2 \pi N.
\end{equation*}
We suppose in the following that such a choice of $h >0$ has been made and we define 
\begin{equation*}
    \CH_N(\theta) = \set{\psi \in \mathscr{S}'(\R^d) \; | \; \forall w \in \Z^{2d}\; U_h(w) \psi = e^{2 \pi i \sigma(\theta,w) + \pi i N Q(w) } \psi},
\end{equation*}
where $Q : \R^{2d} \to \R$ is the polarization defined by
\begin{equation*}
    Q(z) = \sprod{y}{\eta} \qqtqq{if} z = (y, \eta) \in \R^{2d}.
\end{equation*}

The important properties of the spaces $\CH_N(\theta)$ are summarized in the following proposition \cite{dyatlovSemiclassicalMeasuresHigherdimensional2024}.

\begin{proposition}\label{pro_proprietes_CH_N(theta)}
    Let $\theta = (\theta_x, \theta_\xi) \in \T^{2d}$. The dimension of $\CH_N(\theta)$ is $N^d$ and a basis is given by $(e^\theta_j)_{j \in \eint{0}{N-1}^d}$ where
    \begin{equation*}
        e^\theta_j(x) = \frac{1}{N^{d/2}} \sum_{k \in \Z^d} e^{2 \pi i \sprod{k}{\theta_\xi}} \delta\left(x - \frac{j+kN - \theta_x}{N}\right).
    \end{equation*}
    If the scalar product on $\CH_N(\theta)$ is fixed such that $(e^\theta_j)_{j \in \eint{0}{N-1}^d}$ is an orthonormal basis, there exists a unitary isomorphism $\Phi_N : L^2(\R^d) \simeq L^2(\T^{2d}, \CH_N(\theta))$ whose expression on $\mathscr{S}(\R^d)$ is 
    \begin{equation*}
        \Phi_N(f)(\theta) = \sum_{j \in \eint{0}{N-1}^d} \sprod{f}{e^\theta_j}_{L^2} e^\theta_j \qquad f \in \mathscr{S}(\R^d), \; \theta \in \T^{2d} .
    \end{equation*}
    Furthermore,
    \begin{equation*}
        U_h(z) \CH_N(\theta) = \CH_N(\theta - Nz).
    \end{equation*}
\end{proposition}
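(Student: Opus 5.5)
The plan is to reduce everything to the two families of generators of $\Z^{2d}$ and use the commutation relations \eqref{eq_commutation_quantum_translation} with $1/h = 2\pi N$. We first make the defining condition explicit. For $w = (x,0)$ with $x \in \Z^d$, we have $Q(w)=0$ and $\sigma(\theta,w) = \sprod{\theta}{Jw} = \sprod{\theta_\xi}{x}$. By \eqref{eq_action_quantum_translation_on_functions}, the condition then reads $\psi(\cdot - x) = e^{2\pi i \sprod{\theta_\xi}{x}}\psi$, which is a quasi-periodicity in position. For $w=(0,\xi)$ with $\xi \in \Z^d$, we have $\sigma(\theta,w) = -\sprod{\theta_x}{\xi}$, and the condition reads $(e^{2\pi i N\sprod{y}{\xi}} - e^{-2\pi i \sprod{\theta_x}{\xi}})\psi = 0$. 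Since $U_h(w)U_h(w') = e^{\pi i N\cdot 2\sigma(w,w')/2\cdot}\dots$ is given by \eqref{eq_commutation_quantum_translation}, the character $w \mapsto e^{2\pi i\sigma(\theta,w)+\pi i N Q(w)}$ is exactly the one compatible with these relations; the term $\pi i N Q(w)$ absorbs the phase $e^{\frac{i}{2h}\sigma}$ when splitting $w = (x,0)+(0,\xi)$. Hence it suffices to impose the condition on the generators.

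Next we solve the two conditions. The modulation condition forces $\supp\psi \subset \{y : Ny + \theta_x \in \Z^d\} = N^{-1}(\Z^d - \theta_x)$. The functions $e^{2\pi i N y_k} - e^{-2\pi i\theta_{x,k}}$ vanish only to first order in $y_k$ at these points, and they do so in independent directions. A local argument then shows that $\psi$ has order zero at each lattice point: near each point, $\psi$ is annihilated by multiplication by the coordinate functions $y_k - y_k^0$ up to a smooth nonvanishing factor. This gives $\psi = \sum_{m \in \Z^d} c_m \delta(x - (m-\theta_x)/N)$. The translation condition with $x \in \Z^d$ shifts $m$ by $Nx$ and yields $c_{j+kN} = e^{2\pi i\sprod{k}{\theta_\xi}} c_j$. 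Tempered growth is automatic since $|c_m|$ is periodic. Thus $\psi = N^{d/2}\sum_{j} c_j e^\theta_j$, and conversely each $e^\theta_j$ satisfies both conditions. Since the $e^\theta_j$ have disjoint supports, they are independent, so $\dim \CH_N(\theta) = N^d$. I expect this order-zero support argument to be the main technical point, together with the surjectivity below.

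For $\Phi_N$, take $f \in \mathscr{S}(\R^d)$. The pairing is
\begin{equation*}
\sprod{f}{e^\theta_j} = N^{-d/2}\sum_{k} e^{-2\pi i\sprod{k}{\theta_\xi}} f\big((j+kN-\theta_x)/N\big).
\end{equation*}
Integrating $|\cdot|^2$ in $\theta_\xi$ over $\T^d$ gives, by Parseval for Fourier series, the quantity $N^{-d}\sum_k |f((j+kN-\theta_x)/N)|^2$. Summing over $j \in \eint{0}{N-1}^d$ and $k$, the points $(j+kN-\theta_x)/N$ run over $N^{-1}(\Z^d - \theta_x)$. Integrating then over $\theta_x \in \T^d$, the change of variables $y = (m-\theta_x)/N$ has Jacobian $N^d$, which cancels the factor $N^{-d}$ and gives $\norm{f}^2_{L^2}$. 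So $\Phi_N$ extends to an isometry. For surjectivity, we run the computation backwards. Given $F(\theta) = \sum_j F_j(\theta)e^\theta_j$ with the $F_j$ smooth, the coefficients define, via the inverse Fourier series in $\theta_\xi$, a function on $\R^d$ whose image is $F$. Since smooth sections are dense, isometry then gives unitarity.

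For the last claim, let $\psi \in \CH_N(\theta)$ and $w \in \Z^{2d}$. By \eqref{eq_commutation_quantum_translation},
\begin{equation*}
U_h(w)U_h(z)\psi = e^{2\pi i N\sigma(w,z)}U_h(z)U_h(w)\psi = e^{2\pi i(\sigma(\theta,w) - N\sigma(z,w)) + \pi i NQ(w)}U_h(z)\psi.
\end{equation*}
Since $\sigma(\theta,w) - N\sigma(z,w) = \sigma(\theta - Nz, w)$, this shows $U_h(z)\CH_N(\theta) \subset \CH_N(\theta - Nz)$. Applying the same inclusion to $U_h(z)^{-1} = U_h(-z)$ gives equality.
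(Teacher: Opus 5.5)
The paper gives no proof of this proposition; it quotes it from \cite{dyatlovSemiclassicalMeasuresHigherdimensional2024}, so your proposal has to stand on its own. Its architecture is sound. The reduction to $w=(x,0)$ and $w=(0,\xi)$ is correct; the identity your garbled sentence should state is $U_h(x,\xi)=e^{\pi i N\sprod{x}{\xi}}U_h(x,0)U_h(0,\xi)$. The order-zero argument on $N^{-1}(\Z^d-\theta_x)$, the Parseval computation, and the commutation argument for $U_h(z)\CH_N(\theta)=\CH_N(\theta-Nz)$ are also correct.

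The step that fails is ``the translation condition \dots\ yields $c_{j+kN}=e^{2\pi i\sprod{k}{\theta_\xi}}c_j$''. Since $\delta_a(\cdot-x)=\delta_{a+x}$, you get $\psi(\cdot-x)=\sum_m c_{m-Nx}\,\delta_{(m-\theta_x)/N}$. The condition $\psi(\cdot-x)=e^{2\pi i\sprod{\theta_\xi}{x}}\psi$ then gives $c_{m-Nx}=e^{2\pi i\sprod{\theta_\xi}{x}}c_m$, i.e.\ $c_{j+kN}=e^{-2\pi i\sprod{k}{\theta_\xi}}c_j$. Equivalently, for the family as printed, $U_h(x_0,0)e^\theta_j=e^{-2\pi i\sprod{x_0}{\theta_\xi}}e^\theta_j$, whereas $\CH_N(\theta)$ requires the eigenvalue $e^{2\pi i\sigma(\theta,(x_0,0))}=e^{2\pi i\sprod{\theta_\xi}{x_0}}$. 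So your claim that each $e^\theta_j$ satisfies both conditions is false with the paper's conventions for $U_h$ and $\sigma$. The printed vectors span $\CH_N(\theta_x,-\theta_\xi)$, and a basis of $\CH_N(\theta)$ carries the phase $e^{-2\pi i\sprod{k}{\theta_\xi}}$. The statement has a sign typo, and your slip reproduced it rather than detecting it.

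The repair costs nothing. The dimension count and the translation law never see the phase, and your Parseval computation for $\Phi_N$ goes through verbatim with the corrected basis. Two further points deserve a sentence each in a written proof.

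First, $\CH_N(\theta)$ depends only on $\theta$ modulo $\Z^{2d}$, but $e^\theta_j$ depends on the representative: an integer shift of $\theta_x$ permutes the basis up to unimodular phases. This is why the inner product on $\CH_N(\theta)$, $\theta\in\T^{2d}$, is well defined.

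Second, surjectivity needs no density argument. The map $(j,k,\theta_x)\mapsto(j+kN-\theta_x)/N$ is a bijection from $\eint{0}{N-1}^d\times\Z^d\times[0,1)^d$ onto $\R^d$. Hence, in the trivialization over $\theta_x\in[0,1)^d$, $\Phi_N$ is the composition of two unitaries: $f\mapsto\bigl(N^{-d/2}f((j+kN-\theta_x)/N)\bigr)_{j,k}$, from $L^2(\R^d)$ onto $L^2([0,1)^d;\ell^2(\eint{0}{N-1}^d\times\Z^d))$, followed by Fourier series in $\theta_\xi$, from $\ell^2(\Z^d)$ onto $L^2(\T^d)$.
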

The space $\CH_N(\theta)$ will therefore be endowed with the scalar product making $(e^\theta_j)_{j\in \eint{0}{N-1}^d}$ an orthonormal basis. The existence of the unitary isomorphism $\Phi_N$ is equivalent to the following integral decomposition of $L^2(\R^d)$ 
\begin{equation}\label{integral_decomposition_L2_Rn}
    L^2(\R^d) = \int^{\oplus}_{\theta \in \T^{2d}} \CH_N(\theta) \diff \theta.
\end{equation}

The last part of the proposition implies that $U_h(n/N) \CH_N(\theta) = \CH_N(\theta)$ for any $n \in \Z^{2d}$. We denote by $U_{N, \theta}(n/N)$, or $U_N(n/N)$ if the value of $\theta$ is clear from the context, the restriction of $U_h(n/N)$ to $\CH_N(\theta)$. 

We now turn to the quantization of observables on the torus. Any function $a \in \CC^\infty(\T^{2d})$ can be seen as a symbol in $S^0_0(1)$ on which we can use the Weyl quantization. Using a Fourier decomposition of $a$ and the formula \eqref{eq_U_h(z)=quantif_e(isigma(z,.))}, we get 
\begin{equation}\label{eq_quantization_obs_torus}
    \Op^w_h(a) = \sum_{n \in \Z^{2d}} a_n U_h(\frac{n}{N}) \qqtqq{where} a(x) = \sum_{n \in \Z^{2d}} a_n e^{2 i \pi \sigma(n,x)}.
\end{equation}
The series of operators converges since $a \in \CC^{\infty}(\T^{2d})$ and thus its Fourier coefficients satisfy $|a_n| = \CO(n^{-\infty})$. Since the operators $U_h(n/N)$ stabilize $\CH_N(\theta)$, the Weyl quantization of $a$ can be restricted to those spaces. This restriction is denoted $\Op^w_{N, \theta}(a)$ or $\Op^w_N(a)$ if the value of $\theta$ is clear. 

The composition rule \eqref{Moyal_product_R2d} for the Weyl quantization on $\R^{2d}$ descends to a similar one for the quantization on $\T^{2d}$ and the direct integral decomposition \eqref{integral_decomposition_L2_Rn} of $L^2(\R^d)$ implies \cite[thm. XIII.83]{reedMethodsModernMathematical4Operators1978}
\begin{equation}\label{ineq_bound_norms_pseudo_on H_N_via_L2_norms}
    \sup_{\theta \in \T^{2d}} \norm{\Op^w_{N, \theta}(a)}_{\CL(\CH_N(\theta))} = \norm{\Op^w_h(a)}_{\CL(L^2(\R^d))}.
\end{equation}
This leads to a bound on the operator norm of $\Op^w_{N, \theta}(a)$ for $a \in \CC^\infty(\T^{2d}) \cap S_\nu(1)$ via the Calderón-Vaillancourt theorem \eqref{Calderon-Vaillancourt_R2d}
\begin{equation}\label{Calderon-Vaillancourt_torus}
    \norm{\Op^w_{N, \theta}(a)}_{\CL(\CH_N(\theta))} \le C \sum_{|\alpha| \le M} N^{-|\alpha|/2} \norm{\partial^\alpha a}_{\infty}.
\end{equation}
One deduces from \eqref{ineq_Garding_R2d} a G\r{a}rding inequality on $\CH_N(\theta)$: for any $a \in \CC^\infty(\T^{2d})$ with $a \ge 0$, there exists $C>0$ and $h_0 >0$ such that
\begin{equation}\label{ineq_Garding_torus}
    \forall u \in \CH_N(\theta), \; \forall 0 < h < h_0 \qquad \sprod{\Op^w_N(a)u}{u} \ge -\frac{C}{N} \norm{u}^2_{\CH_N(\theta)}.
\end{equation}

\subsection{Symplectic automorphism of the torus and metaplectic representation}

Let $R$ be a ring. A matrix $A \in M_{2d}(R)$ is symplectic if
\begin{equation*}
    A^T J A = J.
\end{equation*}
The set of $2d \times 2d$ symplectic matrices with coefficients in $R$ is denoted $\Sp_{2d}(R)$. If $A \in \Sp_{2d_1}(R)$ and $B \in \Sp_{2d_2}(R)$ are two symplectic matrices, the block diagonal matrix obtained by putting $A$ and $B$ on the diagonal is not symplectic in general\footnote{It would be symplectic with respect to the symplectic form $\sigma_{2d_1} \oplus \sigma_{2d_2}$ which is different from $\sigma_{2(d_1+d_2)}$.}. To solve this minor issue, we define $A \diamond B \in \Sp_{2(d_1 + d_2)}(R)$ to be
\begin{equation}\label{eq_definition_diamond_symplectic_matrices}
    A \diamond B = \begin{pmatrix}
        A_1 & 0 & B_1 & 0 \\
        0 & A_2 & 0 & B_2 \\
        C_1 & 0 & D_1 & 0 \\
        0 & C_2 & 0 & D_2 
    \end{pmatrix}
    \qqtq{where}
        A = \begin{pmatrix}
        A_1 & B_1 \\
        C_1 & D_1 \\
    \end{pmatrix} \qtq{and} B = \begin{pmatrix}
        A_2 & B_2 \\
        C_2 & D_2 \\
    \end{pmatrix} .
\end{equation}
One can check that $A \diamond B$ is indeed a symplectic matrix. We need a result on the reduction of real symplectic matrices. It is a consequence of the results in either \cite{laubCanonicalFormsSymplectic1974} or \cite[sec. 1.4 to 1.7]{longIndexTheorySymplectic2002a}.

\begin{theorem}\label{thm_reduction_symplectic_matrices_real}
    Let $A \in \Sp_{2d}(\R)$ and denote by $\lambda_1 < \cdots < \lambda_n$ the positive Lyapunov exponents of $A$. Then there exists $B \in \Sp_{2d}(\R)$, $A_i \in \Sp_{2d_i}(\R)$, $D_i \in \GL_{d_i}(\R)$ and $C \in \Sp_{2d'}(\R)$ such that 
    \begin{equation*}
        A = B^{-1}(A_1 \diamond \dots \diamond A_n \diamond C) B \qqtq{and} A_ i = \begin{pmatrix}
            D_i & 0 \\
            0 & (D_i^T)^{-1}
        \end{pmatrix},
    \end{equation*}
    such that the spectrum of $C$ is included in the unit circle and the spectrum of $D_i$ is included in the circle of radius $e^{\lambda_i}$.
\end{theorem}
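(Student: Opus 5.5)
The plan is to decompose $\R^{2d}$ according to the moduli of the eigenvalues of $A$, to show that this decomposition is compatible with $\sigma$, and then to choose adapted symplectic bases block by block. For $r>0$, let $E_r \subset \R^{2d}$ be the real part of the sum of the complex generalized eigenspaces of $A$ associated with eigenvalues of modulus $r$. This is an $A$-invariant real subspace, since eigenvalues come in conjugate pairs, and $\R^{2d} = \bigoplus_r E_r$. The positive Lyapunov exponents are the $\lambda_i$ with $E_{e^{\lambda_i}} \neq 0$. Since $A$ is symplectic, its spectrum is stable under $\mu \mapsto \mu^{-1}$, so $E_{e^{-\lambda_i}}$ has the same dimension $d_i := \dim E_{e^{\lambda_i}}$.

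The key step, and the one I expect to be the main obstacle, is the isotropy relation: $\sigma(E_r, E_s) = 0$ unless $rs = 1$. For eigenvectors this follows from $\sigma(u,v) = \sigma(Au,Av) = \mu\nu\,\sigma(u,v)$, after complexifying $\sigma$ bilinearly. For generalized eigenvectors I would argue by induction on the nilpotency orders. If $(A-\mu)^k u = 0$ and $(A-\nu)^l v = 0$, the identity
\begin{equation*}
    \sigma((A-\mu)u, Av) = \sigma(Au,Av) - \mu\,\sigma(u,Av) = \sigma(u,v) - \mu\,\sigma(u,Av)
\end{equation*}
can be rewritten as $\sigma(u, (1-\mu A)v) = \sigma((A-\mu)u, Av)$. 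By induction on $k+l$, the right-hand side vanishes when $\mu\nu \neq 1$, because $(A-\mu)u$ has lower order and $Av$ lies in the same generalized eigenspace as $v$. Moreover, $1 - \mu A$ is invertible on the generalized eigenspace of $\nu$ when $\mu\nu \neq 1$. Together these give $\sigma(u,v)=0$.

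From this, $W_i := E_{e^{\lambda_i}} \oplus E_{e^{-\lambda_i}}$ and $W_0 := E_1$ are mutually $\sigma$-orthogonal. Since $\sigma$ is nondegenerate on $\R^{2d}$, each of them is symplectic. Inside $W_i$, both $E_{\pm} := E_{e^{\pm\lambda_i}}$ are isotropic, hence Lagrangian, and $\sigma$ induces a perfect pairing $E_+ \times E_- \to \R$. Choose any basis $(e_1,\dots,e_{d_i})$ of $E_+$ and the dual basis $(f_1,\dots,f_{d_i})$ of $E_-$, so that $\sigma(e_j,f_k) = \delta_{jk}$; this is a symplectic basis of $W_i$ (up to the sign convention of $J$, adjusted by replacing $f$ with $-f$ if needed). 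Let $D_i$ be the matrix of $A|_{E_+}$ in the basis $(e_j)$, whose spectrum lies on the circle of radius $e^{\lambda_i}$. Invariance of $\sigma$ gives $\sigma(Ae_j, Af_k) = \delta_{jk}$, which forces the matrix of $A|_{E_-}$ in the basis $(f_k)$ to be $(D_i^T)^{-1}$. On $W_0$ choose any symplectic basis; the matrix $C$ of $A|_{W_0}$ is then symplectic, with spectrum on the unit circle.

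Finally, reorder the concatenated basis to match the coordinate convention of the diamond product \eqref{eq_definition_diamond_symplectic_matrices}. That is, list first all the "position" vectors $e^{(1)},\dots,e^{(n)}$ together with the position half of the symplectic basis of $W_0$, then all the corresponding "momentum" vectors. Because the blocks are mutually $\sigma$-orthogonal and each is given a symplectic basis, the resulting change of basis matrix is in $\Sp_{2d}(\R)$. Call it $B^{-1}$, and check that in this basis $A$ reads $A_1 \diamond \dots \diamond A_n \diamond C$ with $A_i = \diag(D_i, (D_i^T)^{-1})$. This last verification is routine bookkeeping.
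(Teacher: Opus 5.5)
Your proof is correct and complete. The paper does not prove this statement itself: it cites canonical-form results that classify real symplectic matrices up to symplectic conjugacy, including the more delicate normal forms on the unit circle.

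Your argument starts from the same $\sigma$-orthogonal splitting by eigenvalue groups on which those normal-form theories are built, but it stops as soon as the statement is reached. The isotropy lemma $\sigma(E_r,E_s)=0$ for $rs\neq 1$ is handled correctly: the identity $\sigma(u,(1-\mu A)v)=\sigma((A-\mu)u,Av)$, combined with invertibility of $1-\mu A$ on $\ker(A-\nu)^l$, gives the induction, and in fact induction on $k$ alone suffices. After that, the only observation needed off the unit circle is that any basis of $E_{e^{\lambda_i}}$, together with its $\sigma$-dual basis in $E_{e^{-\lambda_i}}$, already puts $A|_{W_i}$ in the form $\diag(D_i,D_i^{-T})$. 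This is enough because the theorem constrains $D_i$ only through its spectrum, and $C$ only through being symplectic with unimodular spectrum. What the cited references provide in addition, namely Jordan-type normal forms for the $D_i$ and for $C$, is not needed here. Your route therefore buys a short, self-contained, elementary proof, while the citation buys a much finer classification than the statement requires.

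Two small remarks.
\begin{itemize}
\item With the paper's $J$, the standard basis satisfies $\sigma(e_j,e_{d+k})=-\delta_{jk}$. So the replacement $f\mapsto -f$ you mention is genuinely needed, and it does not change the block $D_i^{-T}$.
\item The equality $\dim E_{e^{\lambda_i}}=\dim E_{e^{-\lambda_i}}$ needs algebraic multiplicities, not just the spectrum as a set, to be invariant under inversion. It also follows directly from your isotropy lemma. Nondegeneracy of $\sigma$ on $W_i$ makes $e\mapsto\sigma(e,\cdot)$ injective from $E_{e^{\lambda_i}}$ into the dual of $E_{e^{-\lambda_i}}$, and vice versa.
\end{itemize}
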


We now turn to the definition of the metaplectic representation acting on $L^2(\R^d)$. Let $A \in \Sp_{2d}(\R)$ be a real symplectic matrix. The map defined by 
\begin{equation*}
    (z, t) \in Heis \mapsto (Az,t) \in Heis,
\end{equation*}
is an automorphism of the Heisenberg group that leaves the center $\set{(0,0,t) \in Heis \; | \: t \in \R}$ unchanged. We still denote it $A$ and use it to twist the irreducible representation $\rho_h$ to get the representation $\rho_h \circ A$. This new representation is still irreducible and acts on the center in the same manner as $\rho_h$. As a consequence of the Stone-von Neumann theorem \cite[thm. 1.50]{follandHarmonicAnalysisPhase1989a}, $\rho_h$ and $\rho_h \circ A$ are conjugate, meaning that there exists a unitary operator $M_h(A)$ on $L^2(\R^d)$ such that 
\begin{equation*}
    \forall z \in \R^{2d}, \; \forall t \in \R \qquad \rho_h (Az, t) = M_h(A) \rho_h(z,t) M_h(A)^{-1}.
\end{equation*} 
This implies 
\begin{equation}\label{eq_exact_Egorov_R2d}
    \forall z \in \R^{2d}, \qquad U_h (Az) = M_h(A) U_h(z) M_h(A)^{-1},
\end{equation} 
which is an exact Egorov theorem. In other words, the operator $M_h(A)$ is the quantum analog of the diffeomorphism $A$ acting on $\R^{2d}$. 
According to Schur's lemma, the operator $M_h(A)$ is unique up to a phase and the morphism 
\begin{equation*}
    A \in \Sp_{2d}(\R) \mapsto M_h(A) \in \CU(L^2(\R^d))/\set{cId \, | \,  |c|=1 },
\end{equation*}
is thus a projective representation of the symplectic group. This representation can be lifted into a representation of the metaplectic group, which is the double cover of the symplectic group, and is called the metaplectic representation. 

The metaplectic operators are compatible with the concatenation of symplectic matrices, meaning that 
\begin{equation}\label{eq_compatibility_metaplectic_op_concatenation_symplectic}
    M_h(A \diamond B) = M_h(A) \otimes M_h(B) \qqtq{for} A \in \Sp_{2d_1}(\R), \; B \in \Sp_{2d_2}(\R),
\end{equation}
where the operator $M_h(A) \otimes M_h(B)$ acts component by component on $L^2(\R^{d_1}) \otimes L^2(\R^{d_2})$ which is a dense subset of $ L^2(\R^{d_1+d_2})$. This equality can be checked by verifying it on generators of the symplectic group, see \cite[chap. 4, eq 4.24-4.26]{follandHarmonicAnalysisPhase1989a} for their expressions.

Since any matrix with integer coefficients stabilizes the torus $\T^{2d}$, any symplectic integer matrix $A \in \Sp_{2d}(\Z)$ induces an automorphism of $\T^{2d}$, still denoted $A$. It is quantized by restricting the quantum propagator $M_h(A)$ to $\CH_N(\theta)$ for a good choice of $\theta$. Such a choice can be made as long as $1$ is not an eigenvalue of $A$ according to the discussion in \cite[sec. 2.2.4]{dyatlovSemiclassicalMeasuresHigherdimensional2024}. 

\begin{lemma}\label{lem_condition_quantizability_symp_matrix}
    Let $A \in \Sp_{2d}(\Z)$ such that $1$ does not belong to the spectrum of $A$. Then, for any $N \in \N^*$, there exists an angle $\theta_N(A) \in \T^{2d}$ such that 
    \begin{equation*}
        M_h(A) \CH_N(\theta_N(A)) = \CH_N(\theta_N(A)).
    \end{equation*}
\end{lemma}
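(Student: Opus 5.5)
We want $\theta$ with $M_h(A)\CH_N(\theta) = \CH_N(\theta)$. The plan is to compute how $M_h(A)$ transforms the twisted periodicity condition that defines $\CH_N(\theta)$, using the exact Egorov theorem \eqref{eq_exact_Egorov_R2d}, and then solve a linear equation on $\T^{2d}$ whose solvability comes from $1 \notin \Sp(A)$.

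Let $\psi \in \CH_N(\theta)$ and $w \in \Z^{2d}$. By \eqref{eq_exact_Egorov_R2d},
\begin{equation*}
    U_h(w) M_h(A)\psi = M_h(A) U_h(A^{-1}w)\psi = e^{2\pi i \sigma(\theta, A^{-1}w) + \pi i N Q(A^{-1}w)} M_h(A)\psi,
\end{equation*}
since $A^{-1}w \in \Z^{2d}$. Symplecticity of $A$ gives $\sigma(\theta,A^{-1}w) = \sigma(A\theta,w)$. The polarization term must then be rewritten as $Q(A^{-1}w) \equiv Q(w) + 2\sigma(c_A, w) \pmod 2$ for some fixed $c_A \in \frac12\Z^{2d}$. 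To see this, note that $w \mapsto Q(A^{-1}w) - Q(w)$ is an integer quadratic form. Modulo $2$ it is additive, because the cross terms
\begin{equation*}
    \sprod{y}{\eta'} + \sprod{y'}{\eta} - (\text{same for } A^{-1}w, A^{-1}w')
\end{equation*}
equal $\sigma(w,w') - \sigma(A^{-1}w, A^{-1}w') = 0$ modulo $2$ up to even terms. Hence modulo $2$ this form is a linear functional on $(\Z/2)^{2d}$, which can be written as $2\sigma(c_A,\cdot)$. It follows that $M_h(A)\CH_N(\theta) \subset \CH_N(A\theta + N c_A)$. Applying the same argument to $A^{-1}$ shows the two spaces have the same dimension $N^d$, so this inclusion is an equality.

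It remains to find $\theta \in \T^{2d}$ with $A\theta + Nc_A = \theta$ in $\T^{2d}$, that is, $(A - I)\theta = -Nc_A \pmod{\Z^{2d}}$. Since $1 \notin \Sp(A)$, the matrix $A - I$ is invertible over $\R$, so $\theta = -(A-I)^{-1}Nc_A$ is a solution, and any solution is well defined modulo $\Z^{2d}$. Note that $A - I$ induces a surjective endomorphism of $\T^{2d}$ with finite kernel, so solutions always exist even though $\theta$ may be rational rather than integral.

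The main obstacle is the parity bookkeeping for the quadratic term $Q(A^{-1}w)$, namely showing that it differs from $Q(w)$ by something linear modulo $2$. This should follow from $\sigma(A^{-1}w,A^{-1}w') = \sigma(w,w')$ together with the identity $Q(w+w') = Q(w)+Q(w')+\sprod{y}{\eta'}+\sprod{y'}{\eta}$. If this direct approach becomes messy, the fallback is to verify the claim on generators of $\Sp_{2d}(\Z)$, in the same way \eqref{eq_compatibility_metaplectic_op_concatenation_symplectic} was checked on generators.
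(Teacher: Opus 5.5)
Your proof is correct and follows the same route as the argument the paper relies on by citing \cite[sec.~2.2.4]{dyatlovSemiclassicalMeasuresHigherdimensional2024}: the exact Egorov relation and the parity identity $Q(A^{-1}w)\equiv Q(w)+2\sigma(c_A,w) \pmod 2$ give $M_h(A)\CH_N(\theta)=\CH_N(A\theta+Nc_A)$, and invertibility of $A-I$ then produces a fixed angle. For the equality step you do not need to pass through $A^{-1}$: $\dim\CH_N(\theta')=N^d$ for every $\theta'$ by Proposition~\ref{pro_proprietes_CH_N(theta)}, and $M_h(A)$ is injective on $\mathscr{S}'(\R^d)$, so the inclusion is automatically an equality.
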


\begin{remark}\label{rem_CH_N_theta_invariant_if_N_even}
    If $N$ is an even integer, there is no restriction on $A$ and the quantum space $\CH_N(0)$ is always invariant under $M_N(A)$.
\end{remark}

The restriction of $M_h(A)$ to $\CH_N(\theta_N(A))$ is denoted $M_N(A)$. It satisfies an exact Egorov theorem with the Weyl quantization as a direct consequence of \eqref{eq_exact_Egorov_R2d}
\begin{equation}\label{eq_exact_Egorov_torus}
    \forall n \in \Z^{2d}, \qquad U_N (\frac{An}{N}) = M_N(A) U_N(\frac{n }{N}) M_N(A)^{-1},
\end{equation}
which, together with \eqref{eq_quantization_obs_torus}, implies
\begin{equation}\label{eq_exact_Egorov_observable_torus}
    \forall a \in \CC^\infty(\T^{2d}) \qquad M_N(A)^{-1} \Op^w_N(a) M_N(A) = \Op^w_N(a \circ A) .
\end{equation}

\subsection{Semiclassical measures}

The main objects of interest of this article are semiclassical measures. They are a set of measures on $\T^{2d}$ that inform us on the asymptotic behavior of a family of quantum states $(\psi_N)_{N} \in \CH_N(\theta_N)$ in the semiclassical limit $N \to \infty$. Given such a sequence of quantum states, we associate to it a sequence of distributions $Wig(\psi_N)_{N}$ called the Wigner distribution and given by
\begin{equation*}
    \forall a \in \CC^\infty(\T^{2d}) \qquad Wig(\psi_N)(a) = \sprod{\Op^w_N(a)\psi_N}{\psi_N}.
\end{equation*}
The Wigner distribution is a finite order distribution thanks to the Calderon-Vaillancourt theorem \eqref{Calderon-Vaillancourt_torus}
\begin{equation}\label{ineq_bound_Wigner_by_Calderon_Vaillancourt}
    |Wig(\psi_N)(a)| \le C \norm{\psi_N}^2  \sum_{|\alpha| \le M} N^{-|\alpha|/2} \norm{\partial^\alpha a}_{\infty}.
\end{equation} 
It is positive up to an error of order $1/N$ according to the G\r{a}rding  inequality \eqref{ineq_Garding_torus}
\begin{equation}
    \forall a \in \CC^{\infty}(\T^{2d}), \quad a\ge 0 \implies Wig(\psi_N)(a) \ge \CO_a(\frac{1}{N}).
\end{equation}

We suppose in the following that the quantum states $(\psi_N)$ are of norm $1$. The set of semiclassical measures associated to $(\psi_N)_{N}$ is the set of accumulation points of the sequence $(Wig(\psi_N))_{N}$, noted $\CS \CM((\psi_N)_{N})$. It is a non-empty compact subset of the set of probability measures on $\T^{2d}$. This is a consequence of the Banach-Alaoglu theorem which implies that the unit ball of the dual of $\CC^M(\T^{2d})$ is compact for the weak-star topology. 

Let $A \in \Sp_{2d}(\Z)$. A probability measure $\mu \in \CP(\T^{2d})$ is a semiclassical measure for $A$ if there exist sequences $(\theta_N)_{N}$ and $(\psi_N)_N$ such that $\CH_N(\theta_N)$ is invariant under $M_{(2 \pi N)^{-1}}(A)$, $\psi_N$ is a normed eigenvector of $M_N(A)$ and $\mu$ is a semiclassical measure for $(\psi_N)_N$, meaning that up to a subsequence
\begin{equation*}
    Wig(\psi_N) \xrightarrow[N \to + \infty]{} \mu.
\end{equation*}
The set of semiclassical measures associated to $A$ is denoted $\CS \CM(A)$. It satisfies properties similar to the set of semiclassical measures of the Laplacian. 
\begin{lemma}[Basic properties of $\CS \CM(A)$]\label{lem_basic_pro_semicla_meas_symp_matrix}
    Let $A \in \Sp_{2d}(\Z)$, then
    \begin{equation*}
        \emptyset \neq \CS \CM(A) \subset \CP_A(\T^{2d}),
    \end{equation*}
    where $\CP_A(\T^{2d})$ is the set of $A$-invariant probability measures on $\T^{2d}$.
    Moreover, $\CS \CM(A)$ is weak-star compact in the set of probability measures on $\T^{2d}$. Finally,  
    \begin{equation*}
        \CS \CM(A) = \CS \CM(A^{-1}) \qqtqq{and} \CS \CM(A^s) \subset \CS \CM(A^t) \qtq{if} 0 \le s \le t \in \N.
    \end{equation*}
\end{lemma}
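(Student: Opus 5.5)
The plan is to prove the four assertions separately, using only the tools recalled so far: the Calderón--Vaillancourt bound \eqref{ineq_bound_Wigner_by_Calderon_Vaillancourt}, the Gårding inequality \eqref{ineq_Garding_torus} and the exact Egorov theorem \eqref{eq_exact_Egorov_observable_torus}.

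\textbf{Non-emptiness and positivity.} For each $N$ I fix an angle $\theta_N$ making $\CH_N(\theta_N)$ invariant under $M_h(A)$. Lemma \ref{lem_condition_quantizability_symp_matrix} provides one when $1$ is not an eigenvalue of $A$; otherwise I restrict to even $N$ and take $\theta_N = 0$, using Remark \ref{rem_CH_N_theta_invariant_if_N_even}. Since $\CH_N(\theta_N)$ is finite dimensional and $M_N(A)$ is unitary on it, there is a normalized eigenvector $\psi_N$. By \eqref{ineq_bound_Wigner_by_Calderon_Vaillancourt} the distributions $Wig(\psi_N)$ are uniformly bounded in the dual of $\CC^M(\T^{2d})$, so Banach--Alaoglu yields a weak-star accumulation point $\mu$. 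Positivity of $\mu$ follows from the Gårding inequality by letting $N\to\infty$. Testing against $a = 1$ gives $\mu(1) = \norm{\psi_N}^2 = 1$, because $\Op^w_N(1) = Id$. A positive distribution is a measure, so $\mu$ is a probability measure.

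\textbf{Invariance.} Let $M_N(A)\psi_N = e^{i\alpha_N}\psi_N$. By \eqref{eq_exact_Egorov_observable_torus},
\begin{equation*}
Wig(\psi_N)(a\circ A) = \sprod{M_N(A)^{-1}\Op^w_N(a)M_N(A)\psi_N}{\psi_N} = Wig(\psi_N)(a),
\end{equation*}
since the two phases cancel. Passing to the limit along the subsequence gives $\mu(a\circ A) = \mu(a)$ for all smooth $a$, hence $\mu \in \CP_A(\T^{2d})$.

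\textbf{Compactness.} I would show that $\CS\CM(A)$ is weak-star closed inside the compact set $\CP(\T^{2d})$. The one genuine point is a diagonal argument. Let $\mu_k \in \CS\CM(A)$ converge to $\mu$, and let $(\psi^{(k)}_N)_N$ be the defining eigenvector sequences. Fix a countable dense family $(a_m)$ in $\CC^\infty(\T^{2d})$. For each $k$, choose $N_k > N_{k-1}$ such that $|Wig(\psi^{(k)}_{N_k})(a_m) - \mu_k(a_m)| \le 1/k$ for all $m \le k$. The sequence $\psi_{N_k} := \psi^{(k)}_{N_k}$ then consists of eigenvectors on admissible spaces $\CH_{N_k}(\theta_{N_k})$, and its Wigner distributions converge to $\mu$ on every $a_m$. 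The uniform bound \eqref{ineq_bound_Wigner_by_Calderon_Vaillancourt} extends this convergence to all smooth $a$, so $\mu \in \CS\CM(A)$. I expect the main obstacle to be purely definitional here: one needs that the definition only requires some subsequence of integers $N$, so that a diagonal subsequence is allowed.

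\textbf{Inverse and powers.} The inverse is immediate once I note that $M_N(A)^{-1}$ preserves the same space $\CH_N(\theta_N)$ and has the same eigenvectors. Indeed $M_h(A^{-1})$ equals $M_h(A)^{-1}$ up to a phase, so $\theta_N(A)$ also works for $A^{-1}$, and symmetry gives equality. For the inclusion, I read the exponents as $s \mid t$, the case relevant to Remark \ref{rem_thm_entropy_bound_split_torus_one_lyap_exp_powers_of_A}. Then $M_N(A^t) = c\,M_N(A^s)^{t/s}$ with $|c| = 1$, so $M_h(A^t)$ preserves $\CH_N(\theta_N)$ and any eigenvector of $M_N(A^s)$ is an eigenvector of $M_N(A^t)$. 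Hence every semiclassical measure of $A^s$ is one of $A^t$.
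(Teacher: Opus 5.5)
Your proof is correct and follows the paper's route: Banach--Alaoglu together with G\r{a}rding for existence and positivity, the exact Egorov theorem for invariance, and closedness for compactness. The paper encodes your diagonal argument in the identity $\CS\CM(A)=\bigcap_{k}\overline{\bigcup_{N\ge k}\{Wig(\psi)\,|\,\psi\in E_N(A)\}}$, where $E_N(A)$ is the set of normed eigenvectors of $M_N(A)$ on invariant spaces, and it likewise handles $A^{-1}$ and powers through inclusions of eigenvector sets.

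Your restriction to $s\mid t$ is not merely a convenient reading; it is necessary. The paper's justification (an eigenvector of $M_N(A^s)$ is an eigenvector of any power of it) only covers $s\mid t$, and the literal claim fails, for instance, when $s=0$. Indeed $M_N(\mathrm{Id})$ is a multiple of the identity, so $\CS\CM(A^0)$ contains the Dirac masses $\delta_{z_0}$ produced by coherent states, and these are not $A^t$-invariant whenever $A^t z_0\neq z_0$.
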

\begin{proof}
    Denote by $E_N(A)$ the set of normed eigenvectors of $M_N(A)$ in $\CH_N(\theta)$ for any $\theta$ for which $\CH_N(\theta)$ is stable by $M_N(A)$. This set is non-empty if $N$ is even as $\CH_N(0)$ is then stable by $M_N(A)$ according to the remark \ref{rem_CH_N_theta_invariant_if_N_even} and the diagonalizability of unitary operators on finite dimensional Hilbert spaces. The definition of the set of semiclassical measures associated to $A$ and the usual set-theoretic description of accumulation points of sequences gives
    \begin{equation*}
        \CS \CM(A) = \cap_{k \ge 0} \overline{\cup_{N \ge k} \set{Wig(\psi) \, | \, \psi \in E_N(A)}}.
    \end{equation*}
    This is a decreasing intersection of non-empty compact sets in the dual of $\CC^M(\T^{2d})$ by Banach-Alaoglu and \eqref{ineq_bound_Wigner_by_Calderon_Vaillancourt}. Hence, $\CS \CM(A)$ is compact and non-empty in the dual of $\CC^M(\T^{2d})$. As any semiclassical measure is a probability measure, it is straightforward to see that $\CS \CM(A)$ is included in $\CP(\T^{2d})$ and compact for the weak-star topology on $\CP(\T^{2d})$.

    The invariance by $A$ of semiclassical measures associated to $A$ is a classical consequence of the Egorov theorem \eqref{eq_exact_Egorov_observable_torus}.
    The property for different powers of the matrix comes from 
    \begin{equation*}
        E_N(A) = E_N(A^{-1}) \qqtqq{and} E_N(A^s) \subset E_N(A^t),
    \end{equation*}
    which are immediate as $M_N(A^s) = \pm M_N(A)^s$ and an eigenvector for a given matrix is always an eigenvector for any power and multiple of this matrix. 
\end{proof}

\section{Effective Ehrenfest time}\label{sec_effective_ehrenfest_time}

Consider a symplectic matrix $A \in \Sp_{2d}(\R)$ and denote by $\lambda_{max}$ its largest Lyapunov exponent. Fix $\epsilon >0$ and suppose that $\lambda_{max} >0$. For a general observable $f \in S(\R^{2d})$, its evolution $f\circ A^t$ under the symplectic matrix $A$ remains uniformly in $S_{\frac{1-\epsilon}{2}}$ for times $|t| \le \frac{(1-\epsilon)|\log(h)|}{2\lambda_{max}}$. Indeed, suppose that $t \ge 0$, an application of Faà di Bruno's formula leads to 
\begin{align*}
    |\partial^\alpha f\circ A^t (x)| &\le C_\alpha \max_{\beta \le \alpha} |\norm{A^t}^\beta \partial^{\alpha - \beta} f (A^t x)| \le C_\alpha  \norm{A}^{|\alpha|t} \le C_\alpha  e^{\lambda_{max} |\alpha| t} \\
    &\le C_\alpha h^{- \frac{1-\epsilon}{2} |\alpha|}.
\end{align*}
The time $T^E_{min} = \frac{(1-\epsilon)|\log(h)|}{2\lambda_{max}}$, called the Ehrenfest time, puts a hard limit on the timescale at which the semiclassical approximation is valid. For example, a version of Egorov theorem in good symbol classes is only valid for the Anti-Wick quantization for times up to the Ehrenfest one. The case of metaplectic operators and of the Weyl quantization, for which there is an exact Egorov theorem, is the exception rather than the norm. 

Hoping to improve on this Ehrenfest time $T^E_{min}$ which is governed by the largest Lyapunov exponent of $A$, we define the notion of effective Ehrenfest time of an observable.

\begin{definition}\label{def_effective_ehrenfest_time}
    A function $f \in \CC^\infty(\T^{2d})$ is said to have effective Ehrenfest time $(T, \epsilon)$ with $0 < \epsilon < 1/2$ if
    \begin{enumerate}
        \item the family of functions $(f \circ A^t)_{|t| \le T}$ is uniformly in $S_{1/2-\epsilon}(1)$,
        \item for any $g \in \CC^\infty(\T^{2d})$ and any pair of times $t,s \in \Z$ such that $|t-s| \le 2T$ there exists a constant $C_{f,g}$ such that 
        \begin{equation*}
            \norm{[\Op^w_h(f \circ A^s), \Op^w_h(g \circ A^t)]} \le C_{f,g} h^\epsilon.
        \end{equation*}
    \end{enumerate} 
\end{definition}
Both conditions in this definition have been chosen because they are needed in our proof of the almost subadditivity of the quantum dynamical entropy later on.

\subsection{Building periodic functions with an improved effective Ehrenfest time}

The definition \ref{def_effective_ehrenfest_time} would be meaningless to our cause without a procedure to construct periodic functions with an Ehrenfest time larger than $T^E_{min}$, which is an effective Ehrenfest time for any observable. We need periodic functions since those are the classical observables on the torus. Fortunately, the next lemma gives such a procedure as long as $A$ has a non-trivial rational invariant subspace $V$. 

\begin{lemma}\label{lem_good_Ehren_time_obs_Fourier_support}
    Let $A \in \Sp_{2d}(\Z)$ and suppose that it admits a rational invariant subspace $V$. Denote by $\lambda > 0$ the largest Lyapunov exponent of $A$ restricted to $V$ and fix $0 < \epsilon < 1$. Then any function $f \in \CC^\infty(\T^{2d})$ that is invariant under $V^\sigma$ has an effective Ehrenfest time $(T_\lambda,\epsilon/2)$ where 
    \begin{equation*}
        T_\lambda := \frac{(1-\epsilon)|\log(h)|}{2\lambda}.
    \end{equation*}
\end{lemma}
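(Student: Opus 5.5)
The plan is to work on the Fourier side. A function $f \in \CC^\infty(\T^{2d})$ invariant under $V^\sigma$ has Fourier coefficients $f_n$, in the convention $f(x) = \sum_n f_n e^{2i\pi\sigma(n,x)}$, supported on those $n \in \Z^{2d}$ with $\sigma(n, V^\sigma) = 0$, i.e.\ on $n \in (V^\sigma)^\sigma \cap \Z^{2d} = V \cap \Z^{2d}$. Indeed, invariance under translations by $V^\sigma$ kills every mode with $\sigma(n,v) \neq 0$ for some $v \in V^\sigma$. Since $A^T J A = J$, we have $\sigma(n, A^t x) = \sigma(A^{-t} n, x)$, so
\begin{equation*}
    f \circ A^t = \sum_{n \in V \cap \Z^{2d}} f_n e^{2i\pi \sigma(A^{-t}n, \cdot)} .
\end{equation*}
Only the restriction of $A^{\pm 1}$ to $V \otimes \R$ acts. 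Because all eigenvalues of $A_{|V}$ have modulus between $e^{-\lambda}$ and $e^{\lambda}$ (the Lyapunov exponents of $A_{|V}$ are symmetric, as $V$ carries an induced structure, or simply $\lambda$ bounds both $A_{|V}$ and $A^{-1}_{|V}$; if needed I take $\lambda$ as the max over $A_{|V}^{\pm1}$), we get $\norm{A^{-t}_{|V}} \le C (1+|t|)^{2d} e^{\lambda |t|}$. The polynomial factor accounts for Jordan blocks.

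For the first condition, I differentiate termwise. This gives
\begin{equation*}
    |\partial^\alpha (f\circ A^t)| \le \sum_n |f_n| (2\pi \norm{A^{-t}_{|V}} |n|)^{|\alpha|} \le C_\alpha (1+|t|)^{2d|\alpha|} e^{\lambda|\alpha||t|}.
\end{equation*}
For $|t| \le T_\lambda$ this is at most $C_\alpha |\log h|^{2d|\alpha|} h^{-(1-\epsilon)|\alpha|/2}$, which is $\CO(h^{-(1/2-\epsilon/2)|\alpha|})$ after absorbing the logarithm into a tiny loss. A cleaner way is to use $h^{-(1-\epsilon)/2}|\log h|^{c} \le C h^{-(1-\epsilon/2)/2}$, which yields $S_{1/2 - \epsilon/4}$. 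To hit exactly $S_{1/2-\epsilon/2}$ I would either shrink $\epsilon$ slightly in the definition of $T_\lambda$ or note that any $\epsilon' < \epsilon$ suffices. This is bookkeeping, and I expect the author handles it the same way.

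For the second condition, the key point is that $g$ is arbitrary, so it must not be moved to times beyond the global Ehrenfest time. By the exact Egorov theorem \eqref{eq_exact_Egorov_observable_torus} (equivalently \eqref{eq_exact_Egorov_R2d} on $\R^{2d}$), conjugating by $M_h(A)^{t}$ gives
\begin{equation*}
    \norm{[\Op^w_h(f\circ A^s), \Op^w_h(g\circ A^t)]} = \norm{[\Op^w_h(f\circ A^{s-t}), \Op^w_h(g)]} .
\end{equation*}
Then I expand using \eqref{eq_quantization_obs_torus}. We have $\Op^w_h(f \circ A^{s-t}) = \sum_n f_n U_h(A^{t-s}n/N)$ and $\Op^w_h(g) = \sum_m g_m U_h(m/N)$. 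The commutation relation \eqref{eq_commutation_quantum_translation} gives
\begin{equation*}
    [U_h(z), U_h(z')] = (1 - e^{-\frac{i}{h}\sigma(z,z')}) U_h(z)U_h(z'),
\end{equation*}
with $\frac{1}{h}\sigma(A^{t-s}n/N, m/N) = 2\pi \sigma(A^{t-s}n, m)/N$. Its norm is at most $2\pi|n||m|\norm{A^{t-s}_{|V}}/N$. Summing against the rapidly decaying coefficients yields the bound $C_{f,g} (1+|t-s|)^{2d} e^{\lambda|t-s|} h$. For $|t-s| \le 2T_\lambda$ this is $\lesssim |\log h|^{2d} h^{-(1-\epsilon)} h = |\log h|^{2d} h^{\epsilon} \le C h^{\epsilon/2}$, as required.

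The main obstacle is thus only the polynomial Jordan-block factor and the possible need to control $A^{-1}$ on $V$ by $\lambda$. Since the statement should be applied to symplectic $V$ (or $V$ invariant with Lyapunov spectrum symmetric), I will state that $\lambda$ bounds the growth of $A^{\pm t}$ on $V$. Otherwise I would define $\lambda$ as the largest modulus-log over $A_{|V}^{\pm1}$. The $|\log h|$ losses are exactly why the conclusion is stated with $\epsilon/2$ rather than $\epsilon$.
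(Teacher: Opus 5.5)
Your proposal is correct and is essentially the paper's proof: Fourier support in $V \cap \Z^{2d}$, termwise differentiation with a growth bound on $A^{\pm t}_{|V}$, and the commutator computed via quantum translations (you first conjugate by $M_h(A)^t$ to reduce to $[\Op^w_h(f\circ A^{s-t}), \Op^w_h(g)]$, while the paper uses $\sigma(A^{-t}r, A^{-s}l) = \sigma(r, A^{t-s}l)$ directly, which amounts to the same thing). Both caveats you flag apply equally to the paper's argument: it too only reaches $h^{-(1-\epsilon/2)|\alpha|/2}$, i.e.\ the class $S_{1/2-\epsilon/4}$, in the first condition, and its bound on $A^{t}_{|V}$ for negative $t$ implicitly requires $V$ to be symplectic, which is the only case in which the lemma is used.
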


\begin{proof}
    Since $f$ is smooth and periodic, it can be written as a Fourier series
    \begin{equation*}
        f(x) = \sum_{r \in \Z^{2d}} f_r e^{2 \pi i \sigma(r,x)}.
    \end{equation*}
    We use the non-degenerate form $\sigma$ to identify the dual of $\T^{2d}$ with $\Z^{2d}$ instead of the usual scalar product since it is better suited to our setting. 
    As $f$ is invariant under translation of any element $v \in V^\sigma$, we have 
    \begin{equation*}
        \sum_{r \in \Z^{2d}} f_r e^{2 \pi i \sigma(r,x)} = f(x) = f(x+v) = \sum_{r \in \Z^{2d}} f_r e^{2 \pi i \sigma(r,x + v)} = \sum_{r \in \Z^{2d}} f_r e^{2 \pi i \sigma(r,v)} e^{2 \pi i \sigma(r,x)}.
    \end{equation*}
    Therefore, the uniqueness of the Fourier coefficients implies that $f_r(1-e^{2 \pi i \sigma(r,v)}) = 0$. Since this is true for any $v \in V^\sigma$, we get that $f_r = 0$ except if $r \in (V^\sigma)^\sigma = V$. In other words $f$ has a Fourier support included in $V$ and
    \begin{equation*}
        f(x) = \sum_{r \in \tilde{V}} f_r e^{2 \pi i \sigma(r,x)} \qqtqq{where} \tilde{V} = V \cap \Z^{2d}.
    \end{equation*} 

    We need to bound the supremum norm of $\partial^\alpha (f \circ A^t)$. Let $x \in \R^{2d}$, $\alpha \in \N^{2d}$ and $t \in \Z$.
    \begin{align*}
        f(A^t x) & = \sum_{r \in \tilde{V}} f_r e^{2i \pi \sigma(r,A^t x)} = \sum_{r \in \tilde{V}} f_r e^{2i \pi \sigma(A^{-t}r,x)} = \sum_{r \in \tilde{V}} f_r e^{2i \pi \sprod{JA^{-t}r}{x}}  \\
        \partial^\alpha (f\circ A^t) (x) & = \sum_{r \in \tilde{V}} f_r (2i \pi JA^{-t}r)^{\alpha} e^{2i \pi \sprod{A^{-t}r}{x}}.
    \end{align*}

    As $\lambda$ is the largest Lyapunov exponent of $A$ restricted to $V$, the spectrum of $A$ restricted to $V$ is included in a disk centered at $0$ and of radius $e^{\lambda}$. This implies that for any $\delta >0$, there exists a constant $C_\delta >0$ such that
    \begin{equation}\label{ineq_proof_effect_Ehrenfest_time_bound_norm_A_restricted_V}
        \forall t \in \Z, \quad \norm{A_{|V}^t} \le C |t|^{2d} e^{|t| \lambda} \le C_\delta e^{|t| (\lambda + \delta)}.
    \end{equation}
    The polynomial factor in the bound is there to take into account Jordan block of size $2$ or more. The small margin $\delta$ allows us to absorb this factor. We take $\delta >0$ such that 
    \begin{equation*}
        (1-\epsilon) \frac{\lambda + \delta}{\lambda} \le 1- \frac{\epsilon}{2}.
    \end{equation*}

    If $|t| \le T_\lambda$ and $h \le 1$, we have by \eqref{ineq_proof_effect_Ehrenfest_time_bound_norm_A_restricted_V} that
    \begin{align*}
        |\partial^\alpha (f\circ A^t) (x)| & \le C_\alpha \sum_{r \in \tilde{V}} |f_r| |(JA^{-t}r)^{\alpha}| \le C_\alpha \sum_{r \in \tilde{V}} |f_r| \norm{A^{-t}r}^{|\alpha|} \le C_\alpha \sum_{r \in \tilde{V}} |f_r| \norm{r}^{|\alpha|} e^{|t| (\lambda + \delta) |\alpha|} \\
        & \le C_{\alpha,f} e^{|t|(\lambda + \delta) |\alpha|} \le C_{\alpha,f} h^{-(1-\epsilon) \frac{\lambda + \delta }{\lambda}|\alpha|/2} \le C_{\alpha,f} h^{-(1-\epsilon/2)|\alpha|/2}.
    \end{align*}
    We used that the Fourier coefficients of a $\CC^\infty$ function decrease quicker than any polynomial to bound $\sum_{r \in \tilde{V}} |f_r| \norm{r}^{|\alpha|}$.
    
    We turn to the second part of the lemma and fix a function $g \in \CC^\infty(\T^{2d})$ and two times $t,s \in \Z$. Recall from \eqref{eq_commutation_quantum_translation} and \eqref{eq_exact_Egorov_R2d} that for any $x,x' \in \R^{2d}$ and $t \in \Z$, the quantum translations satisfy
    \begin{equation*}
        M_h(A)^{t} U_h(x) M_h(A)^{-t} = U_h(A^t x) \qquad U_h(x)U_h(x') = e^{\frac{i}{2h}\sigma(x,x')} U_h(x + x').
    \end{equation*}
    This implies that 
    \begin{align*}
        \Op^w_h(g)(t) & = M_h(A)^{-t} \Op^w_h(g) M_h(A)^{t} = \sum_{r \in \Z^{2d}} g_r M_h(A)^{-t} U_h(2 \pi h r) M_h(A)^{t} \\
        & = \sum_{r \in \Z^{2d}} g_r U_h(2 \pi h A ^{-t}r).
    \end{align*}
    We turn to the commutator.
    \begin{align*}
        [\Op^w_h(g)(t), \Op^w_h(f)(s)] &= \sum_{r \in \Z^{2d}, l \in \tilde{V}} g_r f_l \left( U_h(2 \pi h A^{-t}r) U_h(2 \pi h A^{-s}l) -  U_h(2 \pi h A^{-s}l) U_h(2 \pi h A^{-t}r)\right) \\
        &= \sum_{r \in \Z^{2d}, l \in \tilde{V}} g_r f_l \left( e^{2 i\pi^2 h\sigma(A^{-t}r,A^{-s}l)} - e^{2 i\pi^2 h\sigma(A^{-s}l,A^{-t}r)} \right) U_h(2 \pi h (A^{-t}r + A^{-s}l)) \\
        &= \sum_{r \in \Z^{2d}, l \in \tilde{V}} g_r f_l 2i \sin(2 \pi^2 h\sigma(A^{-t}r,A^{-s}l)) U_h(2 \pi h (A^{-t}r + A^{-s}l)).
    \end{align*}
    Let $r \in \Z^{2d}$ and $l \in \tilde{V}$. As $|t-s| \le 2 T_\lambda$, the bound \eqref{ineq_proof_effect_Ehrenfest_time_bound_norm_A_restricted_V} implies
    \begin{equation*}
        \norm{A^{t-s}l} \le C e^{|t-s| (\lambda + \delta) } \norm{l} \le C e^{2T_\lambda (\lambda + \delta)} \norm{l} = C e^{2\frac{(1-\epsilon)|\log(h)|}{2 \lambda}(\lambda + \delta)} \norm{l} = C h^{-(1-\epsilon/2)} \norm{l}.
    \end{equation*}
    Therefore, 
    \begin{align*}
        |2i \sin(2 \pi^2 h\sigma(A^{-t}r,A^{-s}l))| &\le 4 \pi^2 h |\sigma(A^{-t}r,A^{-s}l)| \le 4 \pi^2 h |\sigma(r,A^{t-s}l)| \\
        &\le 4 \pi^2 h \norm{r} \norm{A^{t-s}l} \le 4 \pi^2 h \norm{r} C \norm{l} h^{-(1-\epsilon/2)} \\
        &\le C' \norm{r} \norm{l} h^{\epsilon/2}.
    \end{align*}
    Since $\norm{U_h(x)}_{L^2 \to L^2} = 1$, we get 
    \begin{align*}
        \norm{[\Op^w_h(g)(t), \Op^w_h(f)(s)]} &\le 2 \sum_{r \in \Z^{2d}, l \in \tilde{V}} |g_r f_l| |\sin(2 \pi^2 h\sigma(A^{-t}r,A^{-s}l))| \\
        &\le C h^{\epsilon/2} \sum_{r \in \Z^{2d}, l \in \tilde{V}} |g_r f_l| \norm{r} \norm{l}  \\
        &\le C_{f,g} h^{\epsilon/2}.
    \end{align*}
    
\end{proof}

\subsection{Almost subadditivity of quantum dynamical entropy up to an effective Ehrenfest time}

Recall from Section \ref{sec_classical_to_quantum_partitions} that given a smooth partition of unity $\CQ = (q_i^2)_{i \in I}$ of $\T^{2d}$, we denote by $\CQ^q$ the approximate quantum partition obtained by restricting the operators $Q_i = \Op^w_h(q_i)$ to $\CH_N(\theta)$.  This is again a $(1+\CO(h))$ quantum partition thanks to \eqref{ineq_bound_norms_pseudo_on H_N_via_L2_norms}. For simplicity, we may denote the approximate quantum partition associated to $\CQ$ again by $\CQ$ if the context is clear.
Finally, we denote for $t \in \Z$
\begin{equation*}
    \CQ^q(t) := M_h(A)^t \cdot \CQ^q = \set{M_h(A)^{-t} Q_i M_h(A)^t \; | \; i \in I}.
\end{equation*}
We prove in this section that the conditional quantum dynamical entropy of partitions of the form $\CQ^q$ behaves well in a time range of length at most twice the effective Ehrenfest time of the classical partition involved. This stems from the properties of the quantum entropy and from the bound on the commutators of evolved quantum partitions proved in the next lemma. 

\begin{lemma}\label{lem_bound_commutator_quantum_part_up_to_effective_Ehrenfest_time}
    Let $\CQ = (q_i^2)_{i \in I}$ be a smooth partition of $\T^{2d}$ such that each $q_i$ has an effective Ehrenfest time of $(T, \epsilon)$ and $\CP_1, \dots, \CP_m$ be smooth partitions of $\T^{2d}$. Then for any $(s_1, \dots, s_m) \in \Z^m$, $h\le 1$, $\theta \in \T^{2d}$ and $t \in \Z$ satisfying $|t-s_j| \le 2 T$, we have
    \begin{equation}
        \norm{[\CQ^q(t), \bigvee_{j=1}^m \CP^q_j(s_j)]}_{\CH_N(\theta)} = \CO(h^\epsilon).
    \end{equation}
    The implicit constants depend on $|\CQ|$, $|\CP_j|$, $m$ and the $S(1)$ seminorms of the elements of the classical partitions and neither on the value of $t, s_j$ in the specified range, nor on $\theta \in \T^{2d}$.
\end{lemma}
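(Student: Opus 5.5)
The plan is to unfold the definition of $\norm{[\CQ^q(t), \bigvee_j \CP^q_j(s_j)]}$ from \eqref{eq_def_error_terms_commutators}. By definition,
\begin{equation*}
    \norm{[\CQ^q(t), \CR]} = \sum_{r \in \CR} \norm{\sum_{q \in \CQ^q(t)} [q^*, r^*r] q},
\end{equation*}
where $\CR = \bigvee_{j=1}^m \CP^q_j(s_j)$. Each $r \in \CR$ is a product $r = P_{1}(s_1) \cdots P_{m}(s_m)$ of evolved quantized elements. So $r^*r$ is a word of length $2m$ in the operators $P_j(s_j)$ and their adjoints. Here $P_j(s_j) = M_h(A)^{-s_j} \Op^w_h(p_j) M_h(A)^{s_j}$, which by the exact Egorov theorem \eqref{eq_exact_Egorov_observable_torus} equals $\Op^w_h(p_j\circ A^{s_j})$. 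The number of terms $|\CR| \le \prod_j |\CP_j|$ is finite and independent of $h$, $t$ and $s_j$. It therefore suffices to bound each term $\norm{[Q_i(t)^*, r^*r] Q_i(t)}$ by $\CO(h^\epsilon)$ uniformly.

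First, I will use the Leibniz rule $[a, x_1\cdots x_k] = \sum_\ell x_1 \cdots x_{\ell-1}[a,x_\ell]x_{\ell+1}\cdots x_k$ with $a = Q_i(t)^* = \Op^w_h(\bar q_i \circ A^t)$. The $q_i$ may be taken real; otherwise $\bar q_i$ still has effective Ehrenfest time $(T,\epsilon)$, since condition 2 applied with $g$ replaced by $\bar g$, together with taking adjoints, transfers the bound. The outside factors $x_\ell$ are of the form $\Op^w_h(p_j\circ A^{s_j})$ or their adjoints. Each has norm equal to $\norm{\Op^w_h(p_j)}$, because conjugation by the unitary $M_h(A)^{s_j}$ preserves norms, and this is bounded uniformly by Calder\'on--Vaillancourt \eqref{Calderon-Vaillancourt_R2d} since $p_j \in S(1)$ is fixed. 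The same holds for the final factor $Q_i(t)$. Each inner commutator $[\Op^w_h(\bar q_i\circ A^t), \Op^w_h(p_j\circ A^{s_j})]$, or the corresponding one with $\bar p_j$, is $\CO(h^\epsilon)$ by condition 2 of Definition \ref{def_effective_ehrenfest_time}. We apply it with $f = \bar q_i$ at time $t$ and $g = p_j$ (or $\bar p_j$) at time $s_j$, which is legitimate because $|t-s_j|\le 2T$. The constant $C_{f,g}$ depends only on $f$ and $g$, not on the times.

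Finally, all these bounds are on $L^2(\R^d)$. I will pass to $\CH_N(\theta)$ uniformly in $\theta$. The operators involved are built from the $U_h(n/N)$ and from $M_h(A)$ (with $\theta$ an admissible angle), which preserve $\CH_N(\theta)$, so everything restricts. The direct integral identity \eqref{ineq_bound_norms_pseudo_on H_N_via_L2_norms} gives $\norm{X_{|\CH_N(\theta)}} \le \norm{X}_{L^2}$ for such decomposable operators $X$.

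Summing the finitely many terms, about $2m\,|\CQ|\prod_j|\CP_j|$ of them, gives $\CO(h^\epsilon)$ with constants depending only on $m$, the cardinalities, and the seminorms. The main subtlety is that Definition \ref{def_effective_ehrenfest_time}(2) must be used with a uniform constant over all admissible $t,s$. This holds because the proof of Lemma \ref{lem_good_Ehren_time_obs_Fourier_support}, or the definition itself, gives $C_{f,g}$ independent of the times. A second point to check is that $C_{f,g}$ depends on $g$ only through finitely many seminorms.
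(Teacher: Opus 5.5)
Your proposal is correct and follows the paper's proof essentially step by step. Both arguments reduce to $L^2(\R^d)$ via \eqref{ineq_bound_norms_pseudo_on H_N_via_L2_norms}, expand $[Q(t)^*, r^*r]$ by the Leibniz rule with uniformly bounded outer factors, and bound each elementary commutator by condition 2 of Definition \ref{def_effective_ehrenfest_time}, using $\Op^w_h(q)^*=\Op^w_h(\bar q)$ to handle adjoints. The only cosmetic difference is that you bound the outer factors by Calder\'on--Vaillancourt, while the paper uses the $1+\CO(h)$ bound coming from the approximate-partition property.
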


\begin{proof}
    Recall from \eqref{eq_def_error_terms_commutators} that 
    \begin{equation*}
        \norm{[\CQ^q(t), \bigvee_{j=1}^m \CP^q_j(s_j)]} = \sum_{j=1}^m \sum_{P_j \in \CP^q_j} \norm{\sum_{Q \in \CQ^q} [Q(t)^*, P_m(s_m)^* \dots  P_1(s_1)^* P_1(s_1) \dots P_m(s_m)]Q(t)}.
    \end{equation*}
    Thanks to the inequality \eqref{ineq_bound_norms_pseudo_on H_N_via_L2_norms}, it is enough to bound the norms of the operators on $L^2(\R^d)$ and thus to bound 
    \begin{equation*}
        \sum_{j=1}^m \sum_{P_j \in \CP^q_j} \norm{\sum_{Q \in \CQ^q} [Q(t)^*, P_m(s_m)^* \dots  P_1(s_1)^* P_1(s_1) \dots P_m(s_m)]Q(t)}_{L^2(\R^d)}.
    \end{equation*}
    As the approximate quantum partitions $\CQ^q$, $\CP^q_1, \dots, \CP^q_m$ come from classical partitions of $\T^{2d}$, they are $(1+\CO(h))$ quantum partitions, where the term $\CO(h)$ depends on the $S(1)$ semi-norms of the elements of the classical partitions. Therefore, this term is uniform as the number of partitions and their cardinality is independent of $h$. This implies a uniform bound on the norm of the elements of the approximate partitions
    \begin{equation*}
        \max_{Q \in \CQ^q, P_j \in \CP^q_j} (\norm{Q}, \norm{P_j}) \le 1 + \CO(h).
    \end{equation*}
    The Lie bracket satisfies the relation 
    \begin{equation*}
        [A,BC] = [A,B]C + B[A,C] \qqtq{for} A,B,C :L^2(\R^d) \to L^2(\R^d).
    \end{equation*}
    Repeated application of this equality and the uniform bound on the operator norms implies that each term
    \begin{align*}
        \norm{\sum_{Q \in \CQ^q} [Q(t)^*, P_m(s_m)^* \dots  P_1(s_1)^* P_1(s_1) \dots P_m(s_m)]Q(t)} ,
    \end{align*}
    is bounded by 
    \begin{equation*}
        |\CQ| 2m(1+\CO(h))^{2m+1} \max_{Q \in \CQ^q, P_j \in \CP^q_j} (\norm{[Q(t)^*,P_j(s_j)^*]},\norm{[Q(t)^*,P_j(s_j)]}).
    \end{equation*}
    Since any element $q$ of the classical partition $\CQ$ has an effective Ehrenfest time $(T, \epsilon)$, there exists $\epsilon >0$ such that 
    \begin{equation*}
        \forall |t_1-t_2| \le 2T, \; \forall p \in S(1) \quad \norm{[\Op^w_h(q)(t_1), \Op^w_h(p)(t_2)]}_{L^2(\R^d)} = \CO_{p,q}(h^\epsilon),
    \end{equation*}
    where the constants depend on the $S(1)$ seminorms of $p$ and $q$. This relation is also valid for $\bar{q}$, because if $q$ has Ehrenfest time $T$ then $\bar{q}$ has the same Ehrenfest time with the same $\epsilon$, as $\Op^w_h(q)^* = \Op^w_h(\bar{q})$. 
    As $|t-s_j| \le 2T$, we thus have that 
    \begin{align}
        \norm{[Q(t)^*,P_j(s_j)^*]} &= \norm{[\Op^w_h(\bar{q})(t), \Op^w_h(\bar{p_j})(s_j)]} = \CO_{q, p_j}(h^\epsilon).
    \end{align}
    A similar bound holds for $\norm{[Q(t)^*,P_j(s_j)]}$, hence
    \begin{align*}
        \norm{[\CQ^q(t), \bigvee_{j=1}^m \CP^q_j(s_j)]} \le \prod_{j=1}^m |\CP_j| |\CQ| 2m(1+\CO(h))^{2m+1} \CO(h^\epsilon).
    \end{align*}
    We conclude as $2m(1+\CO(h))^{2m+1} \le C_m$ if we only consider $h \le 1$ where $C_m>0$ is a constant depending only on $m$ and the $S(1)$ seminorms of the involved classical observables.
\end{proof}

We deduce from Lemmas \ref{lem_bound_commutator_quantum_part_up_to_effective_Ehrenfest_time} and \ref{lem_almost_monotony_by_deconditioning_quantum_entropy} that the quantum dynamical entropy is subadditive up to the effective Ehrenfest time of the partition considered. 

\begin{lemma}\label{lem_subadditivity_quantum_part}
    Let $\CR_h$ be a $(1+ \CO(h))$ quantum partition, $\CQ = (q_i^2)_{i \in I}$ be a smooth partition of $\T^{2d}$ with an effective Ehrenfest time of $(T, \epsilon)$ and $\rho$ be a state on $\CH_N$ invariant under $M_N(A)$. Suppose that there exists $K>0$ such that $T \le K |\log h|$, then for any $m_0 \in \N^*$ and any $-T \le t_1 \le t_2 \le T$, there exists $h_0$ such that for any $h\le h_0$
    \begin{equation*}
        h^q(\rho, \bigvee_{k = t_1}^{t_2} \CQ^q(k) | \CR) \le \sum_{l=0}^{n-1} h^q(\rho, \bigvee_{k=0}^{m_0-1} \CQ^q(k) | \CR (-t_1 - l m_0)) + C,
    \end{equation*}
    where $n := \floor{(t_2-t_1)/m_0}$ and $C >0$ depend only on $|\CQ|$, $m_0$, $K$ and the $S(1)$ seminorms of the elements of the classical partition $\CQ$. 
\end{lemma}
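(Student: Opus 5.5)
We cut the time window $\eint{t_1}{t_2}$ into $n$ consecutive blocks of length $m_0$ plus a remainder, peel the blocks off one at a time with the chain rule, and then drop the extra conditioning using the commutator bounds. Write $\CQ^q_{[a,b]} := \bigvee_{k=a}^{b} \CQ^q(k)$. Because the join is not commutative, the order of the factors matters. With the convention $\bigvee_{k=t_1}^{t_2}\CQ^q(k) = \CQ^q(t_1)\vee\dots\vee\CQ^q(t_2)$, we set $a_l := t_1 + l m_0$ and write
\begin{equation*}
    \bigvee_{k=t_1}^{t_2}\CQ^q(k) = \CQ^q_{[a_0,a_1-1]} \vee \CQ^q_{[a_1,a_2-1]}\vee \dots \vee \CQ^q_{[a_{n-1},a_n-1]} \vee \CQ^q_{[a_n,t_2]},
\end{equation*}
where the last block has fewer than $m_0$ elements. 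All partitions involved are $(1+\CO(h))$ quantum partitions, each block has cardinality at most $|\CQ|^{m_0}$, and there are at most $\CO(|\log h|)$ blocks because $T\le K|\log h|$.

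We apply Lemma \ref{lem_key_relation_conditional_quantum_entropy} repeatedly, in the form $h^q(\rho, \CQ'\vee\CP|\CR) = h^q(\rho,\CP|\CR) + h^q(\rho,\CQ'|\CP\vee\CR)$, always peeling off the rightmost block. Since the first block then ends up in the conditioning of the blocks to its left, we get
\begin{equation*}
    h^q(\rho, \CQ^q_{[t_1,t_2]}|\CR) = \sum_{l=0}^{n} h^q\Big(\rho, \CQ^q_{[a_l, a_{l+1}-1]} \,\Big|\, \CQ^q_{[a_{l+1},t_2]}\vee \CR\Big) + \CO(h|\log h|\,\log|\CQ|^{m_0}\cdot|\log h|),
\end{equation*}
where $a_{n+1}-1 := t_2$. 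Each error is $\CO(h(1+\log|\text{block}|))$, so the total error is $o(1)$. The remainder block $l=n$ contributes at most $m_0\log|\CQ|+o(1)$ by \eqref{ineq_bound_quantum_entropy_approximate_quant_part}, and we absorb this into $C$.

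Next we remove the conditioning by the later blocks. For this we use the first inequality of Lemma \ref{lem_almost_monotony_by_deconditioning_quantum_entropy}, taking the $\CP_i$ to be the individual $\CQ^q(k)$ with $k\in\eint{a_{l+1}}{t_2}$. Then $n'\epsilon = \CO(h|\log h|)$, so the hypothesis $n'\epsilon<1/2$ holds for small $h$. The commutator error is $\sum_k\norm{[\CQ^q(k),\CQ^q_{[a_l,a_{l+1}-1]}]}$. Each term is $\CO(h^\epsilon)$ by Lemma \ref{lem_bound_commutator_quantum_part_up_to_effective_Ehrenfest_time}, applied with the roles arranged so that the single-time partition $\CQ^q(k)$ plays the role of $\CQ^q(t)$ and the block supplies the $\CP_j(s_j)$. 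This is legitimate because all times lie in $[-T,T]$, so $|k-s_j|\le 2T$. Summing over $k$ gives an error $\CO(|\log h|\,h^{\epsilon})$ for each block, hence $\CO(\sqrt{|\log h|h^\epsilon})$ after taking the square root. Over $n=\CO(|\log h|)$ blocks this totals $\CO(|\log h|^{3/2}h^{\epsilon/2})=o(1)$, which is at most $1$ once $h\le h_0$.

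Finally, by \eqref{eq_action_unitary_conditional_quantum_entropy} with $u = M_N(A)^{a_l}$ and the invariance $u\cdot\rho=\rho$, we have
\begin{equation*}
    h^q(\rho, \CQ^q_{[a_l,a_l+m_0-1]}|\CR) = h^q(\rho,\CQ^q_{[0,m_0-1]}|\CR(-a_l)),
\end{equation*}
and $-a_l = -t_1-lm_0$, which gives the stated bound. The point I expect to need the most care is the commutator step. The definition of $\norm{[\CP,\CQ]}$ puts the deconditioned partition $\CP_i$ on the left, whereas Lemma \ref{lem_bound_commutator_quantum_part_up_to_effective_Ehrenfest_time} is stated with $\CQ^q(t)$ against a join. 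One has to check that expanding the block's $q^*q$ products via the Leibniz rule reduces everything to pairwise commutators $[\Op^w_h(q)(k),\Op^w_h(q')(k')]$ with $|k-k'|\le 2T$, which are covered by condition 2 of Definition \ref{def_effective_ehrenfest_time}. The other thing to verify is that the accumulated errors remain $o(1)$, which uses the logarithmic bound $T\le K|\log h|$.
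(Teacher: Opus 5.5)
Your argument is correct and is essentially the paper's proof: you split $\eint{t_1}{t_2}$ into blocks of length $m_0$ with Lemma \ref{lem_key_relation_conditional_quantum_entropy}, bound the leftover block by its cardinality (it has at most $m_0$ time indices, possibly exactly $m_0$, which your bound already covers), drop the extra conditioning with the first inequality of Lemma \ref{lem_almost_monotony_by_deconditioning_quantum_entropy} fed by Lemma \ref{lem_bound_commutator_quantum_part_up_to_effective_Ehrenfest_time}, and shift in time via \eqref{eq_action_unitary_conditional_quantum_entropy}; the commutator conventions of the last two lemmas already match as you set them up, so no extra Leibniz expansion is needed. The differences are cosmetic: the paper shifts to $\eint{0}{t_2-t_1}$ first and conditions each block on the earlier blocks rather than the later ones, which is the mirror image of your ordering of the joins.
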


\begin{proof}
    Thanks to \eqref{eq_action_unitary_conditional_quantum_entropy} and the invariance of $\rho$ under $M_N(A)$, we get
    \begin{align*}
        h^q(\rho, \bigvee_{k = t_1}^{t_2} \CQ(k) | \CR) &= h^q(\rho, \bigvee_{k = 0}^{t_2-t_1} \CQ(k+t_1) | \CR) = h^q(\rho, M_N(A)^{t_1} \cdot \bigvee_{k = 0}^{t_2-t_1} \CQ(k) | \CR) \\
        &= h^q(M_N(A)^{t_1} \cdot \rho, \bigvee_{k = 0}^{t_2-t_1} \CQ(k) | M_N(A)^{-t_1} \cdot \CR) \\
        &= h^q(\rho, \bigvee_{k = 0}^{t_2-t_1} \CQ(k) | \CR(-t_1)).
    \end{align*}
    Remark that for any $l \le T$, $\bigvee_{k=0}^l \CQ(k)$ is a $(1+\CO(\sqrt{h}))$ quantum partition, and uniformly in $l$, as
    \begin{equation*}
        (1+\CO(h))^l \le (1+ \CO(h))^{T} \le \exp(K |\log h| \CO(h)) \le e^{\CO(\sqrt{h})} \le 1+\CO(\sqrt{h}).
    \end{equation*}
    As $\CR(-t_1)$ is a $(1+\CO(h))$ quantum partition, all the partitions involved are uniformly $(1+ \CO(\sqrt{h}))$ quantum partitions. There exists an $h_0$ such that 
    \begin{equation*}
        \CO(\sqrt{h})T = \CO(\sqrt{h}|\log(h)|) < 1/2 \qqtq{for} h\le h_0.
    \end{equation*}
    This means that the conditions of Lemma \ref{lem_almost_monotony_by_deconditioning_quantum_entropy} are satisfied with $\epsilon = \CO(\sqrt{h})$ in the hypothesis of the lemma. We suppose in the following that $h \le h_0$.
    We have according to \ref{lem_key_relation_conditional_quantum_entropy} that
    \begin{align*}
        h^q(\rho, \bigvee_{k = t_1}^{t_2} \CQ(k) | \CR) &= h^q(\rho, \bigvee_{k= n m_0}^{t_2-t_1} \CQ(k) | \bigvee_{k=0}^{n m_0-1} \CQ (k) \vee \CR(-t_1)) \\
        &\qquad + \sum_{l=0}^{n-1} h^q(\rho , \bigvee_{k= lm_0}^{(l+1)m_0 -1} \CQ(k) | \bigvee_{k=0}^{lm_0-1} \CQ(k) \vee \CR(-t_1))  \\
        & \qquad + \CO(n\sqrt{h}(1 + \log(|\bigvee_{k=0}^{t_2-t_1} \CQ(k)|))).
    \end{align*}

    We treat the first term by simply bounding it crudely via the logarithm of the cardinality of the partition $\bigvee_{k=n m_0}^{t_2-t_1} \CQ(k)$ via \eqref{ineq_bound_quantum_entropy_approximate_quant_part}. Denote $\bigvee_{k=0}^{n m_0-1} \CQ (k) \vee \CR(-t_1)$ by $\CP$, which is a $(1+ \CO(\sqrt{h}))$ quantum partition.
    \begin{align*}
        h^q(\rho, \bigvee_{k= n m_0}^{t_2-t_1} \CQ(k) | \bigvee_{k=0}^{n m_0-1} \CQ (k) \vee \CR(-t_1)) &= \sum_{p \in \CP} \rho(p^* p) h^q(\rho_{|p}, \bigvee_{k= n m_0}^{t_2-t_1} \CQ(k)) \\
        &\le \sum_{p \in \CP} \rho(p^* p) (1+ \CO(\sqrt{h})) (\log(|\bigvee_{k= n m_0}^{t_2-t_1} \CQ(k)|) +1) \\
        &\le (1+ \CO(\sqrt{h})) (m_0\log(|\CQ|) +1) (1+\CO(\sqrt{h})) \\
        &\le \CO(1),
    \end{align*}
    as $|t_2-t_1- (n m_0 -1)| \le m_0$ and $(1+\CO(\sqrt{h})) \le 2$ for $h \le h_0$. The constant depends on $|\CQ|$ and $m_0$.

    Let $l \in \eint{0}{n}$. For the second term, we have, by invariance of $\rho$ and the relation \eqref{eq_action_quantum_translation_on_functions}, that
    \begin{equation*}
        h^q(\rho , \bigvee_{k= lm_0}^{(l+1)m_0 -1} \CQ(k) | \bigvee_{k=0}^{lm_0-1} \CQ(k) \vee \CR(-t_1)) = h^q(\rho , \bigvee_{k= 0}^{m_0 -1} \CQ(k) | \bigvee_{k=-lm_0}^{-1} \CQ(k) \vee \CR(-t_1 -l m_0)),
    \end{equation*}
    and by Lemma \ref{lem_almost_monotony_by_deconditioning_quantum_entropy} that 
    \begin{align*}
        h^q(\rho , \bigvee_{k= 0}^{m_0 -1} \CQ(k) | \bigvee_{k=-lm_0}^{-1} \CQ(k) \vee \CR(-t_1 -l m_0)) \le h^q(\rho , \bigvee_{k= 0}^{m_0 -1} \CQ(k) | \CR(-t_1 -l m_0)) \\
        + \CO(\sqrt{lm_0 \sqrt{h} + \sum_{j=-lm_0}^{-1}\norm{[\CQ(j), \bigvee_{k=0}^{m_0-1} \CQ(k)]}}).
    \end{align*}
    For any $l \in \eint{0}{n-1}$, any $j \in \eint{-lm_0}{-1}$ and any $k \in \eint{0}{m_0-1}$, we have
    \begin{equation*}
        |j-k| \le (l+1)m_0 \le nm_0 \le \tilde{t} = t_2-t_1 \le 2 T.
    \end{equation*}
    Hence, we apply Lemma \ref{lem_bound_commutator_quantum_part_up_to_effective_Ehrenfest_time} to obtain 
    \begin{equation*}
        \norm{[\CQ(j), \bigvee_{k=0}^{m_0-1} \CQ(k)]} = \CO(h^\epsilon),
    \end{equation*}
    and this bound is independent of $j \in \eint{-(n-1)m_0}{-1}$. This implies, as $lm_0 \le nm_0 \le 2 T$, that 
    \begin{equation*}
        \CO(\sqrt{lm_0 \sqrt{h} + \sum_{j=-lm_0}^{-1}\norm{[\CQ(j), \bigvee_{k=0}^{m_0-1} \CQ(k)]}}) = \CO(\sqrt{T} h^{\epsilon/2}) = \CO(h^{\epsilon/3}).
    \end{equation*}
    For the third term, we have, as $n \le |t_2 -t_1| \le 2 T $ that 
    \begin{align*}
        \CO(n\sqrt{h}(1 + \log(|\bigvee_{k=0}^{t_2-t_1} \CQ(k)|))) = \CO(T \sqrt{h}(1 + T \log|\CQ|)) = \CO(h^{1/4}).
    \end{align*}
    Combining everything leads to 
    \begin{align*}
        h^q(\rho, \bigvee_{k = t_1}^{t_2} \CQ(k) | \CR) &\le \sum_{l=0}^{n-1} \left( h^q(\rho , \bigvee_{k= 0}^{m_0 -1} \CQ(k) | \CR(-t_1 -l m_0)) + \CO( h^{\epsilon/3}) \right) + \CO(1) + \CO(h^{1/4}) \\
        &\le \sum_{l=0}^{n-1} h^q(\rho , \bigvee_{k= 0}^{m_0 -1} \CQ(k) | \CR(-t_1 -l m_0)) + \CO(n h^{\epsilon/3} + h^{1/4}) \\
        &\le \sum_{l=0}^{n-1} h^q(\rho , \bigvee_{k= 0}^{m_0 -1} \CQ(k) | \CR(-t_1 -l m_0)) + \CO(1),
    \end{align*}
    as $nh^{\epsilon/3} \le K |\log h| h^{\epsilon/3} = \CO(1)$ and $\CO(h^{1/4}) = \CO(1)$ for $h \le h_0$. 
\end{proof}

The next lemma tells us that the quantum entropy increases, up to a small error, when we remove any part of the conditioning partitions, as long as the time differences between the evolution of the partitions remain smaller than twice the effective Ehrenfest time of the removed partition. We will use it to keep only specific parts of the conditioning in the proof of Theorems \ref{thm_lower_bound_full_entropy} and \ref{thm_lower_bound_entropy_projection}.

\begin{lemma}\label{lem_selective_deconditioning_quantum_entropy}
    Let $(\CQ_l)_{0 \le l \le m}$ and $(\CQ_l')_{0 \le l \le m'}$ be smooth partitions of $\T^{2d}$, $\CR$ a $(1+\CO(h))$ quantum partition, $\CP = (p_i)_{i \in I}$ a smooth partition of $\T^{2d}$ with an effective Ehrenfest time of $(T, \epsilon)$ and $\rho$ be a state on $\CH_N$. Suppose that there exists $K>0$ such that $T \le K |\log h|$, then for any $n \in \N^*$, any $t_1 \le t \le t + n \le t_2$, any $(s_l)_{0 \le l \le m}$ and $(s_l')_{0 \le l \le m'}$ with 
    \begin{equation*}
        \max(|t_2-t_1|, |t_2-s_l|, |t_1-s_l|, |t_2-s_l'|, |t_1-s_l'|) \le 2T,
    \end{equation*}
    there exists $h_0$ such that for any $h \le h_0$
    \begin{equation*}
        h^q(\rho,  \bigvee_{l=0}^m \CQ^q_l(s_l) | \bigvee_{l=0}^{m'} {\CQ'}^q_{l}(s_l') \vee \bigvee_{k=t_1}^{t_2} \CP^q(k) \vee \CR) \le h^q(\rho,  \bigvee_{l=0}^m \CQ^q_l(s_l) | \bigvee_{l=0}^{m'} {\CQ'}^q_{l}(s_l') \vee \bigvee_{k=t}^{t+n} \CP^q(k) \vee \CR) + \CO(h^{\min(1/6, \epsilon/4)}),
    \end{equation*}
    where the implicit constants depend only on $m$, $m'$, the cardinals of the partitions $(\CQ_l)_{0 \le l \le m}$, $(\CQ_l')_{0 \le l \le m'}$ and $|\CP|$, $n$ and the $S(1)$ seminorms of the elements of the classical partitions $(\CQ_l)_{l \le m}$, $(\CQ_l')_{l \le m'}$ and $\CP$.
\end{lemma}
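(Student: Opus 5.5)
The plan is to reduce the statement to the second inequality of Lemma \ref{lem_almost_monotony_by_deconditioning_quantum_entropy}, with the removed pieces of $\bigvee_{k=t_1}^{t_2}\CP^q(k)$ playing the role of the partitions $\CP_1,\dots,\CP_n$. Write the conditioning partition as
\begin{equation*}
    \CS := \bigvee_{l=0}^{m'} {\CQ'}^q_l(s_l') \vee \bigvee_{k=t_1}^{t-1}\CP^q(k) \vee \bigvee_{k=t}^{t+n}\CP^q(k) \vee \bigvee_{k=t+n+1}^{t_2}\CP^q(k) \vee \CR .
\end{equation*}
The factors indexed by $k<t$ sit between the kept pieces and are not at the end of the join. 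The join is not commutative, so we first move them. The leftmost block commutes up to $\CO(h^\epsilon)$ with everything, because $|t_2-t_1|\le 2T$ and $|s_l'-k|\le 2T$. We therefore use the first point of Lemma \ref{lem_key_relation_conditional_quantum_entropy} and Lemma \ref{lem_almost_monotony_by_deconditioning_quantum_entropy} in the following form. Each elementary factor $\CP^q(k)$ with $k\notin\eint{t}{t+n}$ is peeled off one at a time, starting from the one adjacent to $\CR$. A factor $\CP^q(k)$ with $k>t+n$ lies immediately to the left of $\CR$ after the previous removals. It can therefore be absorbed into the right-most group and removed directly by the first inequality of Lemma \ref{lem_almost_monotony_by_deconditioning_quantum_entropy}, applied to the conditioning $\CR' = (\text{rest})$. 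A factor with $k<t$ is handled instead by the second inequality, with $\CR$ there taken to be the join of the factors to its left. Either way each removal costs $\CO(\sqrt{\epsilon_h+\norm{[\CP^q(k),\cdot]}})$.

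To keep the number of error terms bounded independently of $h$, we do not remove the factors one at a time. Instead we apply the second inequality of Lemma \ref{lem_almost_monotony_by_deconditioning_quantum_entropy} once, with $\CP_1,\dots,\CP_{N}$ equal to the full family of removed $\CP^q(k)$, where $N\le t_2-t_1\le 2T=\CO(|\log h|)$. This is done after using the cyclic rewriting \eqref{eq_action_unitary_conditional_quantum_entropy} and the associativity of the join to place the removed factors at the right end. The error is then
\begin{equation*}
    \CO\Big(\sqrt{N\epsilon_h+\textstyle\sum_{k}\norm{[\CP^q(k),\,\bigvee_l\CQ_l^q(s_l)\vee \bigvee_l{\CQ'}^q_l(s'_l)\vee\bigvee_{j=t}^{t+n}\CP^q(j)]}}\Big).
\end{equation*}
The implicit constant depends only on the cardinals of the kept partitions (which involve $n$, $m$, $m'$ but not $h$) and on $|\CP|$. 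This is why the statement's constants depend on $n$ and not on $t_2-t_1$.

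The commutators are bounded by Lemma \ref{lem_bound_commutator_quantum_part_up_to_effective_Ehrenfest_time}, applied with $\CP$ as the partition having effective Ehrenfest time $(T,\epsilon)$. This is legitimate since every relevant time difference $|k-s_l|$, $|k-s'_l|$, $|k-j|$ is at most $2T$ by the hypothesis on $t_1,t_2$. Each commutator is therefore $\CO(h^\epsilon)$. As in the proof of Lemma \ref{lem_subadditivity_quantum_part}, all partitions are uniformly $(1+\CO(\sqrt h))$ quantum partitions, since $(1+\CO(h))^{2T}=1+\CO(\sqrt h)$. Hence $N\epsilon_h=\CO(|\log h|\sqrt h)=\CO(h^{1/3})$ and $\sum_k\CO(h^{\epsilon})=\CO(|\log h| h^\epsilon)=\CO(h^{\epsilon/2})$. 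Taking the square root gives $\CO(h^{\min(1/6,\epsilon/4)})$. Choosing $h_0$ so that $N\epsilon_h<1/2$ fixes the range of validity.

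The main obstacle is the ordering issue. Lemma \ref{lem_almost_monotony_by_deconditioning_quantum_entropy} only removes factors located at the left end or the right end of the conditioning join, whereas here the removed $\CP^q(k)$, $k<t$, sit in the middle. I would first try to commute them past the kept block $\bigvee_{k=t}^{t+n}\CP^q(k)$ and past $\CR$. The first is controlled, since the commutator of the joins is $\CO(h^\epsilon)$ by Lemma \ref{lem_bound_commutator_quantum_part_up_to_effective_Ehrenfest_time}. The difference $h^q(\rho,\cdot|\CA\vee\CB)-h^q(\rho,\cdot|\CB\vee\CA)$ would then be handled by Lemma \ref{lem_modulus_continuity_quantum_entropies}. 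Commuting past $\CR$, however, is not controlled, since $\CR$ is arbitrary. So the removal must be arranged so that $\CR$ always stays at the right end, and if necessary one splits the removal into two applications: the factors with $k>t+n$ via the first inequality, and those with $k<t$ via the second. The error in each application is of the size computed above, so the total is unchanged.
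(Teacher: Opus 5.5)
There is a genuine gap. It is the ordering problem you raise in your last paragraph, which you do not resolve. Both removed blocks sit strictly inside the conditioning join, with $\bigvee_l {\CQ'}^q_l(s'_l)$ on one side and at least $\CR$ on the other. Lemma \ref{lem_almost_monotony_by_deconditioning_quantum_entropy}, however, only removes a block from the left end (first inequality) or the right end (second inequality).

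None of your proposed fixes closes this:
\begin{itemize}
\item The identity \eqref{eq_action_unitary_conditional_quantum_entropy} conjugates every partition by the same unitary, i.e.\ it is a global time shift. Together with associativity it cannot reorder the factors of a join, so it cannot move the removed factors to the right end while $\CR$ stays there.
\item Applying the first inequality to the block adjacent to $\CR$ would also discard $\bigvee_l {\CQ'}^q_l(s'_l)$ and the kept block $\bigvee_{k=t}^{t+n}\CP^q(k)$, which lie to its left.
\item The second inequality, as stated, applies to neither block, since each has at least $\CR$ to its right.
\item Even if you commuted the factors $k<t$ past the kept block, they would then sit between the kept block and $\CR$, so nothing is gained.
\end{itemize}

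The missing idea is that everything to the right of a removed block can be frozen exactly. For approximate quantum partitions $\CQ$, $\mathcal{X}$, $\mathcal{Y}$ one has
\begin{equation*}
    h^q(\rho, \CQ \,|\, \mathcal{X} \vee \mathcal{Y}) = \sum_{y \in \mathcal{Y}} \rho(y^*y)\, h^q(\rho_{|y}, \CQ \,|\, \mathcal{X}).
\end{equation*}
This holds because $\rho(y^*x^*xy) = \rho(y^*y)\,\rho_{|y}(x^*x)$ and $\rho_{|xy} = (\rho_{|y})_{|x}$, while the terms with $\rho(y^*y)=0$ vanish by \eqref{ineq_restriction_state_C*algebra}. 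You then apply the second inequality of Lemma \ref{lem_almost_monotony_by_deconditioning_quantum_entropy} to each state $\rho_{|y}$; its error does not depend on the state. Summing costs only a factor $\sum_y \rho(y^*y) \le 1 + \CO(\sqrt{h})$. This is how the paper's proof obtains its two error terms.

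The order of the two removals matters. The constant in that inequality grows with the cardinality of the partition kept on the left (like $|\CR|^2$ in its proof), so that partition must have $h$-independent cardinality. The correct sequence is:
\begin{itemize}
\item First remove the block adjacent to $\bigvee_l {\CQ'}^q_l(s'_l)$ (your factors $k<t$). Keep only $\bigvee_l {\CQ'}^q_l(s'_l)$ on the left and freeze $\bigvee_{k=t}^{t_2}\CP^q(k) \vee \CR$.
\item Then remove the block adjacent to $\CR$. Keep $\bigvee_l {\CQ'}^q_l(s'_l) \vee \bigvee_{k=t}^{t+n}\CP^q(k)$ on the left and freeze $\CR$.
\end{itemize}
The paper writes later times on the left, so there the first block removed is $\bigvee_{k=t+n+1}^{t_2}\CP^q(k)$; the structure is the same. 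Both steps use the second inequality, not the first. Once this is in place, your commutator estimates via Lemma \ref{lem_bound_commutator_quantum_part_up_to_effective_Ehrenfest_time} and your error bookkeeping give $\CO(h^{\min(1/6,\epsilon/4)})$, as in the paper.
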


\begin{proof}
    Firstly, we remark that all the partitions involved are $(1+ \CO(\sqrt{h}))$ as in the proof of the preceding lemma, with the constants only involving $K$ and the $S(1)$ seminorms of the elements of the classical partitions involved. Therefore, there exists an $h_0>0$ such that 
    \begin{equation*}
        \CO(\sqrt{h})T = \CO(\sqrt{h}|\log(h)|) < 1/2 \qqtq{for} h\le h_0.
    \end{equation*}
    We suppose in the following that $h \le h_0$ which will allow us to apply Lemma \ref{lem_almost_monotony_by_deconditioning_quantum_entropy}. 
    
    We remove first the partition on the left of $\bigvee_{k=t}^{t+n} \CP(k)$ and then on the right.
    \begin{align*}
        h^q(\rho,  \bigvee_{l=0}^m \CQ^q_l(s_l) | \bigvee_{l=0}^{m'} {\CQ'}^q_{l}(s_l') \vee \bigvee_{k=t_1}^{t_2} \CP(k) \vee \CR) &= h^q(\rho,  \bigvee_{l=0}^m \CQ^q_l(s_l) | \bigvee_{l=0}^{m'} {\CQ'}^q_{l}(s_l') \vee \bigvee_{k=t+n+1}^{t_2} \CP(k) \vee \bigvee_{k=t_1}^{t+n} \CP(k) \vee \CR) \\ 
        &\le h^q(\rho,  \bigvee_{l=0}^m \CQ^q_l(s_l) | \bigvee_{l=0}^{m'} {\CQ'}^q_{l}(s_l') \vee \bigvee_{k=t_1}^{t+n} \CP(k) \vee \CR) + E_1 \\
        &\le h^q(\rho,  \bigvee_{l=0}^m \CQ^q_l(s_l) | \bigvee_{l=0}^{m'} {\CQ'}^q_{l}(s_l') \vee \bigvee_{k=t}^{t+n} \CP(k) \vee \bigvee_{k=t_1}^{t-1} \CP(k) \vee \CR) + E_1 \\
        &\le  h^q(\rho,  \bigvee_{l=0}^m \CQ^q_l(s_l) | \bigvee_{l=0}^{m'} {\CQ'}^q_{l}(s_l') \vee \bigvee_{k=t}^{t+n} \CP(k) \vee \CR) + E_1 + E_2.
    \end{align*}
    The errors $E_1$ and $E_2$ come from the application of Lemma \ref{lem_almost_monotony_by_deconditioning_quantum_entropy} and are given by
    \begin{align*}
        E_1 &= \CO(\sqrt{(t_2-t -n)\sqrt{h} + \sum_{j=t+n+1}^{t_2} \norm{[\CP(j), \bigvee_{l=0}^m \CQ^q_l(s_l) \vee \bigvee_{l=0}^{m'} {\CQ'}^q_{l}(s_l')]}}) \\
        E_2 &= \CO(\sqrt{(t-t_1)\sqrt{h} + \sum_{j=t_1}^{t-1} \norm{[\CP(j), \bigvee_{l=0}^m \CQ^q_l(s_l) \vee \bigvee_{l=0}^{m'} {\CQ'}^q_{l}(s_l') \vee \bigvee_{k=t}^{t+n} \CP(k)]}}).
    \end{align*}
    We treat first $E_1$. As $|j-s_l| \le 2T$ and $|j-s_l'| \le 2T$ if $j \in \eint{t+n+1}{T}$, Lemma \ref{lem_bound_commutator_quantum_part_up_to_effective_Ehrenfest_time} implies that 
    \begin{equation*}
        \norm{[\CP(j), \bigvee_{l=0}^m \CQ^q_l(s_l) \vee \bigvee_{l=0}^{m'} {\CQ'}^q_{l}(s_l')]} = \CO(h^\epsilon).
    \end{equation*}
    Thus, since $t_2 -t-n \le 2T = 2 K|\log h| = o(h^{-1/2+1/3})$ and $t_2 - t -n = o(h^{-\epsilon/2})$, we get
    \begin{equation*}
        E_1 = \CO(\sqrt{h^{1/3} + h^{-\epsilon/2} h^{\epsilon}}) = \CO(h^{\min(1/6, \epsilon/4)}) .
    \end{equation*}
    For $E_2$, we see that if $j \in \eint{t_1}{t-1}$ and $k \in \eint{t}{t+n}$, we have $|j-k| \le 2T$ and $|j-s_l| \le 2T$ and $|j-s_l'| \le 2T$, thus Lemma \ref{lem_bound_commutator_quantum_part_up_to_effective_Ehrenfest_time} implies that 
    \begin{equation*}
        \norm{[\CP(j), \bigvee_{l=0}^m \CQ^q_l(s_l) \vee \bigvee_{l=0}^{m'} {\CQ'}^q_{l}(s_l') \vee \bigvee_{k=t}^{t+n} \CP(k)]} = \CO(h^\epsilon).
    \end{equation*}
    Therefore, similarly as $E_1$, we get
    \begin{equation*}
        E_2 = \CO(\sqrt{h^{1/3} + h^{\epsilon/2}}) = \CO(h^{\min(1/6, \epsilon/4)}) .
    \end{equation*}
\end{proof}

\section{Hyperbolic dispersive estimate}\label{sec_hyperbolic_disp_estimate}

The goal of this section is to prove an estimate of the form

\begin{equation*}
    \norm{M_h(A) \Op^w(p_{n-1}) \cdots M_h(A) \Op^w(p_0)}_{L^2} \le C e^{-\lambda n},
\end{equation*}
for $n$ of order $|\log(h)$, $A \in \Sp_{2d}(\R)$ and $p_k \in S(1)$ where each $p_k$ can be expressed as a sum $p_k = \sum_{i \in I} p_{k,i}$ with the $(p_{k,i})$ being disjoint compactly supported smooth functions.

Before giving a precise statement, we introduce some terminology to make precise our constraints on the splitting $p_k = \sum_{i \in I} p_{k,i}$. A family of functions $(f_i)_{i \in I}$ is said \textit{$\epsilon$-separated} if for any $i \neq j$, we have $d(\supp f_i, \supp f_j) > \epsilon$. A function $g$ is said \textit{$(\epsilon, \delta)$-splittable} if there exists a countable set $I$ and functions $(f_i)_{i \in I}$ such that the family $(f_i)_{i \in I}$ is $\epsilon$-separated, the diameter of the support of each $f_i$ is at most $\delta$. If $V$ is a symplectic vector subspace of $\R^{2d}$ and $g$ is invariant under translation by vectors in $V$, we require that the functions $(f_i)_{i \in I}$ satisfy the same invariance and the condition on the diameter and separation of the support of the $(f_i)_{i \in I}$ is only required for the restriction of the $(f_i)_{i \in I}$ to $V^\sigma$. 

\subsection{The main estimate and its corollary}

With the notion of splittable functions, the hyperbolic dispersive estimate takes the following form. 

\begin{theorem}\label{thm_hyp_disp_est}
    Let $\kappa \in \R^*_+$ and $h >0$. Denote $n = \floor{\kappa | \log h|}$ and let $(p_k)_{k \in \eint{0}{n-1}}$ be a family of smooth functions and $(A_k)_{k \in \eint{0}{n-1}}$ a family of symplectic matrices such that
    \begin{enumerate}
        \item there exists a constant $M$ such that $\norm{A_k} \le M$ and $A_k$ is of the form 
        \begin{equation*}
            A_k = \begin{pmatrix}
            D_k & B_k \\
            0 & D_k^{-T}
        \end{pmatrix} \qqtqq{where} D_k = \begin{pmatrix}
            D_{k,1} & \tilde{B}_k \\
            0 & D_{k,2}
        \end{pmatrix} \in \GL(d_1+ d_2, \R),
        \end{equation*} 
        \item uniformly in $m \ge 0$, 
        \begin{equation*}
            \limsup_{k \to + \infty} \frac{1}{k} \norm{D_{m+k}^{-T} D_{m+k-1}^{-T}  \cdots D_{m}^{-T}} \le 0,
        \end{equation*}
        \item there exists $\epsilon >0$ such that each function $p_k$ is $(\epsilon(3+M), \epsilon)$-splittable,
        \item the functions $(p_k)_{k \in \eint{0}{n-1}}$ are uniformly in $S(1)$ and satisfy $\norm{p_k}_{\infty} \le 1$.
    \end{enumerate}
    Then there exists a constant $h_\kappa >0$ such that for any $\delta > 0$ and any $h < h_\kappa$, we have 
    \begin{equation}
        \norm{\prod_{k=0}^{n-1} M_h(A_k) \Op^w_h(p_{k})}_{L^2(\R^d)} \le \frac{C_\delta}{h^{(d_1+\delta)/2}} \prod_{k=0}^{n-1} |\det(D_{k,1})|^{-1/2}(1+ \CO(|\log h| h^{1- \kappa \delta})).
    \end{equation}
    The constants depend on $\epsilon, \delta$, $M$ and the $S(1)$ seminorms of the $p_k$.
\end{theorem}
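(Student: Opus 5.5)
We would follow the standard strategy of Anantharaman--Nonnenmacher and Rivière, adapted to the block-triangular structure of the $A_k$. The plan is to split each $p_k$ into its $\epsilon$-separated, $\epsilon$-small pieces, $p_k=\sum_{i\in I} f_{k,i}$. Expanding the product then gives a sum over words $\mathbf{i}=(i_0,\dots,i_{n-1})$ of operators
\begin{equation*}
T_{\mathbf{i}}=\prod_{k} M_h(A_k)\Op^w_h(f_{k,i_k}).
\end{equation*}
We aim to prove two things. First, a bound on each single $T_{\mathbf{i}}$ of the form $C_\delta h^{-(d_1+\delta)/2}\prod_k|\det D_{k,1}|^{-1/2}$. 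Second, almost-orthogonality of the $T_{\mathbf{i}}$, strong enough that the Cotlar--Stein lemma lets us pass from single words to the full sum at the cost of only a factor $1+\CO(|\log h|h^{1-\kappa\delta})$.

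For almost-orthogonality we use the separation hypothesis. The separation $\epsilon(3+M)$ together with $\norm{A_k}\le M$ guarantees that when two words first differ at step $k$, the supports of the corresponding symbols, transported back by the dynamics, stay at distance at least $\epsilon$. Here we use the exact Egorov theorem \eqref{eq_exact_Egorov_R2d} to move each $M_h(A_k)$ to the right. Lemma \ref{lem_disjoint_support_weyl_quantization_R2d} then makes $T_{\mathbf{i}}^*T_{\mathbf{j}}$ and $T_{\mathbf{i}}T_{\mathbf{j}}^*$ of size $\CO(h^\infty)$ whenever $\mathbf{i}\neq\mathbf{j}$. Only $\CO(|\log h|)$ steps are involved, and the $V$-invariance reduces the support geometry to $V^\sigma$, so these $\CO(h^\infty)$ errors can be summed.

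For a single word, we conjugate the whole product to $M_h(A_{n-1}\cdots A_0)$ composed with a product of evolved quantized symbols. Because the matrices are block upper triangular, each of them preserves the Lagrangian $\set{\xi=0}$ and maps a vertical fibre to a vertical fibre. Following Rivière, the plan is to write $M_h(A_k)$ explicitly as $u\mapsto|\det D_k|^{-1/2}e^{\frac{i}{h}\phi_k(x)}u(D_k^{-1}x)$, up to a chirp coming from $B_k$. A cutoff of diameter $\epsilon$ in $x$ followed by this map spreads the function over a region whose volume in the first $d_1$ directions grows like $\prod|\det D_{k,1}|$. The contraction hypothesis on $D^{-T}$ controls the remaining directions up to $e^{\delta n}=h^{-\kappa\delta}$.

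We would estimate $\norm{T_{\mathbf{i}}}$ through the $L^1\to L^\infty$ norm of its kernel, interpolated against $L^2$ unitarity: a Schwartz-kernel bound of $h^{-d_1}\prod|\det D_{k,1}|^{-1}$, applied in the $d_1$ expanding coordinates after a partial Fourier transform in the others, gives the square-root bound. This single-word bound is the main obstacle. Two difficulties arise in it. One is controlling the error terms from the non-compact directions, since the $f_{k,i}$ are only localized in $V^\sigma$. The other is the off-diagonal blocks $\tilde B_k$ and $B_k$, which prevent a clean factorization. We would first handle them by conjugating by a fixed symplectic change of basis, and then treat the $\tilde B_k$ and the $D_{k,2}$ block perturbatively, using hypothesis 2 to keep all Jordan and polynomial losses within the factor $h^{-\delta/2}$.
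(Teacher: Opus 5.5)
Your overall plan (split into pieces, reduce to single trajectories by Cotlar--Stein, then bound one trajectory) is the paper's, but both steps have gaps, and the second is the substantive one.

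\textbf{Reduction to trajectories.} The separation can only be used one time step at a time. If you move all metaplectic factors to one side, as your single-word plan does, you produce symbols $f_{k,i}\circ A_{k-1}\cdots A_0$ whose derivatives grow like $\norm{A_{k-1}\cdots A_0}^{|\alpha|}$. For $n=\floor{\kappa|\log h|}$ (in the application $\kappa\lambda_{\max}=2(1-\epsilon)$) these leave every class $S_\nu$ with $\nu\le 1/2$. The transported supports may also be only $\epsilon\norm{A_{k-1}\cdots A_0}^{-1}$ apart, so Lemma \ref{lem_disjoint_support_weyl_quantization_R2d} yields nothing. The paper only uses the one-step identity $\norm{\Op^w_h(p_{k+1,i})M_h(A_k)\Op^w_h(p_{k,j})}=\norm{\Op^w_h(p_{k+1,i})\Op^w_h(p_{k,j}\circ A_k^{-1})}$, with $p_{k,j}\circ A_k^{-1}$ uniformly in $S(1)$. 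It combines this with the fact that the $\epsilon(3+M)$-separation forces $A_k\supp p_{k,j}$ to be $\epsilon$-close to at most one piece at time $k+1$.

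Moreover, a pointwise bound $T_{\mathbf i}^*T_{\mathbf j}=\CO(h^\infty)$ for $\mathbf i\neq\mathbf j$ does not feed Cotlar--Stein. The set of words $I^n$ is infinite (the pieces are lifts from a torus), and ``$\CO(|\log h|)$ steps'' does not control its size. Two things are missing:
\begin{itemize}
\item Decay of the form $h^N\jap{i-f(j)}^{-N}$. The paper gets it by renumbering the pieces on a lattice so that separations grow linearly in $|i-j|$ (Lemma \ref{lem_renumbering_splittable_functions}).
\item Removal of dynamically inconsistent words before taking a supremum. The paper applies the two-family Lemma \ref{lem_Cotlar-Stein_on_steroids} iteratively, $n$ times. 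At each step it sends inconsistent continuations into an $\CO(h^\infty)$ remainder, which is affordable since $(1+\CO(h))^n$ stays bounded. It ends with a sum over trajectories indexed by the initial piece.
\end{itemize}

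\textbf{Single trajectory.} This is where $h^{-d_1/2}$ instead of $h^{-d/2}$ comes from, and your sketch has no mechanism for it. The paper does not prove it by hand:
\begin{itemize}
\item it recenters each piece with quantum translations, using $\norm{z_{k+1}-A_kz_k}\le(2+M)\epsilon$;
\item it writes each factor as an FIO with phase $\sprod{x'}{D_k^{-T}\theta}-\sprod{x}{\theta}+\alpha_k(\theta)$ (Lemma \ref{lem_expression_FIO_est_disp});
\item it applies \cite[thm.~2.3]{anantharamanExponentialDecayProducts2011}, a result built for products of such FIOs with a triangular momentum map, which yields the ratio of Jacobians and the loss $h^{-(d_1+\delta)/2}$.
\end{itemize}
Your sketch goes wrong in several places:
\begin{itemize}
\item Block upper triangular $A_k$ preserve the foliation $\set{\xi=\mathrm{const}}$, not vertical fibres. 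So $B_k$ acts through the momentum-space chirp $e^{-\frac{i}{2h}\sprod{D_k^{-1}B_k\theta}{\theta}}$, an operator that is non-local in $x$, not through a phase $e^{\frac{i}{h}\phi_k(x)}$.
\item Interpolating an $L^1\to L^\infty$ bound with $L^2$ unitarity cannot improve the $L^2\to L^2$ norm. One needs compact support together with a $TT^*$ argument, or a Lagrangian-state decomposition with Cauchy--Schwarz over a compact set of momenta.
\item The $p_k$ do not factor in $(x_1,\xi_1)$ and $(x_2,\xi_2)$, so a partial Fourier transform decouples nothing.
\item Order-one blocks $B_k,\tilde B_k$ iterated $\kappa|\log h|$ times are not perturbative. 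They also cannot be removed by a fixed change of basis, since they depend on $k$.
\end{itemize}
Finally, the non-compact directions you list as a difficulty do not occur in this theorem. Its pieces have bounded support in $\R^{2d}$; the $V^\sigma$-invariance was already removed in Corollary \ref{cor_applied_dispersive_estimate} via $M_h(\tilde A\diamond B)=M_h(\tilde A)\otimes M_h(B)$.
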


\begin{remark}\label{rem_thm_hyp_disp_est}
    The condition on the supremum norm can be relaxed. The theorem also holds if each function $p_k$ can be decomposed $p_k = p_{k,0} + h p_{k,1}$ where $\norm{p_{k,0}}_{\infty} \le 1$ and this decomposition is preserved when splitting $p_k$.
\end{remark}

To get an intuition on Theorem \ref{thm_hyp_disp_est}, consider the case where all the $(A_k)_{k \in \eint{0}{n-1}}$ are equal to the same block diagonal symplectic matrix $A = \diag(D, D^{-T})$ and all the eigenvalues of $D$ have a modulus greater than $1$. This implies that $M_h(A) u = |\det D|^{-1/2} u \circ D^{-1}$ for any $u \in L^2(\R^{d})$, thus $M_h(A)$ is a `stretching' operator. Suppose for the moment that each $p_k$ is a non-negative function that depends only on the position, is smaller than $1$, supported in $B_0(2)$ and equal to $1$ in $B_0(1)$, the ball of radius $1$ in position around $0$. The Weyl quantization $\Op^w_h(p_k)$ is just the multiplication operator by $p_k$, in other words it is a `cutting' operator. Therefore, the operator to bound in Theorem \ref{thm_hyp_disp_est} acts on a function $u \in L^2(\R^d)$ by 
\begin{equation*}
    T u := \prod_{k=0}^{n-1} M_h(A_k) \Op^w_h(p_{k}) u = \left(\prod_{k=0}^{n-1} p_k \circ D^{n-k} \right) |\det D|^{-n/2} u \circ D^{-n}.
\end{equation*}  
It can be seen as a succession of cutting and stretching operation on the function $u$: each multiplication operator $\Op^w_h(p_k)$ cuts the function $u$ to keep only a region around $0$, while each metaplectic operator $M_h(A)$ stretches the function. 

For a function $u \in L^2(\R^d)$ to satisfy $\norm{Tu} \approx \norm{u}$, it is enough and necessary that $u$ is supported in the $D^{-n} B_0(1)$, thus $u$ must be extremely concentrated in position. Suppose that we add an initial cutoff $C_{freq}$ in frequency at the beginning, the uncertainty principle tells us that $C_{freq}u$ cannot be too concentrated in position, which prevents $\norm{TC_f u} \approx \norm{u}$. Hence, the cutting by the operators $\Op^w_h(p_k)$ must also be done in frequency to obtain a decrease of the norm of the operator $T$ as in the theorem. 

Considering splittable functions $(p_k)_k$ for the cutoff operators $(\Op^w_h(p_k))_k$ does not change much of the underlying logic of the succession of cutting and stretching, except that there are multiples pieces after each cutting step. The hypotheses on the splittable functions $(p_k)_k$ are made to ensure that no two pieces overlap after the stretching and that they are cut at most by one part of the operator $\Op^w_h(p_k)$. This implies that each piece behaves independently, which reduces the analysis to the case of compactly supported functions $(p_k)_k$. This setting was already treated by Anantharaman \cite{anantharamanExponentialDecayProducts2011}.

We obtain as a corollary of Theorem \ref{thm_hyp_disp_est} a bound of the desired form. Its proof is given in Section \ref{sec_proof_cor_hyp_disp_est}.
\begin{corollary}\label{cor_applied_dispersive_estimate}
    Let $A \in \Sp_{2d}(\R)$, $V$ be a symplectic vector space invariant under $A$ of dimension $d_1$ and $\lambda_{max}$ the highest Lyapunov exponent of $A_{|V}$, supposed positive. Denote $n = \floor{ 2(1-\epsilon)| \log h|/\lambda_{max}}$ for $\epsilon>0$ and let $(p_k)_{k \in \eint{0}{n-1}}$ be a family of smooth functions invariant under $V^\sigma$ that are uniformly in $S(1)$ and satisfy $\norm{p_k}_{\infty} \le 1$. Suppose that there exists $\gamma >0$ such that each $p_k$ is $(l\gamma, \gamma)$-splittable where $l$ is a thresholds that depends only on $A$. Then there exists $\delta_0> 0$ and $h_0 >0$ such that 
    \begin{equation*}
        \forall \delta < \delta_0, \; h<h_0 \quad \norm{\prod_{k=0}^{n-1} M_h(A) \Op^w_h(p_{k})}_{L^2(\R^d)} \le C_\delta e^{- \frac{n}{2} (\Lambda_0(A_{|V}) - o(\epsilon + \delta))} (1+ \CO(\sqrt{h})).
    \end{equation*}
    The constants depend on $\epsilon, \delta$, $A$ and the $S(1)$ seminorms of the $p_k$, except for the term $o(\epsilon + \delta)$ that depends only on $\epsilon$, $\delta$ and $A$.
\end{corollary}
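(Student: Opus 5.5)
The plan is to reduce Corollary \ref{cor_applied_dispersive_estimate} to Theorem \ref{thm_hyp_disp_est} by a symplectic change of coordinates adapted to $A$, followed by regrouping of the time steps.

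\textbf{Step 1: choice of coordinates.} Since $V$ is symplectic and $A$-invariant, so is $V^\sigma$, and $\R^{2d} = V \oplus^\sigma V^\sigma$. The $p_k$ are invariant under $V^\sigma$, hence depend only on the $V$ component. Apply Theorem \ref{thm_reduction_symplectic_matrices_real} to $A_{|V}$ and extend by a symplectic basis of $V^\sigma$. This gives $B \in \Sp_{2d}(\R)$ with
\begin{equation*}
B A B^{-1} = A_1 \diamond \dots \diamond A_n \diamond C \diamond A',
\end{equation*}
where $A' = A_{|V^\sigma}$ in the new coordinates. Conjugating by $M_h(B)$ and using the exact Egorov relation \eqref{eq_exact_Egorov_R2d}, $M_h(B)\Op^w_h(p)M_h(B)^{-1} = \Op^w_h(p\circ B^{-1})$ preserves the operator norm. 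Splittability is preserved with constants changed by $\norm{B}$ and $\norm{B^{-1}}$; this is why the threshold $l$ depends on $A$.

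\textbf{Step 2: block upper-triangular form.} Theorem \ref{thm_hyp_disp_est} needs matrices of the form $\begin{pmatrix} D_k & B_k\\ 0 & D_k^{-T}\end{pmatrix}$. Here $D_k$ is itself block upper triangular, with an expanding block $D_{k,1}$ of size $d_1$ collecting the unstable directions whose exponents exceed a threshold $\lambda$. The remaining blocks are taken as part of $D_{k,2}$. For the factors with exponent at most $\lambda$, and for $C$ and $A'$, I will perform a further symplectic rotation (a metaplectic operator) exchanging the roles of $x$ and $\xi$ where necessary. The goal is that $D^{-T}$ has no exponential growth, which is hypothesis 2 of Theorem \ref{thm_hyp_disp_est}. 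The $V^\sigma$ factor must be neutral, since the symbols do not depend on it; Theorem \ref{thm_hyp_disp_est} only requires the splitting in the $V$ variables.

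\textbf{Step 3: choice of $\lambda$ and optimization.} Choose the threshold $\lambda = \lambda_{max}/2$ roughly, and put in $D_{k,1}$ the directions with $\lambda_i > \lambda_{max}/2 - \delta$. Apply Theorem \ref{thm_hyp_disp_est} with $n = \floor{\kappa|\log h|}$ and $\kappa = 2(1-\epsilon)/\lambda_{max}$. The bound becomes
\begin{equation*}
C_\delta\, h^{-(d_1'+\delta)/2} \prod_k |\det D_{k,1}|^{-1/2} \approx C_\delta \exp\Bigl(\tfrac{d_1'+\delta}{2}|\log h| - \tfrac n2 \sum_{\lambda_i>\lambda_{max}/2}\lambda_i\Bigr).
\end{equation*}
Here $|\log h| \approx n\lambda_{max}/(2(1-\epsilon))$, so the exponent is
\begin{equation*}
-\frac n2 \sum_{\lambda_i>\lambda_{max}/2}\Bigl(\lambda_i - \frac{\lambda_{max}}{2}\Bigr) + n\,o(\epsilon+\delta) = -\frac n2\Lambda_0(A_{|V}) + n\,o(\epsilon+\delta).
\end{equation*}
Taking $\delta<\delta_0$ small and $h<h_0$ turns the factor $(1+\CO(|\log h|h^{1-\kappa\delta}))$ into $1+\CO(\sqrt h)$.

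\textbf{Main obstacle.} The hardest step is Step 2: producing a global symplectic conjugation in which the non-expanding directions, including the Jordan blocks and the elliptic part $C$, become of the required triangular form while hypothesis 2 still holds. Polynomial growth from Jordan blocks must be absorbed into the $\limsup \le 0$ condition. If a single conjugation fails, I would instead group $A^m$ for a large $m$ to make the expansion rates uniform. I would also replace exact exponents by $\lambda_i \pm \delta$ via an adapted norm, as in \eqref{ineq_proof_effect_Ehrenfest_time_bound_norm_A_restricted_V}.
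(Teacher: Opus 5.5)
Your Steps 1 and 3 are what the paper does: the normal form from Theorem \ref{thm_reduction_symplectic_matrices_real}, the threshold $\lambda_{max}/2$ for $D_{k,1}$, and the bookkeeping of $h^{-(d'+\delta)/2}\prod_k|\det D_{k,1}|^{-1/2}$. Step 2, however, has a genuine gap. It sits exactly where you place the main obstacle, and your fallbacks do not close it.

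Theorem \ref{thm_hyp_disp_est} needs every factor to preserve the position Lagrangian $L_0=\{\xi=0\}$. The hyperbolic blocks $\diag(D_i,D_i^{-T})$ already do, with $D_i^{-T}$ contracting, so no exchange of $x$ and $\xi$ is ever needed. The obstruction is the block $C$ with spectrum on the unit circle. A fixed conjugate of $C$ can have the form $\begin{pmatrix} D & * \\ 0 & D^{-T}\end{pmatrix}$ only if $C$ has a real invariant Lagrangian subspace, and in general it has none. For instance, a rotation of $\R^2$ by $\theta\notin\pi\Q$ has no invariant line, and neither does any of its powers. So no single metaplectic rotation, no passage to $A^m$, and no adapted norm can bring $C$ into the required shape.

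The missing idea is a time-dependent gauge. This is exactly why Theorem \ref{thm_hyp_disp_est} is stated for a varying sequence $(A_k)$, with condition 2 uniform in $m$.
\begin{itemize}
\item Put $L_k=C^kL_0$ and pick orthogonal symplectic $U_k$ with $U_kL_k=L_0$. This is possible because orthogonal symplectic matrices act transitively on Lagrangian subspaces. Extend $U_k$ by the identity on the other blocks of $\tilde A=A_1\diamond\dots\diamond C$.
\item Then $U_kCU_{k-1}^{-1}$ maps $L_0$ to itself, hence is block upper triangular.
\item Inserting $M_h(U_k)^{-1}M_h(U_k)$ between consecutive factors turns the product into $M_h(U_n)^{-1}\prod_k M_h(A_{k+1})\Op^w_h(p_k\circ B^{-1}\circ U_k^{-1})M_h(U_0)$, with $A_k=U_k\tilde AU_{k-1}^{-1}$ of the required form.
\item Orthogonality of the $U_k$ preserves $\norm{A_k}\le\norm{\tilde A}$, the $S(1)$ bounds and the splitting constants.
\item Condition 2 for the $C$-part holds because the relevant $D^{-T}$-products are sub-blocks of $U_{m+k}C^{k+1}U_{m-1}^{-1}$, whose norm grows only polynomially in $k$.
\end{itemize}

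Separately, do not pass the $V^\sigma$ factor $A'$ to Theorem \ref{thm_hyp_disp_est}. Its proof, via Lemmas \ref{lem_Cotlar-Stein_trajectory} and \ref{lem_expression_FIO_est_disp}, uses pieces compactly supported in all of $\R^{2d}$, which $V^\sigma$-invariant symbols are not. Moreover, $A'$ may contain elliptic blocks of its own. Instead, factor it out exactly: use $M_h(\tilde A\diamond A')=M_h(\tilde A)\otimes M_h(A')$, and note that $\Op^w_h(p_k\circ B^{-1})=T_k\otimes Id$ with $T_k$ acting on the $V$-variables. The unitary $M_h(A')^n$ then drops out of the norm.
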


\begin{remark}\label{rem_cor_applied_dispersive_estimate}
    Similarly to Theorem \ref{thm_hyp_disp_est}, the condition on the supremum norm can be relaxed. The corollary also holds if each function $p_k$ can be decomposed $p_k = p_{k,0} + h p_{k,1}$ where $\norm{p_{k,0}}_{\infty} \le 1$ and this decomposition is preserved when splitting $p_k$.
\end{remark}

\subsection{Lemmas on splittable functions}

We prove a few simple lemmas that will be needed when manipulating splittable functions. The first one tells us that if $f = \sum_{i \in I} f_i$ is a splittable function, it is possible to number nicely the functions $(f_i)_{i \in I}$. 
\begin{lemma}\label{lem_renumbering_splittable_functions}
    Let $p$ be a $(\epsilon, \delta)$ splittable function on $\R^d$. Then there exists a subset $I \subset \Z^d$ and smooth functions $(p_i)_{i \in I}$ such that $p = \sum_{i \in I}p_i$, for any $i \in I$, $\norm{p_i} \le 1$, the diameter of the support of $p_i$ is at most $\delta$ and  
    \begin{equation*}
        \forall i \neq j, \quad d(\supp (p_i), \supp (p_j)) \ge \max(\epsilon, C|i-j|).
    \end{equation*}
    The constant $C$ depends only on $\epsilon$ and $\delta$.
\end{lemma}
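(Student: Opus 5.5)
The plan is to index the pieces by an integer grid. Start from the decomposition given by splittability, $p = \sum_{j \in J} f_j$, where $J$ is countable, the family $(f_j)_{j \in J}$ is $\epsilon$-separated, and each support has diameter at most $\delta$. For each $j$ pick a point $x_j \in \supp(f_j)$. Fix a mesh $\eta := \epsilon/(2\sqrt{d})$ and consider the cubes $\eta(k + [0,1)^d)$ for $k \in \Z^d$. Each cube has diameter $\epsilon/2 < \epsilon$, so it contains at most one of the points $x_j$: two such points lie in supports at distance greater than $\epsilon$, hence are at distance greater than $\epsilon$ themselves. This gives an injective map $\iota : J \to \Z^d$, $\iota(j) = \floor{x_j/\eta}$ componentwise. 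Set $I = \iota(J)$ and $p_i = f_{\iota^{-1}(i)}$. Then $p = \sum_{i \in I} p_i$, the diameter bound is inherited, and the $\epsilon$-separation is inherited as well.

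The bound $\norm{p_i} \le 1$ (supremum norm) comes from disjointness of supports. Since the supports are pairwise disjoint, at each point at most one $p_i$ is nonzero, so $p_i = p$ on $\supp(p_i)$. Hence $\norm{p_i}_\infty \le \norm{p}_\infty$. I read the convention $\norm{p}_\infty \le 1$ as implicit, as in Theorem \ref{thm_hyp_disp_est}; if it is not, the claim should be stated as $\norm{p_i} \le \norm{p}$. Smoothness of the $p_i$ is part of the splitting.

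The main step is the linear separation $d(\supp p_i, \supp p_j) \ge C|i-j|$. For $x \in \supp p_i$ and $y \in \supp p_j$ with $i = \iota(a)$ and $j = \iota(b)$, write
\begin{equation*}
    |x - y| \ge |x_a - x_b| - 2\delta \qquad\text{and}\qquad |x_a - x_b| \ge \eta |i - j| - \eta\sqrt{d}.
\end{equation*}
The second inequality holds because $x_a$ and $\eta i$ differ by at most $\eta\sqrt{d}$ (and the same for $b$, $j$); this costs $2\eta\sqrt{d}$ in general, which I absorb into the constant below. So $d(\supp p_i,\supp p_j) \ge \eta|i-j| - c_0$ with $c_0 = 2\delta + 2\eta\sqrt{d}$. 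Two cases follow. When $\eta|i-j| \ge 2c_0$, this is at least $\eta|i-j|/2$. When $\eta|i-j| < 2c_0$, the separation is still at least $\epsilon$, and $\epsilon \ge (\epsilon\eta/2c_0)|i-j|$. Taking $C = \min(\eta/2, \epsilon\eta/(2c_0))$, which depends only on $\epsilon$, $\delta$ and $d$, gives $d(\supp p_i, \supp p_j) \ge \max(\epsilon, C|i-j|)$.

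In the $V^\sigma$-invariant setting, I would run the same argument on the restrictions to $V^\sigma$, with $\Z^{\dim V^\sigma}$ in place of $\Z^d$ and this inclusion understood. The only delicate point is the bookkeeping of constants in the case split, which I expect to be routine.
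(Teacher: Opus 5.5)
Your proof is correct and follows essentially the same route as the paper. Both label the pieces injectively by points of the grid $\frac{\epsilon}{2\sqrt d}\Z^d$: you floor a chosen point of each support, while the paper takes the grid point nearest to the support. Both then use the triangle inequality $d(\supp p_i,\supp p_j)\ge \frac{\epsilon}{2\sqrt d}|i-j| - c$ and a case split that trades the additive constant against the $\epsilon$-separation to obtain the linear bound. Your derivation of $\norm{p_i}_\infty\le\norm{p}_\infty$ from the disjointness of the supports is a small bonus, since the paper simply takes the bound $\norm{p_i}\le 1$ as given with the splitting.
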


\begin{proof}
    By hypothesis, there exist a countable set $J$ and functions $(f_j)_{j \in J}$ that satisfy all the conditions of the lemma except the one on the distance between the supports and $J \subset \Z^d$. We define an injective function $f : J \to \Z^d$ such that 
    \begin{equation*}
        \forall i \neq j \in J, \quad d(\supp (p_{f(i)}), \supp (p_{f(j)})) \ge \max(\epsilon, C|f(i)-f(j)|).
    \end{equation*}
    It is then immediate to check that the set $f(J)$ and the functions $(p_i)_{i \in f(J)}$ satisfy the condition of the lemma. 

    The function $f$ is defined by associating to $j \in J$ the element $n \in \Z^d$ such that $n\frac{\epsilon}{2\sqrt{d}} $ is the element of $\frac{\epsilon}{2\sqrt{d}} \Z$ closest to $\supp(p_j)$. If the minimum distance is attained by multiple elements, we take the smallest one for the lexicographic order. Remark that $f(j)\frac{\epsilon}{2\sqrt{d}}$ is at a distance, for the $l^2$ norm, at most $\frac{\epsilon}{2}$ from the support of $p_j$. This implies that $f$ is injective as the distance between the supports of $p_i$ and $p_j$ is strictly larger than $\epsilon$ if $i \neq j$. 
    
    \noindent Let $i,j \in J$. Since the diameter of the support of $p_i$ and $p_j$ is at most $\delta$, we have
    \begin{align}\label{ineq_renumbering_splittable_functions}
        d(\supp(p_i), \supp(p_j)) \ge | \frac{f(i)\epsilon}{2\sqrt{d}} -  \frac{f(j)\epsilon}{2\sqrt{d}}| - \epsilon - 2 \delta \ge \frac{\epsilon}{2\sqrt{d}} |f(i)  - f(j)| - \epsilon - 2 \delta .
    \end{align}
    There exist $0 < C < \frac{\epsilon}{2\sqrt{d}}$ such that
    \begin{equation*}
        \frac{\epsilon}{C} (\frac{\epsilon}{2 \sqrt{d}} - C) \ge \epsilon + 2 \delta,
    \end{equation*}
    as the left hand-side goes to $+ \infty$ as $C$ tends to $0$. It is straightforward to check that for such a choice of $C$ and for any $i,j \in J$, we have 
    \begin{equation*}
        C |f(i)-f(j)| \ge \epsilon \implies \frac{\epsilon}{2\sqrt{d}} |f(i)  - f(j)| - \epsilon - 2 \delta \ge C |f(i)-f(j)|.
    \end{equation*}
    Therefore, as the distance between the supports of $p_i$ and $p_j$ is at least $\epsilon$ if $i \neq j$, we get 
    \begin{equation*}
        d(\supp(p_i), \supp(p_j)) \ge \max(\epsilon, C|f(i)-f(j)|).
    \end{equation*}
\end{proof}

The precomposition of a splittable function by a bi-Lipschitz homeomorphism is again a splittable function.

\begin{lemma}\label{lem_precomposition_bilip_splittable_functions}
    Let $p$ be a $(\epsilon, \delta)$ splittable function on $\R^d$ and $g$ a homeomorphism of $\R^d$ such that 
    \begin{equation*}
        L\norm{x-y} \le \norm{g(x) - g(y)} \le K\norm{x-y} \qqtq{for} x, y \in \R^d.
    \end{equation*}
    Then $p\circ g$ is $(L\epsilon,K\delta)$ splittable.
\end{lemma}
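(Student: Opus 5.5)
The plan is to transport the splitting of $p$ through $g$ piece by piece. By hypothesis there are a countable set $I$ and smooth functions $(f_i)_{i\in I}$ with $p=\sum_{i\in I} f_i$, pairwise support distances $>\epsilon$ and support diameters $\le\delta$. I take the candidate splitting $p\circ g=\sum_{i\in I} f_i\circ g$. This identity holds pointwise, and each $f_i\circ g$ has the same regularity as the $f_i$ in the setting where the lemma is used, where $g$ is linear (a power of $A$ or a block $D_k$). Since $g$ is a homeomorphism, $\supp(f_i\circ g)=g^{-1}(\supp f_i)$, so the whole argument reduces to comparing distances and diameters of sets before and after applying $g^{-1}$.

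\textbf{Separation.} Fix $i\neq j$, $x\in g^{-1}(\supp f_i)$ and $y\in g^{-1}(\supp f_j)$. Then $g(x)\in\supp f_i$ and $g(y)\in\supp f_j$, so $\norm{g(x)-g(y)}>\epsilon$. The separation constant comes from the lower bound relating $\norm{x-y}$ to $\norm{g(x)-g(y)}$.

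\textbf{Diameter.} For $x,y\in g^{-1}(\supp f_i)$ we have $\norm{g(x)-g(y)}\le\delta$. The diameter constant comes from the upper bound relating $\norm{x-y}$ to $\norm{g(x)-g(y)}$.

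\textbf{Invariance.} If $p$ and the $f_i$ are invariant under a symplectic subspace $V$ and $g$ preserves $V$ (the relevant case, $g$ linear and $A$-adapted), then $f_i\circ g$ is again $V$-invariant. The two estimates above are then carried out on the restriction to $V^\sigma$, which $g$ also preserves.

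The only delicate point is which direction of the bi-Lipschitz inequality feeds each estimate. The steps above use the Lipschitz constants of $g^{-1}$. Read literally, the displayed hypothesis gives $\norm{x-y}\ge\norm{g(x)-g(y)}/K>\epsilon/K$ for the separation and $\norm{x-y}\le\norm{g(x)-g(y)}/L\le\delta/L$ for the diameter. That yields $(\epsilon/K,\delta/L)$-splittability, not the stated $(L\epsilon,K\delta)$.

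To obtain exactly the constants $(L\epsilon,K\delta)$ of the statement, the inequality $L\norm{x-y}\le\norm{\cdot}\le K\norm{x-y}$ has to control the map that sends $\supp f_i$ to $\supp(f_i\circ g)$, namely $g^{-1}$. Then the separation bound is $\ge L\epsilon$ and the diameter bound is $\le K\delta$ directly. So I will write the proof with the hypothesis applied to this transport map, and check at each later use of the lemma, in the proofs of Theorem~\ref{thm_hyp_disp_est} and Corollary~\ref{cor_applied_dispersive_estimate}, that the constants $L,K$ supplied there are those of that map. As literally displayed, with the bounds on $g$ itself and supports $g^{-1}(\supp f_i)$, the argument gives only $(\epsilon/K,\delta/L)$, and I do not have a proof of the lemma with the stated constants under that reading.
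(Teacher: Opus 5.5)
Your argument is the paper's: write $p\circ g=\sum_i f_i\circ g$ and push diameters and separations through the bi-Lipschitz bounds. Your diagnosis of the constants is also correct, and you can go further than ``I have no proof'': with the bounds imposed on $g$ itself, the conclusion $(L\epsilon,K\delta)$ is false.

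Take $d=1$ and $g(x)=2x$, so $L=K=2$. Let $\chi$ be positive on $(0,\delta)$ and zero elsewhere, fix $\epsilon<\epsilon'<4\epsilon$, and set $p=\sum_{k\in\Z}\chi(\cdot-k(\delta+\epsilon'))$. This $p$ is $(\epsilon,\delta)$-splittable. The set $\{p\circ g\neq 0\}$ is a union of intervals of length $\delta/2$ with consecutive gaps $\epsilon'/2<2\epsilon$. Consider any splitting of $p\circ g$ whose pieces have supports at mutual distance $>2\epsilon$. Each of these intervals lies in the support of a single piece, because a connected set covered by closed sets at mutual distance $>2\epsilon$ lies in one of them. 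Consecutive intervals are then forced into the same piece, so one piece has support of infinite diameter, contradicting $(2\epsilon,2\delta)$-splittability. The correct statement is yours: $p\circ g$ is $(\epsilon/K,\delta/L)$-splittable. Equivalently, it is $(L\epsilon,K\delta)$-splittable when the two-sided bound is imposed on $g^{-1}$.

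The paper's own proof makes exactly the slip you isolated. It takes $x,y$ in the support of $p_i\circ g$ but then uses $g^{-1}(x),g^{-1}(y)\in\supp p_i$. In other words, it treats $\supp(p_i\circ g)$ as $g(\supp p_i)$ instead of $g^{-1}(\supp p_i)$. Its inequalities $\norm{x-y}\le K\norm{g^{-1}(x)-g^{-1}(y)}\le K\delta$ and $\norm{x-y}\ge L\norm{g^{-1}(x)-g^{-1}(y)}\ge L\epsilon$ therefore prove the statement for $p\circ g^{-1}$, which is your corrected version.

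Nothing downstream breaks. The lemma is used only in the proof of Corollary~\ref{cor_applied_dispersive_estimate}, not in the proof of Theorem~\ref{thm_hyp_disp_est}, and there $g=Q\circ U_k^{-1}$. The constants quoted in that proof, $(l\gamma/\norm{Q},\norm{Q^{-1}}\gamma)$, are precisely your $(\epsilon/K,\delta/L)$ with $K=\norm{Q}$ and $L=\norm{Q^{-1}}^{-1}$. The displayed statement would instead have given $(l\gamma/\norm{Q^{-1}},\norm{Q}\gamma)$. So the threshold $l\ge\norm{Q}\norm{Q^{-1}}(3+\norm{\tilde A})$ derived there is consistent with the corrected lemma.
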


\begin{proof}
    Write $p = \sum_{i \in I} p_i$ with $(p_i)_{i \in I}$ $\epsilon$-separated and the support of $p_i$ having diameter at most $\delta$. We have $p \circ g = \sum_{i \in I} p_i \circ g$. The diameter of the support of $p_i \circ g$ is thus at most $K\delta$. Indeed, any $x,y \in \supp(p_i)$ verifying
    \begin{equation*}
        \norm{x-y} \le K \norm{g^{-1}(x) - g^{-1}(y)} \le K\delta,
    \end{equation*}
    as $g^{-1}(x)$ and $g^{-1}(y)$ are in the support of $p_i$. 
    Moreover, the family $(p_i \circ g)_{i \in I}$ is $\epsilon$ separated. Let $i \neq j \in I$ and consider $x$, resp. $y$, in the support of $p_i \circ g$, resp. $p_j \circ g$. They satisfy 
    \begin{equation*}
        \norm{x-y} \ge L \norm{g^{-1}(x) - g^{-1}(y)} \ge L \epsilon,
    \end{equation*}
    as $g^{-1}(x)$, resp. $g^{-1}(y)$, is in the support of $p_i$, resp. $p_j$. 
\end{proof}

Lifting a function defined on a torus to $\R^{2d}$ gives a splittable function whose parameters depend on the diameter of the function on the torus and the radius of injectivity of the torus. 
\begin{lemma}\label{lem_splittable_function_by_lifting}
    Let $V$ be a rational symplectic vector subspace of $\R^{d}$ of dimension $2d'$. Denote $\tilde{\T} = \T^{2d} / V^\sigma$, $R$ the radius of injectivity of $\tilde{\T}$ and $\pi : \R^{2d} \to \tilde{\T}$ the natural projection. Let $\tilde{f} : \tilde{\T} \to \C$ have a support of diameter at most $\delta$. Then the lift $f = \tilde{f} \circ \pi$ is a $(R- \delta, \delta)$ splittable function if $R-\delta >0$.
\end{lemma}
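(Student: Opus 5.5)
The plan is to decompose the lift $f = \tilde{f}\circ\pi$ along the fibres of the covering map $\pi : \R^{2d} \to \tilde{\T}$, one piece for each sheet over the support of $\tilde f$. Recall that $\tilde{\T} = \T^{2d}/V^\sigma$ is identified with $V\otimes\R$ (or, equivalently, with $\R^{2d}/(V^\sigma\otimes\R)$) modulo a lattice $\Gamma$. This lattice is the projection of $\Z^{2d}$ along $V^\sigma$, and it is discrete because $V$ is rational. Since $f$ is invariant under $V^\sigma$, the definition of splittability only asks us to work with the restriction of the pieces to a complementary direction. So we regard $f$ as a $\Gamma$-periodic function on $W := \R^{2d}/(V^\sigma\otimes\R)$ and measure diameters and separations there.

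\textbf{Constructing the pieces.} First, since $\supp \tilde f$ has diameter at most $\delta$ and $\delta < R$, it lies in a geodesic ball $B(x_0,\delta)$ of $\tilde\T$. Then $\pi^{-1}(B(x_0,\delta))$ splits as a disjoint union of the translates $\gamma + B_W(y_0,\delta)$, $\gamma\in\Gamma$, where $y_0$ is a lift of $x_0$. On each of these balls $\pi$ is an isometry, because $\delta<R$ and $R$ is the injectivity radius. Define
\begin{equation*}
    f_\gamma := f\,\mathbb{1}_{\gamma + B_W(y_0,R)}, \qquad \gamma\in\Gamma .
\end{equation*}
Each $f_\gamma$ is smooth, because $f$ vanishes near the boundary of that ball. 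It is invariant under $V^\sigma$, we have $f = \sum_\gamma f_\gamma$, and the family is countable.

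\textbf{Checking the two conditions.} For the diameter, $\supp f_\gamma$ is the isometric lift of $\supp\tilde f$ into the ball, so its diameter is at most $\delta$. I expect the main point to be this step, where one must make sure the lift is isometric and not just locally so. This holds because any two points of $\supp\tilde f$ are joined by a minimizing path of length at most $\delta<R$, and such a path lifts inside the injectivity ball. For the separation, take $\gamma\neq\gamma'$, $x\in\supp f_\gamma$ and $y\in\supp f_{\gamma'}$. Then $y-\gamma'+\gamma$ lies in the same sheet as $x$, and the distance between two points of the same fibre over different sheets is at least $\min_{\gamma''\neq 0}|\gamma''|\ge 2R$. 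The triangle inequality then gives
\begin{equation*}
    |x-y| \ge |\gamma-\gamma'| - |x-(y-\gamma'+\gamma)| \ge 2R-\delta > R-\delta .
\end{equation*}
Hence the family $(f_\gamma)_{\gamma\in\Gamma}$ is $(R-\delta)$-separated, so $f$ is $(R-\delta,\delta)$-splittable with a slightly better constant than stated. The remaining checks, such as the invariance of the pieces under $V^\sigma$, are routine.
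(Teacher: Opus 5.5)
Your route is the same as the paper's. You decompose $f$ into $\Gamma$-translates of a single lift of $\supp\tilde f$ and get the separation from the triangle inequality. The construction of the $f_\gamma$ and the separation estimate are fine. The diameter bound, which is exactly the step you flagged, is false in part of the range $\delta<R$.

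The map $\pi$ is injective on $B_W(y_0,\delta)$, but it is not an isometry for the torus metric once $\delta>R/2$. Two points of that ball can be almost $2\delta$ apart in $W$ and much closer in $\tilde\T$. Your path-lifting argument breaks for the same reason. A minimizing path of length $\le\delta$ can start at a lifted point already nearly $\delta$ from $y_0$. It can then leave the injectivity ball and end at a lift other than the one you put in $\supp f_\gamma$.

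Here is a concrete example. Take $\tilde\T=\R^2/\Z^2$, so $V^\sigma=0$ and $R=1/2$. Let $\supp\tilde f$ be three disks of small radius $\eta$ centered at $(0,0)$, $(1/3,0)$ and $(2/3,0)$. Its torus diameter is at most $\delta:=1/3+2\eta<1/2$. Yet any set containing one lift of each center has diameter at least $2/3>\delta$. This is because an arc of $\R/\Z$ containing $0$, $1/3$ and $2/3$ has length at least $2/3$. So for such $\delta$, no decomposition of $f$ into translates of one piece has pieces of diameter $\le\delta$. The paper's proof asserts this bound ``by construction'' and has the same gap.

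Your argument is correct when $3\delta<2R$. Suppose a nonzero $\gamma$ realized the torus distance between two lifted points $\tilde x,\tilde x'\in\overline{B}_W(y_0,\delta)$. Then $2\delta\ge|\tilde x-\tilde x'|\ge|\gamma|-\delta\ge 2R-\delta$, a contradiction. That range suffices for the paper's application, where $\delta\to0$. For the full range $\delta<R$, change the decomposition as follows.
\begin{itemize}
\item For every $x\in\supp f$, one has $\supp f\cap\overline{B}(x,R)\subset\overline{B}(x,\delta)$. Indeed, for $y$ in the left-hand set, $\pi(y)$ has a lift within $\delta$ of $x$. Two lifts of one point at distance $\le R+\delta<2R$ coincide, so that lift is $y$.
\item Take as pieces $f\mathbb{1}_C$, where $C$ runs over the classes of $\supp f$ for the relation ``joined by a chain in $\supp f$ with steps of length $\le R-\delta$''.
\item Along a chain starting at $x$, each step stays in $\overline{B}(x,R)$, hence in $\overline{B}(x,\delta)$. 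So each class lies in $\overline{B}(x,\delta)$ for every one of its points $x$. It therefore has diameter at most $\delta$. It is also closed, hence compact.
\item Distinct classes are at distance $>R-\delta$, by definition and compactness.
\item No point of $\supp f$ lies at distance in $(0,R-\delta]$ from a class, so each $f\mathbb{1}_C$ is smooth.
\end{itemize}
In the example above, each lifted disk is then its own piece.
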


\begin{proof}
    As a lift of a function defined on the torus $\tilde{\T}$ of dimension $2d'$, the function $f$ on $\R^{2d}$ can be written
    \begin{equation*}
        f = \sum_{k \in \Z^{2d'}} f_k,
    \end{equation*}
    where each function $f_k$ is just a translation of $f_0$ and is invariant under $V^\sigma$. By construction, the diameter of each $f_k$ restricted to $V$ is at most $\delta$ and they are $(R - \delta)$-separated in this direction. Indeed, consider two points $x_k$ and $y_{k'}$ that minimize the distance between $\supp(f_k) \cap V$ and $\supp(f_{k'}) \cap V$. The point $x_k$ is at a distance at least $R$ of the point $x_{k'}$ in $\supp(f_{k'})\cap V$ that has the same image under the projection in $\tilde{\T}$. Since the diameter in the direction $V$ is at most $\delta$, we find that 
    \begin{equation}\label{eq_proof_entropy_projection_R_delta_separation_classical_partition}
        d(\supp(f_k) \cap V, \supp(p_{i,k'}) \cap V) = \norm{x_k - y_{k'}} \ge |\norm{x_k - x_{k'}} - \norm{y_{k'} - x_{k'}}| \ge R - \delta.
    \end{equation}
\end{proof}

The product of splittable functions $(p_j)_{j \in J}$ that are invariant under translation by vectors of respectively $(V_j)_{j \in J}$ is again splittable if the vector subspaces $(V_j)_{j \in J}$ are in direct sum. 
\begin{lemma}\label{lem_product_splitting_function}
    Let $V_1, \dots, V_n$ be symplectic rational subspaces of $\R^{2d}$ such that they are in direct symplectic sum and $(p_i)_{i \in \eint{1}{n}}$ functions that are $(\epsilon_i, \delta_i)$-splittable and invariant under $V_i^\sigma$. Denote $\hat{V} := V_1 \oplus^\sigma \dots \oplus^\sigma V_n$ and consider $q = p_1 \dots p_n$. Then $q$ is invariant under $\hat{V}^\sigma$ and $(C^{-1} \sum_{i=1}^n \epsilon_i, C \sum_{i=1}^n \delta_i)$-splittable where $C$ is a constant that depends only on $\hat{V}$. 
\end{lemma}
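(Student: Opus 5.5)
The plan is to build the pieces of $q$ as products of the pieces of the $p_i$ and to read off their supports in adapted coordinates on $\hat V\otimes\R$. For each $i$, write $p_i=\sum_{a\in I_i} f_{i,a}$ with the $f_{i,a}$ invariant under $V_i^\sigma$. By the definition of splittability for invariant functions, the restrictions of the $f_{i,a}$ to $V_i$ are $\epsilon_i$-separated and have supports of diameter at most $\delta_i$.

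\emph{Invariance.} Since the $V_i$ are in direct symplectic sum, $\hat V^\sigma=\bigcap_i V_i^\sigma$. Each $p_i$ is therefore invariant under $\hat V^\sigma$, and so is $q$. Likewise every product
\begin{equation*}
    g_{\mathbf a}:=f_{1,a_1}\cdots f_{n,a_n},\qquad \mathbf a=(a_1,\dots,a_n)\in I_1\times\dots\times I_n,
\end{equation*}
is invariant under $\hat V^\sigma$. Expanding the product gives $q=\sum_{\mathbf a} g_{\mathbf a}$, a countable sum, and it remains to control these pieces after restriction to $\hat V$.

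\emph{Coordinates.} Let $\Pi_i:\hat V\otimes\R\to V_i\otimes\R$ be the projections associated with the decomposition $\hat V=V_1\oplus\dots\oplus V_n$. Since this decomposition is fixed, there is a constant $C\ge1$, depending only on $\hat V$ and its splitting, such that for all $z\in\hat V\otimes\R$
\begin{equation*}
    C^{-1}\sum_i\norm{\Pi_i z}\le\norm{z}\le C\sum_i\norm{\Pi_i z}.
\end{equation*}
The function $f_{i,a}$ restricted to $\hat V$ depends only on $\Pi_i z$, because the complement $\bigoplus_{j\ne i}V_j$ is contained in $V_i^\sigma$. Hence
\begin{equation*}
    \supp g_{\mathbf a}\cap\hat V=\set{z \,:\, \Pi_i z\in S_{i,a_i}\ \text{for all } i},
\end{equation*}
where $S_{i,a}=\supp f_{i,a}\cap V_i$. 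The diameter bound follows at once: two points of this set have $\norm{\Pi_i(z-z')}\le\delta_i$ for every $i$, hence $\norm{z-z'}\le C\sum_i\delta_i$.

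\emph{Separation, and the main obstacle.} For $\mathbf a\neq\mathbf b$, take $z\in\supp g_{\mathbf a}$ and $z'\in\supp g_{\mathbf b}$. For every index $k$ with $a_k\ne b_k$ we have $\norm{\Pi_k(z-z')}\ge\epsilon_k$, so
\begin{equation*}
    \norm{z-z'}\ge C^{-1}\sum_{k:\,a_k\neq b_k}\epsilon_k .
\end{equation*}
This is where the difficulty lies. The stated bound $C^{-1}\sum_i\epsilon_i$ comes out directly only when $\mathbf a$ and $\mathbf b$ differ in every coordinate. When they share some coordinates, the bound above controls only the differing terms.

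I do not have a way to close the case of pieces sharing a coordinate. What I would try first is to enlarge $C$ so that $\sum_i\epsilon_i\le C\,\epsilon_k$ for each $k$ that can occur; this is possible because $n\le d$ is bounded by $\dim\hat V$. That suffices only if the $\epsilon_i$ are comparable up to a constant depending on $\hat V$. This is the case in the intended application, where $\epsilon_i=l\gamma$ with a common $\gamma$, but it is an extra hypothesis not contained in the statement as written.

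The alternative would be to regroup the products $g_{\mathbf a}$ that agree in some coordinates into coarser pieces. Such clusters can be unbounded in the shared directions, however, which would break the diameter bound $C\sum_i\delta_i$. For that reason I would not pursue this route. Without one of these two routes succeeding, the argument above proves the statement only with $\sum_i\epsilon_i$ replaced by $\min_i\epsilon_i$, or under the comparability assumption just described.
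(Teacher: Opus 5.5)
Your construction is the paper's: product pieces $g_{\mathbf a}$, the projections $\Pi_i$ attached to $\hat V=V_1\oplus\dots\oplus V_n$, and the equivalence of the Euclidean norm with $\sum_i\norm{\Pi_i\,\cdot\,}$. Everything you actually prove is correct: the invariance, the diameter bound $C\sum_i\delta_i$, and the separation $C^{-1}\sum_{k:\,a_k\neq b_k}\epsilon_k\ge C^{-1}\min_i\epsilon_i$. The step you could not close cannot be closed, because the separation in the statement is false in general.

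Here is a counterexample. Work in $\R^4$ with $V_1=\mathrm{span}(e_{x_1},e_{\xi_1})$ and $V_2=\mathrm{span}(e_{x_2},e_{\xi_2})$, so that $\hat V=\R^4$. Let $\chi\ge0$ be a bump on $\R^2$ that is positive exactly on the open disc of radius $\delta/2$. Define:
\begin{itemize}
\item $p_1(z)=\chi(\Pi_1 z)$, which has a single piece and is therefore $(\epsilon_1,\delta)$-splittable for every $\epsilon_1>0$;
\item $p_2(z)=\sum_{k\in\Z^2}\chi(\Pi_2 z-Lk)$ with $L>2\delta$, which is $(L-2\delta,\delta)$-splittable.
\end{itemize}
The connected components of $\{q\neq0\}$ are products of open discs centred at $(0,Lk)$. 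In any splitting, the supports of the pieces are closed and uniformly separated, so each component lies in the support of a single piece. If $L>2C\delta$, two neighbouring components cannot share a piece of diameter at most $C(\delta_1+\delta_2)=2C\delta$. Hence some two pieces are at distance at most $L-\delta$. The statement, however, demands separation greater than $C^{-1}(\epsilon_1+L-2\delta)$, which is arbitrarily large as $\epsilon_1$ grows. For a non-degenerate version, make $p_1$ periodic with period $L_1\gg CL$. This also settles your worry about regrouping: no splitting of $q$ achieves the claimed separation.

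The paper's proof contains exactly the gap you found. It asserts that distinct products $p_{1,k_1}\cdots p_{n,k_n}$ are $\sum_i\epsilon_i$-separated for the norm $\norm{x_1}+\dots+\norm{x_n}$. That is true only when $k_i\neq k_i'$ for every $i$.

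The correct statement is yours: $C^{-1}\min_i\epsilon_i$ in place of $C^{-1}\sum_i\epsilon_i$, with $C$ depending on the splitting $V_1,\dots,V_n$ and not only on $\hat V$. This version suffices for the only use of the lemma, in the proof of Theorem~\ref{thm_lower_bound_full_entropy}. There the factors are $(R_i-\delta,\delta)$-splittable by Lemma~\ref{lem_splittable_function_by_lifting}, where the $R_i$ are fixed injectivity radii rather than $l\gamma$ for a common $\gamma$. So the elements of $\CQ_j$ are $(C_j^{-1}\min_i(R_i-\delta),\,C_j j\delta)$-splittable. The ratio of separation to diameter tends to infinity as $\delta\to0$, and that is all Corollary~\ref{cor_applied_dispersive_estimate} requires. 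Moreover, the numbers $R_i-\delta$ are comparable once $\delta\le\min_i R_i/2$, so your comparability route also recovers the sum form in that instance.
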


\begin{proof}
    The function $q = p_1 \dots p_j$ can be written 
    \begin{equation*}
        q = \sum_{(k_1, \dots, k_n) \in \Z^{2d}} p_{1,k_1} \dots p_{n, k_n},
    \end{equation*}
    where each function is invariant under $V^\sigma_i$. By construction, each $p_{i,k_i}$ has diameter at most $\delta$ when restricted to $V_i$. We define the norm
    \begin{equation*}
        \norm{x}_{\hat{V}_j} = \norm{x_1} + \cdots + \norm{x_j} \qqtqq{where} x = x_1 + \cdots + x_j \qtq{with} x_i \in V_i.
    \end{equation*}
    This norm is equivalent to the restriction to $\hat{V}$ of the usual norm on $\R^{2d}$. Each product $p_{1,k_1} \dots p_{n, k_n}$ has diameter at most $\sum_{i=1}^n \delta_i$ for the norm $\norm{\cdot}_{\hat{V}_j}$, and by equivalence of norms diameter at most $C\sum_{i=1}^n \delta_i$ for the usual norm. A reasoning similar to the one leading to \eqref{eq_proof_entropy_projection_R_delta_separation_classical_partition}, gives that 
    \begin{equation*}
        d_{\hat{V}}(\supp(p_{1,k_1} \dots p_{n, k_n}) \cap \hat{V}, \supp(p_{1,k_1'} \dots p_{n, k_n'}) \cap \hat{V}) \ge \sum_{i=1}^n \epsilon_i,
    \end{equation*}
    where the distance is induced by $\norm{\cdot}_{\hat{V}}$. We obtain the desired result by equivalence of norms. 
\end{proof}

\subsection{Proof of the corollary \ref{cor_applied_dispersive_estimate}}\label{sec_proof_cor_hyp_disp_est}

Corollary \ref{cor_applied_dispersive_estimate} follows straightforwardly from Theorem \ref{thm_hyp_disp_est}. The proof consists in conjugating the matrix $A$ thanks to Theorem \ref{thm_reduction_symplectic_matrices_real} to obtain the form of symplectic matrices required by Theorem \ref{thm_hyp_disp_est} and checking the hypotheses of Theorem \ref{thm_hyp_disp_est}. 
\begin{proof}
    We start by dealing with the $V^\sigma$ invariance of the $(p_k)$ and putting $A_{|V}$ in a nice form. As $V$ is $A$ invariant, $V^\sigma$ is also $A$ invariant. Thus, we put $A$ in symplectic block diagonal form via a symplectic change of basis $Q \in \Sp_{2d}(\R)$ sending the decomposition $\R^{2d_1} \oplus^\sigma \R^{2(d-d_1)}$ of $\R^{2d}$ to $V \oplus^\sigma V^\sigma$. Up to a modification of $Q$, we also symplectically reduce the restriction of $A$ to $V$ via Theorem \ref{thm_reduction_symplectic_matrices_real}. Denote by $\lambda_{max} = \lambda_1 > \cdots > \lambda_r > 0 $ the positive Lyapunov exponents of $A_{|V}$. This leads to the form 
    \begin{equation*}
        A = Q (\tilde{A} \diamond B) Q^{-1}  \qqtqq{with} \tilde{A} = \begin{pmatrix}
            D_1 & 0 \\ 0 & D^{-T}_1
        \end{pmatrix} \diamond \dots \diamond \begin{pmatrix}
            D_r & 0 \\ 0 & D^{-T}_r
        \end{pmatrix} \diamond C_{r+1},
    \end{equation*}
    where $Q$ is a symplectic matrix on $\R^{2d}$, $C_{r+1} \in \Sp(2d_0, \R)$ has its spectrum included in the unit circle and that of each $D_i$ is included in the circle of radius $e^{\lambda_i}$. 
    Recall that $M_h(\tilde{A} \diamond B) = M_h(\tilde{A}) \otimes M_h(B)$ acting on $L^2(\R^{d_1}) \otimes L^2(\R^{d-d_1})$ according to \eqref{eq_compatibility_metaplectic_op_concatenation_symplectic}. Therefore, using the exact Egorov relation \eqref{eq_exact_Egorov_R2d} leads to 
    \begin{align*}
        \prod_{k=0}^{n-1} M_h(A) \Op^w_h(p_{k}) &= \prod_{k=0}^{n-1} M_h(Q) M_h(\tilde{A}) \otimes M_h(B) M_h(Q^{-1}) \Op^w_h(p_{k}) \\
        &= M_h(Q) M_h(\tilde{A}) \otimes M_h(B) M_h(Q)^{-1} \Op^w_h(p_{n-1}) M_h(Q) \dots  \\
        & \qquad \dots M_h(\tilde{A}) \otimes M_h(B) M_h(Q)^{-1} \Op^w_h(p_{0}) M_h(Q) M_h(Q)^{-1} \\
        & = M_h(Q) \left(\prod_{k=0}^{n-1} M_h(\tilde{A}) \otimes M_h(B) \Op^w_h(p_{k} \circ Q) \right) M_h(Q)^{-1} \\
        & = M_h(Q) \left(\prod_{k=0}^{n-1} M_h(\tilde{A}) \Op^w_h(p_{k} \circ Q) \otimes M_h(B) Id_{L^2(\R^{d-d_1})} \right)  M_h(Q)^{-1} \\
        &= M_h(Q) \left(\prod_{k=0}^{n-1} M_h(\tilde{A}) \Op^w_h(p_{k} \circ Q) \right) \otimes M_h(B)^n  M_h(Q)^{-1}.
    \end{align*}
    We used that $\Op^w_h(p_k \circ Q) = \Op^w_h(p_k \circ Q) \otimes Id_{L^2(\R^{d-d_1})}$ where the first operator acts on $L^2(\R^{d_1})$. Indeed, as $Q$ sends $\R^{2(d-d_1)}$ to $V^\sigma$ and $p$ is invariant under $V^\sigma$, the observable $p \circ Q$ is invariant under $\R^{2(d-d_1)}$ hence its quantization $\Op^w_h(p \circ Q)$ acts only on $L^2(\R^{d_1})$. 
    Since $\norm{T \otimes S} = \norm{T} \norm{S}$ if $T$ and $S$ are bounded operators acting on two Hilbert spaces and metaplectic operators are unitary, we get 
    \begin{align*}
        \norm{\prod_{k=0}^{n-1} M_h(A) \Op^w_h(p_{k})}_{L^2(\R^{d})}
        = \norm{\prod_{k=0}^{n-1} M_h(\tilde{A}) \Op^w_h(p_{k} \circ Q) }_{L^2(_R^{d_1})}.
    \end{align*}
    
    We need to further reduce $C_{r+1}$. Denote by $L_0$ the Lagrangian subspace of $\R^{2d_0}$ generated by $(f_1, \dots, f_{d_0})$ and define $L_k := C_{r+1}^k L_0$. As $C_{r+1}$ is symplectic, the subspaces $L_k$ are again Lagrangian. Since the action of orthogonal symplectic matrices on the set of Lagrangian subspaces \cite[lem. 2.2.1 and 2.3.2]{mcduffIntroductionSymplecticTopology2017} is transitive, there exists an orthogonal and symplectic matrix $U_k$ sending $L_k$ to $L_0$. Hence, the symplectic matrices $C_{r+1,k} := U_k C_{r+1} U_{k-1}^{-1}$ stabilize $L_0$ and are thus of the form 
    \begin{equation*}
        C_{r+1,k} = \begin{pmatrix}
            D_{r+1,k} & * \\ 0 & D_{r+1,k}^{-T}
        \end{pmatrix}.
    \end{equation*}
    We naturally extend $U_k$ to an orthogonal and symplectic matrix on $\R^{2d}$. Define $j$ to be the largest index such that $\lambda_j > \lambda_{max}/2$. We get that $A_k := U_k \tilde{A} U_{k-1}^{-1}$ has the form
    \begin{equation*}
        A_k = \begin{pmatrix}
            \tilde{D}_{k,1} & 0 & * & * \\
            0 & \tilde{D}_{k,2} & * & * \\
            0 & 0 & \tilde{D}_{k,1}^{-T} & 0 \\
            0 & 0 & 0 & \tilde{D}_{k,2}^{-T}
        \end{pmatrix} \qqtqq{where} \begin{aligned}
        \tilde{D}_{k,1} &= \diag(D_1, \dots, D_j) \\
        \tilde{D}_{k,2} &= \diag(D_{j+1}, \dots, D_r, D_{r+1,k})
        \end{aligned}.
    \end{equation*}
    We introduce the metaplectic operators associated to the $U_k$ to replace $\tilde{A}$ by $A_k$. 
    \begin{align*}
        \prod_{k=0}^{n-1} M_h(\tilde{A}) \Op^w_h(p_{k} \circ Q) &= M_h(U_n)^{-1} M_h(A_n) M_h(U_{n-1}) \Op^w_h(p_{n-1} \circ Q) M_h(U_{n-1})^{-1} \\
        &\dots M_h(U_1)^{-1} M_h(A_1) M_h(U_0)^{-1} M_h(U_0) \Op^w_h(p_{0} \circ Q) M_h(U_0)^{-1} M_h(U_0) \\
        &= M_h(U_n)^{-1} \prod_{k=0}^{n-1} M_h(A_{k+1}) \Op^w_h(p_{k} \circ Q \circ U_k^{-1}) M_h(U_0).
    \end{align*}
    Hence, if we denote $\tilde{p}_k = p_{k} \circ Q \circ U_k^{-1}$, we have
    \begin{equation*}
        \norm{\prod_{k=0}^{n-1} M_h(\tilde{A}) \Op^w_h(p_{k} \circ Q)} = \norm{\prod_{k=0}^{n-1} M_h(A_{k+1}) \Op^w_h(\tilde{p}_{k})}.
    \end{equation*}
    It remains to check the conditions of Theorem \ref{thm_hyp_disp_est}. First, the $A_k$ have the good forms and since the $U_k$ are unitary, we find that $\norm{A_k} \le \norm{\tilde{A}}$. Second, in view of the shape of $\tilde{A}$, it suffices to prove that uniformly in $m \ge 0$ and for any $j \in \eint{1}{r}$
    \begin{equation*}
        \limsup_{k \to + \infty} \frac{1}{k} \norm{(D_{j}^T)^{-k}} \le 0 \qqtqq{and} \limsup_{k \to + \infty} \frac{1}{k} \norm{D_{r+1,m+k}^{-T} \cdots D_{r+1,m}^{-T}} \le 0,
    \end{equation*}
    for the second condition of the theorem to be satisfied. Since the spectrum of $D_j$ is included in the circle of radius $e^{\lambda_j}$, we get 
    \begin{equation*}
        \limsup_{k \to + \infty} \frac{1}{k} \norm{(D_{j}^T)^{-k-1}} \le - \tilde{d}_j \lambda_j,
    \end{equation*}
    where $\tilde{d}_j$ is the size of $D_j$. Remark that $D_{r+1,m+k}^{-T} \cdots D_{r+1,m}^{-T}$ is a submatrix of 
    \begin{equation*}
        C_{r+1,m+k} \cdots C_{r+1,m} = U_{m+k} C_{r+1}^{k+1} U_{m-1}.
    \end{equation*}
    Moreover, as the spectrum of $C_{r+1}$ is included in the unit circle, the sequence $(\norm{C_{r+1}^{k+1}})_{k \in \N}$ can grow at most polynomially in $k$. Thus, since the $U_k$ are isometries, we get
    \begin{equation*}
        \limsup_{k \to + \infty} \frac{1}{k} \norm{D_{r+1,m+k}^{-T} \cdots D_{r+1,m}^{-T}} \le \limsup_{k \to + \infty} \frac{1}{k} \norm{U_{m+k} C_{r+1}^{k+1} U_{m-1}} \le \limsup_{k \to + \infty} \frac{1}{k} \norm{C_{r+1}^{k+1}} \le 0.
    \end{equation*}

    Third, Lemma \ref{lem_precomposition_bilip_splittable_functions} implies that $\tilde{p}_k = p_k \circ Q \circ U_{k}^{-1}$ is $(l \gamma/\norm{Q}, \norm{Q^{-1}}\gamma)$-splittable as $U_k$ is unitary and $p_k$ is $(l \gamma, \gamma)$-splittable, therefore the third hypothesis of the theorem is satisfied if $l \ge \norm{Q^{-1}} \norm{Q} (3 + \norm{\tilde{A}})$. 

    Finally, as the $U_k$ are unitary and the $S(1)$ seminorms of the $p_k$ are uniformly bounded, the seminorms of $\tilde{p}_k$ are also uniformly bounded. It is also clear that their supremum norm satisfies $\norm{\tilde{p}_k}_\infty \le 1 $. Denote $\tilde{d_i}$ the dimension of $D_i$ and $d' = \sum_{i=1}^j \tilde{d}_i$. Therefore, we can apply the theorem to obtain for $h$ and $\delta$ smaller than some $h_0$ and any $\delta_0$ that
    \begin{align*}
        \norm{\prod_{k=0}^{n-1} M_h(A_{k+1}) \Op^w_h(\tilde{p}_{k})} &\le \frac{C_\delta}{h^{(d' + \delta)/2}} \prod_{k=0}^{n-1} |\det(\tilde{D}_{k,1})|^{-1/2} (1+ \CO(|\log h| h^{1- 2 (1-\epsilon)\delta/\lambda_{max}})) \\
        &\le \frac{C_\delta}{h^{(d' + \delta)/2}} \prod_{i=1}^j |\det(D_i)|^{-n/2} (1+ \CO(\sqrt{h})) \\
        &\le \frac{C_\delta}{h^{(d' + \delta)/2}} \prod_{i=1}^j e^{-n \tilde{d_i} \lambda_i /2} (1+ \CO(\sqrt{h})) ,
    \end{align*}
    by choosing $\delta_0$ small enough to ensure that $\CO(|\log h| h^{1- 2 \delta/\lambda_{max}}) = \CO(\sqrt{h})$. Our definition of $j$ and the choice $n = \floor{2(1-\epsilon)|\log h|/\lambda_{max}}$ implies 
    \begin{align*}
        \frac{C_\delta}{h^{(d' + \delta)/2}} \prod_{i=1}^j |\det(D_i)|^{-n/2} &= C_\delta e^{- \frac{n}{2} \sum_{i=1}^j \tilde{d_i} (\lambda_i - \frac{|\log h|(1 + \delta/\tilde{d})}{n}) } \le C_\delta e^{- \frac{n}{2} (\sum_{i=1}^j \tilde{d_i} (\lambda_i - \frac{\lambda_{max}}{2}) - o(\epsilon + \delta) )} \\
        &\le C_\delta e^{- \frac{n}{2} (\Lambda_0(\tilde{A}) - o(\epsilon + \delta))}.
    \end{align*}

    To conclude, we remark that $\Lambda_0(\tilde{A}) = \Lambda_0(A_{|V})$ as the two matrices are conjugate to one another.
\end{proof}

The rest of the section is dedicated to the proof of Theorem \ref{thm_hyp_disp_est}, which consists of two main steps: replacing each symbol $p_k$ by one of its summand $p_{k,i}$ and then applying the main result of \cite{anantharamanExponentialDecayProducts2011}.

\subsection{Reduction to a single trajectory}

As promised, we start by replacing each symbol $p_k$ by only one of its constituents $p_{k,i}$. The main technical tool used is the Cotlar-Stein lemma \cite[thm. C.5]{zworskiSemiclassicalAnalysis2012}.

\begin{lemma}[Cotlar-Stein]
    Let  $(A_i)_{i \in \Z^m}$ be a countable family of bounded operators between two Hilbert spaces $E$ and $F$ such that there exists $C >0$ that satisfies 
\begin{equation*}
    \sup_{i \in \Z^m} \sum_{j \in \Z^m} \norm{A_i^* A_j}^{1/2} \le C \qqtqq{and} \sup_{i \in \Z^m} \sum_{j \in \Z^m} \norm{A_i A_j^*}^{1/2} \le C
\end{equation*}
then the series $\sum_{i \in \Z^m} A_i$ converges in the strong operator topology and 
\begin{equation*}
    \norm{\sum_{i\in \Z^m} A_i} \le C.
\end{equation*}
\end{lemma}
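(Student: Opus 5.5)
The plan follows the standard proof of the Cotlar--Stein lemma: bound high powers of $T_N := \sum_{|i| \le N} A_i$ and take roots. Work first with the finite partial sums $T_N$, with a finite index set $F \subset \Z^m$. Then pass to the limit in the strong operator topology. Since $T_N^* T_N$ is self-adjoint, the $C^*$ identity gives
\begin{equation*}
    \norm{T_N}^{2k} = \norm{(T_N^* T_N)^k}.
\end{equation*}
Expand the $k$-th power as a sum over $2k$-tuples $(i_1,\dots,i_{2k}) \in F^{2k}$ of the products $A_{i_1}^* A_{i_2} A_{i_3}^* A_{i_4} \cdots A_{i_{2k-1}}^* A_{i_{2k}}$.

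\textbf{Step 1 (two groupings).} Each product is bounded in two ways. Grouping it as $(A_{i_1}^* A_{i_2})(A_{i_3}^* A_{i_4})\cdots$ bounds its norm by $\prod_{l}\norm{A_{i_{2l-1}}^* A_{i_{2l}}}$. Grouping it as $A_{i_1}^*(A_{i_2} A_{i_3}^*)\cdots(A_{i_{2k-2}}A_{i_{2k-1}}^*)A_{i_{2k}}$ bounds it by $\norm{A_{i_1}}\,\norm{A_{i_{2k}}}\prod_l \norm{A_{i_{2l}} A_{i_{2l+1}}^*}$. Take the geometric mean of the two bounds. The endpoint factors $\norm{A_{i_1}}^{1/2}\norm{A_{i_{2k}}}^{1/2}$ are at most $C$, because $\norm{A_i} = \norm{A_i^* A_i}^{1/2} \le C$ by the hypothesis with $j=i$. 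The remaining factors form a chain of square roots $\norm{A^*_{i_1}A_{i_2}}^{1/2}\norm{A_{i_2}A^*_{i_3}}^{1/2}\cdots\norm{A^*_{i_{2k-1}}A_{i_{2k}}}^{1/2}$.

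\textbf{Step 2 (summing the chain).} Sum over $i_{2k}$ first, using the first hypothesis; this gives a factor $C$. Then sum over $i_{2k-1}$ using the second hypothesis; this gives another factor $C$. Continue alternating until only $i_1$ is left free. The sum over $i_1$ contributes the cardinality $|F|$. This yields
\begin{equation*}
    \norm{T_N}^{2k} \le |F| \, C \cdot C^{2k-1}.
\end{equation*}
Taking $2k$-th roots and letting $k\to\infty$ gives $\norm{T_N} \le C$ uniformly in $N$.

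\textbf{Step 3 (convergence).} This is the main obstacle. Uniform boundedness alone does not give strong convergence, so I would argue on a dense set first. Take a vector $x$ of the form $A_j^* y$. Then $\norm{A_i x} \le \norm{A_i A_j^*}\norm{y}$, and $\norm{A_iA_j^*} \le C\,\norm{A_iA_j^*}^{1/2}$ because $\norm{A_iA_j^*} \le \norm{A_i}\norm{A_j} \le C^2$. Hence $\sum_i \norm{A_i x}$ converges, and $T_N x$ is Cauchy.

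These vectors span the closure of $\sum_j \mathrm{Ran}(A_j^*)$. Its orthogonal complement is $\bigcap_j \ker A_j$, on which every $T_N$ vanishes. So $T_N x$ converges on a dense subspace of $E$. The uniform bound $\norm{T_N} \le C$ then extends convergence to all $x \in E$ by a standard $\epsilon/3$ argument. The limit operator inherits the bound $\norm{\sum_i A_i} \le C$.
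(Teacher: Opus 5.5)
Your proof is correct. The paper gives no proof of this lemma and simply cites Zworski's book (Theorem C.5); your argument is the standard proof of that theorem. It bounds $\norm{(T_N^*T_N)^k}$ via the geometric mean of the two groupings and the alternating summation of the chain, proves convergence on the dense subspace $\sum_j \mathrm{Ran}(A_j^*) + \bigcap_j \ker A_j$, and extends it using $\norm{T_N}\le C$. The one detail you leave implicit is that in Step 3 you sum $\norm{A_iA_j^*}^{1/2}$ over the \emph{first} index; this is bounded by $C$ because $\norm{A_iA_j^*} = \norm{A_jA_i^*}$.
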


We encapsulate our use of the Cotlar-Stein lemma in the next lemma that deals with operators of the form $A_{(i,j)}(h) = P_i(h) Q_j(h)$ depending on a small parameter $h>0$.
\begin{lemma}\label{lem_Cotlar-Stein_on_steroids}
    Let $(P_i(h))_{i \in \Z^m}$ and $(Q_j(h))_{j \in \Z^m}$ be two families of $h$-dependent operators on $L^2(\R^d)$. Suppose that there exist uniform constants $(C_n)_{n \ge 1}$, $C >0$, a subset $J \subset \Z^m$, a function $f : \Z^m \to \Z^m$ injective on $J$ such that 
    \begin{itemize}
        \item almost orthogonality: 
        \begin{equation}\label{lem_Cotlar-Stein_on_steroids_hyp_almost_ortho}
            \forall i \neq j \in \Z^m, \; \forall n \in \N^*, \quad \norm{P_i^* P_j} \le C_n \frac{h^n}{\jap{i-j}^n} \qtq{and} \norm{Q_i Q_j^*} \le C_n \frac{h^n}{\jap{i-j}^n},
        \end{equation}
        \item one intersection for elements of $J$: 
        \begin{equation}\label{lem_Cotlar-Stein_on_steroids_hyp_one_intersect}
            \forall j \in J, \; \forall i \neq f(j), \; \forall n \in \N^* \quad \norm{P_i Q_j} \le C_n \frac{h^n}{\jap{i-f(j)}^n},
        \end{equation}
        \item zero intersection for elements of $J^c$: 
        \begin{equation}\label{lem_Cotlar-Stein_on_steroids_hyp_zero_intersect}
            \forall j \in J^c, \; \forall i \in \Z^m, \; \forall n \in \N^* \quad \norm{P_i Q_j} \le C_n \frac{h^n}{\jap{i-f(j)}^n},
        \end{equation}
        \item uniform boundedness: $\forall i,j \in \Z^m \quad \norm{P_i} \le C \qtq{and} \norm{Q_j} \le C $,
    \end{itemize}
    then the operator $\sum_{i,j \in \Z^m} P_i Q_j$ converges in the strong operator topology and satisfies 
    \begin{enumerate}
        \item $\sum_{i,j \in \Z^m} P_i Q_j = \sum_{j \in J} P_{f(j)} Q_j + R$ with $\norm{R} = \CO(h^\infty)$,
        \item $\norm{\sum_{j \in J} P_{f(j)} Q_j} \le \sup_{j \in J} \norm{P_{f(j)} Q_j} + \CO(h^\infty)$ .
    \end{enumerate}
\end{lemma}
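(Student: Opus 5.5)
The plan is to split the double sum into the ``diagonal'' part $\sum_{j\in J} P_{f(j)}Q_j$ and the remainder $R=\sum_{(i,j)\notin D} P_iQ_j$, where $D=\{(f(j),j)\,:\,j\in J\}$. We then control $R$ by crude summation and the diagonal part by the Cotlar--Stein lemma. First we justify convergence: by the one- and zero-intersection hypotheses \eqref{lem_Cotlar-Stein_on_steroids_hyp_one_intersect}--\eqref{lem_Cotlar-Stein_on_steroids_hyp_zero_intersect}, every off-diagonal term satisfies $\norm{P_iQ_j}\le C_n h^n\jap{i-f(j)}^{-n}$. Summing over $i\in\Z^m$ with $n>m$ gives a bound of order $h^n$ for each fixed $j$, but the sum over $j$ is not obviously finite. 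So rather than summing norms over $j$, we treat $R$ itself by Cotlar--Stein.

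For the remainder, we index $R=\sum_{(i,j)\notin D} A_{(i,j)}$ with $A_{(i,j)}=P_iQ_j$, and bound $\norm{A_{(i,j)}^*A_{(k,l)}}=\norm{Q_j^*P_i^*P_kQ_l}$ in two ways:
\begin{itemize}
\item by $C^2\norm{P_i^*P_k}\le C_n h^n\jap{i-k}^{-n}$ from almost orthogonality when $i\neq k$;
\item by $C\norm{P_iQ_j}\,\norm{P_kQ_l}$ from the off-diagonal decay, which gives $h^{2n}\jap{i-f(j)}^{-n}\jap{k-f(l)}^{-n}$.
\end{itemize}
The geometric mean of the available bounds, together with the uniform bound $C^4$, supplies decay both in $|i-k|$ and in $|i-f(j)|$, $|k-f(l)|$. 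We also need decay in $|j-l|$, which comes from writing $\norm{A_{(i,j)}A_{(k,l)}^*}\le C^2\norm{Q_jQ_l^*}$ and, symmetrically, from the $Q$ almost orthogonality. Combining these by taking fourth roots of products of the bounds, $\sup_{(i,j)}\sum_{(k,l)}\norm{A^*_{(i,j)}A_{(k,l)}}^{1/2}$ is dominated by $h^{n'}\sum_{k,l}\jap{k-f(l)}^{-n'}\jap{j-l}^{-n'}$, which is finite and $\CO(h^\infty)$. The same estimate handles the sum for $AA^*$. Cotlar--Stein then gives strong convergence and $\norm{R}=\CO(h^\infty)$. The point to be careful about is that in the sum over $(k,l)$ one must use decay in $k$ relative to $f(l)$ and decay in $l$ relative to $j$; the hypotheses provide exactly these two decays. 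Point 1 follows.

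For the diagonal part, set $B_j=P_{f(j)}Q_j$ for $j\in J$. By injectivity of $f$ on $J$, for $j\neq l$ we have $f(j)\neq f(l)$, so
\[
\norm{B_j^*B_l}\le C^2\norm{P_{f(j)}^*P_{f(l)}}\le C_n h^n\jap{f(j)-f(l)}^{-n},
\]
and likewise $\norm{B_jB_l^*}\le C_nh^n\jap{j-l}^{-n}$. The second family is summable over $l$. For the first we sum over $l$ using injectivity, since $l\mapsto f(l)$ hits each point at most once, so the sum is again $\CO(h^\infty)$. The diagonal terms contribute $\norm{B_j^*B_j}^{1/2}=\norm{B_j}$. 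Cotlar--Stein therefore yields $\norm{\sum_{j\in J}B_j}\le \sup_j\norm{B_j}+\CO(h^\infty)$, which is point 2.

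I expect the main obstacle to be the bookkeeping in the remainder estimate. A naive sum of norms over $j$ diverges, so the decay in $j-l$ must be extracted from the $Q$ almost orthogonality and balanced against the decay from the intersection hypotheses via square roots, all while keeping a positive power of $h$.
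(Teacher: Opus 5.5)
Your treatment of the diagonal part is correct and is the paper's argument. The remainder estimate, however, has a genuine gap.

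The decay in $|j-l|$ comes from $\norm{Q_jQ_l^*}$, so it controls only $\norm{A_{(i,j)}A_{(k,l)}^*}=\norm{P_iQ_jQ_l^*P_k^*}$. It says nothing about $\norm{A_{(i,j)}^*A_{(k,l)}}=\norm{Q_j^*P_i^*P_kQ_l}$. The two Cotlar--Stein conditions concern different operators, so you cannot take geometric means of a bound on one with a bound on the other. For $Q_j^*P_i^*P_kQ_l$, the hypotheses only give decay in $i-k$ (when $i\neq k$), in $i-f(j)$ and in $k-f(l)$. So for fixed $(i,j)$, the sum over $l$ converges only if the fibres of $f$ are uniformly bounded on the set over which $l$ ranges. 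For $l\in J$ this is exactly the injectivity of $f$ on $J$. You should invoke it here, just as you did for the diagonal part; the paper's bound on $B_1$ silently relies on it too. Your $AA^*$ sum, with decay in $j-l$ and $k-f(l)$, is fine as it stands.

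For $l\in J^c$, nothing in the hypotheses controls the fibres of $f$, and conclusion 1 is then false, so no bookkeeping can close the gap. Here is a counterexample:
\begin{itemize}
\item take an orthonormal basis $(e_j)_{j\in\Z^m}$ of $L^2(\R^d)$ and set $Q_j=\langle\cdot,e_j\rangle e_j$;
\item let $J=\emptyset$ and $f\equiv 0$;
\item set $P_i=0$ for $i\neq 0$, and $P_0=\langle\cdot,w_h\rangle v$ with $\norm{v}=1$, $w_h=M_h^{-1/2}\sum_{j\in S_h}e_j$ and $|S_h|=M_h\ge e^{2/h}$.
\end{itemize}
Then $P_i^*P_k=0$ and $Q_jQ_l^*=0$ for distinct indices, all norms are at most $1$, and $\norm{P_iQ_j}\le e^{-1/h}=\CO(h^\infty)$, so every hypothesis holds. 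Yet $\sum_{i,j}P_iQ_j=P_0$ strongly, and $R=P_0$ has norm $1$.

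An extra assumption is therefore needed, for instance that $f$ is injective, or has uniformly bounded fibres, on $J^c$ as well. With it, your single Cotlar--Stein argument for $R$ goes through once the $A^*A$ sum is bounded via $\sum_k\jap{i-k}^{-n'}\sum_l\jap{k-f(l)}^{-n'}$. Be aware that the paper's proof passes over the same point when it says the bounds on $A_2$ and $B_2$ follow ``similarly''.
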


\begin{proof}
    As claimed above, the proof is a direct application of the Cotlar-Stein lemma. Define
    \begin{equation*}
        R_1 = \sum_{\substack{(i,j) \in \Z^m \times J \\ i \neq f(j)}} P_i Q_j \qqtqq{and} R_2 = \sum_{(i,j) \in \Z^m \times J^c} P_i Q_j.
    \end{equation*}
    and note that
    \begin{equation*}
        \sum_{i,j \in \Z^m} P_i Q_j = \sum_{j \in J} P_{f(j)} Q_j + R_1 + R_2
    \end{equation*}
    In order to apply the Cotlar-Stein lemma to $R_1$, we need to bound 
    \begin{equation*}
        A_1 := \sup_{\substack{i \in \Z^m \\ j \in J \\ i \neq f(j)}} \sum_{\substack{k \in \Z^m \\l \in  J \\ k \neq f(l)}} \norm{P_i Q_j Q_l^* P_k^*}^{1/2} \qqtqq{and} B_1 := \sup_{\substack{i \in \Z^m \\ j \in J \\ i \neq f(j)}} \sum_{\substack{k \in \Z^m \\l \in  J \\ k \neq f(l)}} \norm{Q_j^* P_i^* P_k Q_l}^{1/2}.
    \end{equation*}
    Let $n \in \N^*$, $i,k \in Z^m$ and $j,l \in J$ such that $i \neq f(j)$ and $k \neq f(l)$. The hypothesis \eqref{lem_Cotlar-Stein_on_steroids_hyp_one_intersect} implies on the one hand 
    \begin{equation}\label{ineq_proof_abstract_Cotlar-Stein_first_ineq}
        \norm{P_i Q_j Q_l^* P_k^*} \le \norm{P_i Q_j} \norm{P_k Q_l} \le C_n^2 \frac{h^{2n}}{\jap{i-f(j)}^n\jap{k-f(l)}^n} \le C_n^2 \frac{h^{2n}}{\jap{k-f(l)}^n},
    \end{equation}
    and on the other hand, if $j \neq l$ the hypothesis \eqref{lem_Cotlar-Stein_on_steroids_hyp_almost_ortho} leads to 
    \begin{equation*}
        \norm{P_i Q_j Q_l^* P_k^*} \le \norm{P_i} \norm{P_k} \norm{Q_j Q_l^*} \le C^2 C_{2n} \frac{h^{2n}}{\jap{j-l}^{2n}}.
    \end{equation*}
    Combining the two bounds, we get for $j \neq l$ that 
    \begin{equation*}
        \norm{P_i Q_j Q_l^* P_k^*} \le C \sqrt{C_{2n}} \frac{h^{n}}{\jap{j-l}^{n}} C_n \frac{h^{n}}{\jap{k-f(l)}^{n/2}} \le C_{n}' \frac{h^{2n}}{\jap{j-l}^{n}\jap{k-f(l)}^{n/2}}.
    \end{equation*}
    Such a bound is also valid for $j=l$, thanks to the inequality \eqref{ineq_proof_abstract_Cotlar-Stein_first_ineq}. Thus, $A_1$ is bounded by
    \begin{align*}
        A_1 &\le \sup_{\substack{(i,j) \in \Z^m \times J \\ i \neq f(j)}} \sum_{\substack{(k,l) \in \Z^m \times J \\ k \neq f(l)}} C_{n}' \frac{h^{n}}{\jap{j-l}^{n/2}\jap{k-f(l)}^{n/4}} \\ 
        &\le C_n' h^n \sup_{j \in \Z^m} \sum_{k,l \in \Z^m} \frac{1}{\jap{j-l}^{n/2}\jap{k-f(l)}^{n/4}} \\
        &\le C_n' h^n \sup_{j \in \Z^m} \sum_{l \in \Z^m} \frac{1}{\jap{j-l}^{n/2}} \sum_{k \in \Z^m} \frac{1}{\jap{k-f(l)}^{n/4}} \\
        &\le C_n' h^n \sup_{j \in \Z^m} \sum_{l \in \Z^m} \frac{1}{\jap{l}^{n/2}} \sum_{k \in \Z^m} \frac{1}{\jap{k}^{n/4}} \\
        &\le C_n' h^n \qtq{as long as} n \ge 4m +1.
    \end{align*}
    Similarly, we obtain $B_1 \le C_n' h^n$ as long as $n \ge 4m+1$. The Cotlar-Stein lemma thus implies that $R_1$ converges in the strong operator topology and satisfies $\norm{R_1} = \CO(h^\infty)$. 
    The second error term $R_2$ is bounded similarly via the Cotlar-Stein lemma and a control on 
    \begin{equation*}
        A_2 := \sup_{\substack{i \in \Z^m \\ j \in J^c }} \sum_{\substack{k \in \Z^m \\l \in  J^c }} \norm{P_i Q_j Q_l^* P_k^*}^{1/2} \qqtqq{and} B_2 := \sup_{\substack{i \in \Z^m \\ j \in J^c }} \sum_{\substack{k \in \Z^m \\l \in  J^c }} \norm{Q_j^* P_i^* P_k Q_l}^{1/2}.
    \end{equation*}
    The $\CO(h^\infty)$ bounds on $A_2$ and $B_2$ are obtained similarly to the bounds for $A_1$ and $B_1$ except that \eqref{ineq_proof_abstract_Cotlar-Stein_first_ineq} follows from the hypothesis \eqref{lem_Cotlar-Stein_on_steroids_hyp_zero_intersect} instead of \eqref{lem_Cotlar-Stein_on_steroids_hyp_one_intersect}.

    We treat the remaining part $\sum_{j \in J} P_{f(j)}Q_j$ again with the Cotlar-Stein lemma. We need to bound 
    \begin{equation*}
        A_3 := \sup_{j \in J} \sum_{l \in J} \norm{P_{f(j)} Q_j Q_l^* P_{f(l)}^*}^{1/2} \qqtqq{and} B_3 := \sup_{j \in J} \sum_{l \in J} \norm{Q^*_j P^*_{f(j)} P_{f(l)} Q_l}^{1/2}.
    \end{equation*}
    We have for $j \in J$ and $n \in \N^*$, using the injectivity of $f$, 
    \begin{align*}
        \sum_{l \in J} \norm{Q_j^* P_{f(j)}^* P_{f(l)} Q_l}^{1/2} &= \norm{P_{f(j)} Q_j} + \sum_{\substack{l \in J \\ l \neq j}} \norm{Q_j^* P_{f(j)}^* P_{f(l)} Q_l}^{1/2} \\
        &\le \norm{P_{f(j)} Q_j} + \sum_{\substack{l \in J \\ l \neq j}} \sqrt{C_{2n}} C \frac{h^n}{\jap{f(j) - f(l)}^n} \\
        &\le \norm{P_{f(j)} Q_j} + C_n' h^n \sum_{l \in J} \frac{1}{\jap{f(j) - f(l)}^n} \\
        &\le \norm{P_{f(j)} Q_j} + C_n' h^n \sum_{l \in \Z^m} \frac{1}{\jap{l}^n} \\
        &\le \norm{P_{f(j)} Q_j} + C_n' h^n \qtq{as long as} n \ge m+1.
    \end{align*}
    Thus $B_3 \le \sup_{j \in J} \norm{P_{f(j)} Q_j} + \CO(h^\infty)$ and we obtain similarly $A_3 \le \sup_{j \in J} \norm{P_{f(j)} Q_j} + \CO(h^\infty)$. Therefore, the Cotlar-Stein lemma implies that $\sum_{j \in J} P_{f(j)}Q_j$ converges in the strong operator topology and 
    \begin{equation*}
        \norm{\sum_{j \in J} P_{f(j)}Q_j} \le \sup_{j \in J} \norm{P_{f(j)} Q_j} + \CO(h^\infty).
    \end{equation*}
\end{proof}

Equipped with this lemma, we prove the next one, which replaces each $p_k$ by one of its summands $p_{k,i}$. This allows us to replace the symbols $p_k$ whose support is unbounded by the symbols $p_{k,i}$ whose supports have a uniformly bounded diameter.

\begin{lemma}\label{lem_Cotlar-Stein_trajectory}
    Let $\epsilon >0$, $0< h \le 1$ and $\kappa \in \N^*$. Define $n = \floor{\kappa | \log h|}$ and take symplectic matrices $(A_k)_{0 \le k \le n} \in \Sp_{2d}(\R)$ and smooth functions $(p_k)_{0 \le k \le n}$. Suppose that 
    \begin{enumerate}
        \item there exists $M>0$ such that all symplectic matrices $A_k$ satisfy $\norm{A_k} \le M$,
        \item the functions $(p_k)_{k \in \eint{0}{n-1}}$ are uniformly in $S(1)$ and their supremum norm satisfies $\norm{p_k}_\infty \le 1$,
        \item there exists $\epsilon >0$ such that each function $p_k$ is $(\epsilon(3+M), \epsilon)$-splittable.
    \end{enumerate}
    Then there exist a countable set $I$ and smooth functions $(p_{k,i})_{k \in \eint{0}{n-1}, i \in I}$ such that 
    \begin{equation*}
        \prod_{k = 0}^{n-1} M_h(A_k) \Op^w_h(p_k) = \sum_{i \in I} \prod_{k=0}^{n-1} M_h(A_k) \Op^w_h(p_{k,i}) + R,
    \end{equation*}
    where $\norm{R}_{L^2 \to L^2} = \CO(h^\infty)$ and 
    \begin{equation*}
        \norm{\sum_{i \in I} \prod_{k=0}^{n-1} M_h(A_k) \Op^w_h(p_{k,i})} = \sup_{i \in I} \norm{\prod_{k=0}^{n-1} M_h(A_k) \Op^w_h(p_{k,i})} + \CO(h^\infty).
    \end{equation*}
    Moreover, the diameter of the support of each $p_{k,i}$ is at most $\epsilon$, the functions $(p_{k,i})_{k \in \eint{0}{n-1}, i \in I}$ are uniformly in $S(1)$, their supremum norm is bounded by one $\norm{p_{k,i}} \le 1$ and for each $i \in I$
    \begin{equation*}
        d(\supp p_{k+1,i}, p_{k,i} \circ A_k^{-1}) \le \epsilon.
    \end{equation*}
\end{lemma}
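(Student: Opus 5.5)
The plan is to expand every $p_k$ through its splitting and then strip off all index sequences that are not trajectories, using the Cotlar--Stein lemma \ref{lem_Cotlar-Stein_on_steroids}.

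\textbf{Splitting and renumbering.} By Lemma \ref{lem_renumbering_splittable_functions}, write each $p_k=\sum_{j\in\Z^{2d}}p_{k,j}$. The pieces have support of diameter at most $\epsilon$ and satisfy $\norm{p_{k,j}}_\infty\le 1$. Their supports are separated by at least $\max(\epsilon(3+M),C|j-j'|)$. I take a smooth splitting, so the pieces remain uniformly in $S(1)$. Expanding the product gives
\begin{equation*}
    \prod_{k=0}^{n-1} M_h(A_k)\Op^w_h(p_k)=\sum_{(j_0,\dots,j_{n-1})}\prod_{k=0}^{n-1}M_h(A_k)\Op^w_h(p_{k,j_k}).
\end{equation*}
The goal is to keep only the sequences for which $\supp p_{k+1,j_{k+1}}$ is $\epsilon$-close to $A_k\supp p_{k,j_k}$.

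\textbf{Induction on $k$.} Set $P_i=\Op^w_h(p_{k+1,i})$ and $Q_j=M_h(A_k)\Op^w_h(p_{k,j})$. I apply Lemma \ref{lem_Cotlar-Stein_on_steroids} step by step. By the exact Egorov theorem \eqref{eq_exact_Egorov_R2d}, $P_iQ_j=M_h(A_k)\Op^w_h(p_{k+1,i}\circ A_k)\Op^w_h(p_{k,j})$. The set $\supp(p_{k,j})$ has diameter at most $\epsilon$, and its image by $A_k$ has diameter at most $M\epsilon$. The pieces of $p_{k+1}$ are $\epsilon(3+M)$-separated, so at most one piece $i=f(j)$ lies within $\epsilon$ of $A_k\supp p_{k,j}$; this defines $J$ and $f$. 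Every other piece is at distance at least about $\epsilon+C|i-f(j)|$. Lemma \ref{lem_disjoint_support_weyl_quantization_R2d} with $\nu=0$ then gives the bounds \eqref{lem_Cotlar-Stein_on_steroids_hyp_one_intersect} and \eqref{lem_Cotlar-Stein_on_steroids_hyp_zero_intersect}, with decay $h^\infty\jap{i-f(j)}^{-\infty}$, because the separation grows linearly in $|i-f(j)|$. Injectivity of $f$ on $J$ also comes from the separation: two distinct $j$ sharing the same $f(j)$ would force $p_{k,j},p_{k,j'}$ within $\epsilon(1+M)$ of each other after pulling back by $A_k^{-1}$, contradicting the $\epsilon(3+M)$ separation. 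The distance is distorted by at most $\norm{A_k^{-1}}$, and I will need to check that the constant $3+M$ accounts for this. The almost-orthogonality hypothesis \eqref{lem_Cotlar-Stein_on_steroids_hyp_almost_ortho} follows from Lemma \ref{lem_disjoint_support_weyl_quantization_R2d} applied to the disjointly supported pieces, since $M_h(A_k)$ is unitary.

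\textbf{Iterating along the product.} For the full product I iterate, replacing $Q_j$ by the accumulated partial products along admissible trajectories indexed by their last piece. Almost orthogonality of these partial products holds because their leftmost factors are $\Op^w_h(p_{k,j})$ with separated supports. This yields the decomposition with $I$ the set of admissible trajectories and a remainder $R$. Each step costs $\CO(h^\infty)$ and there are $n=\CO(|\log h|)$ steps, so $\norm{R}=\CO(h^\infty)$. The norm identity is conclusion 2 of Lemma \ref{lem_Cotlar-Stein_on_steroids} at the final step. For the reverse inequality, each single term is recovered from the sum up to $\CO(h^\infty)$ by multiplying on the right by $\Op^w_h(\tilde\chi)$, with $\tilde\chi$ a cutoff near $\supp p_{0,i}$.

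\textbf{Main obstacle.} The delicate point is uniformity of the constants $C_n$ across the $n\sim|\log h|$ iterations. The partial products are not pseudodifferential operators with uniform symbols, so the almost-orthogonality bounds must be derived only from their first and last cutoff factors and from the uniform norm bound $\norm{P_i}, \norm{Q_j}\le 1+\CO(h)$. Products of the form $(1+\CO(h))^n$ stay bounded.
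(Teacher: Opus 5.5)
Your overall route is the paper's. You split each $p_k$ via Lemma \ref{lem_renumbering_splittable_functions}, keep at each time step the unique piece of $p_{k+1}$ that is $\epsilon$-close to the image of a piece of $p_k$, and strip the non-trajectories one step at a time with Lemma \ref{lem_Cotlar-Stein_on_steroids}, while the accumulated factors $(1+\CO(h))^{n}$ stay bounded.

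\textbf{The gap: almost-orthogonality of the partial products.} The step that fails as written is your justification of almost-orthogonality in the iteration. Hypothesis \eqref{lem_Cotlar-Stein_on_steroids_hyp_almost_ortho} for the right-hand family is a bound on $\norm{Q_iQ_j^*}$. Write a partial product as $Q_j=M_h(A_k)\Op^w_h(p_{k,j})X_j$, with $X_j$ the earlier factors along the trajectory. Then
\begin{equation*}
    Q_iQ_j^* = M_h(A_k)\Op^w_h(p_{k,i})\,X_iX_j^*\,\Op^w_h(p_{k,j})^*M_h(A_k)^*,
\end{equation*}
where $X_iX_j^*$ is an arbitrary bounded operator. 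So the disjointness of $\supp p_{k,i}$ and $\supp p_{k,j}$ gives nothing here. The leftmost cutoffs only control $Q_i^*Q_j$ and the products $P_iQ_j$ in \eqref{lem_Cotlar-Stein_on_steroids_hyp_one_intersect}--\eqref{lem_Cotlar-Stein_on_steroids_hyp_zero_intersect}.

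What works is the rightmost cutoff. A trajectory is determined by its initial piece, so
\begin{equation*}
    \norm{Q_iQ_j^*}\le(1+\CO(h))^{2k}\norm{\Op^w_h(p_{0,i})\Op^w_h(p_{0,j})^*}=\CO(h^\infty\jap{i-j}^{-\infty}),
\end{equation*}
with decay in the \emph{initial} indices. This is why the paper labels trajectories by their initial piece rather than their last piece: it sets $Q_j=P_{1,\tilde f_0(j)}P_{0,j}$ and uses the sets $L_k$ and the compositions $f_k$. Both ends of the product are needed, one for each of the two Cotlar--Stein conditions.

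\textbf{Smaller points.}
\begin{itemize}
\item Your worry about the constant $3+M$ is justified, but your count is off. If $A_kU_{k,j}$ and $A_kU_{k,j'}$ are both within $\epsilon$ of one piece $U_{k+1,i}$ of diameter at most $\epsilon$, they are within $3\epsilon$ of each other. Hence $d(U_{k,j},U_{k,j'})\le 3\epsilon\norm{A_k^{-1}}\le 3M\epsilon$, using $\norm{A_k^{-1}}=\norm{A_k}$ for symplectic matrices. This contradicts the $(3+M)\epsilon$-separation only when $M\le 3/2$.
\item The paper's one-line injectivity argument quietly switches from $\epsilon$-closeness to intersection. That is incompatible with defining $J_k$ by $\epsilon$-closeness, which is needed for uniform $h^\infty$ bounds off $J_k$. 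Assuming a larger separation, e.g.\ $(3+3M)\epsilon$, fixes this at no cost, since the threshold $l$ in Corollary \ref{cor_applied_dispersive_estimate} is free.
\item You also need to define $f$ on $J^c$ (the paper takes the nearest piece) in order to state \eqref{lem_Cotlar-Stein_on_steroids_hyp_zero_intersect}.
\item The paper proves only the inequality $\le$ in the norm identity (conclusion 2 of Lemma \ref{lem_Cotlar-Stein_on_steroids}), and that is all that is used later. Your cutoff argument for $\ge$ loses the factor $\norm{\Op^w_h(\tilde\chi)}=1+\CO(h)$, so it gives only an $\CO(h)$ error. To get $\CO(h^\infty)$, write $T_i$ for the retained terms and use
\begin{equation*}
    \norm{T_i}^2\le\norm{T_i}\,\norm{\sum_l T_l}+\sum_{l\neq i}\norm{T_i^*T_l},
\end{equation*}
where $\sum_{l\neq i}\norm{T_i^*T_l}=\CO(h^\infty)$ because the final pieces are distinct.
\end{itemize}
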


\begin{proof}
    By hypothesis and Lemma \ref{lem_renumbering_splittable_functions}, each function $p_k$ can be decomposed as a sum $p_k = \sum_{i \in I_k} p_{k,i}$ such that 
    \begin{itemize}
        \item the functions $(p_{k,i})_{0 \le k \le n, i \in I_k}$ are uniformly in $S(1)$:
        \begin{equation}\label{eq_lem_Cotlar-Stein_trajectory_hyp_unif_S(1)}
            \forall k \in \eint{0}{n-1}, \; \forall i \in I_k, \; \forall \alpha \in \N^d, \; \exists C_\alpha \in \R^+ \qquad \norm{\partial^\alpha p_{k,i}}_\infty \le C_\alpha ,
        \end{equation}
        \item their supremum norm is bounded by $1$:
        \begin{equation}\label{eq_lem_Cotlar-Stein_trajectory_hyp_sup_norm}
            \forall k \in \eint{0}{n-1}, \; \forall i \in I_k, \quad \norm{p_{k,i}}_{\infty} \le 1,
        \end{equation}
        \item the supports of the $p_{k,i}$ have a diameter less than $\epsilon$ and are well separated:
        \begin{equation}\label{eq_lem_Cotlar-Stein_trajectory_hyp_supports}
            \exists C >0, \; \forall k \in \eint{0}{n-1} , \; \forall i \neq j \quad d(\supp p_{k,i} , \supp p_{k,j}) > \max\left(\epsilon (3+M), C|i-j|\right),
        \end{equation}
    \end{itemize}
    Denote $P_{k,i} = M_h(A_k) \Op^w_h(p_{k,i})$. Since $M_h(A_k)$ is unitary, we have that for any $N \in \N^*$ and any $j \neq i \in I_k$
    \begin{nalign}\label{eq_lem_Cotlar-Stein_trajectory_ineq_Pki*_Pkj}
        \norm{P_{k, i}^*P_{k, j}} &= \norm{\Op^w_h (p_{k, i})^* M_h (A_k)^* M_h (A_k) \Op^w_h (p_{k, j})} = \norm{\Op^w_h (p_{k, i})^*\Op^w_h (p_{k, j})} \\
        &\le{} C_N \frac{h^N}{\jap{i-j}^n},
    \end{nalign}
    as the distance between the supports of $p_{k,i} \in S(1)$ and $p_{k,j} \in S(1)$ is at least $C/|i-j|$ and by applying Lemma \ref{lem_disjoint_support_weyl_quantization_R2d}. The constants $C_n$ are uniform in $k, i,j$  as they depend only on the dimension, $C$ and the supremum norm of the derivatives of $p_{k,i}$ and $p_{k,j}$, which are uniformly bounded according to \eqref{eq_lem_Cotlar-Stein_trajectory_hyp_unif_S(1)}. Similarly,
    \begin{nalign}\label{ineq_lem_Cotlar-Stein_trajectory_Pki_Pkj*}
        \norm{P_{k, i} P_{k, j}^*} &= \norm{M_h (A_k) \Op^w_h (p_{k, i}) \Op^w_h (p_{k, j})^* M_h (A_k)^*} = \norm{\Op^w_h (p_{k, i}) \Op^w_h (p_{k, j})^*} \\
        &\le{} C_n \frac{h^N}{\jap{i-j}^N}.
    \end{nalign}
    Denote the support of $p_{k,i}$ by $U_{k,i}$ and that of $p_{k,i} \circ A^{-1}_k$ by $W_{k,i}$ and define a set of indices $J_k \subset I_k$ by
    \begin{equation*}
        J_k = \set{j \in I_k \; | \; \exists i \in I_{k+1}, \; d(U_{k+1,j},W_{k,i}) \le \epsilon}.
    \end{equation*}
    We claim that $W_{k,j}$ is at distance less than $\epsilon$ of at most one element of $(U_{k+1,i})_{i \in I_{k+1}}$. Suppose by contradiction that there exist $i \neq i'$ such that the distance between $W_{k,j}$ and both $U_{k+1,i}$ and $U_{k+1,i'}$ is less than $\epsilon$. By compactness of the supports, there exist four points $x_i \in U_{k+1,i}$, $y_0, y_1 \in W_{k,j}$ and $x_{i'} \in U_{k+1,i'}$ such that 
    \begin{equation*}
        |x_i - y_0| \le \epsilon \quad |x_i' - y_1| \le \epsilon.
    \end{equation*}
    Remark that $W_{k,j} = A_k U_{k,j}$ and since the diameter of $U_{k,j}$ is at most $\epsilon$, that of $W_{k,j}$ is at most $\norm{A_k} \epsilon \le M\epsilon$. This implies that $|y_0 - y_1| \le M\epsilon$. Therefore,
    \begin{equation*}
        |x_i - x_{i'}| \le |x_i - y_0| + |y_0 - y_1| + |y_1 - x_{i'}| \le 2 \epsilon + \epsilon M = \epsilon (2+M),
    \end{equation*}
    which is in contradiction with the hypothesis \eqref{eq_lem_Cotlar-Stein_trajectory_hyp_supports}. Therefore, the function $\tilde{f}_k : J_k \to I_{k+1}$ that sends $j \in J_k$ to the only index $i \in I_{k+1}$ such that $W_{k,j} \cap U_{k+1,i}$ is non-empty is well-defined. The function $\tilde{f}_k$ is injective as for any $i \in I_{k+1}$ the support $U_{k+1,i}$ intersects at most one element of $(W_{k,j})_{j \in I_{k}}$. The definition of $\tilde{f}_k$ and the hypothesis \eqref{eq_lem_Cotlar-Stein_trajectory_hyp_supports} imply that 
    \begin{equation}\label{ineq_lem_Cotlar-Stein_trajectory_support_close}
        \forall j \in J_k, \; \forall i \neq \tilde{f}_k(j) \quad \dist(\supp p_{k+1,i}, \supp p_{k,j} \circ A^{-1}_k) \ge C' |i-\tilde{f}_k(j)|.
    \end{equation}
    We extend the function $\tilde{f}_k$ to $I_k$ by sending an index $j \in I_k \setminus J_k$ to the index $i \in I_{k+1}$ such that the distance between $W_{k,j}$ and $U_{k+1,i}$ is the smallest and $i$ is as small as possible for the lexicographic order on $\Z^d$. This last criterion is there to handle the case of the smallest distance being attained by multiple indices. Our definition of $\tilde{f}_k$ on $J^c_k$, the fact that for any element $j \in J^c_k$ the support $\supp p_{k,j} \circ A^{-1}_k$ is at distance at least $\epsilon$ of the support of any $p_{k+1,i}$ and the hypothesis \eqref{eq_lem_Cotlar-Stein_trajectory_hyp_supports} imply that 
    \begin{equation}\label{ineq_lem_Cotlar-Stein_trajectory_support_far}
        \forall j \in J^c_k, \; \forall i \in I_{k+1} \quad \dist(\supp p_{k+1,i}, \supp p_{k,j} \circ A^{-1}_k) \ge C' \epsilon \jap{i-\tilde{f}_k(j)}.
    \end{equation}
    Remark that the functions $(p_{k,i} \circ A^{-1}_k)_{k \le n, i \in I_k}$ are uniformly in $S(1)$ because of \eqref{eq_lem_Cotlar-Stein_trajectory_hyp_unif_S(1)} and the uniform bound on the norm of the matrices $(A_k)_{k \le n}$. We have 
    \begin{equation*}
        \norm{P_{k+1,i} P_{k,j}} = \norm{M_h(A_{k+1}) \Op^w_h(p_{k+1,i}) M_h(A_k) \Op^w_h(p_{k,j})} = \norm{\Op^w_h(p_{k+1,i}) \Op^w_h(p_{k,j} \circ A^{-1}_k)}.
    \end{equation*}
    Therefore the lower bounds on the distance between the support \eqref{ineq_lem_Cotlar-Stein_trajectory_support_close}, \eqref{ineq_lem_Cotlar-Stein_trajectory_support_far} and Lemma \ref{lem_disjoint_support_weyl_quantization_R2d} imply that 
    \begin{equation}\label{ineq_lem_Cotlar-Stein_trajectory_Pk+1i_P_kj_close}
        \forall j \in J_k, \; \forall i \neq \tilde{f}_k(j) \quad \norm{P_{k+1,i} P_{k,j}} \le C_N \frac{h^N}{\jap{i-\tilde{f}_k(j)}^N},
    \end{equation}
    and 
    \begin{equation}\label{ineq_lem_Cotlar-Stein_trajectory_Pk+1i_P_kj_far}
        \forall j \in J^c_k, \; \forall i \in I_{k+1} \quad \norm{P_{k+1,i} P_{k,j}} \le C_N \frac{h^N}{\jap{i-\tilde{f}_k(j)}^N}.
    \end{equation}
    Since the functions $(p_{k,i})_{k \le n, i \in I_k}$ are uniformly in $S(1)$ by \eqref{eq_lem_Cotlar-Stein_trajectory_hyp_unif_S(1)} and satisfy $\norm{p_{k,i}}_\infty \le 1 +C h$, the Calderon-Vaillancourt theorem \ref{Calderon-Vaillancourt_R2d} implies that 
    \begin{equation*}
        \forall k \in \eint{0}{n}, \; \forall i \in I_k \quad \norm{P_{i,k}} \le 1 + \CO(h) \qqtqq{and} \norm{\Op^w_h(p_k)} \le 1 + \CO(h).
    \end{equation*} 
    with the $\CO(h)$ term independent of $k$ and $i$.
    We apply Lemma \ref{lem_Cotlar-Stein_on_steroids} to $(P_{1,i})_{i \in I_1}$ and $(P_{0,j})_{j \in I_0}$ to obtain 
    \begin{equation}\label{eq_lem_Cotlar-Stein_trajectory_first_iteration}
        \sum_{(i,j) \in I_1 \times I_0} P_{1,i} P_{0,j} = \sum_{j \in J_0} P_{1,\tilde{f}_0(j)} P_{0,j} + R_1 \qqtqq{with} \norm{R_1} = \CO(h^\infty).
    \end{equation}
    We iterate this procedure. Denote $Q_j = P_{1,\tilde{f}_0(j)} P_{0,j}$. We have for $i \neq j$, according to \eqref{ineq_lem_Cotlar-Stein_trajectory_Pki_Pkj*} 
    \begin{equation}\label{ineq_lem_Cotlar-Stein_trajectory_Qi_Qj*}
        \norm{Q_i Q_j^*} = \norm{P_{1,\tilde{f}_0(i)} P_{0,i} P^*_{0,j} P_{1,\tilde{f}_0(j)}^*} \le (1+\CO(h))^2 \norm{P_{0,i} P^*_{0,j}} \le C_N' \frac{h^N}{\jap{i-j}^N}.
    \end{equation}
    Denote $L_1 = f_0^{-1}(J_1) \cap J_0$, $f_1 = \tilde{f_0}$ and $f_2 = \tilde{f}_1 \circ \tilde{f}_0$. We get for any $j \in L_1$ and any $i \neq f_0(j)$ thanks to \eqref{ineq_lem_Cotlar-Stein_trajectory_Pk+1i_P_kj_close} that 
    \begin{align}\label{ineq_lem_Cotlar-Stein_trajectory_P2i_Qj}
        \norm{P_{2,i} Q_{j}} &= \norm{P_{2,i} P_{1,\tilde{f}_0(j)} P_{0,j}} \le (1+ \CO(h)) \norm{P_{2,i} P_{1,\tilde{f}_0(j)}} \le C_N' \frac{h^N}{\jap{i - f_2(j)}^N},
    \end{align}
    and for any $j \in J_0 \setminus L_1$ and any $i \in I_2$, thanks to \eqref{ineq_lem_Cotlar-Stein_trajectory_Pk+1i_P_kj_far} that 
    \begin{equation*}
        \norm{P_{2,i} Q_{j}} = \norm{P_{2,i} P_{1,\tilde{f}_0(j)} P_{0,j}} \le (1+ \CO(h)) \norm{P_{2,i} P_{1,\tilde{f}_0(j)}} \le C_N' \frac{h^N}{\jap{i - g_1(f_0(j))}^N}.
    \end{equation*}
    We apply again Lemma \ref{lem_Cotlar-Stein_trajectory} to $(P_{2,i})_{i \in I_2}$ and $(Q_j)_{j \in J_0}$.
    \begin{equation*}
        \sum_{\substack{i \in I_2 \\j \in J_0}} P_{2,i} P_{1,f_0(j)} P_{0,j} = \sum_{j \in L_1} P_{2,f_2(j)} P_{1,f_1(j)} P_{0,j} + R_2 \qqtqq{with} \norm{R_2} = \CO(h^\infty).
    \end{equation*}
    Combining this equation and \eqref{eq_lem_Cotlar-Stein_trajectory_first_iteration}, we get
    \begin{align*}
        \sum_{(i,k,j) \in I_2 \times I_1 \times I_0} P_{2,i} P_{1,k} P_{0,j} &= \sum_{\substack{i \in I_2 \\j \in J_0}} P_{2,i} P_{1,f_0(j)} P_{0,j} + \sum_{i \in I_2} P_{2,i} R_1 \\
        &= \sum_{j \in L_1} P_{2,f_2(j)} P_{1,f_1(j)} P_{0,j} + R_2 + M_h(A_2) \Op^w_h(p_2) R_1.
    \end{align*}
    Define by recurrence $f_k = \tilde{f}_{k-1} \circ \dots \circ \tilde{f}_0$, $f_0 = Id$ and $L_{k} = f^{-1}_{k}(J_k) \cap L_{k-1}$ with $L_0 = J_0$. We obtain by iterating the procedure above
    \begin{equation*}
        \prod_{k=0}^{n-1} \sum_{i_k \in I_k} P_{k,i_k} = \sum_{j \in L_{n-1}} \prod_{k=0}^{n-1} P_{k,f_k(j)} + \sum_{k=1}^{n-1} \left( \prod_{l=k+1}^n M_h(A_l) \Op^w_h(p_l) \right) R_k,
    \end{equation*}
    with $\norm{R_k} = \CO(h^\infty)$ uniformly in $k$. It is uniform since the constants used in Lemma \ref{lem_Cotlar-Stein_trajectory} are only multiplied by a factor $(1+ \CO(h))^2$, as in \eqref{ineq_lem_Cotlar-Stein_trajectory_Qi_Qj*}, or $(1+\CO(h))$, as in \eqref{ineq_lem_Cotlar-Stein_trajectory_P2i_Qj}, at each step to account for the norms of $P_{k,i}$, resulting in a factor at most 
    \begin{equation*}
        (1+\CO(h))^{2\kappa |\log h|} = (1+ \CO(h |\log h|)),
    \end{equation*}
    and such terms can be absorbed in the $h^\infty$ term. Therefore, we have 
    \begin{align*}
        \norm{\sum_{k=1}^{n-1} \left( \prod_{l=k+1}^{n-1} M_h(A_l) \Op^w_h(p_l) \right) R_k} &\le \sum_{k=1}^{n-1} \left( \prod_{l=k+1}^{n-1} \norm{M_h(A_l) \Op^w_h(p_l)} \right) \norm{R_k} \\
        &\le \CO(h^\infty) \sum_{k=1}^{n-1} (1+ \CO(h))^{n-k} \\
        &\le \CO(h^\infty) \kappa |\log h| (1 + \CO(h))^{\kappa |\log h|} \\
        &\le \CO(h^\infty).
    \end{align*}
    This gives the first part of the conclusion. The second item of the conclusion of Lemma \ref{lem_Cotlar-Stein_trajectory} gives 
    \begin{equation*}
        \norm{\sum_{j \in L_{n-1}} \prod_{k=0}^{n-1} P_{k,f_k(j)}} = \sup_{j \in L_{n-1}} \norm{\prod_{k=0}^{n-1} P_{k,f_k(j)}} + \CO(h^\infty),
    \end{equation*}
    which is the second part of the conclusion. The last part of the conclusion comes from the fact that we build $L_k$ by keeping only the indices $i$ such that $f_k(i)$ ends up in $J_k$ at each step $k$, which means that $p_{k+1, f_{k+1}(i)}$ and $p_{k,f_k(i) \circ A^{-1}_k}$ are at a distance at most $\epsilon$.
\end{proof}

\subsection{Applying the results of \texorpdfstring{\cite{anantharamanExponentialDecayProducts2011}}{cite anantharamanExponentialDecayProducts2011}}

Thanks to the reduction done in the previous section, we need to prove a bound of the form 
\begin{equation*}
    \norm{\prod_{k=0}^{n-1} M_h(A_k) \Op^w_h(p_{k})} \le \frac{C}{h^{d'/2}} \exp(-\frac{n}{2} \lambda),
\end{equation*}
where the functions $p_k$ are uniformly in $S(1)$, have a support of diameter at most $\epsilon$ and a supremum norm at most one, i.e. $\norm{p_k}_{\infty} \le 1$. 

To apply the results of \cite{anantharamanExponentialDecayProducts2011}, we first need to center all the symbols $p_k$ so that their support is in a ball centered at $0$. We achieve this via the quantum translations and the relation \eqref{eq_exact_Egorov_quantum_transla}. Let $k \in \eint{0}{n-1}$. Since the diameter of the support of $p_k$ is at most $\epsilon$, there exists $z_k \in \supp(p_k)$ such that the support of $p_k(\cdot - z_k)$ is included in the ball of radius $\epsilon$. The last part of Lemma \ref{lem_Cotlar-Stein_trajectory} implies that the distance between the support of $p_{k+1}$ and $p_k \circ A_k^{-1}$ is at most $\epsilon$, therefore we have
\begin{equation*}
    \norm{z_{k+1} - A_k z_k} \le \norm{z_{k+1} - x_{k+1}} + \norm{x_{k+1} - A_k y_k} + \norm{A_k y_k - A_k z_k } \le 2 \epsilon + M \epsilon,
\end{equation*}
where $x_{k+1}$ and $A_k y_k$ realize the distance between the support of $p_{k+1}$ and the support of $p_k \circ A^{-1}_k$. Introducing quantum translations to center the symbols $p_k$, we obtain 
\begin{align*}
    \prod_{k=0}^{n-1} M_h(A_k) \Op^w_h(p_{k}) = \prod_{k=0}^{n-1} U_h(z_{k+1})M_h(A_k) \Op^w_h(p_{k}) U_h(z_k)^{-1} U_h(z_0),
\end{align*}
with $z_n = 0$ so that $U_h(z_n) = Id$. Using the exact Egorov relations \eqref{eq_exact_Egorov_quantum_transla} and \eqref{eq_exact_Egorov_R2d}, we get 
\begin{align*}
    U_h(z_{k+1})M_h(A_k) \Op^w_h(p_{k}) U_h(z_k)^{-1} &= U_h(z_{k+1})M_h(A_k) U_h(z_k)^{-1} \Op^w_h(p_{k}(\cdot - z_k)) \\
    &= U_h(z_{k+1}) U_h(A_k z_k)^{-1} M_h(A_k) \Op^w_h(p_{k}(\cdot - z_k)) \\
    &= e^{-\frac{i}{2h}\sigma(z_{k+1},A_k z_k)} U_h(z_{k+1} - A_k z_k) M_h(A_k) \Op^w_h(p_{k}(\cdot - z_k)),
\end{align*}
thus, as the phases $e^{i/(2h) \sigma(z_{k+1},A_k z_k)}$ and the unitary operators $U_h(z_0)$ and $U_h(z_n)$ do not change the operator norm, we find 
\begin{align*}
    \norm{\prod_{k=0}^{n-1} M_h(A_k) \Op^w_h(p_{k})} &= \norm{\prod_{k=0}^{n-1} U_h(z_{k+1} - A_k z_k) M_h(A_k) \Op^w_h(p_{k}(\cdot - z_k))} \\
    &= \norm{\prod_{k=0}^{n-1} U_h(z_k') M_h(A_k) \Op^w_h(p_k')},
\end{align*}
with $\norm{z_k'} \le (2+M) \epsilon$ and the support of $p_k'$ included in $B(0,\epsilon)$, the ball centered at $0$ with radius $\epsilon$.

It remains to express the operators $U_h(z_k') M_h(A_k) \Op^w_h(p_k')$ as semiclassical Fourier integral operators and compute the corresponding phases and amplitudes. The Fourier integral operator associated to a phase $S$ and an amplitude $a$, denoted $FIO(S, a)$, is given by 
\begin{equation*}
    \forall u \in \mathscr{S}(\R^d), \quad FIO(S,a) u(x') = \frac{1}{(2\pi h)^d} \int_{\R^{2d}} e^{\frac{i}{h}S(x,x', \theta)} a(x,x',\theta, h) u(x) \diff x \diff \theta.
\end{equation*}

\begin{lemma}\label{lem_expression_FIO_est_disp}
    Let $D \in \GL(d, \R)$, $C \in M_d(\R)$, $z_0 =(x_0, \xi_0) \in \R^{2d}$ and $p \in S(1)$ whose support is included in $B(0,r)$ and such that $\norm{p}_{\infty} \le 1$. Define the block diagonal matrix $A = \begin{pmatrix}
        D & B \\ 0 & D^{-T}
    \end{pmatrix}$. Then 
    \begin{equation*}
        U_h(z_0) M_h(A) \Op^w_h(p) = FIO(S, a) + R,
    \end{equation*}
    where
    \begin{enumerate}
        \item the amplitude $a$ is a smooth function supported in $B(0,3r)$ for $(x, \theta)$ and in $B(-x_0, 3r\norm{A})$ for $x'$ and admits an asymptotic expansion
        \begin{equation*}
            a(x,x',\theta, h) \sim_{h \to 0} |\det D|^{-1/2} \sum_{k =0}^{\infty} h^k a_k(x,x', \theta),
        \end{equation*}
        valid up to any order and in all the $\CC^l$ norms on compact sets. The derivatives of $a_k$ and the constants in the remainder terms of the expansion are controlled by the $S(1)$ semi-norms of $p$. Moreover, the function $a_0$ satisfies $\norm{a_0}_{\infty} \le 1$.
        \item The phase $S(x,x', \theta)$ is given by
        \begin{equation*}
            S(x,x', \theta) = \sprod{x'}{D^{-T} \theta + \xi_0} - \sprod{x}{\theta} - \sprod{x_0}{D^{-T} \theta + \xi_0/2} -\frac{1}{2} \sprod{D^{-1}B \theta}{\theta},
        \end{equation*}
        \item the operator $R$ is bounded on $L^{2}(\R^d)$ and satisfies $\norm{R} = \CO(h^\infty)$ where the constants depend on the $S(1)$ semi-norms of $p$, $r$ and the norm of $A^{-1}$. 
    \end{enumerate}
\end{lemma}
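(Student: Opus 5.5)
The plan is to compute each factor explicitly as an oscillatory integral and then put the phase in the stated form. First I would write $\Op^w_h(p)$ as an oscillatory integral. It is cleaner to use the standard (left) quantization: by the usual change of quantization, $\Op^w_h(p) = \Op_h(\tilde p)$ with $\tilde p \sim \sum_k h^k \frac{1}{k!}(\tfrac{i}{2}\sprod{D_x}{D_\xi})^k p$. The leading term of $\tilde p$ is $p$, so $\norm{\tilde p_0}_\infty \le 1$. The support of $\tilde p$ is not compact, so I would cut it off: choose $\chi$ equal to $1$ on $B(0,2r)$ and supported in $B(0,3r)$. Standard non-stationary phase, as in Lemma \ref{lem_disjoint_support_weyl_quantization_R2d}, shows that replacing $\tilde p$ by $\chi \tilde p$ costs $\CO(h^\infty)$ in $L^2$ operator norm, with constants depending on the $S(1)$ seminorms of $p$ and on $r$. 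This gives the kernel
\begin{equation*}
    \Op_h(\chi\tilde p)u(y) = \frac{1}{(2\pi h)^d}\int e^{\frac{i}{h}\sprod{y-x}{\theta}}(\chi\tilde p)(y,\theta)u(x)\diff x\diff\theta .
\end{equation*}
The amplitude is supported in $|(y,\theta)|\le 3r$. To match the form of the statement I would instead use the right quantization (amplitude depending on $(x,\theta)$), which is obtained the same way.

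Next I would factor $A = \begin{pmatrix} I & BD^{T} \\ 0 & I\end{pmatrix}\begin{pmatrix} D & 0\\ 0& D^{-T}\end{pmatrix}$. On the metaplectic side this gives explicit formulas, up to the global sign ambiguity, which does not affect the statement. The diagonal factor acts by $u\mapsto |\det D|^{-1/2}u(D^{-1}\cdot)$. The shear factor is a Fourier multiplier $e^{-\frac{i}{2h}\sprod{C\eta}{\eta}}$ with $C$ the symmetric matrix $BD^T$ (symmetry follows from $A$ being symplectic); its sign convention is fixed by checking against \eqref{eq_exact_Egorov_R2d}. The translation $U_h(z_0)$ is given by \eqref{eq_action_quantum_translation_on_functions}.

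Applying these operators successively to the integral above, the metaplectic factors act on the exponential $e^{\frac{i}{h}\sprod{y}{\theta}}$ exactly. For the multiplier this is immediate when the amplitude is independent of $y$, which is why the right quantization is the convenient one. The result is:
\begin{itemize}
    \item a Jacobian factor $|\det D|^{-1/2}$;
    \item the substitution $y\mapsto D^{-1}x'$, which turns $\sprod{y}{\theta}$ into $\sprod{x'}{D^{-T}\theta}$;
    \item the quadratic term $-\frac12\sprod{D^{-1}B\theta}{\theta}$, coming from $\sprod{C D^{-T}\theta}{D^{-T}\theta}$;
    \item the translation shift $x'\mapsto x'-x_0$ with phase factor $e^{\frac{i}{h}\sprod{x'-x_0/2}{\xi_0}}$.
\end{itemize}
Collecting these reproduces exactly the stated $S$. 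The amplitude becomes $a=|\det D|^{-1/2}(\chi\tilde p)(x,\theta)$ times a cutoff in $x'$. The expansion $\sum h^k a_k$ is inherited from that of $\tilde p$, with $a_0$ the cutoff times $p$, so $\norm{a_0}_\infty\le 1$.

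The main obstacle is the $x'$-support claim, since the exact amplitude does not localize in $x'$. Here I would argue microlocally. The canonical relation of the resulting operator sends $(x,\theta)\in \supp p \subset B(0,r)$ to $A(x,\theta)+z_0$, so the relevant $x'$ lie near $x_0+D x+B\theta$ (with the sign dictated by the convention in \eqref{eq_action_quantum_translation_on_functions}, giving the stated ball $B(-x_0,3r\norm{A})$). I would insert a cutoff $\chi'(x')$ equal to $1$ on the image ball of radius $2r\norm{A}$ and supported in radius $3r\norm{A}$. On the support of $1-\chi'$ one has $|\partial_\theta S|\gtrsim r$ uniformly, using $\norm{A^{-1}}$ to compare distances in the two coordinate systems. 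Integrating by parts in $\theta$ repeatedly then shows that the cut-off piece has an $\CO(h^\infty)$ Schwartz kernel with enough decay in $x'$ to be bounded on $L^2$ via Schur's test. This remainder, together with the $\CO(h^\infty)$ from truncating $\tilde p$, forms $R$, with constants depending only on the $S(1)$ seminorms of $p$, $r$ and $\norm{A^{-1}}$.
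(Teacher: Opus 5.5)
Your argument works once one choice of cutoffs is corrected (second paragraph below), and it differs from the paper's in two places.

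The first is the metaplectic factor. The paper quotes the oscillatory-integral formula for $M_h(A)$ from Zworski's book and composes it with the Weyl kernel. This gives the right symbol $K(y,\eta)$ of $\Op^w_h(p)$ as an explicit integral, which the paper truncates with a cutoff in the integration variables and expands by stationary phase; that step is equivalent to your change of quantization and truncation. Your factorization $A=\begin{pmatrix} I & BD^T\\ 0 & I\end{pmatrix}\begin{pmatrix} D & 0\\ 0 & D^{-T}\end{pmatrix}$, acting on plane waves, instead \emph{derives} that formula. It also shows why the right quantization is the one on which the Fourier multiplier acts exactly.

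The second, and more substantial, difference is the $x'$-localization. The paper never estimates a kernel. It splits the exact operator as $\Op^w_h(\chi')U_h(z_0)M_h(A)\Op^w_h(p)+\Op^w_h(1-\chi')U_h(z_0)M_h(A)\Op^w_h(p)$. It then uses the exact Egorov relations \eqref{eq_exact_Egorov_quantum_transla} and \eqref{eq_exact_Egorov_R2d} to rewrite the second term as $U_h(z_0)M_h(A)\Op^w_h(b)\Op^w_h(p)$, with $b$ a pullback of $1-\chi'$ vanishing on $B(0,2r)$, and concludes by Lemma \ref{lem_disjoint_support_weyl_quantization_R2d}. Your integration by parts in $\theta$ plus Schur's test also works, but it requires explicit estimates on the kernel.

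The problem is that, with your cutoffs, the claim $|\partial_\theta S|\gtrsim r$ on $\supp(1-\chi')$ is false. We have $\partial_\theta S=D^{-1}(x'-x_0-Dx-B\theta)$, and your amplitude $\chi\tilde p$ lives in $B(0,3r)$. So the points $x_0+Dx+B\theta$ fill a ball of radius up to $3r\norm{A}$, which is not contained in the ball of radius $2r\norm{A}$ where $\chi'=1$. For a concrete case, take $D=\lambda I$ with $\lambda\ge 1$, $B=0$, and $(x,\theta)=(\tfrac52 r e_1,0)$. Your canonical-relation heuristic only looks at $(x,\theta)\in\supp p\subset B(0,r)$, but outside $\supp p$ the amplitude $\chi\tilde p$ is merely $\CO(h^\infty)$, not zero. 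There are two easy fixes:
\begin{itemize}
\item take $\chi$ supported in $B(0,3r/2)$ and equal to $1$ on a fixed neighborhood of $\supp p$; the amplitude still lies in $B(0,3r)$, and the separation becomes at least $r\norm{A}/2$;
\item or split off the part of $\chi\tilde p$ away from $\supp p$, which is $\CO(h^\infty)$ in $S(1)$ and contributes an $\CO(h^\infty)$ operator by Calder\'on--Vaillancourt.
\end{itemize}
The paper avoids the issue altogether because it cuts the exact operator, whose symbol is supported in $B(0,r)$.

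Finally, your own phase places the relevant $x'$ in $B(x_0,3r\norm{A})$. The convention \eqref{eq_action_quantum_translation_on_functions} translates by $+x_0$ and cannot produce the center $-x_0$. The stated ball, like the paper's $\chi'(\cdot-x_0)$ in its Egorov step, has a sign slip. It is harmless for the later use, which only needs the ball to lie inside $B(0,(2+4M)\epsilon)$, so do not adjust your computation to match it.
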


\begin{proof}
    This lemma could be deduced directly from general results about Fourier Integral Operators as the quantum translations, the metaplectic operators and the pseudo-differential operators are Fourier integral operators. The proof presented here is a pedestrian version of the general theory, in which we leverage the simplicity of the phase functions at hand.  
    According to \cite[thm. 11.10]{zworskiSemiclassicalAnalysis2012}, the metaplectic operators are Fourier integral operators and they are given in our case by 
    \begin{equation*}
        \forall u \in \mathscr{S}(\R^{d})_, \quad M_h(A) u(x') = \frac{|\det D|^{-1/2}}{(2 \pi h)^d} \int_{\R^{2d}} e^{\frac{i}{h}(\sprod{D^{-1} x'}{\eta} - \sprod{x}{\eta} - \frac{1}{2} \sprod{D^{-1}B \eta}{\eta})} u(x) \diff x \diff \eta.
    \end{equation*}
    Therefore the product $M_h(A) \Op^w_h(p)$ acts as 
    \begin{align*}
        M_h(A) \Op^w_h(p)u(x') &= \frac{|\det D|^{-1/2}}{(2 \pi h)^{2d}} \int_{\R^{4d}}  e^{\frac{i}{h}(\sprod{D^{-1} x'-x}{\eta} + \sprod{x-y}{\xi} - \frac{1}{2} \sprod{D^{-1}B \eta}{\eta})} p(\frac{x+y}{2}, \xi) u(y) \diff x \diff y \diff \eta \diff \xi \\
        &= \frac{|\det D|^{-1/2}}{(2 \pi h)^{d}} \int_{\R^{2d}}  e^{\frac{i}{h}(\sprod{D^{-1} x'-y}{\eta} - \frac{1}{2} \sprod{D^{-1}B \eta}{\eta})} K(y,\eta) u(y) \diff y \diff \eta ,
    \end{align*}
    where
    \begin{align*}
        K(y, \eta) &= \frac{1}{(2 \pi h)^{d}} \int_{\R^{2d}} e^{\frac{i}{h}(\sprod{x-y}{\xi - \eta})} p(\frac{x+y}{2}, \xi) \diff x \diff \xi \\
        &= \frac{1}{(2 \pi h)^{d}} \int_{\R^{2d}} e^{\frac{i}{h}(\sprod{x}{\xi})} p(\frac{x}{2} + y, \xi + \eta) \diff x \diff \xi.
    \end{align*}

    Fix a smooth function $\chi \in \CC^\infty_c(\R^{2d})$ equal to $1$ on $B(0,r)$ and $0$ on $B(0, 2r)^c$. We decompose the kernel $K$ into two parts 
    \begin{align*}
        K_1(y, \eta) &= \frac{1}{(2 \pi h)^{d}} \int_{\R^{2d}} (1- \chi)(\frac{x}{2}, \xi) e^{\frac{i}{h}(\sprod{x}{\xi})} p(\frac{x}{2} + y, \xi + \eta) \diff x \diff \xi \\
        K_2(y, \eta) &= \frac{1}{(2 \pi h)^{d}} \int_{\R^{2d}} \chi(\frac{x}{2}, \xi) e^{\frac{i}{h}(\sprod{x}{\xi})} p(\frac{x}{2} + y, \xi + \eta) \diff x \diff \xi .
    \end{align*}
    On the one hand, since $1- \chi$ is supported away from $0$, we obtain after successive integrations by parts both in $x$ and $\xi$ that 
    \begin{equation*}
        \norm{K_1}_\infty = \CO(h^\infty),
    \end{equation*}
    where the constants depend on the $S(1)$ norm of $p$. This implies that the operator $R_1$ defined by 
    \begin{equation*}
        R_1 u(x') = \frac{|\det D|^{-1/2}}{(2 \pi h)^{d}} \int_{\R^{2d}}  e^{\frac{i}{h}(\sprod{D^{-1} x'-y}{\eta}) - \frac{1}{2} \sprod{D^{-1}B \eta}{\eta}} K_1(y,\eta) u(y) \diff y \diff \eta ,
    \end{equation*}
    is bounded on $L^2$ and satisfies $\norm{R_1} = \CO(h^\infty)$.
    
    On the other hand, the stationary phase argument \cite[thm. 3.17]{zworskiSemiclassicalAnalysis2012} applied to $K_2$ gives the following asymptotic expansion
    \begin{equation}\label{asympt_exp_proof_expression_FIO_est_disp}
        K_2(y, \eta) \sim_{h \to 0} \sum_{k \ge 0} \frac{h^k}{k!} \left( i\sprod{\partial_x}{\partial_\xi}\right)^k p(\frac{\cdot}{2} + y, \cdot + \eta) (0,0),
    \end{equation}
    valid up to any order and in all the $\CC^l$ norms on compact sets. The constants in the remainder terms of the expansion are controlled by the $S(1)$ semi-norms of $p$, independently of $y$ and $\eta$ as they only translate the function $p$ and thus do not change the $S(1)$ semi-norms.
    As $\chi$ is supported in the ball $B(0,2r)$ and $p$ is supported in the ball $B(0,r)$, the function $K_2$ is supported in the ball $B(0,3r)$. We denote by $T_2$ the operator defined by 
    \begin{equation*}
        T_2 u(x') = \frac{|\det D|^{-1/2}}{(2 \pi h)^{d}} \int_{\R^{2d}}  e^{\frac{i}{h}(\sprod{D^{-1} x'-y}{\eta} - \frac{1}{2} \sprod{D^{-1}B \eta}{\eta})} K_2(y,\eta) u(y) \diff y \diff \eta .
    \end{equation*}

    We add $U_h(z_0)$ to the computation using \eqref{eq_action_quantum_translation_on_functions}.
    \begin{align*}
        U_h(z_0) T_2 u(x') &= e^{\frac{i}{h}\sprod{x'-x_0/2}{\xi_0}} T_2 u (x'-x_0) \\
        &= \frac{e^{\frac{i}{h}\sprod{x'-x_0/2}{\xi_0}}|\det D|^{-1/2}}{(2 \pi h)^{d}} \int_{\R^{2d}}  e^{\frac{i}{h}(\sprod{D^{-1} (x'-x_0)-x}{\theta}- \frac{1}{2} \sprod{D^{-1}B \theta}{\theta})} K_2(x,\theta) u(x) \diff x \diff \theta \\
        &= \frac{|\det D|^{-1/2}}{(2 \pi h)^{d}} \int_{\R^{2d}}  e^{\frac{i}{h}S(x,x', \theta)} K_2(x,\theta) u(x) \diff x \diff \theta,
    \end{align*}
    where $S$ is as claimed in the lemma. It remains to control the support in $x'$ of the amplitude. We fix $\chi' \in \CC^\infty_c(\R^{d})$ a non-negative smooth compactly supported function bounded by $1$, equal to $1$ on $B(-x_0, 2r\norm{A})$ and $0$ on $B(-x_0, 3r\norm{A})^c$. As $M_h(A) \Op^w_h(p) = T_2 + R_1$, we get
    \begin{align*}
        U_h(z_0) M_h(A) \Op^w_h(p) &= \Op^w_h(\chi') U_h(z_0) T_2 + \underset{=R}{\underbrace{\Op^w_h(1- \chi') U_h(z_0) M_h(A) \Op^w_h(p) + U_h(z_0) R_1}}.
    \end{align*}
    Using \eqref{eq_action_quantum_translation_on_functions}, \eqref{eq_exact_Egorov_R2d}, we find 
    \begin{align*}
        R &= U_h(z_0) \Op^w_h(1- \chi'(\cdot - x_0))  M_h(A) \Op^w_h(p) + U_h(z_0) R_1 \\
        &= U_h(z_0) M_h(A) \Op^w_h(1- \chi'(\pi_x(A\cdot) - x_0))  \Op^w_h(p) + U_h(z_0) R_1 ,
    \end{align*}
    where $\pi_x : \R^{2d} \to \R^d$ is the projection on the $x$ coordinate. 
    Our choice for $\chi'$ implies that the support of $1- \chi'(\pi_x(A\cdot) - x_0)$ is included in $B(0,2r)^c$. Indeed, if $z \in B(0,2r)$, we get
    \begin{equation*}
        \norm{(\pi_x(A z) - x_0) + x_0} \le 2r\norm{A} \qqtqq{thus} \chi'((\pi_x(A z) - x_0)) =1.
    \end{equation*}
    As $p$ is supported in $B(0,r)$, we obtain that 
    \begin{equation*}
        \Op^w_h(1- \chi'(\pi_x(A\cdot) - x_0))  \Op^w_h(p) = \CO(h^\infty),
    \end{equation*}
    according to Lemma \ref{lem_disjoint_support_weyl_quantization_R2d} with the implicit constant depending only on $\chi'$, $r$, $\norm{A}$ and the $S(1)$ seminorms of $p$. Therefore, as $U_h(z_0)$ and $M_h(A)$ are unitary, we get
    \begin{align*}
        \norm{R} \le \norm{U_h(z_0) M_h(A) \Op^w_h(1- \chi'(\pi_x(A\cdot) - x_0))  \Op^w_h(p)} + \norm{R_1} = \CO(h^\infty),
    \end{align*}
    where the constants depend on the $S(1)$ semi-norms of $p$, $r$ and the norm of $A$. 
    Recalling that the Weyl quantization of a symbol depending only on the position is just the multiplication by the symbol, we have 
    \begin{align*}
        \Op^w_h(\chi' ) U_h(z_0) T_2 u (x') = \frac{1}{(2 \pi h)^{d}} \int_{\R^{2d}}  e^{\frac{i}{h}S(x,x', \theta)} a(x,x', \theta) u(x) \diff x \diff \theta ,
    \end{align*}
    with 
    \begin{equation*}
        a(x,x',\theta) = \chi'(x') |\det D|^{-1/2} K_2(x,\theta).
    \end{equation*}
    Since $K_2$ is supported in $B(0,3r)$ and $\chi'$ in $B(-x_0, 3r\norm{A})$, the amplitude $a$ is supported in $B(0,3r)$ in $(x, \theta)$ and in $B(-x_0, 3r\norm{A})$. Drawing on the asymptotic expansion of $K_2$ \eqref{asympt_exp_proof_expression_FIO_est_disp}, we have one for $a$
    \begin{equation*}
        a(x,x',\theta) \sim_{h \to 0} |\det D|^{-1/2} \chi'(x') \sum_{k \ge 0} \frac{h^k}{k!} \left( i\sprod{\partial_x}{\partial_\xi}\right)^k p(\frac{\cdot}{2} + x, \cdot + \theta) (0,0),
    \end{equation*}
    valid up to any order and in all the $\CC^l$ norms and the constants in the remainder terms depend only on the $S(1)$ semi-norms of $p$. Furthermore, the first term in the expansion is 
    \begin{equation*}
        a_0(x,x', \theta) = \chi'(x') p(x, \theta),
    \end{equation*}
    and is bounded by one as $\chi'$ is bounded by one and so is $p$.
\end{proof}

We apply this lemma to $U_h(z_k')M_h(A_k) \Op^w_h(p_k')$ and obtain for each $k \in \eint{0}{n-1}$ the following expression
\begin{equation*}
    U_h(z_k')M_h(A_k) \Op^w_h(p_k') = FIO(S_k, a_k) + R_k,
\end{equation*}
where $S_k$ and $a_k$ are smooth functions that satisfy
\begin{itemize}
    \item $S_k(x,x', \theta) = \sprod{x'}{s_k(\theta)} - \sprod{x}{\theta} + \alpha_k(\theta)$ with $s_k(\theta) = D^{-T}_k \theta$. The derivatives of $s_k$, $s^{-1}_k$ and $\alpha_k$ are controlled independently of $k$. 
    \item $a_k$ is supported in $B(0, (5+4M)\epsilon)$ (see details below) and admits an asymptotic expansion 
    \begin{equation*}
        a(x,x',\theta, h) \sim_{h \to 0} |\grad s_k|^{1/2} \sum_{j =0}^{\infty} h^j a_{k,j}(x,x', \theta),
    \end{equation*}
    valid up to any order and in all the $\CC^l$ norms on compact sets. The derivatives of $a_{k,j}$ and the constants in the remainder terms of the expansion are controlled independently of $k$. Moreover, each function $a_{k,0}$ satisfies $\norm{a_{k,0}}_{\infty} \le 1$.
    \item $R_k$ is a bounded operator on $L^2$ and $\norm{R_k} = \CO(h^\infty)$, where the constants are independent of $k$.
\end{itemize}
Every constant in the remainder of the asymptotic expansions or control of the derivatives is independent of $k$ since they depend on the norm of $A_k$ or $A^{-1}_k$ and on the $S(1)$ semi-norms of $p_k'$, which are all controlled independently of $k$. 
We detail the part on the support of $a_k$. The lemma tells us directly that $a_k$ is supported in $B(0, 3\epsilon)$ in $(x, \theta)$ and in $B(-x_{z_k'}, 3 M\epsilon)$ where $x_{z_k'}$ is the position coordinate of $z_k'$. As $\norm{z_k'} \le (2+M) \epsilon$, we have that $B(-x_{z_k'}, 3 M\epsilon) \subset B(0, (2+4M)\epsilon)$ and thus that $a_k$ is supported in $B(0, 3 \epsilon + (2+4M)\epsilon)$. 

The error terms $R_k$ are of order $h^\infty$ and we get rid of them. We have
\begin{align*}
    \norm{\prod_{k=0}^{n-1} U_h(z_k')M_h(A_k) \Op^w_h(p_k')} &= \norm{\prod_{k=0}^{n-1} FIO(S_k, a_k) + R_k} \\
    &\le \norm{\prod_{k=0}^{n-1} FIO(S_k, a_k)} + \sum_{k=0}^{n-1} \norm{\prod_{l=k+1}^{n-1} FIO(S_l,a_l) R_k} \\
    &\le \norm{\prod_{k=0}^{n-1} FIO(S_k, a_k)} + \sum_{k=0}^{n-1}  (1 + \CO(h))^{n-k-1} \CO(h^\infty) \\
    &\le \norm{\prod_{k=0}^{n-1} FIO(S_k, a_k)} + \CO(h^\infty),
\end{align*}
because $n$ is of order $|\log h|$. We used that $\norm{FIO(S_k,a_k)} \le 1 + \CO(h)$ because $\norm{a_{k,0}} \le 1$, and uniformly in $k$ as we have a uniform control of the derivatives.

The last thing to understand before applying \cite[thm. 2.3]{anantharamanExponentialDecayProducts2011}, is the behavior of $\nabla s_{k+m-1} \circ \dots \circ \nabla s_{m}$ for $m \in \eint{0}{n-1}$ and $k \in \eint{0}{n-m}$ as $k$ tends to infinity, which implies that $n$ tends to infinity. This composition is equal to 
\begin{equation*}
    \nabla s_{k+m-1} \circ \dots \circ \nabla s_{m} = D_{k+m-1}^{-T}  \cdots D_{m}^{-T}.
\end{equation*}
Therefore, we get
\begin{equation*}
    \limsup_{k \to \infty} \frac{1}{k} \log \norm{\nabla s_{m+k-1} \circ \dots \circ \nabla s_{m} } = \limsup_{k \to \infty} \frac{1}{k} \log \norm{D_{m+k-1}^{-T}  \cdots D_{m}^{-T}} \le 0,
\end{equation*}
uniformly in $m$ by hypothesis. 

Finally, remark that each $s_k $ can be written as
\begin{equation*}
    s_k(\xi_1, \xi_2) = (D_{k,1}^{-T} \xi_1 + \tilde{B}_k \xi_2, D_{k,2}^{-T} \xi_2) = (g_k(\xi_1, \xi_2), \tilde{s}_k(\xi_2)),
\end{equation*}  
where $\xi_j \in \R^{d_j}$ and the equality defines $g_k$ and $\tilde{s}_k$. Since the phases are linear, we obtain that the gradients
\begin{align*}
    \nabla s_{n-1} \circ \dots \nabla s_{0} (\xi) = D^{-T}_{n-1} \dots D^{-T}_0 \qtq{and} \nabla \tilde{s}_{n-1} \circ \dots \nabla \tilde{s}_{0} (\xi_2) = D^{-T}_{n-1,2} \dots D^{-T}_{0,2},
\end{align*} 
do not depend on the points $\xi \in \R^d$ and $\xi_2 \in \R^{d_2}$.

We apply \cite[thm. 2.3]{anantharamanExponentialDecayProducts2011} and obtain that there exist $h_\kappa >0$ such that for any $h < h_\kappa$ and any $\delta >0$, we have 
\begin{align*}
    \norm{\prod_{k=0}^{n-1} FIO(S_k, a_k)} &\le \frac{C_\delta}{h^{(d_1+\delta)/2}} \frac{|\det D^{-T}_{n-1} \dots D^{-T}_0|^{1/2}}{|\det D^{-T}_{n-1,2} \dots D^{-T}_{0,2}|^{1/2}} (1 + \CO(|\log h|^3 h e^{\delta n})) \\
    &\le \frac{C_\delta}{h^{(d_1+\delta)/2}} \prod_{k=0}^{n-1} |\det(D_{k,1})|^{-1/2} (1 + \CO(|\log h|^3 h^{1- \kappa \delta})).
\end{align*}
All the errors of order $h^\infty$ made along the way do not appear in the final result as they are negligible in comparison to the bound obtained above.

\section{The lower bound for the entropy}\label{sec_full_proofs}

The goal of this section is to prove the two theorems \ref{thm_lower_bound_entropy_projection} and \ref{thm_lower_bound_full_entropy}. Therefore, we fix an integer symplectic matrix $A \in \Sp_{2d}(\Z)$, a sequence of $\theta_N(A)$ such that $M_h(A)$ can be restricted to $\mathcal{H}_N(\theta_N(A))$ and a sequence $\psi_N$ of normed eigenvectors of $M_N(A)$ such that the associated semiclassical measure $\mu$ exists:
\begin{equation*}
    \forall a \in \CC^\infty(\T^{2d}) \quad \mu_N(a) = \sprod{\Op^w_N(a) \psi_N}{\psi_N} \xrightarrow[N \to \infty]{} \mu(a).
\end{equation*}
In the case it is necessary to consider a subsequence, we do it and instantly drop the extraction to simplify the notations. We will not mention the dependence in $\theta$ for the same reason. The $C^*$ algebra of bounded operators on $\mathcal{H}_N(\theta_N(A))$ is denoted $\CA_N$ and the state associated to $\psi_N$ is denoted by $\rho_N$ and given by 
\begin{equation*}
    \forall x \in \CA_N, \quad \rho_N(x) = \sprod{x \psi_N}{\psi_N}.
\end{equation*}
We highlight that the state $\rho_N$ is invariant under $M_N(A)$ as $\psi_N$ is an eigenvector of the unitary operator $M_N(A)$.

We fix two small parameters $\delta>0$ and $\epsilon>0$. The first one will control the width of the partitions considered, while the second one will be involved in the Ehrenfest time, since we require a small margin from the extreme case $S_{1/2}$ to make the semiclassical calculus work. Some conditions will be imposed on them during the proof.

\subsection{A lower bound for the entropy of the projection}\label{subsection_proof_lower_bound_entropy_projection}

We first prove Theorem \ref{thm_lower_bound_entropy_projection} since the proof is a simplified version of the one of Theorem \ref{thm_lower_bound_full_entropy}. Let $V$ be a rational symplectic subspace that is invariant under $A$, $\lambda$ the highest Lyapunov exponent of $A$ restricted to $V$ and $R$ the radius of injectivity of the torus $\tilde{\T} := \T^{2d} / V^\sigma$. Suppose that $\lambda >0$ and impose that $\delta <R$.

We consider $\tilde{\CP} = (\tilde{p}_{i}^2)_{i \in I}$ a smooth partition of $\tilde{\T}$ whose elements $\tilde{p}_i$ have a support of diameter at most $\delta$. 
We obtain from it a partition $\CP = (p_{i}^2)_{i \in I}$ of $\T^{2d} $ by setting $p_{i} = \tilde{p}_{i} \circ \pi$, where $\pi : \T^{2d} \to \tilde{\T}$ is the natural projection. Since each element of $\CP$ is invariant under $V^\sigma$, Lemma \ref{lem_good_Ehren_time_obs_Fourier_support} implies that they have an effective Ehrenfest time of $(T, \epsilon/2)$, where 
\begin{equation*}
    T = \floor{\frac{(1- \epsilon)| \log h|}{2 \lambda}}.
\end{equation*}

As the support of each function $\tilde{p}_{i}$ has diameter at most $\delta$,  the function $p_{i}$ on $\R^{2d}$ is $(R - \delta, \delta)$-splittable by Lemma \ref{lem_splittable_function_by_lifting}.

It remains to ensure that the partition $\CP^q$ is a $(1+\CO(h^\infty))$ quantum partition. This is always possible by slight modifications of the symbols as a consequence of Lemma \ref{lem_modifyin_partition_to_have_1_plus_h_infinity}. Note that the new symbol $p_i'$ given by Lemma \ref{lem_modifyin_partition_to_have_1_plus_h_infinity} has the same support as $p_i$ and converges to $p_i$ when $h$ tends to $0$. In particular, the family of functions $\CP = (p_i')_{i \in I}$ depends on $h$ and is a partition of unity up to an error of order $\CO(h)$. We drop the primes and will continue to denote by $p_i$ the symbols of the classical partition. All in all, we have built a classical partition $\CP$ such that 
\begin{itemize}
    \item each element $p_i$ of $\CP$ is $(R-\delta, \delta)$ splittable and has an effective Ehrenfest time of $(T, \epsilon/2)$,
    \item $\CP^q$ is a $(1+\CO(h^\infty))$ quantum partition.
\end{itemize}

In particular, $\bigvee_{k = -T}^T \CP^q(k)$ is a $(1 + \CO(h^\infty))$ quantum partition. This is a direct consequence of Lemma \ref{lem_joint_approx_quantum_part_is_approx_quantum_part} as 
\begin{equation*}
    (1+ \CO(h^\infty))^{2T+1} \le (1+ \CO(h^\infty))^{|\log h|/ \lambda} = e^{\frac{|\log h|}{\lambda}\log(1 + \CO(h^\infty))} = e^{\CO(h^\infty)} = 1 + \CO(h^\infty).
\end{equation*}

The next step consists in applying the entropic uncertainty principle for approximate quantum partitions \ref{thm_conditional_entropic_uncertainty_principle_approximate quantum_partitions} on $\CA_N$ to the state $\rho_N$ to obtain that
\begin{equation*}
    h^q(\rho_N, M_N(A)^{T+1} \cdot \bigvee_{k = -T}^T \CP^q(k)) + h^q(\rho_N, M_N(A)^{-T} \cdot \bigvee_{k = -T}^T \CP^q(k)),
\end{equation*}
is bounded from below by 
\begin{equation}\label{eq_lower_bound_quantum_entrop_via_entropic_uncertainty_principle}
    - 2 \log \left( \max_{P, Q \in \bigvee_{k = -T}^T \CP^q(k)} \norm{P M_N(A)^{T+1} (M_N(A)^{-T})^* Q^*} + \CO(N^{-\infty}) \right) + \CO(N^{-\infty}).
\end{equation}
We can apply the theorem because $\bigvee_{k = -T}^T \CP^q(k)$ is a $(1 + \CO(N^{-\infty}))$ quantum partition and its cardinality is less than $N^{|\CP|/\lambda}$. 
We simplify the lower bound via the hyperbolic dispersive estimate. Let $P,Q \in \bigvee_{k = -T}^T \CP^q(k)$. There exist two sequences of indices $(i_k)_{k \in \eint{-T}{T}}$ and $(j_k)_{k \in \eint{-T}{T}}$ such that 
\begin{equation*}
    P = \prod_{k=-T}^T \Op^w_N(p_{i_k})(k) \qqtqq{and} Q = \prod_{k=-T}^T \Op^w_N(p_{j_k})(k).
\end{equation*}
Hence,
\begin{align*}
    P M_N(A)^T (M_N(A)^{-T})^* Q^* &= \prod_{k=-T}^T \Op^w_N(p_{i_k})(k) M_N(A)^{2T+1} \prod_{k=-T}^T \Op^w_N(\bar{p}_{j_{-k}})(k) \\
    &= M_N(A)^{-T-1} M_N(A) \Op^w_N(p_{i_T}) \dots M_N(A) \Op^w_N(p_{i_{-T}}) \\
    & \qquad M_N(A) \Op^w_N(\bar{p}_{j_{-T}}) \dots M_N(A) \Op^w_N(\bar{p}_{j_{T}}) M_N(A)^{-T}.
\end{align*}
Thus, as all the $p \in \CP$ are real valued,
\begin{align*}
    \max_{P, Q \in \bigvee_{k = -T}^T \CP^q(k)} \norm{P M_N(A)^{T+1} (M_N(A)^{-T})^* Q^*} &= \max_{(i_k)_{k \in \eint{0}{4T+1}}} \norm{\prod_{k=0}^{4T+1} M_N(A) \Op^w_N(p_{i_k})}_{\CA_N} \\
    &\le \max_{(i_k)_{k \in \eint{0}{4T+1}}} \norm{\prod_{k=0}^{4T+1} M_h(A) \Op^w_h(p_{i_k})}_{\CL(L^2(\R^{d}))}.
\end{align*}
The last inequality is a consequence of \eqref{ineq_bound_norms_pseudo_on H_N_via_L2_norms}. Fix $(i_k)_{k\in \eint{0}{4T+1}} \in I^{4T+2}$. Remark that 
\begin{equation*}
    4T + 1 \simeq \floor{2\frac{(1-\epsilon)|\log h|}{\lambda}} ,
\end{equation*}
and $\lambda$ is the highest Lyapunov exponent of $A_{|V}$. Moreover, each element of the partition $\CP$ is $(R- \delta, \delta)$-splittable and $(R-\delta)/\delta$ tends to $+\infty$ when $\delta$ tends to $0$. Corollary \ref{cor_applied_dispersive_estimate} implies that, for $\delta$ and $h$ small enough,
\begin{equation*}
    \norm{\prod_{k=0}^{4T+1} M_h(A) \Op^w_h(p_{i_k})}_{\CL(L^2(\R^{d}))} \le C_\delta e^{- 2T (\Lambda_0(A_{|V}) - o(\epsilon + \delta))} .
\end{equation*}
To simplify, we have taken $\gamma = \delta$ in the corollary. We also used Remark \ref{rem_cor_applied_dispersive_estimate} since the functions $(p_i)_{i \in I}$ satisfy $p_i = p_{i,0} + h^2 p_{i,1}$ with $\norm{p_{i,0}}_\infty \le 1$. The constant $C_\delta$ depends only on $\epsilon, \delta, A$ and the $S(1)$ seminorms of the $(p_{i_k})_{k \in \eint{0}{4T+1}}$. But each of those symbols is selected among the elements of the partition $\CP$, therefore, the constant $C_\delta$ does not depend on the choice of the sequence $(i_k)_{k\in \eint{0}{4T+1}}$. Hence, 
\begin{equation*}
    - 2 \log \left( \max_{P, Q \in \bigvee_{k = -T}^T \CP^q(k)} \norm{P M_N(A)^{2T+1} Q^*} + \CO(N^{-\infty}) \right) \ge 4T (\Lambda_0(A_{|V}) - o(\epsilon + \delta)) + C_{\delta,\epsilon,A}.
\end{equation*}
Combining this with \eqref{eq_lower_bound_quantum_entrop_via_entropic_uncertainty_principle}, we get 
\begin{equation*}
    h^q(\rho_N, \bigvee_{k = -T}^T \CP^q(k)) \ge 2T (\Lambda_0(A_{|V}) - o(\epsilon + \delta)) - C_{\delta,\epsilon,A}.
\end{equation*}
We used that 
\begin{equation*}
    h^q(\rho_N, M_N(A)^{T+1} \cdot \bigvee_{k = -T}^T \CP^q(k)) = h^q(\rho_N, M_N(A)^{-T} \cdot \bigvee_{k = -T}^T \CP^q(k)) = h^q(\rho_N, \bigvee_{k = -T}^T \CP^q(k)),
\end{equation*}
thanks to \eqref{eq_action_unitary_conditional_quantum_entropy} and the invariance of $\rho_N$ under $M_N(A)$. 
Fix $m_0 \in \N^*$. We then apply Lemma \ref{lem_subadditivity_quantum_part} to obtain 
\begin{equation*}
    h^q(\rho_N, \bigvee_{k = -T}^T \CP^q(k)) \le \sum_{k=0}^{\floor{2T/m_0}} h^q(\rho_N, \bigvee_{k =0}^{m_0-1} \CP^q(k)) + C \le \frac{2T}{m_0} h^q(\rho_N, \bigvee_{k =0}^{m_0-1} \CP^q(k)) + C.
\end{equation*}
This leads to 
\begin{equation*}
    \frac{1}{m_0} h^q(\rho_N, \bigvee_{k =0}^{m_0-1} \CP^q(k)) \ge \Lambda_0(A_{|V}) - o(\epsilon + \delta) -\frac{C}{2T}.
\end{equation*}
We now pass to the limit $N \to +\infty$, in particular $T \to +\infty$, to obtain 
\begin{equation*}
    \frac{1}{m_0} h^c(\mu, \bigvee_{k =0}^{m_0-1} \CP(k)) \ge \Lambda_0(A_{|V}) - o(\epsilon + \delta) .
\end{equation*}
Indeed, we have thanks to the semiclassical calculus \eqref{Moyal_product_R2d}, \eqref{Calderon-Vaillancourt_R2d} and the Egorov theorem \eqref{eq_exact_Egorov_observable_torus} that, for any $p_{i_0},\dots, p_{i_{m_0-1}} \in \CP$,
\begin{equation*}
    \Op^w_N(p_{i_{m_0-1}})(m_0-1)^* \dots \Op^w_N(p_{i_{0}})^* \Op^w_N(p_{i_{0}}) \dots \Op^w_N(p_{i_{m_0-1}})(m_0-1) ,
\end{equation*}
is equal to 
\begin{equation*}
    \Op^w_N(|p_{i_0}|^2 |p_{i_1}\circ A|^2 \dots |p_{i_{m_0-1}}\circ A^{m_0-1}|^2 ) + \CO(N^{-1}).
\end{equation*}
Therefore, 
\begin{align*}
    h^q(\rho_N, \bigvee_{k =0}^{m_0-1} \CP^q(k)) &\xrightarrow[N \to + \infty]{} \sum_{p_{i_0},\dots, p_{i_{m_0-1}} \in \CP} \eta(\mu(|p_{i_0}|^2 |p_{i_1}\circ A|^2 \dots |p_{i_{m_0-1}}\circ A^{m_0-1}|^2 )) \\
    &= h^c(\mu, \bigvee_{k =0}^{m_0-1} \CP(k)).
\end{align*}
Recall that $p_j = \tilde{p}_j \circ \pi$ and $\pi$ and $A$ commute as $V^\sigma$ is invariant under $A$. Therefore,
\begin{equation}\label{eq_proof_lower_bound_proj_entrop_after_h_to_0}
    h^c(\pi_\# \mu, \bigvee_{k =0}^{m_0-1} \tilde{\CP}(k)) = h^c(\mu, \bigvee_{k =0}^{m_0-1} \CP(k)) \ge \Lambda_0(A_{|V}) - o(\epsilon + \delta) .
\end{equation}

This is valid for any smooth partition $\tilde{\CP}$ of $\tilde{\T}$ such that the support of each element of the partition has a diameter less than $\delta$. However we need a partition composed of disjoint indicator functions to compute the Kolmogorov-Sinai entropy. Fix such a partition $\CQ$ of diameter less than $\delta/2$ and such that the boundary of each element of $\CQ$ has null measure for $\pi_\# \mu$ (see \cite[sec. 5]{anantharamanEntropyLocalizationEigenfunctions2008} for a way to construct such a partition). We smooth it by convolution to obtain a partition with smooth functions $\tilde{\CP}$ of diameter less than $\delta$, to which we apply the above reasoning. As the right-hand side of \eqref{eq_proof_lower_bound_proj_entrop_after_h_to_0} does not depend on the derivatives of the partition $\tilde{\CP}$, we can replace it by $\CQ$. 
Hence, if we let $m_0$ tend to $+ \infty$ in \eqref{eq_proof_lower_bound_proj_entrop_after_h_to_0}, we recover the Kolmogorov-Sinai entropy.
\begin{equation*}
    h_{KS}(\pi_\# \mu, A, \CQ) \ge \Lambda_0(A_{|V}) - o(\epsilon + \delta) .
\end{equation*}
We let $\epsilon$ and $\delta$ go to $0$ and use that $h_{KS}(\pi_\# \mu, A)$ is defined as a supremum over all partitions to obtain 
\begin{equation*}
    h_{KS}(\pi_\# \mu, A) \ge \Lambda_0(A_{|V}).
\end{equation*}

\subsection{A lower bound for the full entropy}

We prove in this section Theorem \ref{thm_lower_bound_full_entropy}. We suppose that $\Q^{2d}$ splits into rational symplectic $A$-invariant subspaces $(V_j)_{j \in \eint{1}{r}}$ of dimension $2d_j$
\begin{equation*}
    \Q^{2d} = V_1 \oplus^{\sigma} \dots \oplus^{\sigma} V_r \qqtqq{and} A V_j = V_j .
\end{equation*}
We denote by $\lambda_i$ the maximal Lyapunov exponent of $A$ restricted to $V_i$ and suppose that 
\begin{equation*}
    0 = \lambda_0 < \lambda_1 < \lambda_2 < \cdots < \lambda_{r}.
\end{equation*}

As in the preceding Section \ref{subsection_proof_lower_bound_entropy_projection}, we denote $R_j$ the radius of injectivity of the torus $\tilde{\T_j} := \T^{2d} / V_j^\sigma$ for $j \in \eint{1}{r}$ and suppose that $\delta < \min(R_j)$. For $i\in \eint{1}{r}$, we consider $\tilde{\CP_j}$ a classical partition of $\tilde{\T_j}$ of diameter at most $\delta$ and obtain from it a family of functions $\CP_j$ on $\T^{2d} $ such that 
\begin{enumerate}
    \item each element of $\CP_j$ is invariant under $V^\sigma_j$ and has an effective Ehrenfest time $(T_j, \epsilon/2)$,
    \item $\CP^q_j$ is a $(1+ \CO(h^\infty))$ quantum partition of $L^2(\R^d)$,
\end{enumerate}
where 
\begin{equation*}
    T_j = \floor{\frac{(1- \epsilon)| \log h|}{2 \lambda_j}}.
\end{equation*}
As in the preceding section, the family of function $\CP_j$ are only a partition of unity up to an error of order $\CO(h)$. 
Define $\CQ_j = \bigvee_{i=1}^j \CP_i$. Up to modifying the symbols at order $h$ as in Lemma \ref{lem_modifyin_partition_to_have_1_plus_h_infinity}, we can ensure that $\CQ_j^q$ is a $(1+ \CO(h^\infty))$ quantum partition. Moreover, each element of $\CQ_j$ has an effective Ehrenfest time of $(T_j,\epsilon)$, as a finite product of symbols having at least such an Ehrenfest time. Denote 
\begin{equation*}
    \hat{V}_j = V_1 \oplus^\sigma \dots \oplus^\sigma V_j.
\end{equation*}
It is clear that the elements of $\CQ_j$ are invariant under $\hat{V}_j^\sigma$. Finally, the elements of $\CQ_j$ are $(C_j^{-1}\sum_{i=1}^j R_i - \delta, C_j j \delta)$-splittable, where $C_j$ depends only on $\hat{V}_j$ according to Lemma \ref{lem_product_splitting_function}. 

Fix $j \in \eint{1}{r}$ and let $\CR$ be a $(1+\CO(h^\infty))$ quantum partition. We apply the conditional entropic uncertainty principle \ref{thm_conditional_entropic_uncertainty_principle_approximate quantum_partitions} to 
\begin{equation*}
    h^q(\rho_N, M_N(A)^{T_j+1} \cdot \bigvee_{k=-T_j}^{T_j} \CQ^q_j(k) | \CR) + h^q(\rho_N, M_N(A)^{-T_j} \cdot \bigvee_{k=-T_j}^{T_j} \CQ^q_j(k) | \CR),
\end{equation*}
and find that it is bounded from below by 
\begin{equation*}
    (- 2 \log \left( \max_{P, Q \in \bigvee_{k=-T_j}^{T_j} \CQ^q_j(k)} \norm{P M_N(A)^{T_j+1} (M_N(A)^{-T_j})^* Q^*} + \CO(N^{-\infty}) \right))(1 - \CO(N^{-\infty})).
\end{equation*}
As in the preceding proof, if $\delta$ is small enough and $N$ large, we apply Corollary \ref{cor_applied_dispersive_estimate} and find that 
\begin{equation*}
    - 2 \log \left( \max_{P, Q \in \bigvee_{k=-T_j}^{T_j} \CQ^q_j(k)} \norm{P M_N(A)^{2T_j+1} Q^*} + \CO(N^{-\infty}) \right) \ge 4T_j (\Lambda_0(A_{|\hat{V}_j}) - o(\epsilon + \delta)) + C_{\delta,\epsilon,A}.
\end{equation*}
Therefore, using \eqref{eq_action_unitary_conditional_quantum_entropy} leads to
\begin{equation*}
    h^q(\rho_N, \bigvee_{k=-T_j}^{T_j} \CQ^q_j(k) | \CR(-T_j -1)) + h^q(\rho_N, \bigvee_{k=-T_j}^{T_j} \CQ^q_j(k) | \CR(T_j)) \ge 4T_j (\Lambda_0(A_{|\hat{V}_j}) - o(\epsilon + \delta)) + C_{\delta,\epsilon,A}.
\end{equation*}

We now concentrate on the left-hand side and fix $m_0 \in \N^*$. Applying the subadditivity of conditional entropy lemma \ref{lem_subadditivity_quantum_part} gives for $N$ large enough
\begin{equation*}
    h^q(\rho_N, \bigvee_{k=-T_j}^{T_j} \CQ^q_j(k) | \CR(-T_j -1)) \le \sum_{l=0}^{\floor{2T_j/m_0}} h^q(\rho_N, \bigvee_{k=0}^{m_0-1} \CQ^q_j(k) | \CR(-1 -lm_0)) + C,
\end{equation*}
and
\begin{equation*}
    h^q(\rho_N, \bigvee_{k=-T_j}^{T_j} \CQ^q_j(k) | \CR(T_j)) \le \sum_{l=0}^{\floor{2T_j/m_0}} h^q(\rho_N, \bigvee_{k=0}^{m_0-1} \CQ^q_j(k) | \CR(2T_j-lm_0)) + C.
\end{equation*}
We now choose the partition $\CR$. Recall the number $\nu_{i,j}$ defined for $i \le j$ by 
\begin{equation*}
    \nu_{i,j} = \min(\frac{\lambda_j}{\lambda_i}-1, 1).
\end{equation*}
They satisfy two important inequalities
\begin{equation}\label{ineq_coefficient_nu_i_j}
    (1+\nu_{i,j}) T_j \le T_i \qqtqq{and} 2 \nu_{i,j} T_j \le T_i.
\end{equation}
The first one is a direct consequence of $T_i/T_j = \lambda_j/\lambda_i \ge 1$. For the second one, we have two cases. If $\lambda_j/\lambda_i \le 2$, then 
\begin{equation*}
    2\nu_{i,j} = 2 \frac{\lambda_j}{\lambda_i} - 2 \le \frac{\lambda_j}{\lambda_i}.
\end{equation*}
If $\lambda_j/\lambda_i \ge 2$, then 
\begin{equation*}
    2\nu_{i,j} = 2 \le \frac{\lambda_j}{\lambda_i}.
\end{equation*}
We define $\CR$ by
\begin{equation*}
    \CR = \bigvee_{k=-2\nu_{j-1,j}T_j}^{2\nu_{j-1,j} T_j} \CP^q_{j-1}(k) \vee \dots \vee \bigvee_{k=-2\nu_{1,j}T_j}^{2\nu_{1,j} T_j} \CP^q_1(k).
\end{equation*}
This is indeed a $(1+\CO(h^\infty))$ quantum partition, as it is the refinement of $(1+\CO(h^\infty))$ quantum partitions a number of times of order $|\log h|$. We drop for a moment the superscript $q$ to denote the quantum analog of a classical partition and introduce two notations for the refinement of a quantum partition $\CQ$ and two times $t_1,t_2 \in \Z$,
\begin{equation*}
    \CQ^{(t_2)}_{(t_1)} = \bigvee_{k=t_1}^{t_2} \CQ(k) \qqtqq{and} \CQ^{(t_2)} = \bigvee_{k=0}^{t_2} \CQ(k).
\end{equation*}

We will apply multiple times Lemma \ref{lem_selective_deconditioning_quantum_entropy} to 
\begin{equation*}
    h^q(\rho_N, \CQ^{(m_0-1)}_j | (\CP_{j-1})^{(-1-lm_0+2 \nu_{j-1,j}T_j)}_{(-1-lm_0-2 \nu_{j-1,j}T_j)} \vee \dots \vee (\CP_{1})^{(-1-lm_0+2 \nu_{1,j}T_j)}_{(-1-lm_0-2 \nu_{1,j}T_j)}).
\end{equation*}
We can apply the lemma since the effective Ehrenfest time of the partition $\CP_j$ is $(T_j, \epsilon)$ and thanks to \eqref{ineq_coefficient_nu_i_j}. Indeed, those inequalities imply that for any $i \in \eint{1}{j-1}$  
\begin{equation*}
    |-1-lm_0+2 \nu_{i,j}T_j - (-1-lm_0-2 \nu_{i,j}T_j)| \le 4 \nu_{i,j} T_j + m_0 \le 2 T_i + m_0,
\end{equation*}
and for $k \in \eint{0}{m_0-1}$ and $l \in \eint{0}{\floor{2T_j/m_0}}$,
\begin{equation*}
    |-1-lm_0 \pm 2 \nu_{i,j}T_j - k| \le 2T_j |\frac{lm_0}{2T_j} \pm \nu_{i,j}| + m_0 \le 2T_j |1 + \nu_{i,j}| + m_0 \le 2T_i + m_0.
\end{equation*}
The addition of $m_0$ is not a problem as if a symbol has an effective Ehrenfest time $(T, \epsilon)$ then it also has an effective Ehrenfest time $(T+K, \epsilon)$ for any $K$ independent of $h$. 
Our goal is to keep $\CP_{j-1}^{(m_0-1)}, \dots, \CP_{1}^{(m_0-1)}$ when they appear in the conditioning. For $i \in \eint{1}{j-1}$, we find that $\eint{0}{m_0-1} \subset \eint{-1-lm_0-2 \nu_{i,j}T_j}{-1-lm_0+2 \nu_{i,j}T_j}$ if 
\begin{equation*}
    l \le \frac{2 \nu_{i,j}T_j}{m_0 + 1}.
\end{equation*}
If the piece $\CP_{i}^{(m_0-1)}$ is not present in the conditioning, we remove the whole $(\CP_{i})^{(-1-lm_0+2 \nu_{i,j}T_j)}_{(-1-lm_0-2 \nu_{i,j}T_j)}$. 
We treat first the case of $l \le \frac{2 \nu_{j-1,j}T_j}{m_0 + 1}$. All the pieces $\CP_{j-1}^{(m_0-1)}, \dots, \CP_{1}^{(m_0-1)}$ are present as $\nu_{i,j}$ is decreasing in $i$. One application of Lemma \ref{lem_selective_deconditioning_quantum_entropy} implies that 
\begin{align*}
    h^q(\rho_N, \CQ^{(m_0-1)}_j | (\CP_{j-1})^{(-1-lm_0+2 \nu_{j-1,j}T_j)}_{(-1-lm_0-2 \nu_{j-1,j}T_j)} \vee \dots \vee (\CP_{1})^{(-1-lm_0+2 \nu_{1,j}T_j)}_{(-1-lm_0-2 \nu_{1,j}T_j)}),
\end{align*}
is bounded by 
\begin{equation*}
    h^q(\rho_N, \CQ^{(m_0-1)}_j | \CP^{(m_0-1)}_{j-1} \vee (\CP_{j-2})^{(-1-lm_0+2 \nu_{j-2,j}T_j)}_{(-1-lm_0-2 \nu_{j-2,j}T_j)} \vee \dots \vee (\CP_{1})^{(-1-lm_0+2 \nu_{1,j}T_j)}_{(-1-lm_0-2 \nu_{1,j}T_j)}) + \CO(h^{\epsilon/4}).
\end{equation*}
Multiple applications of the same lemma lead to the following bound 
\begin{equation*}
    h^q(\rho_N, \CQ^{(m_0-1)}_j | \CP^{(m_0-1)}_{j-1} \vee \dots \vee \CP_{1}^{(m_0-1)}) + \CO(h^{\epsilon/4}).
\end{equation*}
If $\frac{2 \nu_{i+1,j}T_j}{m_0 + 1} < l \le \frac{2 \nu_{i,j}T_j}{m_0 + 1}$, we can only keep the pieces $\CP_{i}^{(m_0-1)}, \dots, \CP_{1}^{(m_0-1)}$. This leads to the bound
\begin{equation*}
    h^q(\rho_N, \CQ^{(m_0-1)}_j | \CP^{(m_0-1)}_i \vee \dots \vee \CP_{1}^{(m_0-1)}) + \CO(h^{\epsilon/4}).
\end{equation*}
Finally, grouping everything for the different values of $l \in \eint{0}{\floor{2T_j/m_0}}$ gives that both
\begin{equation}\label{eq_proof_full_entropy_what_to_bound_after_subadditivity}
    \sum_{l=0}^{\floor{2T_j/m_0}} h^q(\rho_N, \CQ^{(m_0-1)}_j | \CR(-1 -lm_0)) \qtq{and} \sum_{l=0}^{\floor{2T_j/m_0}} h^q(\rho_N, \CQ^{(m_0-1)}_j  | \CR(2T_j-lm_0)),
\end{equation}
are bounded by 
\begin{equation*}
    \frac{2T_j}{m_0+1} \sum_{i=0}^{j-1} (\nu_{i,j}-\nu_{i+1,j}) h^q(\rho_N, \CQ^{(m_0-1)}_j | \CP^{(m_0-1)}_i \vee \dots \vee \CP_{1}^{(m_0-1)}) + \CO(h^{\epsilon/5}).
\end{equation*}
We used that $|\log h| = \CO(h^{-\epsilon/4 + \epsilon/5})$ to bound the sum of the errors and remarked that $\nu_{0,j} = 1$ and $\nu_{j,j} = 0$. Note that there is no conditioning if $i=0$ in the sum. The reasoning also works for the second member of \eqref{eq_proof_full_entropy_what_to_bound_after_subadditivity} as $0 \le 2T_j - l m_0 \le 2T_j$ for $l \in \eint{0}{\floor{2T_j/m_0}}$, which suffices to make the argument work.

This leads to the bound 
\begin{equation*}
    \frac{4T_j}{m_0+1} \sum_{i=0}^{j-1} (\nu_{i,j}-\nu_{i+1,j}) h^q(\rho_N, \CQ^{(m_0-1)}_j | \CP^{(m_0-1)}_i \vee \dots \vee \CP_{1}^{(m_0-1)}) \ge 4T_j (\Lambda_0(A_{|\hat{V}_j}) - o(\epsilon + \delta)) + C_{\delta,\epsilon,A} + \CO(h^{\epsilon/4}).
\end{equation*}
Dividing by $4T_j$ and taking the limit $N$ tends to $+ \infty$ gives
\begin{equation*}
    \frac{1}{m_0+1} \sum_{i=0}^{j-1} (\nu_{i,j}-\nu_{i+1,j}) h^c(\mu, \CQ^{(m_0-1)}_j | \CP^{(m_0-1)}_i \vee \dots \vee \CP_{1}^{(m_0-1)}) \ge  \Lambda_0(A_{|\hat{V}_j}) - o(\epsilon + \delta).
\end{equation*}
The partitions are now the classical partitions on $\T^{2d}$. The limit of the left-hand side is obtained in the same manner as in the preceding Section \ref{subsection_proof_lower_bound_entropy_projection}. As in this section, we replace the smooth partition by a partition consisting of disjoint indicator functions. Since we are now in the classical world and everything commutes, we have 
\begin{equation*}
    \CP^{(m_0-1)}_i \vee \dots \vee \CP_{1}^{(m_0-1)} = \CQ^{(m_0-1)}_i \qqtqq{and} \CQ^{(m_0-1)}_j \vee \CQ^{(m_0-1)}_i = \CQ^{(m_0-1)}_{\max(i,j)}.
\end{equation*}
The second point is valid only because the partitions are now composed of indicator functions, which are idempotent for the multiplication of functions. We define 
\begin{equation*}
    B_j = \Lambda_0(A_{|\hat{V}_j}) + \sum_{i=1}^{j-1} (\nu_{i,j}-\nu_{i+1,j}) B_{i},
\end{equation*}
and we claim that 
\begin{equation}\label{ineq_proof_full_entropy_antepenultiem_ineq}
    \frac{1}{m_0+1} h^c(\mu, \CQ^{(m_0-1)}_j ) \ge B_j - o(\epsilon + \delta).
\end{equation}
We prove it via a strong recurrence. It is true for $j=1$. Let $j \in \eint{2}{r}$ and suppose that \eqref{ineq_proof_full_entropy_antepenultiem_ineq} is true for $i <j$. We have 
\begin{equation*}
    \frac{1}{m_0+1} \sum_{i=0}^{j-1} (\nu_{i,j}-\nu_{i+1,j}) h^c(\mu, \CQ^{(m_0-1)}_j | \CQ^{(m_0-1)}_i) \ge  \Lambda_0(A_{|\hat{V}_j}) - o(\epsilon + \delta),
\end{equation*}
and if we add to this inequality $(\nu_{i,j}-\nu_{i+1,j})$ times \eqref{ineq_proof_full_entropy_antepenultiem_ineq} for $i$, we get 
\begin{align*}
    B_j - o(\epsilon + \delta) &\le \frac{1}{m_0+1} \sum_{i=0}^{j-1} (\nu_{i,j}-\nu_{i+1,j}) \left( h^c(\mu, \CQ^{(m_0-1)}_j | \CQ^{(m_0-1)}_i) +  h^c(\mu, \CQ^{(m_0-1)}_i )\right) \\
    &\le \frac{1}{m_0+1} \sum_{i=0}^{j-1} (\nu_{i,j}-\nu_{i+1,j}) h^c(\mu, \CQ^{(m_0-1)}_j \vee \CQ^{(m_0-1)}_i) \\
    &\le \frac{h^c(\mu, \CQ^{(m_0-1)}_j)}{m_0+1} \sum_{i=0}^{j-1} (\nu_{i,j}-\nu_{i+1,j})  \\
    &\le \frac{1}{m_0+1} h^c(\mu, \CQ^{(m_0-1)}_j ),
\end{align*}
as $\nu_{0,j} = 1$ and $\nu_{j,j} = 0$. We let $m_0$ go to infinity, use that $\CQ_j$ is invariant under $\hat{V}_j^{\sigma}$ to obtain 
\begin{equation*}
    h_{KS}(\hat{\pi}_{j\#}\mu, A,\tilde{\CQ_j}) \ge B_j - o(\epsilon + \delta),
\end{equation*}
where $\hat{\pi}_j : \T^{2d} \to \T^{2d}/\hat{V}_j^\sigma$ is the natural projection. Recalling that the Kolmogorov-Sinai entropy is defined as a supremum over all partitions and letting $\epsilon$ and $\delta$ go to $0$ leads to 
\begin{equation*}
    h_{KS}(\hat{\pi}_{j\#} \mu, A) \ge B_j ,
\end{equation*}
for any $j \in \eint{1}{r}$. Remark finally that $B_j>0$ as $\Lambda_0(A_{|\hat{V}_j})>0$ and $\nu_{i,j} > \nu_{i+1,j}$.

\subsection{Proof of Theorem \ref{thm_entropy_bound_split_torus_one_lyap_exp}}

We present below the proof of Theorem \ref{thm_entropy_bound_split_torus_one_lyap_exp} from Theorem \ref{thm_lower_bound_full_entropy}.

\begin{proof}
    To simplify the proof, we suppose that $A$ has no zero Lyapunov exponent. This does not impact the argument below as the zero Lyapunov exponents do not contribute to the entropy and any of the lower bounds we derived. We denote by $\lambda_i$ the Lyapunov exponent of $A$ restricted to $V_i$ and suppose that 
    \begin{equation*}
        \lambda_1 < \lambda_2 < \cdots < \lambda_r.
    \end{equation*} 
    We apply Theorem \ref{thm_lower_bound_full_entropy}, and use its notation, to obtain 
    \begin{equation*}
        h_{KS}(\mu, A) \ge B_r.
    \end{equation*}
    Hence it is sufficient to prove that $B_r \ge \sum_{i=1}^r \frac{d_i\lambda_i}{2}$. Remark that 
    \begin{equation*}
        \Lambda_0(A_{|\hat{V}_j}) = \sum_{i=1}^j d_i \max(\lambda_i - \frac{\lambda_j}{2},0) = \sum_{i=1}^j \frac{d_i \lambda_i}{2} (1- \nu_{i,j}).
    \end{equation*}
    We prove by a strong recurrence that 
    \begin{equation*}
        B_j \ge \sum_{i=1}^j \frac{d_i \lambda_i}{2} .
    \end{equation*}
    This is clearly the case for $j=1$. Fix $j\ge 2$ and suppose that it is true for all $i < j$. We have, as $\nu_{j,j} = 0$, that 
    \begin{align*}
        B_j &= \Lambda_0(A_{|\hat{V}_j}) + \sum_{i=1}^{j-1} (\nu_{i,j}-\nu_{i+1,j}) B_{i} \ge \sum_{i=1}^j \frac{d_i \lambda_i}{2} (1- \nu_{i,j}) + \sum_{i=1}^{j-1} (\nu_{i,j}-\nu_{i+1,j}) \sum_{k=1}^i \frac{d_k \lambda_k}{2} \\
        &\ge \frac{d_j \lambda_j}{2} + \sum_{k=1}^{j-1} \frac{d_k \lambda_k}{2} (1- \nu_{k,j})+ \sum_{k=1}^{j-1} \frac{d_k \lambda_k}{2} \sum_{i=k}^{j-1} (\nu_{i,j}-\nu_{i+1,j}) \\
        &\ge \frac{d_j \lambda_j}{2} + \sum_{k=1}^{j-1} \frac{d_k \lambda_k}{2} (1- \nu_{k,j} + \nu_{k,j} - \nu_{j,j}) \ge \sum_{k=1}^j \frac{d_k \lambda_k}{2} .
    \end{align*}
\end{proof}

\printbibliography[]

\end{document}